\newcommand{\I}[2]{I_{#1}\left(#2\right)}
\newcommand{\F}{\mathcal{F}}
\renewcommand{\P}{\mathcal{P}}
\renewcommand{\S}{\mathcal{S}}
\newcommand{\T}{\mathcal{T}}
\newcommand{\union}[1]{\langle #1 \rangle}
\newcommand{\N}{\mathcal{N}}
\newcommand{\X}{\mathcal{X}}
\newcommand{\Y}{\mathcal{Y}}
\newcommand{\D}{\mathcal{D}}
\newcommand{\ola}[1]{\protect\overleftarrow{#1}}
\newcommand{\ora}[1]{\protect\overrightarrow{#1}}
\newcommand{\dist}{\textup{dist}}
\newcommand{\Start}{\textup{start}}
\newcommand{\End}{\textup{end}}
\newcommand{\inn}{\textup{inn}}
\newcommand{\out}{\textup{out}}
\newcommand{\Ps}{P^{\overline{s}}}
\newcommand{\proj}[3]{\mathbf{P}_{#1}(#2,#3)}
\newcommand{\HH}[1]{[#1]}
\newcommand{\makerbreaker}{\textup{\textsc{MakerBreaker}}\xspace}
\newcommand{\PSPACE}{\textup{\textsf{PSPACE}}\xspace}
\newcommand{\QBF}{\textup{\textsc{}{QBF}}\xspace}
\newcommand{\UQBF}{\textup{\textsc{UnorderedQBF}}\xspace}
\newcommand{\Def}{Def.~}
\newcommand{\Not}{Not.~}
\newcommand{\Fig}{Fig.~}
\let\temp\forall \renewcommand{\forall}{\temp \,\,}
\let\tempo\exists \renewcommand{\exists}{\tempo \,\,}
\let\originalleft\left
\let\originalright\right
\renewcommand{\left}{\mathopen{}\mathclose\bgroup\originalleft}
\renewcommand{\right}{\aftergroup\egroup\originalright}
\newtheoremstyle{the}{15pt}{15pt}{\it}{}{\bfseries}{.}{ }{}
\newtheoremstyle{conj}{10pt}{10pt}{\it}{}{\bfseries}{.}{ }{}
\newtheoremstyle{cla}{15pt}{15pt}{\it}{}{$\quad$}{.}{ }{}
\newtheoremstyle{def}{15pt}{15pt}{}{}{\bfseries}{.}{ }{}
\newtheoremstyle{rem}{15pt}{15pt}{}{}{\it}{.}{ }{}
\theoremstyle{the} \newtheorem{theoreme}{Theorem}[section]}
\theoremstyle{the} \newtheorem*{theoreme*}{Theorem}}
\theoremstyle{the} \newtheorem{lemme}[theoreme]{Lemma}}
\theoremstyle{the} \newtheorem*{lemme*}{Lemma}}
\theoremstyle{the} \newtheorem{proposition}[theoreme]{Proposition}}
\theoremstyle{the} \newtheorem{corollaire}[theoreme]{Corollary}}
\theoremstyle{the} }
\theoremstyle{cla} \newtheorem{claim}{Claim}}
\theoremstyle{def} \newtheorem{definition}[theoreme]{Definition}}
\theoremstyle{def} \newtheorem{notation}[theoreme]{Notation}}
\theoremstyle{rem} \newtheorem*{remarque}{Remark}}
\theoremstyle{rem} }
\theoremstyle{conj} \newtheorem{conjecture}[theoreme]{Conjecture}}
\newenvironment{proofclaim}[1][Proof]{\begin{proof}[#1]}{\end{proof}}
\definecolor{refcolor}{RGB}{150,0,0}
\definecolor{mygray1}{RGB}{225,225,225}
\newcommand{\nom}[2]{#1 \textsc{#2}}
\title{Maker-Breaker is solved in polynomial time on hypergraphs of rank 3}
\author[1]{\nom{Florian}{Galliot}}
\author[2]{\nom{Sylvain}{Gravier}}
\author[3]{\nom{Isabelle}{Sivignon}}
\affil[1]{Aix-Marseille Université, CNRS, I2M, UMR 7373, 13453 Marseille, France}
\affil[2]{Univ. Grenoble Alpes, CNRS, Institut Fourier, 38000 Grenoble, France}
\affil[3]{Univ. Grenoble Alpes, CNRS, Grenoble INP, GIPSA-lab, 38000 Grenoble, France}
\date{}
\begin{document}

\maketitle

\begin{abstract}
	\noindent In the Maker-Breaker positional game, Maker and Breaker take turns picking vertices of a hypergraph $H$, and Maker wins if and only if she possesses all the vertices of some edge of $H$. Deciding the outcome (i.e., which player has a winning strategy) is a \PSPACE-complete problem even when restricted to 4-uniform hypergraphs \cite{Gal25}. 
	As for hypergraphs of rank 3, Kutz \cite{Kut05} has solved the linear subcase (i.e., any two distinct edges intersect on at most one vertex), by obtaining a structural characterization of the outcome and a polynomial-time algorithm to decide it. A conjecture by Rahman and Watson \cite{RW20} implies that the same results can be obtained for general hypergraphs of rank 3, which we confirm in this paper. We provide a structural characterization of the outcome and a description of both players' optimal strategies, all based on intersections of some key subhypergraph collections. From this, we derive a polynomial-time algorithm, thus closing the complexity gap for Maker-Breaker games relative to the size of the edges. Another corollary of our structural result is that, if Maker has a winning strategy on a hypergraph of rank 3, then she can ensure to win the game within a number of rounds that is logarithmic in the number of vertices.
\end{abstract}

\section{Introduction}\label{Section1}

\hspace{\parindent}\textbf{Maker-Breaker games.} A \textit{positional game} is played on a hypergraph $H$, with two players who take turns picking vertices of $H$. There are several possible conventions of play, which determine the winner of the game. Let us mention two major ones:
\begin{itemize}[noitemsep,nolistsep]
	\item In the \textit{Maker-Maker} convention, the winner is the player who first possesses all vertices of some edge of $H$. If no player achieves this, then the game ends in a draw. The first general formulation of this convention goes back to Hales and Jewett \cite{HJ63}, while the first general results are due to Erd\H{o}s and Selfridge \cite{ES73}. The game of tic-tac-toe and its generalizations \cite{HJ63,Bec08} are the most famous examples of Maker-Maker positional games.
	\item In the \textit{Maker-Breaker} convention, one player ("Maker") wins if she possesses all vertices of some edge of $H$, while the other ("Breaker") wins if he can prevent this from happening. No draw is possible here. The first general formulation of this convention is due to Chv\'atal and Erd\H{o}s \cite{CE78}. The board game \textsc{Hex} \cite{Hei42,Nas52} and the Shannon switching game \cite{Gar61,Leh64,CE78} are the most famous examples of Maker-Breaker positional games.
\end{itemize}
\hspace{\parindent}This paper deals with the Maker-Breaker convention, which is the most studied in the literature as it possesses some nice properties (e.g., monotonicity properties). We always assume that Maker plays first: if Breaker plays first, then we can consider all possibilities of his first pick to reduce to the case where Maker plays first. Given a hypergraph $H$, there are two possibilities for the \textit{outcome} of the game: either Maker or Breaker has a winning strategy on $H$, and we say $H$ is a \textit{Maker win} or a \textit{Breaker win} accordingly. 
From an algorithmic point of view, the natural decision problem \makerbreaker takes an input hypergraph $H$ and returns "yes" if $H$ is a Maker win or "no" if $H$ is a Breaker win.
\\ \indent The Maker-Breaker game can be interpreted as a propositional logic problem. Indeed, it is directly linked with the \textsc{Quantified Boolean Formula} problem (\QBF), also known as \textsc{Quantified SAT} (\textsc{QSAT}), which differs from \textsc{SAT} in that each quantifier may be existential or universal. These types can be assumed to alternate, so that \QBF comes down to a game played on a formula in conjunctive normal form (CNF) where two players take turns choosing truth values for the variables: "Satisfier" (resp. "Falsifier") wants the formula to end up true (resp. false). Schaefer \cite{Sch78} introduced an unordered version of \QBF, which we will call \UQBF from here onwards, where a player's turn consists in choosing a variable and a truth value for it, instead of the variables arriving in a predetermined order. We always assume that Falsifier starts. In the particular case of a CNF formula which is positive, i.e., whose literals are all positive, Satisfier (resp. Falsifier) always puts the variables he (resp. she) picks to true (resp. false), so the game is equivalent to the Maker-Breaker game: Falsifier is Maker, Satisfier is Breaker, and clauses correspond to edges. Therefore, \makerbreaker is the restriction of \UQBF to positive CNF formulas.
\bigskip
\\\indent \textbf{Our problem.} Research on the Maker-Breaker game mainly consists in finding criteria for being a Maker win or a Breaker win, as well as evaluating the algorithmic complexity on various hypergraph classes of the \makerbreaker decision problem. Counting-type results, based on numerical formulas involving quantities such as the number of edges and their size for example, can provide conditions for a Maker win that are either necessary (like the Erd\H{o}s-Selfridge theorem \cite{ES73,Bec82}) or sufficient (like Beck's criterion involving the pair degree \cite{Bec82}) but usually not both. Instead, this paper focuses on structural results, corresponding to strategies where the player's picks are based on the existence and the interdependence of some key subhypergraphs. A structural characterization of the outcome on some hypergraph class, if simple enough, would imply that the restriction of \makerbreaker to that class is in \textsf{NP}, or even possibly in \textsf{P}. It should be noted that, as structural studies in hypergraphs can be notoriously difficult, not all hypergraph classes are suited to this approach.
\\ \indent One direction is to consider hypergraphs with small edges, as their structure is less complex. A hypergraph is of \textit{rank} $k$ if all its edges are of size at most $k$, and it is \textit{$k$-uniform} if all its edges are of size exactly $k$. For instance, the Maker-Breaker game on hypergraphs of rank 2 is easily solved: Maker wins if and only if there is an edge of size 1 or a vertex of degree at least 2. However, the game soon becomes difficult as the edges get bigger. The founding result by Schaefer \cite{Sch78} states that \makerbreaker is \PSPACE-complete on hypergraphs of rank 11. The \PSPACE-completeness result has since been extended successively to 6-uniform hypergraphs \cite{RW21}, to 5-uniform hypergraphs \cite{Koe25}, and then to 4-uniform hypergraphs in a recent preprint \cite{Gal25}. This means that there is no hope of an efficiently verifiable structural characterization of the outcome on hypergraphs of rank 4 or more. We thus investigate the missing case for the Maker-Breaker game, namely, hypergraphs of rank 3.
\bigskip
\\\indent \textbf{Previous work.} Kutz worked on the Maker-Breaker game in hypergraphs of rank 3 \cite{Kut04,Kut05} and solved the \textit{linear} subcase, meaning that any two distinct edges intersect on at most one vertex. He reduces to a subclass, on which he provides a precise structural characterization of Breaker wins. From this, he derives a polynomial-time algorithm for \makerbreaker on linear hypergraphs of rank 3. The central substructure at play here is called a \textit{chain}, which is a linear hypergraph formed by a sequence of edges of size 3 where two distinct edges intersect on 1 (resp. 0) vertex if they are (resp. are not) consecutive in the sequence. Things are different in general hypergraphs of rank 3, where intersections of size 2 somehow may hamper connections between vertices. Indeed, a major hurdle in the non-linear case is that the union of two chains, the first between $x$ and $y$ and the second between $y$ and $z$, does not necessarily contain a chain between $x$ and $z$. In particular, Kutz's structural result seems difficult to generalize.
\\ \indent More relevant to us is existing work regarding \UQBF on 3-CNF formulas (i.e., CNF formulas with clauses of size at most 3). This encompasses the Maker-Breaker game on hypergraphs of rank 3, which corresponds to positive 3-CNF formulas. Rahman and Watson conjecture that \UQBF is tractable on 3-CNF formulas \cite{RW20}, in striking contrast with \QBF which is a canonical \PSPACE-complete problem on 3-CNF formulas \cite{SM73}, and that Falsifier wins if and only if she can guarantee that a simple winning motif appears within the first few rounds of the game (we are talking about full rounds, i.e., the motif needs to be there before Falsifier's turn). The key motif is called a \textit{manriki}, which consists of two clauses of size 2 linked by a chain of any nonnegative number of clauses of size 3 (with the same definition of chain as above), where the literals can have any sign, e.g., $(x_1 \vee \neg x_2) \wedge (x_2 \vee x_3 \vee \neg x_4) \wedge (x_4 \vee x_5 \vee x_6) \wedge (\neg x_6 \vee \neg x_7 \vee x_8) \wedge (x_8 \vee x_9)$ is a manriki.

\begin{conjecture}[\textup{\cite{RW20}}]\label{conjecture}
    \UQBF is solvable in polynomial time on 3-CNF formulas. More specifically, we have the following structural characterization: there exists an integer $r$ such that Falsifier has a winning strategy for the \UQBF game on a 3-CNF formula $\phi$ if and only if she has a strategy ensuring that, after at most $r$ full rounds of play, one of the following "obstacles" appears in the updated formula obtained from $\phi$ by removing all clauses containing a true literal and removing all the false literals in the other clauses:
    \begin{enumerate}[label={(\arabic*)},nolistsep,noitemsep]
    		\item a clause of size 0 or 1;
    		\item a pair of clauses of size 2 on the same variables, where, if the total number of variables in $\phi$ is even, we exclude pairs of the form $\{(x_i \vee x_j),(\neg x_i \vee \neg x_j)\}$ and $\{(x_i \vee \neg x_j),(\neg x_i \vee x_j)\}$ (note that obstacle (2) cannot occur if $\phi$ is positive);
    		\item a manriki.
    \end{enumerate}
\end{conjecture}

It has since been shown that the existence of a chain between two given vertices in a hypergraph of rank 3 can be checked in polynomial time \cite{GGS22}. Therefore, checking for manrikis can also be done in polynomial time, so that the conjectured structural characterization indeed implies tractability through bruteforcing the players' moves during the first $r$ rounds.
\\ \indent Rahman and Watson have formulated Conjecture \ref{conjecture} after showing that the result holds with $r=3$ for 3-CNF formulas in which each clause contains a variable that occurs in no other clauses \cite{RW20}. This result in itself brings little to the study of the Maker-Breaker game on hypergraphs of rank 3, since restricting such formulas to the positive case yields hypergraphs which are essentially linear, a case which had already been solved by Kutz. However, Conjecture \ref{conjecture} suggests that the same approach can be used much more generally, which we confirm in this paper.
\bigskip
\\\indent \textbf{Our approach and results.} We center our approach on the notion of \textit{danger}. A danger is a subhypergraph $D$ representing a potential threat for Breaker, that Maker can activate in one move. If Maker plays a move that does activate the danger, then Breaker is forced to play his next move inside $D$, otherwise he will eventually lose facing optimal play. As this definition is very general, we only consider specific types of dangers, belonging to some fixed family $\D$ (we call them $\D$-dangers). We are particularly interested in \textit{forks}, i.e., collections of dangers that Maker can activate simultaneously and that do not intersect, so that Breaker's next move will necessarily miss one of them. Obviously, for any fixed family $\D$ of dangers, if there is a \textit{fork} of $\D$-dangers before the start of the game, then we have a Maker win. This is a general implication in hypergraphs of any rank. To solve the case of rank 3, we are looking for a family $\D$ of dangers with reasonably simple structure such that the converse holds.
\\ \indent In this paper, we construct such a family of dangers, thereafter named $\D_2$. This provides a structural characterization of the outcome, as well as a description of optimal strategies for both players, all based on danger intersections. From this, we derive a polynomial-time algorithm to solve \makerbreaker on hypergraphs of rank 3. Since the family $\D_2$ is actually designed to exactly represent the threat of some obstacle from Conjecture \ref{conjecture} appearing within three rounds of play, our result proves Conjecture \ref{conjecture} with $r=3$ for positive 3-CNF formulas. We also exhibit a substantial subclass on which Conjecture \ref{conjecture} holds with $r=2$. Finally, another parameter that is typically studied in positional games is the duration of the game when players try to win as fast as possible: in a hypergraph $H$ of rank 3, a corollary of our structural result is that Maker either can ensure to complete an edge in just $O(\log(|V(H)|))$ rounds or does not have a winning strategy at all.
\\\indent The outline of this paper is as follows. In Section \ref{Section2}, we give basic definitions and results around the Maker-Breaker game, before introducing the elementary hypergraphs of rank 3 that play a crucial role for the game. Section \ref{Section3} presents the notion of danger as well as the construction of the aforementioned family $\D_2$, and ends with the statements of the main results of this paper (structural characterizations, algorithmic complexity, duration of the game). Section \ref{Section4} aims at establishing a number of structural lemmas, as well as understanding the structural properties of the most complex $\D_2$-dangers. In Section \ref{Section5}, we prove all of our main results. Section \ref{Section6} concludes the paper and suggests some perspectives for future research. Additionally, Appendix \ref{appendix} lists all non-standard technical terms and mathematical symbols used in the paper, along with references to definitions and/or figures that present them.

\section{Definitions and basics}\label{Section2}

\subsection{Marked hypergraphs and the Maker-Breaker game}

\noindent For reasons that we will shortly explain, we consider the Maker-Breaker game on what we call marked hypergraphs, which are a generalization of hypergraphs. The marked vertices are those possessed by Maker.

\subsubsection{Marked hypergraphs}

\begin{definition}\label{def:markedhypergraph}
	A \textit{marked hypergraph} $H$ is defined by:
	\begin{itemize}[noitemsep,nolistsep]
		\item[--] a finite nonempty \textit{vertex set} $V(H)$;
		\item[--] an \textit{edge set} $E(H)$ consisting of nonempty subsets of $V(H)$;
		\item[--] a set of \textit{marked vertices} $M(H) \subseteq V(H)$.
	\end{itemize}
\end{definition}

\begin{remarque}
	A hypergraph may be seen as a marked hypergraph with no marked vertices, so that all definitions and notations associated with marked hypergraphs apply to hypergraphs as well. Also note that $E(H)$ is defined as a set, not a multiset, so that the edges are always pairwise distinct.
\end{remarque}

\begin{notation}
	A marked hypergraph consisting of a single edge $e$ may be simply denoted by $e$.
\end{notation}

\begin{definition}
	Let $H$ be a marked hypergraph.
	\begin{itemize}[noitemsep,nolistsep]
		\item Let $v_1,v_2 \in V(H)$ be distinct. We say $v_1$ and $v_2$ are \textit{adjacent in $H$} if there exists $e \in E(H)$ such that $\{v_1,v_2\} \subseteq e$.
		\item Let $v \in V(H)$. We say an edge $e \in E(H)$ is \textit{incident to $v$} if $v \in e$. The \textit{degree of $v$ in $H$} is defined as the number of edges of $H$ that are incident to $v$.
	\end{itemize}
\end{definition}

\begin{definition}
	Let $H$ be a marked hypergraph, and let $k \geq 1$ be an integer.
	\begin{itemize}[noitemsep,nolistsep]
		\item We say $H$ is of \textit{rank} $k$ if all its edges are of size at most $k$.
		\item We say $H$ is \textit{$k$-uniform} if all its edges are of size exactly $k$.
	\end{itemize}
\end{definition}

\begin{definition}\label{def:subhypergraph}
	Let $H$ be a marked hypergraph. A \textit{subhypergraph} of $H$ is a marked hypergraph $X$ such that: $V(X) \subseteq V(H)$, $E(X) \subseteq E(H)$ and $M(X)= V(X) \cap M(H)$. The notation $X \subseteq H$ means that $X$ is a subhypergraph of $H$.
\end{definition}

\begin{definition}\label{def:union}
	Let $\X=\{X_1,\ldots,X_t\}$ be a finite collection of subhypergraphs of some common marked hypergraph. The \textit{union} of $\X$, denoted by $\union{\X}$, is the marked hypergraph defined by: $V(\union{\X}) = \bigcup_{X \in \X}V(X)$, $E(\union{\X}) = \bigcup_{X \in \X}E(X)$ and $M(\union{\X}) = \bigcup_{X \in \X}M(X)$. We may also use the notation $\union{\X}=X_1 \cup \ldots \cup X_t$. 
\end{definition}

\begin{notation}\label{not:updated}
	Let $H$ be a marked hypergraph, and let $x,y \in V(H) \setminus M(H)$.
	\begin{itemize}[noitemsep,nolistsep]
		\item We denote by $H^{+x}$ the marked hypergraph obtained from $H$ by marking $x$, i.e.: $V(H^{+x})=V(H)$, $E(H^{+x})=E(H)$, $M(H^{+x})=M(H) \cup \{x\}$.
		\\ By convention, if $X \subseteq H$ does not contain $x$, then we define $X^{+x}=X$.
		\\ If $\X$ is a collection of subhypergraphs of $H$, then we define $\X^{+x}=\{X^{+x} \mid X \in \X\}$ which is a collection of subhypergraphs of $H^{+x}$.
		\item We denote by $H^{-y}$ the marked hypergraph obtained from $H$ by deleting $y$, assuming $V(H) \neq \{y\}$, i.e.: $V(H^{-y})=V(H)\setminus\{y\}$, $E(H^{-y})=\{e \in E(H) \mid y\not\in e\}$, $M(H^{-y})=M(H)$.
		\\ By convention, if $X \subseteq H$ does not contain $y$, then we define $X^{-y}=X$.
		\item We may combine these notations, as in $H^{+x-y}=(H^{+x})^{-y}=(H^{-y})^{+x}$ if $x \neq y$ for instance.
	\end{itemize}
\end{notation}

\begin{remarque}
	It should be noted that $H^{-y}$ is a subhypergraph of $H$, while $H^{+x}$ is not because of the additional marked vertex.
\end{remarque}

\subsubsection{The Maker-Breaker game on marked hypergraphs}

\noindent In the literature, the Maker-Breaker game is played on a standard hypergraph $H$ rather than a marked hypergraph. Maker and Breaker take turns picking vertices of $H$, and Maker wins if and only if she manages to possess all the vertices of some edge of $H$. Equivalently, Breaker wins if and only if he possesses all the vertices of some transversal (vertex cover) of the hypergraph. The actions of both players can be seen as follows: Maker \textit{marks} vertices, while Breaker \textit{deletes} vertices. Indeed, Breaker picking a vertex $y$ makes every edge $e \ni y$ irrelevant for the rest of the game, as Maker will never be able to possess all the vertices of $e$ from there.
\\ For this reason, it is natural to consider the game as played on marked hypergraphs. On each turn, Maker selects a non-marked vertex and marks it, then Breaker selects a non-marked vertex and deletes it (meaning the vertex is removed as well as all edges containing it). Some vertices may be marked already before the game starts. Maker wins if and only if, at some point during the game, there is a fully marked edge, i.e., an edge whose vertices are all marked. An example on the "tic-tac-toe hypergraph" is given in Figure \ref{Example_Game}: here, we see that Maker wins by completing the middle row of the hypergraph. For convenience, we will assume that the game continues until all vertices have been picked (i.e., the game continues even if Maker has already won for instance).

\begin{figure}[h]
	\centering
	\includegraphics[scale=.5]{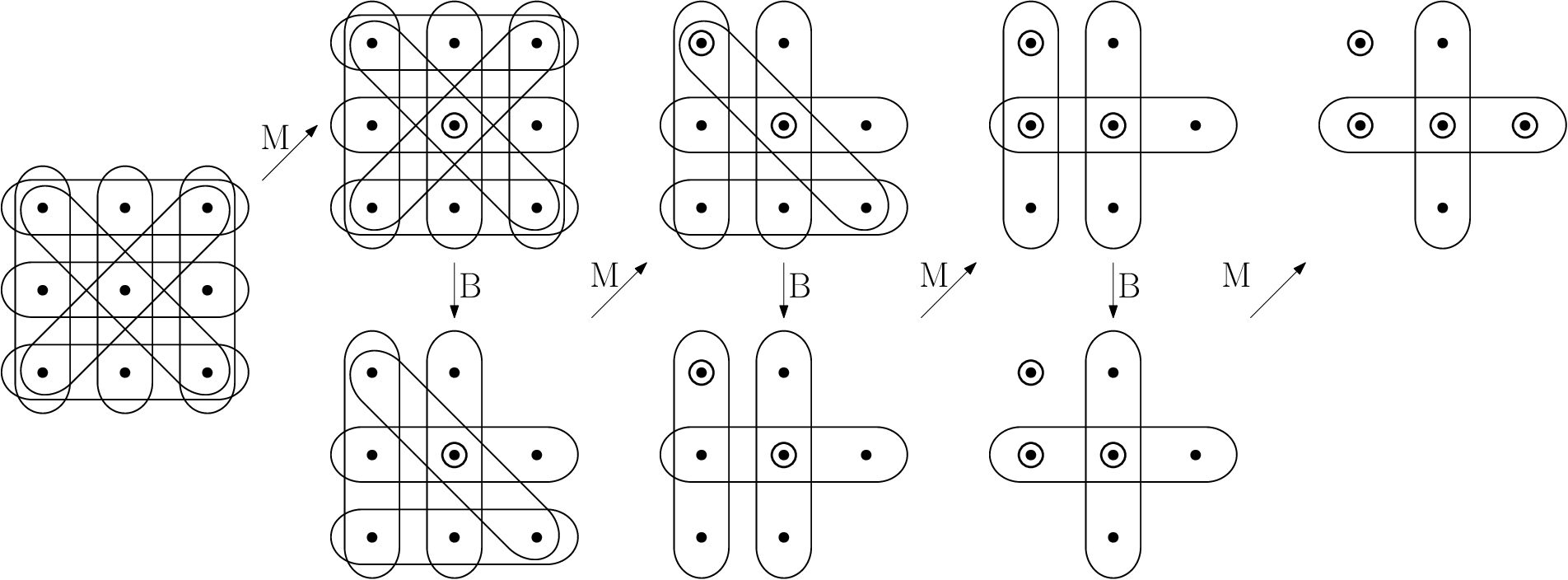}
	\caption{Evolution of the marked hypergraph during a game. The marked vertices are circled, as they will be in all figures.}\label{Example_Game}
\end{figure}

\noindent A \textit{strategy} dictates which vertex to pick next depending on all the previous moves. Formally, a strategy for Maker (resp. Breaker) on a marked hypergraph $H$ is a function which maps every sequence $S$ of pairwise distinct non-marked vertices of $H$, of even (resp. odd) length less than $|V(H) \setminus M(H)|$, to a non-marked vertex of $H$ that is not in $S$. A \textit{winning strategy} is a strategy $\Sigma$ for a player such that this player wins the game by following the strategy $\Sigma$, no matter what strategy the opponent follows. Since Maker and Breaker have complementary goals, the game cannot end in a draw, and so there are only two possibilities for the \textit{outcome} of the Maker-Breaker game played on a given marked hypergraph $H$: either Maker has a winning strategy, in which case we say that $H$ is a \textit{Maker win}, or Breaker has a winning strategy, in which case we say that $H$ is a \textit{Breaker win}. In practice, we will most often use a recursive definition of the outcome instead, which is equivalent. Indeed, the operators $^{+x}$ and $^{-y}$ can be interpreted as the effect of Maker picking $x$ and Breaker picking $y$ respectively. Therefore, after a \textit{round} of play (i.e., one move by each player) on a marked hypergraph $H$ where Maker marks $x$ and Breaker deletes $y$, it is as if a fresh game starts on the marked hypergraph $H^{+x-y}$. The outcome of the Maker-Breaker game can thus be defined in the following way:

\begin{definition}\label{def:trivialmakerwin}
	Let $H$ be a marked hypergraph. We say $H$ is a \textit{trivial Maker win} if some edge $e \in E(H)$ satisfies $|e \setminus M(H)| \leq 1$.
\end{definition}

\begin{definition}\label{def:makerwin}
	Let $H$ be a marked hypergraph. The fact that $H$ is a \textit{Maker win} is defined recursively as follows:
	\begin{enumerate}[noitemsep,nolistsep,label=(\arabic*)]
		\item If $|V(H) \setminus M(H)| \leq 1$, then $H$ is a Maker win if and only if $H$ is a trivial Maker win.
		\item If $|V(H) \setminus M(H)| \geq 2$, then $H$ is a Maker win if and only if there exists $x \in V(H) \setminus M(H)$ such that, for all $y \in V(H^{+x}) \setminus M(H^{+x})$, $H^{+x-y}$ is a Maker win.
	\end{enumerate}
	Otherwise, we say $H$ is a \textit{Breaker win}.
\end{definition}

\begin{notation}\label{not:makerbreaker}
	Let \makerbreaker be the decision problem that takes as input a marked hypergraph $H$ and outputs "yes" if and only if $H$ is a Maker win.
\end{notation}

\begin{remarque}
	It is very important to remember that Maker always plays first. Note that a marked hypergraph $H$ would be a Breaker win with Breaker playing first if and only if there exists $y \in V(H) \setminus M(H)$ such that $H^{-y}$ is a Breaker win with Maker playing first. Therefore, our assumption that Maker plays first is reasonable, since Breaker playing first would reduce to that case anyway. This explains why what we call a trivial Maker win should indeed fall under the definition of a Maker win:
	\begin{itemize}[noitemsep,nolistsep]
		\item If there exists some edge $e \in E(H)$ such that $e \setminus M(H)=\{x\}$, then Maker can win in one move, by picking $x$.
		\item If there exists some edge $e \in E(H)$ such that $e \setminus M(H)= \varnothing$, then Maker has already won.
	\end{itemize}
\end{remarque}

\noindent It can also be interesting to consider a version of the game where Maker tries to win in as few moves as possible. The following notation is introduced in \cite{HKS14}, and we adapt it to marked hypergraphs:
\begin{notation}\label{not:tau}
	Let $H$ be a marked hypergraph. We define $\tau_M(H)$ as the minimum number of rounds in which Maker can guarantee to get a fully marked edge when playing the Maker-Breaker game on $H$, with $\tau_M(H)=\infty$ by convention if $H$ is a Breaker win. Equivalently, $\tau_M(H)$ may be defined recursively as follows:
	\begin{enumerate}[noitemsep,nolistsep,label=(\arabic*),start=0]
		\item If $H$ is a trivial Maker win, then define $\tau_M(H) \in \{0,1\}$ as the minimum number of non-marked vertices in an edge of $H$.
		\item If $H$ is not a trivial Maker win and $|V(H) \setminus M(H)| \leq 1$, then define $\tau_M(H)=\infty$.
		\item If $H$ is not a trivial Maker win and $|V(H) \setminus M(H)| \geq 2$, then define $$\tau_M(H) = 1 + \underset{x \in V(H) \setminus M(H)}{\min}\,\,\,\underset{y \in V(H^{+x}) \setminus M(H^{+x})}{\max}\,\,\tau_M(H^{+x-y}).$$
	\end{enumerate}
\end{notation}

\noindent The study of the Maker-Breaker game revolves around considering certain classes of (marked) hypergraphs for which we try to:
\begin{itemize}[noitemsep,nolistsep]
	\item[--] identify criteria ensuring a Maker win or a Breaker win;
	\item[--] evaluate $\tau_M(\cdot)$ in the case of a Maker win, i.e., find fast-winning strategies for Maker;
	\item[--] determine the algorithmic complexity of \textsc{MakerBreaker}.
\end{itemize}
About that last problem, notice that we can always restrict ourselves to the uniform case:

\begin{proposition}\label{prop_reduction}
	For any $k \geq 2$, the following three decision problems all reduce polynomially to one another: 
	\begin{enumerate}[noitemsep,nolistsep,label={\textup{(\alph*)}}]
		\item \makerbreaker on hypergraphs of rank $k$;
		\item \makerbreaker on marked hypergraphs of rank $k$;
		\item \makerbreaker on $k$-uniform marked hypergraphs.
	\end{enumerate}
\end{proposition}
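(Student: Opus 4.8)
The plan is to prove the three reductions (a)$\to$(b)$\to$(c)$\to$(a), after which polynomial interreducibility of all three follows. The reduction (a)$\to$(b) is immediate, since a hypergraph of rank $k$ is literally a marked hypergraph of rank $k$ with $M(H)=\varnothing$, so the identity map is a valid reduction; one just has to note that the input size does not change. For (b)$\to$(c), given a marked hypergraph $H$ of rank $k$, I would first handle marked vertices and then pad small edges. A marked vertex already claimed by Maker can be ``removed'' from each edge containing it without changing the winner: formally, replace each edge $e$ by $e\setminus M(H)$, which is a game-equivalent reduction since, in the original game, Maker has already secured the vertices of $e\cap M(H)$ and only needs the rest. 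This is where a small subtlety arises: if some edge becomes empty, i.e.\ $e\subseteq M(H)$, then $H$ is a trivial Maker win and we output a fixed small $k$-uniform Maker-win instance. Having reduced to a hypergraph with no marked vertices, I then make it $k$-uniform: for each edge $e$ with $|e|=j<k$, introduce $k-j$ brand-new private vertices and add them to $e$. Since these padding vertices are private to their edge and non-marked, Breaker gains nothing by ever playing them, so the winner is unchanged; this is the routine ``spare vertex'' argument, and it clearly runs in polynomial time.

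The only genuinely non-trivial direction is (c)$\to$(a): we must simulate a $k$-uniform \emph{marked} hypergraph by an ordinary (unmarked) hypergraph of rank $k$. The obstacle is that the target instance may not use marked vertices at all, so the ``$i$ vertices of an edge already belong to Maker'' information has to be encoded structurally. The idea I would use is the standard gadget replacing each marked vertex $m$ by a large set of pendant edges forcing Maker to effectively own $m$: attach to $m$ many disjoint edges of the form $\{m\}\cup\{$two fresh private vertices$\}$ (or, to stay within rank $k$, edges $\{m\}\cup F$ with $|F|=k-1$ fresh and private), arranged so that losing the local fight at $m$ costs Breaker the global game. More robustly, one can simulate ``$m$ is Maker's'' by giving Maker a free forced win threat localized at $m$: create $\lceil \log_2(\text{something}) \rceil$-many independent threats so that Breaker cannot answer them all, which forces Breaker, on his very first move, to have effectively conceded $m$ --- equivalently, one reduces to the case where Maker's first move is at such a gadget and Breaker's reply is forced, after which the residual game is exactly the original marked game on $H^{+m}$-type positions. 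Since Proposition~\ref{prop_reduction} only asserts existence of polynomial reductions, it suffices to exhibit one such gadget and check, using the recursive definition of Maker/Breaker win, that it has the required simulation property.

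Concretely, the key steps in order: (i) observe (a)$\to$(b) is the identity; (ii) for (b)$\to$(c), delete marked vertices from edges (outputting a trivial instance if some edge empties) and pad short edges with fresh private vertices, proving game-equivalence by noting private non-marked vertices are never useful to Breaker and already-owned vertices can be contracted out; (iii) for (c)$\to$(a), replace each marked vertex by a forcing gadget of disjoint rank-$k$ edges on fresh vertices and argue, by induction following the recursive Maker-win definition, that with optimal play the gadget behaves exactly like a pre-marked vertex (Breaker cannot profitably deviate from neutralizing it, and Maker cannot do better than taking it ``for free''); (iv) check each construction is polynomial in $|V(H)|+|E(H)|$ and that the chain of reductions closes the cycle. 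I expect step (iii) --- designing the marked-vertex gadget and verifying its simulation property against the game semantics --- to be the main obstacle; steps (i), (ii), (iv) are routine.
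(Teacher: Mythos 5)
Your overall plan (a cyclic chain of reductions) matches the paper's, and step (a)$\to$(b) is fine, but there is a genuine error in (b)$\to$(c) and a serious overcomplication in (c)$\to$(a).

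For (b)$\to$(c), you pad short edges with fresh \emph{non-marked} private vertices and claim that ``Breaker gains nothing by ever playing them, so the winner is unchanged.'' This is backwards. If $v$ is a non-marked vertex private to a padded edge $e'$, Breaker gains a great deal by playing $v$: it destroys $e'$ outright at no cost elsewhere. More fundamentally, Maker now has to claim $k-j$ extra vertices before $e'$ can pay off, so padding with non-marked vertices strictly hurts Maker and can flip a Maker win to a Breaker win (e.g.\ a single edge $\{a\}$ is a trivial Maker win, but after padding to $\{a,b,c\}$ with $b,c$ non-marked, Breaker wins by answering Maker's first pick). The correct gadget is to pad with fresh \emph{marked} vertices, which is what the paper does: a marked vertex is already Maker's, so adding it to an edge does not change what Maker must still claim. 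Your preliminary ``contract out the marked vertices'' step is not wrong, but it is also unnecessary once you pad with marked vertices.

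For (c)$\to$(a), your forcing-gadget construction is unneeded. The paper simply removes all marked vertices (from $V$ and from each edge), observing that $H$ is a Maker win iff this unmarked hypergraph of rank $\le k$ is. This works because neither player may ever play a marked vertex, and Maker already owns it, so shrinking $e$ to $e\setminus M(H)$ is a faithful translation of the remaining game. Your gadget idea is not only much more work (and left essentially unspecified --- you do not actually exhibit the gadget or prove its simulation property), it is also aimed at a non-problem: there is no need to encode ownership structurally when you can simply delete what is already owned. I'd encourage you to look for the trivial move first before reaching for a gadget.
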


\begin{proof}
	The reduction from (a) to (b) is trivial since hypergraphs are special cases of marked hypergraphs. For the reduction from (b) to (c), let $H$ be a marked hypergraph of rank $k$ and define the $k$-uniform marked hypergraph $H_0$ obtained from $H$ as follows: for each edge $e$ of $H$, we create $k-|e|$ new marked vertices, and we add them to $e$. It is clear that $H$ is a Maker win if and only if $H_0$ is a Maker win. For the reduction from (c) to (a), we reverse this idea. Let $H$ be a $k$-uniform marked hypergraph, and let $H_0$ be the hypergraph of rank $k$ obtained from $H$ by removing all marked vertices and replacing each edge $e$ of $H$ by $e \setminus M(H)$. It is clear that $H$ is a Maker win if and only if $H_0$ is a Maker win.
\end{proof}

\noindent Compared to other ways of updating the hypergraph throughout the game that can be found in the literature, the use of marked hypergraphs allows us to manipulate uniform marked hypergraphs exclusively, since the operators $^{+x}$ and $^{-y}$ preserve uniformity. This will be important during our structural study of the 3-uniform case, which would become very tedious if we had to consider edges that could also be of size 2. Finally, this choice will help for proofs by induction (after one round of play, Maker's last pick is still present, albeit marked). \newline
Note that it is also possible to reduce to the uniform non-marked case, by successive duplications of edges into bigger edges. This reduction alters the hypergraph structure, however it will prove useful for algorithmic purposes.

\begin{proposition}\label{prop_duplication}
	Let $H$ be a marked hypergraph, and let $F \subseteq E(H)$. Define the marked hypergraph $H_0$ obtained from $H$ by adding two new (non-marked) vertices $a$ and $b$ and replacing each edge $e \in F$ by two edges $e_a = e \cup \{a\}$ and $e_b = e \cup \{b\}$. Then $H$ is a Maker win if and only if $H_0$ is a Maker win.
\end{proposition}

\begin{proof}
	Let $P \in \{\text{Maker},\text{Breaker}\}$ be the player who has a winning strategy $\Sigma$ on $H$. We design the following strategy for $P$ when playing on $H_0$. $P$ follows the strategy $\Sigma$, except when the opponent picks $a$ (resp. $b$) in which case $P$ answers by picking $b$ (resp. $a$). It can happen that $a$ and $b$ are the only non-marked vertices left, in which case $P$ picks one arbitrarily and the opponent has to pick the other. In all cases, Maker and Breaker share the vertices $a$ and $b$. Therefore, for each $e \in F$, the vertices of $e$ all end up marked if and only if the vertices of either $e_a$ or $e_b$ all end up marked. Since all the other edges are common to both $H$ and $H_0$, we can conclude that an edge of $H_0$ ends up with all its vertices marked if and only if some edge of $H$ does. All in all, this strategy is indeed winning for $P$ on $H_0$ since $\Sigma$ is on $H$.
\end{proof}

\subsubsection{Monotonicity properties of the Maker-Breaker game}

\noindent Let us exhibit two different ways in which the Maker-Breaker game is monotone. The first (resp. second) result states that marking vertices (resp. adding vertices and/or edges) cannot harm Maker.

\begin{proposition}[Marking Monotonicity]\label{prop_extra}
	Let $H$ be a marked hypergraph, and let $x \in V(H) \setminus M(H)$. If $H$ is a Maker win, then $H^{+x}$ is a Maker win. More precisely, we have $\tau_M(H^{+x}) \leq \tau_M(H)$.
\end{proposition}

\begin{proof}
	This is essentially the famous "strategy-stealing" argument \cite{HJ63}. Suppose Maker has a strategy $\Sigma$ to win within $t$ rounds on $H$. Playing on $H^{+x}$, in which $x$ is an "extra" (i.e., an extra marked vertex compared to $H$), Maker follows the strategy $\Sigma$. At some point, she will be instructed to pick the extra, which she cannot do as it is already marked in reality: instead, she picks an arbitrary non-marked vertex, which becomes the new extra. Continuing so, all throughout the game, the marked vertices are a superset of what they would be if the game was played on $H$: it is the same set plus the extra. Therefore, within the first $t$ rounds, some edge will be fully marked.
\end{proof}

\begin{proposition}[Subhypergraph Monotonicity]\label{prop_subwin}
	Let $H$ be a marked hypergraph, and let $X$ be a subhypergraph of $H$. If $X$ is a Maker win, then $H$ is a Maker win. More precisely, we have $\tau_M(H) \leq \tau_M(X)$.
\end{proposition}

\begin{proof}
	Suppose Maker has a strategy $\Sigma$ to win in at most $t$ rounds on $X$. Playing on $H$, Maker simply makes all her picks inside $X$, following the strategy $\Sigma$. Whenever Breaker picks a vertex outside $X$, Maker pretends that Breaker has picked some arbitrary non-marked $y \in V(X)$ instead, so as to get a normal game on $X$ (i.e., a game with alternating turns). Maker will get a fully marked edge of $X$ in at most $t$ rounds, as she cannot win any later than she would have if Breaker had actually made all his picks inside $X$.
\end{proof}

\noindent The previous result is well known and absolutely essential to this paper. Indeed, our approach is based on identifying some elementary Maker wins and determining whether Breaker can prevent them from appearing as a subhypergraph.

\subsection{Basic structures in 3-uniform marked hypergraphs}

\noindent During our study of the Maker-Breaker game on 3-uniform marked hypergraphs, some key substructures are going to arise. In this subsection, we define the most basic ones and study some of their properties.

\subsubsection{Walks}

\begin{definition}\label{def:walk}
    A \textit{walk} is a finite sequence $\ora{W}=(U_0,\ldots,U_{\ell})$ such that:
    \begin{itemize}[noitemsep,nolistsep]
        \item $U_0,\ldots,U_{\ell}$ are subsets of some common set of vertices $U$, of which some subset $M(U)$ is marked;
        \item $|U_i| \in \{1,3\}$  for all $0 \leq i \leq \ell$;
        \item $U_i \cap U_{i+1} \neq \varnothing$ for all $0 \leq i \leq \ell-1$.
    \end{itemize}
    A singleton $U_i=\{x\}$ might be simply denoted as $x$. Finally, we define $V(\ora{W}) = \bigcup_{0 \leq i \leq \ell}U_i$, $E(\ora{W}) = \{U_i \mid 0 \leq i \leq \ell \text{ and } |U_i| = 3\}$ and $M(\ora{W})=V(\ora{W}) \cap M(U)$.
\end{definition}

\noindent We are going to use walks as a way to navigate inside 3-uniform marked hypergraphs. They also help defining some elementary structures whose edge sets have a natural ordering. The elements of the walk will correspond to edges, plus some singletons which are useful to give information about intersections: for example, if a subsequence $(\ldots,e,x,e',\ldots)$ appears inside of a walk where $e$ and $e'$ are edges and $x$ is a vertex, then we know that $x \in e \cap e'$. Note that elements of a walk are not necessarily pairwise distinct.

\begin{definition}\label{def:equivalent}
	Two walks are said to be \textit{equivalent} if they coincide when removing all their singleton elements.
\end{definition}

\begin{notation}\label{notation_walk}
	Let $\ora{W}=(U_0,\ldots,U_{\ell})$ be a walk.
	\begin{itemize}[noitemsep,nolistsep]
		\item Provided $U_0,\ldots,U_{\ell}$ are not all singletons, we denote by $\Start(\ora{W})$ (resp. $\End(\ora{W})$) the non-singleton element $U_i$ of smallest (resp. largest) index $i$.
		\item We define the reverse walk $\ola{W}=(U_{\ell},\ldots,U_0)$.
		\item If $\ora{W'}=(U'_0,\ldots,U'_{\ell'})$ is another walk such that $U_{\ell} \cap U'_0 \neq \varnothing$, then we define the concatenated walk $\ora{W} \oplus \ora{W'} = (U_0,\ldots,U_{\ell},U'_0,\ldots,U'_{\ell'})$.
		\item Given a set $Z$ such that $Z \cap V(\ora{W}) \neq \varnothing$, we define the walk ``cut at $Z$'' as $\ora{W}\vert_Z=(U_0,\ldots,U_j)$ where $j=\min\{0 \leq i \leq \ell \mid Z \cap U_i \neq \varnothing\}$ (so that the walk $\ora{W}$ is stopped the first time that it reaches the set $Z$).
	\end{itemize}
\end{notation}

\begin{definition}\label{def:induced}
	The marked hypergraph \textit{induced} by a walk $\ora{W}$ is the marked hypergraph, denoted by $\HH{\ora{W}}$, defined by $V(\HH{\ora{W}})=V(\ora{W})$, $E(\HH{\ora{W}})=E(\ora{W})$ and $M(\HH{\ora{W}})=M(\ora{W})$.
\end{definition}

\noindent Some elementary 3-uniform marked hypergraphs can be defined as induced by a walk. In this paper, almost all of them will be linear.

\begin{definition}\label{def:linearhypergraph}
	We say a marked hypergraph $H$ is \textit{linear} if $|e \cap e'| \leq 1$ for all distinct $e,e' \in E(H)$.
\end{definition}

\noindent We now define linearity for walks, in a way that is consistent with the definition for hypergraphs, as well as the notion of simple walk.

\begin{definition}\label{def:linearwalk}
	We say a walk $\ora{W}$ is \textit{linear} if its induced marked hypergraph $\HH{\ora{W}}$ is linear.
\end{definition}

\begin{definition}\label{def:simplewalk}
	Let $\ora{W}=(U_0,\ldots,U_{\ell})$ be a walk.
	\begin{itemize}[noitemsep,nolistsep]
		\item Let $x \in V(\ora{W})$. We say $x$ is a \textit{repeated vertex} in $\ora{W}$ if there exist indices $i,j$ such that $|i-j|\geq 2$ and $x \in U_i \cap U_j$.
		\item We say $\ora{W}$ is \textit{simple} if there are no repeated vertices in $\ora{W}$, i.e., if $U_i \cap U_j = \varnothing$ for all $i,j$ such that $|i-j| \geq 2$.
	\end{itemize}
\end{definition}

\subsubsection{Chains, cycles and tadpoles}

\noindent The following object is often referred to as a \textit{linear path} or a \textit{loose path} in the literature \cite{OS14}. In this paper, we call it a chain, coinciding with the definition from \cite{RW20}.

\begin{definition}\label{def_chain}
	Let $P$ be a marked hypergraph, and let $a,b \in V(P)$. We say $P$ is an \textit{$ab$-chain} if there exists a walk inducing $P$ of the form $\ora{W}=(a,e_1,\ldots,e_L,b)$ where:
	\begin{itemize}[noitemsep,nolistsep]
		\item $e_1,\ldots,e_L$ are each of size 3;
		\item $\ora{W}$ is linear;
		\item $\ora{W}$ is simple.
	\end{itemize}
	Any walk $\ora{W}$ that satisfies this definition or is equivalent to one that does is then said to \textit{represent} $P$. We say $L=|E(P)|$ is the \textit{length} of $P$. An $ab$-chain may also be referred to as an \textit{$a$-chain} if we desire to highlight just one end point, or a \textit{chain} if we desire to highlight none. See Figure \ref{Example_Paths}.
\end{definition}

\begin{remarque}
	Any $ab$-chain is also a $ba$-chain (take the reverse walk in the definition), an $a$-chain, a $b$-chain and a chain.
\end{remarque}

\begin{figure}[h]
	\centering
	\includegraphics[scale=.58]{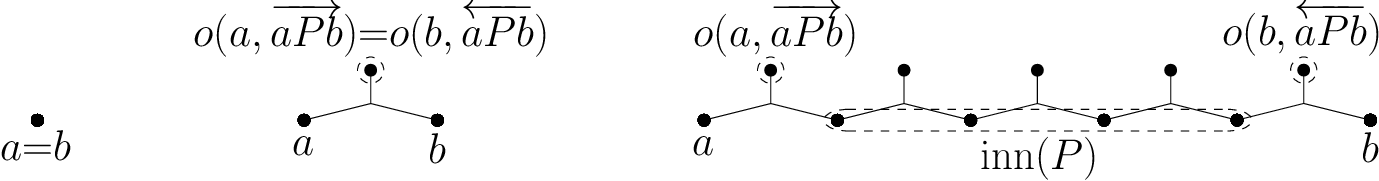}
	\caption{An $ab$-chain $P$ of length 0 (left), length 1 (middle), length 5 (right). In all figures of this paper, edges of size 3 will be represented using a "claw" shape joining their three vertices.}\label{Example_Paths}
\end{figure}

\begin{notation}\label{not:walk_chain}
	Let $P$ be an $ab$-chain. For fixed $a$ and $b$, there is a unique walk in $P$ satisfying the definition: we denote it by $\ora{aPb}=(a,e_1,\ldots,e_L,b)$. Similarly, for fixed $a$ (resp. fixed $b$), the walk $\ora{aP}=(a,e_1,\ldots,e_L)$ (resp. $\ora{bP}=(b,e_L,\ldots,e_1)$) is well defined. Note that the walks $\ora{aPb}$, $\ora{bPa}=\ola{aPb}$, $\ora{aP}$, $\ora{bP}$ all represent $P$.
\end{notation}

\begin{definition}\label{def:chain_inner}
	Let $P$ be an $ab$-chain. We define $\inn(P)=\bigcup_{e,e' \in E(P), e \neq e'}(e \cap e')$, which corresponds to the set of vertices of degree 2 in $P$. An element of $\inn(P)$ is called an \textit{inner vertex} of $P$. See Figure \ref{Example_Paths}.
\end{definition}

\begin{notation}\label{not:o_neighbor}
	Let $P$ be an $ab$-chain of positive length. We denote by $o(a,\ora{aPb})$ the only vertex in $\Start(\ora{aPb}) \setminus (\inn(P) \cup \{a,b\})$. See Figure \ref{Example_Paths}.
\end{notation}

\noindent We now introduce cycles. Our definition coincides with that from \cite{Kut04} (for cycles of length at least 3) and \cite{RW20}.

\begin{definition}\label{def_cycle}
	Let $C$ be a marked hypergraph, and let $a \in V(C)$. We say $C$ is an \textit{$a$-cycle} if there exists a walk inducing $C$ of the form $\ora{W}=(a,e_1,\ldots,e_L,a)$ where:
	\begin{itemize}[noitemsep,nolistsep]
		\item $e_1,\ldots,e_L$ are each of size 3;
		\item $L \geq 2$;
		\item if $L \geq 3$, then $\ora{W}$ is linear, and if $L=2$, then $|e_1 \cap e_2|=2$;
		\item $a$ is the only repeated vertex in $\ora{W}$, and $\{1 \leq i \leq L \mid a \in e_i\}=\{1,L\}$.
	\end{itemize}
	Any walk $\ora{W}$ that satisfies this definition or is equivalent to one that does is then said to \textit{represent} $C$. We say $L=|E(C)|$ is the \textit{length} of $C$. An $a$-cycle may simply be referred to as a \textit{cycle}. See Figure \ref{Example_Cycles}.
\end{definition}

\begin{figure}[h]
	\centering
	\includegraphics[scale=.58]{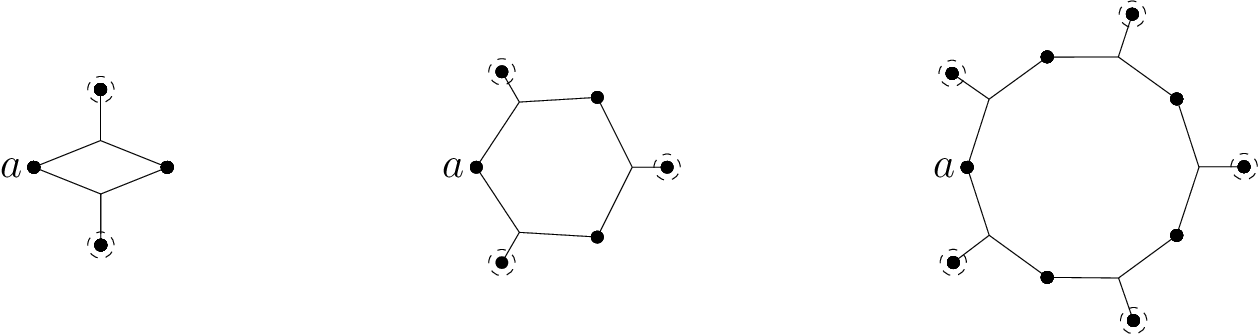}
	\caption{An $a$-cycle $C$ of length 2 (left), length 3 (middle), length 5 (right). The outer vertices are highlighted, the others are inner vertices.}\label{Example_Cycles}
\end{figure}

\begin{remarque}
	Note that a cycle is linear except if it is of length 2.
\end{remarque}

\begin{notation}\label{not:walk_cycle}
	Let $C$ be an $a$-cycle. For fixed $a$, there are exactly two walks satisfying Definition \ref{def_cycle}: if the first one is written as $(a,e_1,\ldots,e_L,a)$, then the second one is $(a,e_L,\ldots,e_1,a)$. We denote the former by $\ora{(a-e_1)C}$ and the latter by $\ora{(a-e_L)C}$. When wishing to consider one of the two arbitrarily, we may use the notation $\ora{aC}$.
\end{notation}

\begin{definition}\label{def:cycle_inner}
	Let $C$ be an $a$-cycle.
	\begin{itemize}[noitemsep,nolistsep]
		\item We define $\inn(C)=\bigcup_{e,e' \in E(C), e \neq e'}(e \cap e')$, which corresponds to the set of vertices of degree 2 in $C$. An element of $\inn(C)$ is called an \textit{inner vertex} of $C$.
		\item We define $\out(C)=V(C) \setminus \inn(C)$, which corresponds to the set of vertices of degree 1 in $C$. An element of $\out(C)$ is called an \textit{outer vertex} of $C$.
	\end{itemize}
	See Figure \ref{Example_Cycles}.
\end{definition}

\begin{remarque}
	An $a$-cycle $C$ is also a $b$-cycle for any $b \in \inn(C)$ (note that $a \in \inn(C)$ for instance), however it is not a $b$-cycle if $b \in \out(C)$.
\end{remarque}

\begin{definition}\label{def:hyperforest}
    A 3-uniform marked hypergraph is called a \textit{hyperforest} if it contains no cycles.
\end{definition}

\noindent Finally, we introduce tadpoles, a less standard hypergraph structure which will also play a prominent part in our structural studies. This terminology is inspired from graph theory, in which a \textit{tadpole graph} is defined as the union of a path and a cycle whose only shared vertex is one of the extremities of the path.

\begin{definition}\label{def_tadpole}
	Let $T$ be a marked hypergraph, and let $a \in V(T)$. We say $T$ is an \textit{$a$-tadpole} if there exists a walk inducing $T$ of the form $\ora{W}=(a,e_1,\ldots,e_s,b,e_{s+1},\ldots,e_t,b)$ where:
	\begin{itemize}[noitemsep,nolistsep]
		\item $a$ and $b$ are the only singletons;
		\item $e_1,\ldots,e_t$ are each of size 3;
		\item $(a,e_1,\ldots,e_s,b)$ represents an $ab$-chain $P_T$;
		\item $(b,e_{s+1},\ldots,e_t,b)$ represents a $b$-cycle $C_T$;
		\item $V(P_T) \cap V(C_T)=\{b\}$.
	\end{itemize}
	Any walk $\ora{W}$ that satisfies this definition or is equivalent to one that does is then said to \textit{represent} $T$. We may simply say $T$ is a \textit{tadpole}. The $ab$-chain $P_T$ and the $b$-cycle $C_T$ are clearly unique for a given $T$ (they do not depend on the choice of $\ora{W}$), so we may keep these notations. It is important to note that an $a$-cycle is a particular case of an $a$-tadpole, where $s=0$, i.e., $a=b$. See Figure \ref{Example_Tadpoles}.
\end{definition}

\begin{remarque}
	In other words, an $a$-tadpole is the union, for some vertex $b$, of an $ab$-chain and a $b$-cycle whose only common vertex is $b$. We emphasize that, by definition of a $b$-cycle, $b$ must be an inner vertex of the cycle: in particular, the examples from Figure \ref{Not_Tadpoles} are not $a$-tadpoles. Also note that a tadpole $T$ is linear except if $C_T$ is of length 2.
\end{remarque}

\begin{figure}[h]
	\centering
	\includegraphics[scale=.58]{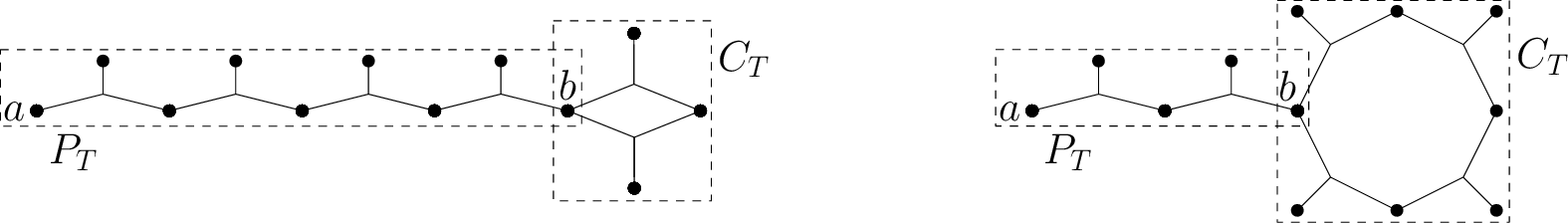}
	\caption{An $a$-tadpole $T$ (that is not an $a$-cycle), two examples.}\label{Example_Tadpoles}
\end{figure}

\begin{figure}[h]
	\centering
	\includegraphics[scale=.58]{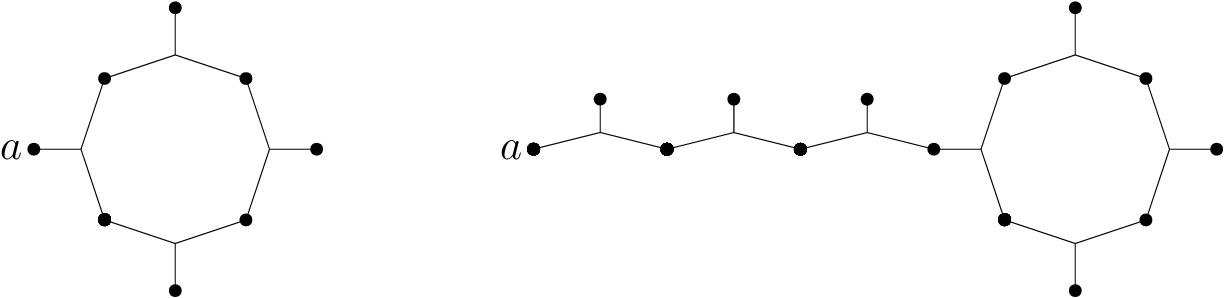}
	\caption{These two hypergraphs are not $a$-tadpoles.}\label{Not_Tadpoles}
\end{figure}

\begin{notation}\label{not:walk_tadpole}
	Let $T$ be an $a$-tadpole. For fixed $a$, there are exactly two walks satisfying Definition \ref{def_tadpole}: if the first one is written as $(a,e_1,\ldots,e_s,b,e_{s+1},\ldots,e_t,b)$, then the second one is $(a,e_1,\ldots,e_s,b,e_t,e_{t-1},\ldots,e_{s+1},b)$. The notation $\ora{aT}$ refers to either of the two arbitrarily.
\end{notation}

\subsubsection{Specific marked structures}

\noindent Chains, cycles and tadpoles are purely defined by their hypergraph structure, and may have any number of marked vertices. We now introduce special cases where some particular vertices are marked (see Figure \ref{Example_Marked} for some examples), which will be very relevant to our study of the Maker-Breaker game on 3-uniform marked hypergraphs.

\begin{definition}\label{def:snake}
	An \textit{$a$-snake} is an $ab$-chain $S$ of positive length such that $b \in M(S)$. We may also refer to $S$ as an \textit{$ab$-snake} or simply a \textit{snake}.
\end{definition}

\begin{definition}\label{def:nunchaku}
	An \textit{$ab$-nunchaku} is an $ab$-chain $N$ of positive length such that $M(N)=\{a,b\}$. We may also refer to $N$ as an \textit{$a$-nunchaku} or simply a \textit{nunchaku}.
\end{definition}

\begin{definition}\label{def:necklace}
	An \textit{$a$-necklace} is an $a$-cycle $C$ such that $M(C)=\{a\}$. An $a$-necklace may simply be referred to as a \textit{necklace}.
\end{definition}

\begin{remarque}
	Note that nunchakus and necklaces have an exact required number (and location) of marked vertices, whereas a snake might have more than the one prescribed marked vertex. For example, a nunchaku is technically a snake.
\end{remarque}

\begin{figure}[h]
	\centering
	\includegraphics[scale=.58]{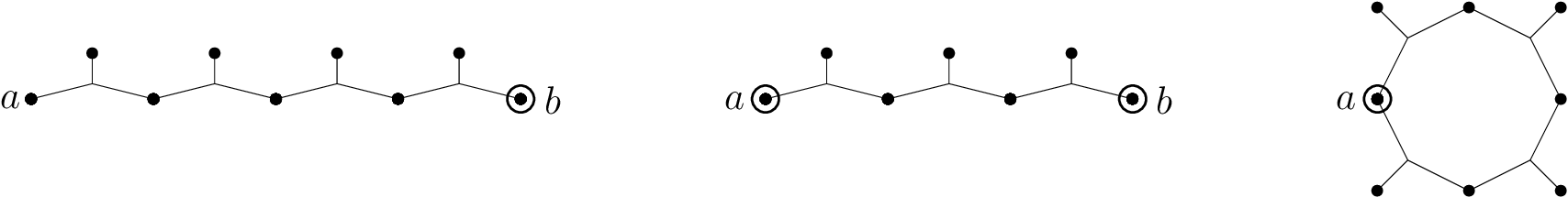}
	\caption{From left to right: an $ab$-snake, an $ab$-nunchaku, an $a$-necklace.}\label{Example_Marked}
\end{figure}

\section{Dangers}\label{Section3}

\subsection{Dangers and forks}

\noindent This subsection presents notions that are valid in all marked hypergraphs, regardless of rank. In this paper, we approach the Maker-Breaker game from Breaker's point of view. Informally, a danger at $x$ is a local threat that will be activated if Maker picks $x$, in the sense that Breaker would then be forced to neutralize the danger with his very next move. Therefore, a danger is naturally defined as a subhypergraph:

\begin{definition}\label{def:dangeratx}
	Let $H$ be a marked hypergraph and $x \in V(H) \setminus M(H)$. A \textit{danger at $x$ in $H$} is a subhypergraph $D$ of $H$ containing $x$ such that $D^{+x}$ is a Maker win.
\end{definition}

\noindent Suppose that $D$ is a danger at $x$ in $H$ and Maker picks $x$. Now, the resulting marked hypergraph $H^{+x}$ contains $D^{+x}$ as a subhypergraph, which is a Maker win by definition. However, the term "Maker win" assumes that Maker plays first, whereas in our scenario it is now Breaker's turn to pick some vertex $y$. If Breaker leaves $D^{+x}$ intact, then he will lose by Proposition \ref{prop_subwin} (Subhypergraph Monotonicity). Therefore, Breaker is forced to immediately "destroy" the danger, i.e., pick some $y \in V(D) \setminus \{x\}$. Note that it is unclear in general which choices of $y \in V(D) \setminus \{x\}$ are winning or losing for Breaker, but all choices of $y \not\in V(D) \setminus \{x\}$ are for sure losing, so it is indeed necessary for Breaker to destroy the danger. \newline
It is also useful to define a notion of danger as a "type" of object, independently of any ambient marked hypergraph, hence the following definitions.

\begin{definition}\label{def:pointed}
	A \textit{pointed marked hypergraph} is a pair $(H,x)$ where $H$ is a marked hypergraph and $x \in V(H) \setminus M(H)$.
\end{definition}

\begin{definition}\label{def:isomorphic}
	We say two pointed marked hypergraphs $(H,x)$ and $(H',x')$ are \textit{isomorphic} if there exists a bijection $\varphi : V(H) \to V(H')$ such that:
	\begin{itemize}[noitemsep,nolistsep]
		\item For all $e \subseteq V(H)$: $\, e \in E(H) \iff \varphi(e) \in E(H')$.
		\item For all $v \in V(H)$: $\, v \in M(H) \iff \varphi(v) \in M(H')$.		
		\item $\varphi(x)=x'$.
	\end{itemize}
\end{definition}

\begin{definition}\label{def:danger}
	A \textit{danger} is a pointed marked hypergraph $(D,x)$ such that $D^{+x}$ is a Maker win.
\end{definition}

\noindent The most simple example of a danger is the following:

\begin{definition}\label{def:trivialdanger}
	Let $k \geq 2$ be an integer. The \textit{trivial danger of size $k$} is the (unique, up to isomorphism) danger $(D,x)$ consisting of a single edge of size $k$ in which all vertices are marked except for $x$ and a single other vertex. See Figure \ref{Trivial_Danger}.
\end{definition}

\begin{figure}[h]
	\centering
	\includegraphics[scale=.5]{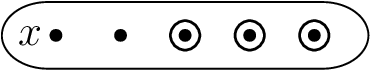}
	\caption{A trivial danger $(D,x)$ of size 5. It is indeed a danger because $D^{+x}$ is a trivial Maker win.}\label{Trivial_Danger}
\end{figure}

\noindent What if Maker picks a vertex $x$ at which there is, not just one danger, but a whole collection $\X$ of dangers? Then Breaker would be forced to answer by picking some vertex $y$ that destroys all of these dangers at once. We will thus consider intersections of collections of subhypergraphs. As vertices that are already marked cannot be picked, we want to exclude them from the intersection, hence the following definition.

\begin{definition}\label{def:intersection}
	Let $\X$ be a collection of marked hypergraphs and let $H$ be a marked hypergraph. We define the \textit{intersection of $\X$ in $H$} as:
	$$ \I{H}{\X} = \{y \in V(H) \setminus M(H) \mid \,y \in V(X) \text{ for all } X \in \X\}.$$
\end{definition}

\begin{remarque}
	Note that we have the following properties:
	\begin{itemize}[noitemsep,nolistsep]
		\item If $\X=\varnothing$, then $\I{H}{\X} = V(H) \setminus M(H)$.
		\item If $\X \subseteq \Y$, then $\I{H}{\Y} \subseteq \I{H}{\X}$.
	\end{itemize}
\end{remarque}

\noindent Since Maker plays first, we always study dangers before Maker's turn. This means that, when considering dangers at some $x$, that vertex $x$ has not been picked yet, however it makes sense to exclude $x$ from the intersection since Breaker will not be able to pick $x$ if Maker picks $x$ herself. The above definition is convenient in that regard, as we can exclude a non-marked vertex $x$ by simply taking the intersection in $H^{+x}$ instead of $H$. This is what we do in the following definition for instance, which describes a situation where $x$ is a winning move for Maker because Breaker cannot destroy all dangers at once.

\begin{definition}\label{def:fork}
	Let $H$ be a marked hypergraph and $x \in V(H) \setminus M(H)$. A \textit{fork at $x$ in $H$} is a collection $\F$ of dangers at $x$ in $H$ such that $\I{H^{+x}}{\F}=\varnothing$, i.e., $\I{H}{\F}=\{x\}$.
\end{definition}

\noindent When considering danger intersections, we will consider the same "types" of dangers at all vertices, given by some family of dangers $\D$.

\begin{notation}\label{notation_xF}
	Let $\D$ be a family of dangers. Let $H$ be a marked hypergraph and $x \in V(H) \setminus M(H)$. We denote by $x\D(H)$ the collection of all subhypergraphs $D$ of $H$ such that $x \in V(D)$ and $(D,x)$ is isomorphic to an element of $\D$ (in particular, all elements of $x\D(H)$ are dangers at $x$ in $H$).
\end{notation}

\begin{definition}\label{def:D-danger}
	Let $\D$ be a family of dangers. Let $H$ be a marked hypergraph and $x \in V(H) \setminus M(H)$.
	\begin{itemize}[noitemsep,nolistsep]
		\item A danger $(D,x) \in \D$ will be called a \textit{$\D$-danger}.
		\item An element of $x\D(H)$ will be called a \textit{$\D$-danger at $x$ in $H$}.
		\item A fork $\F$ at $x$ in $H$ such that $\F \subseteq x\D(H)$ will be called a \textit{$\D$-fork at $x$ in $H$}.
	\end{itemize}
\end{definition}

\noindent We now introduce a key property which states whether Breaker can destroy all $\D$-dangers at whichever vertex Maker picks at the start of the game. In other words, this property means that there are no $\D$-forks anywhere. This is a necessary condition for a Breaker win, whatever the considered family of dangers $\D$.

\begin{notation}\label{not:J}
	Let $\D$ be a family of dangers. Let $H$ be a marked hypergraph with $|V(H)\setminus M(H)| \geq 2$. We say that the property $J(\D,H)$ holds if:
	$$ \forall x \in V(H) \setminus M(H)\, : \,\,\I{H^{+x}}{x\D(H)}\neq\varnothing.$$
\end{notation}

\begin{remarque}
	Dangers are not relevant when there is less than one full round of play left, hence the assumption that $|V(H)\setminus M(H)| \geq 2$. This also avoids some dull cases where the property would fail on a technicality, by ensuring that if $x\D(H)=\varnothing$, then $\I{H^{+x}}{x\D(H)}=V(H^{+x})\setminus M(H^{+x}) \neq \varnothing$.
\end{remarque}

\begin{proposition}\label{prop_cn}
	Let $\D$ be a family of dangers. Let $H$ be a marked hypergraph with $|V(H)\setminus M(H)| \geq 2$. If $H$ is a Breaker win, then $J(\D,H)$ holds.  More precisely, if $J(\D,H)$ does not hold, then any $x \in V(H) \setminus M(H)$ such that $\I{H^{+x}}{x\D(H)} = \varnothing$ is a winning first pick for Maker.
\end{proposition}

\begin{proof}
	Suppose that Maker picks $x \in V(H) \setminus M(H)$ such that $\I{H^{+x}}{x\D(H)} = \varnothing$. We must show that, no matter what vertex $y \in V(H^{+x}) \setminus M(H^{+x})$ Breaker picks as an answer, $H^{+x-y}$ is a Maker win. Since $y \not\in \I{H^{+x}}{x\D(H)}$, there exists $D \in x\D(H)$ such that $y \not\in V(D)$. Therefore, $D^{+x}$ is a subhypergraph of $H^{+x-y}$, and it is a Maker win by definition of a danger at $x$. By Proposition \ref{prop_subwin} (Subhypergraph Monotonicity), $H^{+x-y}$ is a Maker win, which concludes.
\end{proof}

\noindent Observe that the property $J(\,\cdot\,,\,\cdot\,)$ is monotone in both its arguments:

\begin{proposition}
	Let $\D$ be a family of dangers. Let $H$ be a marked hypergraph with $|V(H)\setminus M(H)| \geq 2$.
	\begin{itemize}[noitemsep,nolistsep]
		\item For any family of dangers $\D' \subseteq \D$: $\,J(\D,H) \implies J(\D',H)$.
		\item For any subhypergraph $X \subseteq H$ such that $|V(X)\setminus M(X)| \geq 2$: $\,J(\D,H) \implies J(\D,X)$.
	\end{itemize}
\end{proposition}

\begin{proof}
	The first property comes from the fact that $x\D'(H) \subseteq x\D(H)$ hence $\I{H^{+x}}{x\D(H)} \subseteq \I{H^{+x}}{x\D'(H)}$. Let us now prove the second property. Suppose $J(\D,H)$ holds. Let $x \in V(X)\setminus M(X)$: we want to show that there exists $y \in \I{X^{+x}}{x\D(X)}$. By $J(\D,H)$, there exists $y' \in \I{H^{+x}}{x\D(H)}$. If $y' \in V(X)$, then $y = y'$ is suitable since $x\D(X) \subseteq x\D(H)$. If $y' \not\in V(X)$, then in particular $x\D(X)=\varnothing$ (indeed, if there existed $D_0 \in x\D(X) \subseteq x\D(H)$, then we would have $y' \in V(D_0) \subseteq V(X)$), therefore any $y \in V(X^{+x})\setminus M(X^{+x})$ is suitable.
\end{proof}

\noindent When considering all possible dangers at each non-marked vertex, the converse of Proposition \ref{prop_cn} actually holds:

\begin{theoreme}\label{theo_cns}
	Let $\D_{\textup{all}}$ be the family of all dangers. Let $H$ be a marked hypergraph with $|V(H)\setminus M(H)| \geq 2$. Then $H$ is a Breaker win if and only if $J(\D_{\textup{all}},H)$ holds.
\end{theoreme}

\begin{proof}
	The "only if" direction is given by Proposition \ref{prop_cn}, so we show the "if" direction. Suppose $J(\D_{\textup{all}},H)$ holds. Maker picks some $x \in V(H) \setminus M(H)$, and we must show that there exists an answer $y \in V(H^{+x}) \setminus M(H^{+x})$ by Breaker such that $H^{+x-y}$ is a Breaker win. Breaker may pick any $y \in \I{H^{+x}}{x\D_{\textup{all}}(H)}$. Since $y \not\in V(H^{-y})$, we have $H^{-y} \not\in x\D_{\textup{all}}(H)$, i.e., $H^{-y}$ is not a danger at $x$ in $H$. By definition of a danger at $x$, this means $(H^{-y})^{+x}$ is a Breaker win, which concludes since $(H^{-y})^{+x}=H^{+x-y}$.
\end{proof}

\noindent However, Theorem \ref{theo_cns} is useless from an algorithmic point of view, since checking property $J(\D_{\text{all}},\cdot\,)$ is not practical: indeed, identifying general dangers at a given $x$ is as difficult as identifying Maker wins. Recall that \makerbreaker is \PSPACE-complete, even when restricted to 4-uniform hypergraphs \cite{Gal25}. However, given some class of marked hypergraphs $\mathcal{H}$, if we find that an equivalence like that of Theorem \ref{theo_cns} holds (at least for non-trivial Maker wins, which is enough) for some family of dangers $\D$ that are identifiable in polynomial time, then we have proved that \makerbreaker is tractable on the class $\mathcal{H}$. 
For instance, this is possible for the class of 2-uniform marked hypergraphs:

\begin{theoreme}
	Let $\D_{\textup{triv}}$ be the singleton family containing the trivial danger of size 2. Let $H$ be a 2-uniform marked hypergraph that is not a trivial Maker win, with $|V(H) \setminus M(H)| \geq 2$. Then $H$ is a Breaker win if and only if $J(\D_{\textup{triv}},H)$ holds.
\end{theoreme}

\begin{proof}
	Since $H$ is 2-uniform, the fact that $H$ is not a trivial Maker win exactly means that $M(H)=\varnothing$, i.e., $H$ is a graph. Also note that a $\D_{\textup{triv}}$-danger at a vertex $x$ is nothing but an edge of the graph that is incident to $x$. The property $J(\D_{\textup{triv}},H)$, which by definition signifies the absence of any $\D_{\textup{triv}}$-fork in the graph $H$, is therefore equivalent to $H$ having maximum degree at most 1. This obviously characterizes graphs that are a Maker win: if $H$ has a vertex $x$ of degree at least 2, then Maker starts by picking $x$ and wins in the next round, otherwise Breaker wins by always picking the only neighbor of the vertex Maker has just picked (or an arbitrary vertex if that neighbor is already marked or does not exist).
\end{proof}

\noindent For a more complex class of marked hypergraphs $\mathcal{H}$, the trivial dangers are not enough, but one way to ensure that a given family of dangers $\D$ satisfies the desired equivalence is if $\D$ satisfies the following two properties:

\begin{enumerate}[noitemsep,nolistsep,label={\textup{(\roman*)}}]
	\item $\D$ contains the trivial dangers.
	\item $J(\D,\cdot\,)$ is hereditary in the sense that, for all $H \in \mathcal{H}$ with $|V(H) \setminus M(H)| \geq 2$: if $J(\D,H)$ holds, then, for all $x \in V(H) \setminus M(H)$, there exists $y \in V(H^{+x}) \setminus M(H^{+x})$ such that $J(\D,H^{+x-y})$ holds.
\end{enumerate}

\noindent Indeed, if (i) is satisfied, then the property $J(\D,\cdot\,)$ ensures that Breaker survives the next round, so if (ii) is also satisfied, then this property is maintained throughout which means Breaker survives until the end and wins. The issue is that (ii) seems hard to obtain, because dangers keep changing throughout the game: Maker's moves may create some, while Breaker's moves may destroy some. Therefore, the absence of any $\D$-fork before a certain round does not mean that there will be no $\D$-forks at the end of that round. \newline
We are now going to face these difficulties in a very concrete manner, for the class of 3-uniform marked hypergraphs. We will construct a family of dangers in two steps, starting with the most elementary ones, and then augmenting that family so that (ii) is satisfied.

\subsection{Forcing strategies: the families $\D_0$ and $\D_1$}\label{Subsection3-2}

\noindent In a 3-uniform marked hypergraph, consider an $x_0$-chain with edges $\{x_0,y_1,x_1\},\{x_1,y_2,x_2\}$, etc. where $x_0$ is the only marked vertex (see Figure \ref{Forcing}, left). If Maker is next to play, she can pick $x_1$, which forces Breaker to answer by picking $y_1$ because of the threat of the edge $\{x_0,y_1,x_1\}$. Maker can pick $x_2$ next, which forces Breaker to pick $y_2$, and so on. Where this "forcing strategy" becomes interesting for Maker is if she can reach an edge of the form $\{x_i,u,m\}$, where $u \neq x_i,y_i$ for all $i$, and $m$ is marked at this point. Indeed, after Maker picks $x_i$, she can win the game in the next round by picking either $y_i$ or $u$, and Breaker cannot stop both threats. There are two possibilities for this vertex $m$ (see Figure \ref{Forcing}, right): either $m \neq x_j$ for all $j<i$, i.e., we have an $x_0$-snake (nunchaku), or $m=x_j$ for some $j<i$, i.e., we have an $x_0$-tadpole.

\begin{figure}[h]
	\centering
	\includegraphics[scale=.55]{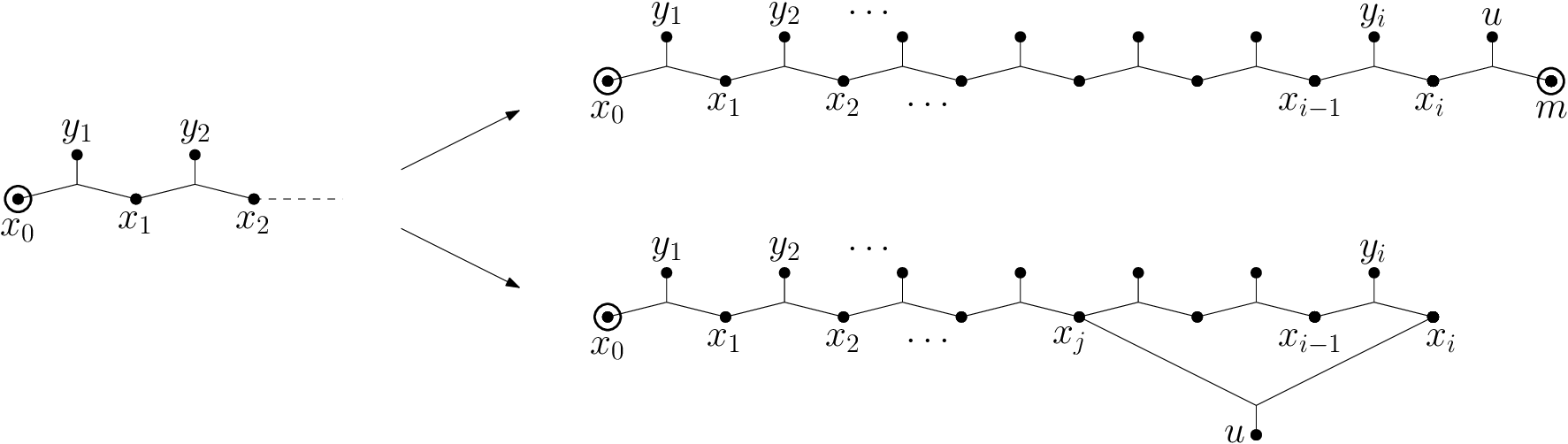}
	\caption{Two cases where Maker can win using a forcing strategy.}\label{Forcing}
\end{figure}

\noindent From this observation, we get our first elementary dangers in 3-uniform marked hypergraphs.

\begin{notation}\label{not:D0_D1}
    We define the family $\D_0$ of all pointed marked hypergraphs $(S,x)$ such that $S$ is an $x$-snake and $|M(S)|=1$. We also define the family $\D_1 \supseteq \D_0$ obtained from $\D_0$ by adding all pointed marked hypergraphs $(T,x)$ such that $T$ is an $x$-tadpole and $M(T)=\varnothing$. See Figure \ref{D1-dangers} for some examples.
\end{notation}

\begin{proposition}\label{prop_forcing}
	$\D_0$ and $\D_1$ are families of dangers.
\end{proposition}

\begin{proof}
    Let $(D,x_0) \in \D_1$. We want to show that $D^{+x_0}$ is a Maker win. This comes from the fact that we are in one of the two situations pictured on the right of Figure \ref{Forcing}. As we have just seen, when playing on $D^{+x_0}$, Maker can pick $x_1,\ldots,x_{i-1}$ successively (which forces Breaker to pick $y_1,\ldots,y_{i-1}$ successively in the meantime), then pick $x_i$ and win in the next round by picking whichever of $y_i$ or $u$ has not been picked by Breaker.
\end{proof}

\begin{figure}[h]
	\centering
	\includegraphics[scale=.55]{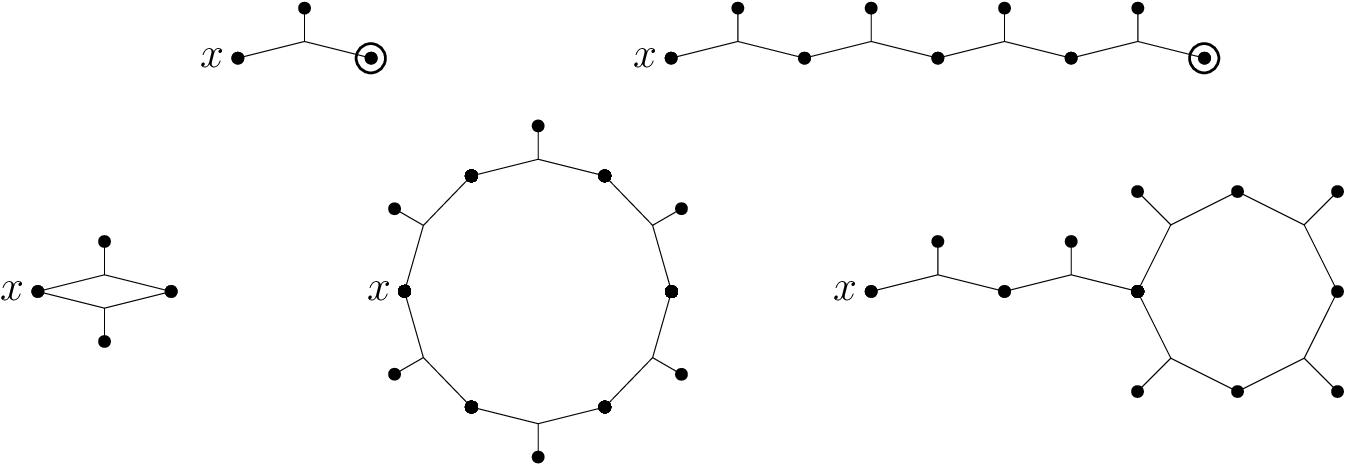}
	\caption{Five examples of a $\D_1$-danger at $x$. The topmost two are $\D_0$-dangers.}\label{D1-dangers}
\end{figure}

\noindent As an illustration of how the family $\D_1$ helps us understand the Maker-Breaker game, consider the tic-tac-toe hypergraph $H$ (Figure \ref{Example_Game}, left). It can easily be shown that $H$ is a Maker win, and that the center vertex or any corner vertex is a winning first move for Maker. This can be verified by brute force, but there is a simple proof using $\D_1$-dangers. For instance, let $x$ be the top-left vertex. Four $x$-cycles $C_1, C_2, C_3, C_4$ are highlighted in Figure \ref{Example_TicTacToe2}. Apart from $x$ itself, no vertex is in all four of these cycles, so by definition $\{C_1,C_2,C_3,C_4\}$ is a $\D_1$-fork at $x$ in $H$. This means $J(\D_1,H)$ does not hold, so $H$ is a Maker win and $x$ is a winning first move for Maker. This is a positive example for us in the sense that, just by considering the $\D_1$-dangers, Breaker can predict his loss by looking at $H$ statically (i.e., without having to imagine moves being played).
	
\begin{figure}[h]
	\centering
	\includegraphics[scale=.5]{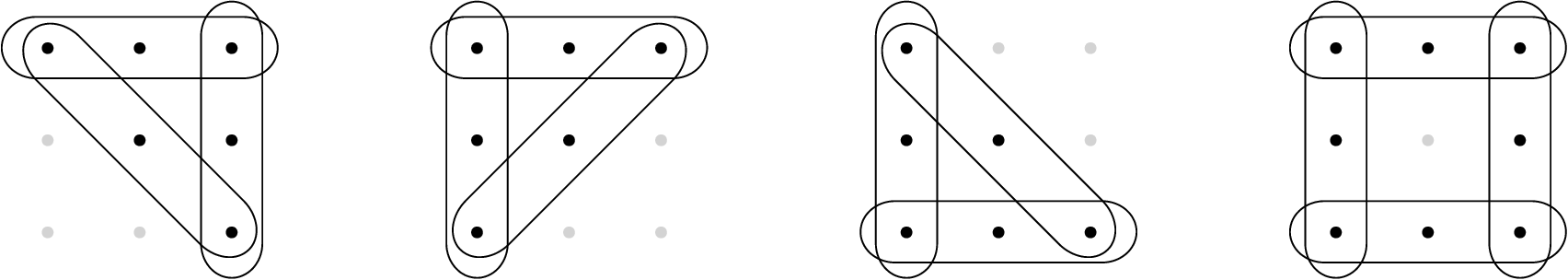}
	\caption{Four $x$-cycles in the tic-tac-toe hypergraph, where $x$ is the top-left vertex.}\label{Example_TicTacToe2}
\end{figure}

\subsection{Thinking one round ahead: the family $\D_2$}\label{Section3-3}

\noindent Unfortunately, property $J(\D_1,\cdot\,)$ is not enough to characterize Breaker wins among general 3-uniform marked hypergraphs. As an example, consider the hypergraph $H$ from Figure \ref{MakerWin5}. It is not too difficult to check that $J(\D_1,H)$ holds (recalling Figure \ref{Not_Tadpoles} should help here: for instance, there are no $x$-tadpoles in $H$). Now, say Maker starts by picking $x$, and assume by symmetry that Breaker picks some $y$ in the left half of $H$. After that first round, what can we say about the $\D_1$-dangers in $H^{+x-y}$? Consider the vertex $z$: Figure \ref{MakerWin6} highlights three specific $\D_1$-dangers at $z$ in $D^{+x} \subseteq H^{+x-y}$. The first two are a $z$-cycle $C$ and a $z$-tadpole $T$, that existed in $H$ from the beginning. The third one, on the other hand, is a $z$-snake $S$ that Maker has created from a $zx$-chain $P$ by marking $x$. Apart from $z$ itself, no vertex is in all three of these subhypergraphs, so by definition $\{C,T,S\}$ is a $\D_1$-fork at $z$ in $D^{+x}$. This means $J(\D_1,D^{+x})$ does not hold, so $D^{+x}$ is a Maker win and so is $H^{+x-y}$ by Proposition \ref{prop_subwin} (Subhypergraph Monotonicity). Since Breaker's pick $y$ was arbitrary, this shows $H$ is a Maker win. What we see here is a manifestation of the remark made just before Section \ref{Subsection3-2} about dangers in all generality: as $J(\,\cdot\,,H)$ is a static property of $H$, it might not tell anything about forks any further than the first round of play. As we have just seen in the case $\D=\D_1$, it can happen that there are no $\D$-forks anywhere at the beginning of a round, but Maker can pick a vertex that creates too many $\D$-forks for Breaker to destroy with his next pick. Contrary to the previous example (tic-tac-toe hypergraph), considering only the $\D_1$-dangers did not allow Breaker to predict his loss before the game began: only after the first round did he realize his fate. How could Breaker have foreseen the threat? The answer is that the subhypergraph $D$ and its symmetric counterpart $D'$ needed to be identified as dangers at $x$ in $H$. Indeed, we have just shown that $D^{+x}$ is a Maker win. By considering a bigger family $\D_2$ containing these, we would have identified $\{D,D'\}$ as a $\D_2$-fork at $x$ and concluded immediately that $H$ was a Maker win.

\begin{figure}[h] 
	\centering
	\includegraphics[scale=.55]{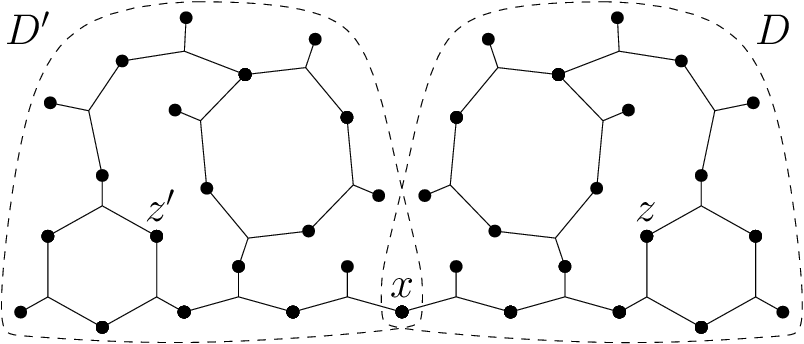}
	\caption{A Maker win $H$ such that $J(\D_1,H)$ holds.}\label{MakerWin5}
\end{figure}

\begin{figure}[h]
	\centering
	\includegraphics[scale=.55]{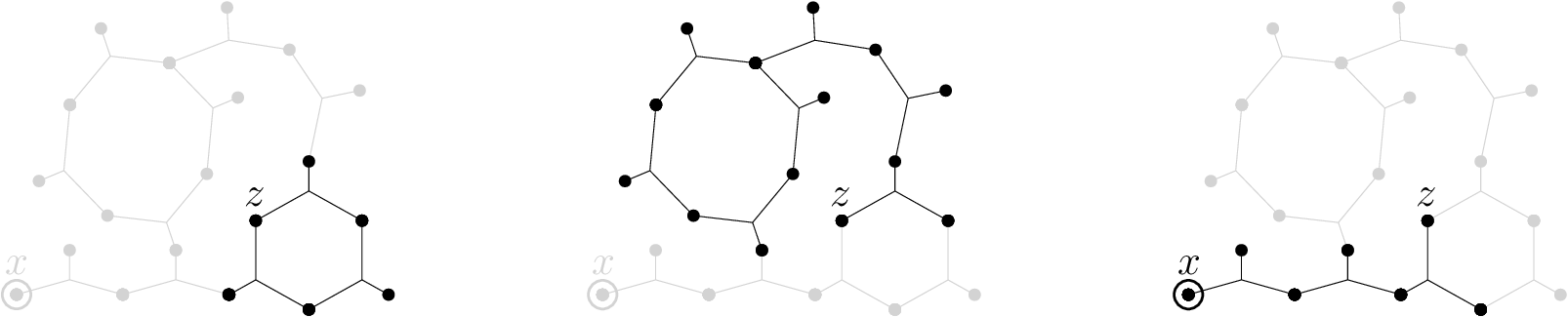}
	\caption{From left to right: a $z$-cycle, a $z$-tadpole and a $z$-snake in $D^{+x}$.}\label{MakerWin6}
\end{figure}

\noindent This example inspires us to also consider dangers $(D,x)$ such that $D^{+x}$ is the union of a $\D_1$-fork at some vertex $z$. This means $D$ is the union of a collection $\F_D$ that will become a $\D_1$-fork at $z$ if Maker marks $x$ (in a way, $\F_D$ is a "potential $\D_1$-fork" at $z$). That collection $\F_D$ can contain objects that already are $\D_1$-dangers at $z$ (think of $C$ and $T$ in the previous example), but also objects that will become $\D_1$-dangers at $z$ once $x$ is marked, i.e., $zx$-chains that will become snakes (think of $P$ in the previous example). This motivates the following definition, where item \ref{item_chapeau2} is there to avoid redundancies with dangers that are already in the family $\D_1$.

\begin{notation}\label{not:D1-hat}
	We define the family $\widehat{\D_1}$ of all pointed marked hypergraphs $(D,x)$ satisfying the following two properties:
	\begin{enumerate}[noitemsep,nolistsep,label={\textup{(\roman*)}}]
		\item There exists a \textit{decomposition} $(z , \F_D = \S_z \cup \T_z \cup \P_{zx})$ such that:
			\begin{itemize}[noitemsep,nolistsep]
				\item $z \in V(D) \setminus (M(D) \cup \{x\})$;
				\item $D = \union{\F_D}$;
				\item Each $S \in \S_z$ is a $z$-snake with $|M(S)|=1$ and $x \not\in V(S)$;
				\item Each $T \in \T_z$ is a $z$-tadpole with $M(T)=\varnothing$ and $x \not\in V(T)$;
				\item Each $P \in \P_{zx}$ is a $zx$-chain with $M(P)=\varnothing$;
				\item $\F_D^{+x}$ is a $\D_1$-fork at $z$ in $D^{+x}$.
			\end{itemize} \label{item_chapeau1}
		\item There are no $\D_1$-dangers at $x$ in $D$. \label{item_chapeau2}
	\end{enumerate}
	Finally, we also define $\D_2 = \D_1 \cup \widehat{\D_1}$.
\end{notation}

\noindent Using our notations from the previous example, where $D$ is the subhypergraph on the right of Figure \ref{MakerWin5}, a possible decomposition would be defined by $\S_z=\varnothing$, $\T_z=\{C,T\}$ and $\P_{zx}=\{P\}$. We then have $\F_D=\{C,T,P\}$, and $\F_D^{+x}=\{C,T,S\}$ is indeed a $\D_1$-fork at $z$ in $D^{+x}$.

\begin{proposition}\label{prop_D1chapeau}
    For all $(D,x) \in \widehat{\D_1}$, we have $|V(D^{+x}) \setminus M(D^{+x})| \geq 2$, and the property $J(\D_1,D^{+x})$ does not hold. In particular, $\widehat{\D_1}$ and $\D_2$ are families of dangers.
\end{proposition}

\begin{proof}
    Let $(D,x) \in \widehat{\D_1}$, with a decomposition $(z , \F_D = \S_z \cup \T_z \cup \P_{zx})$. Since $D = \union{\F_D}$, we know $\F_D \neq \varnothing$: let $X \in \F_D$, and let $e \in E(X)$. By definition, every element of $\F_D$ has at most one marked vertex, so $|M(e)|\leq 1$. Moreover, by item \ref{item_chapeau2} of the definition of $\widehat{\D_1}$, $e$ cannot contain both $x$ and a marked vertex, because $e$ would otherwise constitute a $\D_1$-danger at $x$ ($x$-snake of length 1). Therefore, $|e \setminus (M(e) \cup \{x\})| \geq 2$, from which $|V(D^{+x}) \setminus M(D^{+x})| \geq 2$. Now, since $\F_D^{+x}$ is a $\D_1$-fork at $z$ in $D^{+x}$, property $J(\D_1,D^{+x})$ does not hold. Therefore, $D^{+x}$ is a Maker win, i.e., $(D,x)$ is a danger.
\end{proof}

\noindent By construction, destroying the $\D_2$-dangers at Maker's pick $x$ is equivalent to destroying the $\D_1$-dangers at $x$ as well as all $\D_1$-forks that the marking of $x$ may have created anywhere.

\begin{proposition}\label{prop_equiv_dangers2}
	Let $H$ be a marked hypergraph with $|V(H)\setminus M(H)| \geq 4$, and suppose that $J(\D_1,H)$ holds. Let $x \in V(H) \setminus M(H)$ and $y \in V(H^{+x}) \setminus M(H^{+x})$. The following two assertions are equivalent:
	\begin{enumerate}[noitemsep,nolistsep,label={\textup{(\alph*)}}]
		\item $y \in \I{H^{+x}}{x\D_2(H)}$.
		\item $y \in \I{H^{+x}}{x\D_1(H)}$ and $J(\D_1,H^{+x-y})$ holds.
	\end{enumerate}
\end{proposition}

\begin{proof}
	Let us first show that (b) $\implies$ (a). Suppose for a contradiction that $y \in \I{H^{+x}}{x\D_1(H)}$ and that $J(\D_1,H^{+x-y})$ holds, but $y \not\in \I{H^{+x}}{x\D_2(H)}$. Then there exists $D \in x\widehat{\D_1}(H)$ such that $y \not\in V(D)$. This means $D^{+x}$ is a subhypergraph of $H^{+x-y}$. As a consequence, since $J(\D_1,H^{+x-y})$ holds, $J(\D_1,D^{+x})$ must hold as well. This contradicts Proposition \ref{prop_D1chapeau}. \newline
	Now, let us show that (a) $\implies$ (b). Suppose that $y \in \I{H^{+x}}{x\D_2(H)}$. Since $\D_1 \subseteq \D_2$, this obviously implies $y \in \I{H^{+x}}{x\D_1(H)}$, but suppose for a contradiction that $J(\D_1,H^{+x-y})$ does not hold. This means there exists a $\D_1$-fork $\F$ at some $z \in V(H^{+x-y}) \setminus M(H^{+x-y})$ in $H^{+x-y}$.
	We define the partition $\F = \S_z \cup \T_z \cup \S_{zx}$ as follows: 
	\begin{itemize}[noitemsep,nolistsep]
		\item The collection $\S_z$ (resp. $\T_z$) contains all $X \in \F$ such that $X$ is a $z$-snake (resp. a $z$-tadpole) and $x \not\in V(X)$. The elements of $\S_z \cup \T_z$ are "old" $\D_1$-dangers at $z$, i.e., ones that were already present in $H$.
		\item The collection $\S_{zx}$ contains all $X \in \F$ such that $x \in V(X)$. Since $x$ is marked in $H^{+x-y}$ (from which the collection $\F$ is taken), every $X \in \S_{zx}$ is necessarily a $zx$-snake with $M(X)=\{x\}$. The elements of $\S_{zx}$ are "new" $\D_1$-dangers at $z$, i.e., ones that did not exist in $H$. We assume that $\S_{zx} \neq \varnothing$, as $\F$ would otherwise be a $\D_1$-fork at $z$ in $H$, contradicting the fact that $J(\D_1,H)$ holds (this is the only moment where this assumption is used in this proof).
	\end{itemize}
	Let $\P_{zx}$ be the collection of all $zx$-chains in $H$ that the elements of $\S_{zx}$ come from, i.e., the collection with the same elements as $\S_{zx}$ except that $x$ is non-marked in all of them. Let $D=\union{\F_D}$, where $\F_D =  \S_z \cup \T_z \cup \P_{zx}$. Note that $x \in V(D)$ since $\P_{zx} \neq \varnothing$. On the other hand, since all elements of $\F$ are subhypergraphs of $H^{+x-y}$, we know all elements of $\F_D$ are subhypergraphs of $H^{-y}$, so $y \not\in V(D)$. Let us verify that $(D,x) \in \widehat{\D_1}$, with decomposition $(z,\F_D)$. For item \ref{item_chapeau1}, all that is left to check is that $\F_D^{+x}$ is a $\D_1$-fork at $z$ in $D^{+x}$: since $\F_D^{+x} = \F$ and $D^{+x} \subseteq  H^{+x-y}$, this immediately follows from the fact that $\F$ is a $\D_1$-fork at $z$ in $H^{+x-y}$. As for item \ref{item_chapeau2}, it is impossible that $D$ contains a $\D_1$-danger at $x$ since $y \in \I{H^{+x}}{x\D_1(H)}$ and $y \not\in V(D)$. All in all, items \ref{item_chapeau1} and \ref{item_chapeau2} are both satisfied, so $(D,x) \in \widehat{\D_1}$. This contradicts the fact that $y \in \I{H^{+x}}{x\D_2(H)}$ and $y \not\in V(D)$.
\end{proof}

\noindent It may seem that the addition of the $\widehat{D_1}$-dangers simply postpones the issue by one round: property $J(\D_2,H)$ ensures that there will be no $\D_1$-forks during the first two rounds of play on $H$, but what about the third round? As we will see however, property $J(\D_2,H)$ actually turns out to be sufficient for Breaker to prevent any $\D_1$-fork from appearing during the entire game, which means it fully characterizes Breaker wins. \newline From Maker's point of view, marked hypergraphs $H$ such that property $J(\D_2,H)$ does not hold are ones where Maker can force the appearance of a nunchaku or a necklace after at most three rounds of play. We are talking about full rounds of play, meaning that there is a nunchaku or a necklace in the updated marked hypergraph after Breaker's move. Figure \ref{MakerWin4} shows an example where three rounds are needed (it can be checked that Maker cannot force the appearance of a nunchaku or a necklace in less than three rounds).

\begin{proposition}\label{prop_interpretation}
	Let $H$ be a marked hypergraph with $|V(H) \setminus M(H)| \geq 2$. If property $J(\D_1,H)$ (resp. $J(\D_2,H)$) does not hold, then Maker has a strategy ensuring that there is a nunchaku or a necklace in the updated marked hypergraph obtained after at most two (resp. three) full rounds of play on $H$.
\end{proposition}

\begin{proof}
	Let $\alpha \in \{1,2\}$, and suppose $J(\D_{\alpha},H)$ does not hold. Maker picks $x_1$ such that $\I{H^{+x_1}}{x_1\D_{\alpha}(H)}=\varnothing$, and Breaker picks $y_1$, so that the updated marked hypergraph after one round of play is $H^{+x_1-y_1}$. Since $\I{H^{+x_1}}{x_1\D_{\alpha}(H)}=\varnothing$, there exists some $D \in x_1\D_{\alpha}(H)$ such that $y_1 \not\in V(D)$.
	\begin{itemize}[wide,noitemsep,nolistsep]
		\item Case 1: $D$ is an $x_1$-snake (resp. an $x_1$-cycle). Then, $D^{+x_1}$ is a nunchaku (resp. a necklace) in $H^{+x_1-y_1}$, which concludes.
		\item Case 2: $D=T$ is an $x_1$-tadpole that is not a cycle. Then, let $x_2$ be the only vertex in $V(P_T) \cap V(C_T)$. Maker picks $x_2$, and Breaker picks some $y_2$, so that the updated marked hypergraph after two rounds of play is $H^{+x_1-y_1+x_2-y_2}$. Either $y_2 \not\in V(P_T)$, in which case $P_T^{+x_1+x_2}$ is a nunchaku in $H^{+x_1-y_1+x_2-y_2}$, or $y_2 \not\in V(C_T)$, in which case $C_T^{+x_2}$ is a necklace in $H^{+x_1-y_1+x_2-y_2}$. This concludes.
		\item Case 3 (which can only happen if $\alpha=2$): $(D,x_1) \in \widehat{D_1}$ with a maximal decomposition $(x_2,\F_D)$. Then $\F_D^{+x_1}$ is a $\D_1$-fork at $x_2$ in $H^{+x_1-y_1}$. Maker picks $x_2$, and Breaker picks some $y_2$. By definition of a $\D_1$-fork, there exists $D' \in x_2\D_1(H^{+x_1-y_1})$ such that $y_2 \not\in V(D')$. If $D'$ is an $x_2$-snake (resp. an $x_2$-cycle), then we get a nunchaku (resp. a necklace) in $H^{+x_1-y_1+x_2-y_2}$ like in Case 1. If $D'=T$ is an $x_2$-tadpole, then Maker picks the only vertex $x_3$ in $V(P_T) \cap V(C_T)$ like in Case 2, and we get a nunchaku or a necklace in $H^{+x_1-y_1+x_2-y_2+x_3-y_3}$ whatever Breaker's pick $y_3$. \qedhere
	\end{itemize}
\end{proof}

\begin{figure}[h]
	\centering
	\includegraphics[scale=.55]{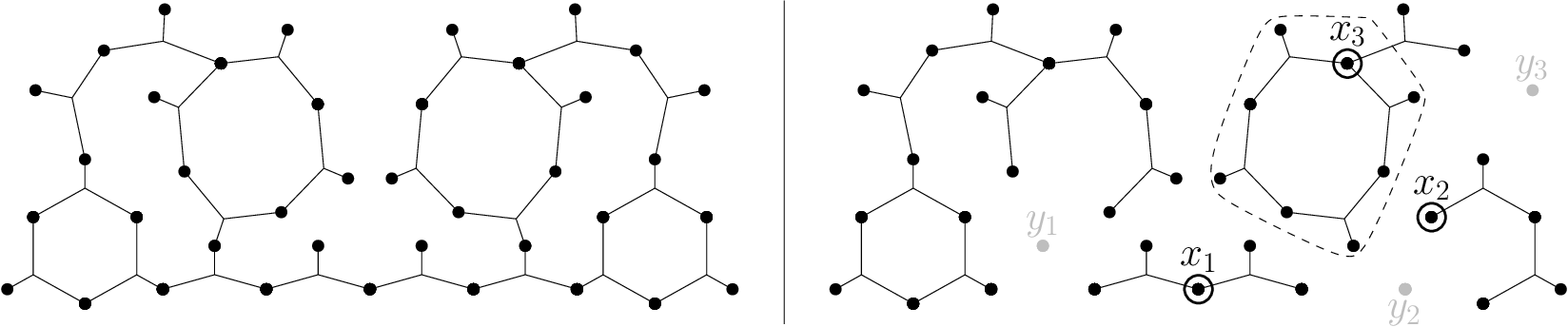}
	\caption{Left: the hypergraph $H$ from Figure \ref{MakerWin5}. Right: one possibility for $H^{+x_1-y_1+x_2-y_2+x_3-y_3}$, in which a necklace is highlighted.}\label{MakerWin4}
\end{figure}

\noindent Proposition \ref{prop_interpretation} is crucially relevant to Conjecture \ref{conjecture}. Indeed, using the reduction from Proposition \ref{prop_reduction}, any nunchaku or necklace is equivalent for the game to the hypergraph pictured in Figure \ref{Nunchaku0}, which describes exactly the structure of the manriki formula from \cite{RW20}. Therefore, Conjecture \ref{conjecture} implies that there exists a constant $r$ such that Maker wins on a 3-uniform marked hypergraph if and only if she can force the appearance of a nunchaku or a necklace within at most $r$ rounds of play. In this paper, we prove that this is true for $r=3$.

\begin{figure}[h]
	\centering
	\includegraphics[scale=.55]{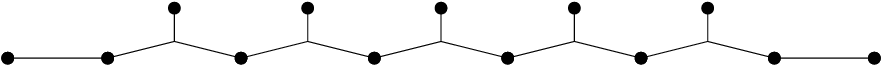}
	\caption{The hypergraph version of the manriki formula: two edges of size 2 linked by a chain.}\label{Nunchaku0}
\end{figure}

\subsection{Statement of the main results}

\noindent We now state the five main results of this paper, which will be proved in Section \ref{Section5}. We start by giving three classes of 3-uniform marked hypergraphs in which Breaker wins can be characterized in terms of $\D_0$-forks, $\D_1$-forks and $\D_2$-forks respectively. The first result, which addresses 3-uniform marked hyperforests (i.e., there are no cycles), is much less involved than the other two.

\begin{theoreme}\label{theo_main_structure3}
	Let $H$ be a 3-uniform marked hyperforest that is not a trivial Maker win, with $|V(H) \setminus M(H)| \geq 2$. Then $H$ is a Breaker win if and only if $J(\D_0,H)$ holds.
\end{theoreme}

\noindent On the bigger class of linear hypergraphs of rank 3, previous work had been made by Kutz. His main result \cite[Theorem 38]{Kut04} characterizes the special structure of Breaker wins for a subclass that he reduces to, namely, the class of connected linear hypergraphs of rank 3 with no articulation vertices that have exactly one edge of size 2 (in our setting of 3-uniform marked hypergraphs, an edge of size 2 would be modeled as an edge of size 3 with a marked vertex). 
Interestingly, by carefully reading Kutz's proof, it can be seen that the absence of that special structure implies the existence of some simple subhypergraph $X \subseteq H$ which is a Maker win. It can actually be checked in all cases that, not only is $X$ a Maker win, but in fact $J(\D_1,X)$ does not hold. All in all, it can be derived from Kutz's proof that, for all $H$ in the considered class (apart from some trivial cases), $H$ is a Breaker win if and only if $J(\D_1,H)$ holds. We give a new independent proof of this through our second main result, whose statement is actually slightly stronger.

\begin{theoreme}\label{theo_main_structure2}
	Let $H$ be a 3-uniform marked hypergraph that is not a trivial Maker win, with $|V(H) \setminus M(H)| \geq 2$. Suppose that, for any $x \in V(H) \setminus M(H)$, there exists an $x$-snake in $H$. Then $H$ is a Breaker win if and only if $J(\D_1,H)$ holds. Moreover, $H$ is a Maker win if and only if Maker has a strategy ensuring that there is a nunchaku or a necklace in the updated marked hypergraph obtained after at most two full rounds of play on $H$.
\end{theoreme}

\noindent We now state the most central result of this paper, which solves the general 3-uniform case.

\begin{theoreme}\label{theo_main_structure1}
	Let $H$ be a 3-uniform marked hypergraph that is not a trivial Maker win, with $|V(H) \setminus M(H)| \geq 2$. Then $H$ is a Breaker win if and only if $J(\D_2,H)$ holds. More precisely:
	\begin{enumerate}[noitemsep,nolistsep,label={\textup{(\roman*)}}]
		\item If $J(\D_2,H)$ does not hold, then $H$ is a Maker win and: any $x_1 \in V(H) \setminus M(H)$ such that $\I{H^{+x_1}}{x_1\D_2(H)}=\varnothing$ is a winning first pick for Maker. 
		\item If $J(\D_2,H)$ holds, then $H$ is a Breaker win and: for any first pick $x_1 \in V(H) \setminus M(H)$ of Maker, any $y_1 \in \I{H^{+x_1}}{x_1\D_2(H)}$ is a winning answer for Breaker. 
	\end{enumerate}
	Moreover, $H$ is a Maker win if and only if Maker has a strategy ensuring that there is a nunchaku or a necklace in the updated marked hypergraph obtained after at most three full rounds of play on $H$.
\end{theoreme}

\noindent Our fourth main result is algorithmic. We show that there exists a polynomial-time algorithm that decides whether a 3-uniform marked hypergraph $H$ is a Maker win. This improves on the result of Kutz \cite{Kut04}, which showed the same in the linear case only. Together with Theorem \ref{theo_main_structure1}, and recalling the observation made at the end of Section \ref{Section3-3} about nunchakus and necklaces corresponding to Rahman and Watson's manriki, this validates Conjecture \ref{conjecture} with $r=3$ for positive 3-CNF formulas. The proof relies on the fact that Theorem \ref{theo_main_structure1} yields an immediate reduction to the chain existence problem, which is tractable \cite{GGS22}.

\begin{theoreme}\label{theo_main_algo}
	\makerbreaker is solved in polynomial time on 3-uniform marked hypergraphs, i.e., hypergraphs of rank 3. More precisely, there exists an algorithm which, given a 3-uniform marked hypergraph with $n$ vertices, $m$ edges and maximum degree $\Delta$, decides whether $H$ is a Maker win in time $O(\max(n^5m^2,n^6\Delta))$.
\end{theoreme}

\noindent Finally, our fifth main result will come as an easy consequence of Theorem \ref{theo_main_structure1}. It states that, if Maker has a winning strategy on a 3-uniform marked hypergraph, then she can get a fully marked edge in a logarithmic number of rounds. The idea is that, once there is a nunchaku or necklace, Maker can pick a vertex in the middle of it to guarantee a nunchaku of half the length in the next round: Maker may iterate this process until the nunchaku is of length 1, at which point she wins on the spot.

\begin{theoreme}\label{theo_main_duration}
	Let $H$ be a 3-uniform marked hypergraph with $|V(H)\setminus M(H)|\geq 6$. If $H$ is a Maker win, then $\tau_M(H) \leq 3+\lceil \log_2(|V(H)\setminus M(H)|-5)\rceil$.
\end{theoreme}

\section{Structural preliminaries}\label{Section4}

\noindent In this section, we set the game aside to perform a structural study of all objects that are relevant to our problem. We start with the most simple ones: chains, cycles and tadpoles. We then go up a level to look at $\widehat{\D_1}$-dangers. All established properties will be used to prove our main results in Section \ref{Section5}.

\subsection{Structural properties of chains, cycles and tadpoles}

\subsubsection{Substructure lemmas}

\noindent We address the existence, and sometimes uniqueness, of chains and tadpoles inside other chains and tadpoles. These results are intuitively clear just by looking at a drawing, but we give rigorous proofs using walks.

\begin{lemme}\label{lemma_subchain1}
	Let $P$ be a chain and let $u,v \in V(P)$. Then there exists a unique $uv$-chain in $P$.
\end{lemme}

\begin{proof}
	Let $a,b$ be such that $P$ is an $ab$-chain, and write $\ora{aPb}=(a,e_1,\ldots,e_L,b)$.
	\begin{itemize}[wide]
		\item Firstly, suppose $u=v$. Then that single vertex forms the only $uv$-chain in $P$.
		\item Secondly, suppose $u \neq v$ and there exists some $1 \leq i \leq L$ such that $\{u,v\}\subseteq e_i$ (note that $i$ is unique since two distinct edges of a chain cannot intersect on two vertices). Then $(u,e_i,v)$ represents a $uv$-chain. Moreover, if some walk $\ora{W'}$ represents a $uv$-chain in $P$, then we have $u \in \Start(\ora{W'})$ and $v \in \End(\ora{W'})$, so $\Start(\ora{W'})=\End(\ora{W'})=e_i$ hence the uniqueness.
		\item Finally, suppose $u \neq v$ and no edge of $P$ contains both $u$ and $v$. For $x \in \{u,v\}$, define $j(x)=\min\{1 \leq i \leq L \mid x  \in e_i\}$ and $j'(x)=\max\{1 \leq i \leq L \mid x \in e_i\}$: note that $j'(x)=j(x)+1$ if $x \in \inn(P)$ and $j'(x)=j(x)$ otherwise. Up to swapping the roles of $u$ and $v$, assume $j(u) \leq j(v)$: we actually have $j(u) < j(v)$, otherwise $e_{j(u)}=e_{j(v)}$ would contain both $u$ and $v$. Since $j'(u) \in \{j(u),j(u)+1\}$, this yields $j'(u) \leq j(v)$ hence $j'(u) < j(v)$ for the same reason. We define $\ora{W}= (u,e_{j'(u)},e_{j'(u)+1},\ldots,e_{j(v)},v)$, and we claim that $\ora{W}$ is a walk that represents a $uv$-chain. Indeed:
			\begin{itemize}[noitemsep,nolistsep]
				\item The fact that $\ora{aPb}$ is a linear walk by definition of a chain, coupled with the fact that $u \in e_{j'(u)}$ and $v \in e_{j(v)}$, implies that $\ora{W}$ is a linear walk.
				\item The fact that the walk $\ora{aPb}$ is simple by definition of a chain, coupled with the maximality of $j'(u)$ and the minimality of $j(v)$, implies that $\ora{W}$ is also simple.
			\end{itemize}
			Let us now address uniqueness. Let $\ora{W'}= (u,e_{i_1},e_{i_2},\ldots,e_{i_t},v)$ be a walk representing a $uv$-chain in $P$, where $i_1,\ldots,i_t$ are pairwise distinct indices in $\{1,\ldots,L\}$. Since $u \in e_{i_1}$ and $v \in e_{i_t}$, we have $i_1 \in \{j(u),j'(u)\}$ and $i_t \in \{j(v),j'(v)\}$. We have seen that $j'(u)<j(v)$, so $i_1<i_t$. For all $1 \leq s \leq t-1$, we have $|e_{i_s} \cap e_{i_{s+1}}|=1$ by definition of a chain hence $|i_s-i_{s+1}|=1$. Since $i_1<i_t$ and the indices $i_1,\ldots,i_t$ are pairwise distinct, this implies $i_{s+1}=i_s+1$ for all $1 \leq s \leq t-1$. To conclude that $\ora{W'}=\ora{W}$, it only remains to show that $i_1=j'(u)$ and $i_t=j(v)$. We have mentioned that $i_1 \in \{j(u),j'(u)\}$: if $i_1=j(u)=j'(u)-1$, then $e_{i_2}=e_{j'(u)} \ni u$, hence a repetition in $\ora{W'}$ which contradicts the definition of a chain. Therefore $i_1=j'(u)$, and an analogous reasoning yields $i_t=j(v)$.\qedhere
	\end{itemize}
\end{proof}

\noindent We are also interested in the existence of chains inside cycles. First of all, we need to describe what happens when we delete a vertex from a cycle:

\begin{lemme}\label{lemma_subchain2}
	Let $C$ be a cycle and let $w \in V(C)$. Let $w_1,w_2$ be the two inner vertices of $C$ that are adjacent to $w$ in $C$ (if $C$ is of length 2 and $w \in \inn(C)$, then $w_1=w_2$).
	\begin{itemize}[noitemsep,nolistsep]
		\item If $w \in \out(C)$, then $C^{-w}$ is a $w_1w_2$-chain.
		\item If $w \in \inn(C)$, then $C^{-w}$ is the union of a $w_1w_2$-chain and two isolated vertices which are the two outer vertices of $C$ that are adjacent to $w$ in $C$.
	\end{itemize}
\end{lemme}

\begin{proof}
	Let us first address the case where $C$ is of length 2. If $w \in \out(C)$, then write $E(C)=\{\{w_1,w,w_2\},\{w_1,u,w_2\}\}$: $C^{-w}$ consists of the edge $\{w_1,u,w_2\}$, which forms a $w_1w_2$-chain. If $w \in \inn(C)$, then write $E(C)=\{\{w,u_1,w_1\},\{w,u_2,w_1\}\}$: $C^{-w}$ consist of the three isolated vertices $w_1=w_2$, $u_1$ and $u_2$.
	\\ Now assume that $C$ is of length at least 3. Let $e$ be the edge of $C$ containing both $w$ and $w_1$, and write $\ora{(w_1-e)C}=(w_1,e=e_1,e_2\ldots,e_L,w_1)$. We have $e_1 \cap e_L = \{w_1\}$. If $w \in \out(C)$, then $e_1=\{w_1,w,w_2\}$ so $e_1 \cap e_2=\{w_2\}$. If $w \in \inn(C)$, then $e_1 \cap e_2=\{w\}$ hence $e_2 \cap e_3 = \{w_2\}$ since $w_2$ is adjacent to $w$. Therefore, defining $i=2$ if $w \in \out(C)$ and $i=3$ if $w \in \inn(C)$, the only edges of $C$ containing $w_2$ are $e_{i-1}$ and $e_i$. We define $\ora{W}=(w_2,e_i,\ldots,e_L,w_1)$, and we claim that $\ora{W}$ is a linear simple walk. Indeed:
	\begin{itemize}[noitemsep,nolistsep]
		\item[--] By definition of a cycle, the walk $\ora{(w_1-e)C}$ is linear since $C$ is of length at least 3, and $w_1$ is its only repeated vertex with $\{1 \leq i \leq L, w_1 \in e_i\}=\{1,L\}$. Therefore, its subsequence $(e_i,\ldots,e_L,w_1)$ is also a linear walk, and has no repeated vertices since it does not contain the edge $e_1$.
		\item[--] The addition of $w_2$ at the start of $(e_i,\ldots,e_L,w_1)$ preserves the fact that it is a linear walk since $w_2 \in e_i$, and also preserves the absence of any repeated vertex since $w_2 \not\in e_j$ for all $j>i$.
	\end{itemize}
	Therefore, by definition, $\ora{W}$ represents a $w_2w_1$-chain. We can now conclude:
	\begin{itemize}[noitemsep,nolistsep]
		\item If $w \in \out(C)$, then $V(C^{-w}) = V(C) \setminus \{w\} = e_2 \cup \ldots \cup e_L = V(\ora{W})$ and $E(C^{-w}) = E(C) \setminus \{e_1\} =\{e_2,\ldots,e_L\}=E(\ora{W})$, so $C^{-w}$ is the $w_1w_2$-chain represented by $\ora{W}$.
		\item If $w \in \inn(C)$, then let $u_1$ and $u_2$ be the outer vertices of $C$ in $e_1$ and $e_2$ respectively: we have $V(C^{-w}) = V(C) \setminus \{w\} = (e_3 \cup \ldots \cup e_L) \cup \{u_1,u_2\} = V(\ora{W}) \cup \{u_1,u_2\}$ and $E(C^{-w}) = E(C) \setminus \{e_1,e_2\} =\{e_3,\ldots,e_L\}=E(\ora{W})$, so $C^{-w}$ is the union of the $w_1w_2$-chain represented by $\ora{W}$ and the two isolated vertices $u_1$ and $u_2$. \qedhere
	\end{itemize}
\end{proof}

\noindent We can now conclude about the existence of chains between two given vertices of a cycle, first when trying to avoid a third vertex, then in general.

\begin{lemme}\label{lemma_subchain3}
	Let $C$ be a cycle and let $u,v,w \in V(C)$ with $w \neq u,v$. Then there exists a unique $uv$-chain in $C$ that does not contain $w$, unless all the following hold: $w \in \inn(C)$, $u \neq v$, and $u$ or $v$ is an outer vertex of $C$ that is adjacent to $w$ (in which case there exist none).
\end{lemme}

\begin{proof}
	First of all, note that a $uv$-chain in $C$ that does not contain $w$ is exactly a $uv$-chain in $C^{-w}$. Assume $u \neq v$, otherwise the result is trivial. If $w \in \out(C)$, then $C^{-w}$ is a chain according to Substructure Lemma \ref{lemma_subchain2}, which contains a unique $uv$-chain by Substructure Lemma \ref{lemma_subchain1}. Now assume $w \in \inn(C)$: then $C^{-w}$ is the union of a chain $P$ and two isolated vertices $u_1,u_2$ that are the two outer vertices of $C$ adjacent to $w$ according to Substructure Lemma \ref{lemma_subchain2}. If $u \in \{u_1,u_2\}$ or $v \in \{u_1,u_2\}$, then there obviously cannot exist a $uv$-chain in $C^{-w}$. Otherwise $u,v \in V(P)$, so there exists a unique $uv$-chain in $P$ (and in $C^{-w}$ as a result) by Substructure Lemma \ref{lemma_subchain1}.
\end{proof}

\begin{lemme}\label{lemma_subchain3bis}
	Let $C$ be a cycle and let $u,v \in V(C)$. Then there exists a $uv$-chain in $C$, unless $C$ is of length 2 and $\out(C)=\{u,v\}$.
\end{lemme}

\begin{proof}
	If $C$ is of length 2 and $\out(C)=\{u,v\}$, then there are no $uv$-chains in $C$, because $|e_u \cap e_v|=2$ where $e_u$ (resp. $e_v$) denotes the only edge of $C$ containing $u$ (resp. $v$). Otherwise, there exists $w \in \out(C) \setminus \{u,v\}$: by Substructure Lemma \ref{lemma_subchain3}, there exists a unique $uv$-chain in $C$ that does not contain $w$, so in particular $C$ contains a $uv$-chain.
\end{proof}

\noindent We now give analogous results for tadpoles.

\begin{lemme}\label{lemma_subchain4}
	Let $T$ be a tadpole and let $u,v,w \in V(T)$. If $w \in \out(C_T) \setminus \{u,v\}$, then there exists a $uv$-chain in $T$ that does not contain $w$.
\end{lemme}

\begin{proof}
	Note that $w \not\in V(P_T)$, so that Substructure Lemma \ref{lemma_subchain1} concludes if $u,v \in V(P_T)$. If $u,v \in V(C_T)$, then Substructure Lemma \ref{lemma_subchain3} concludes. Therefore, assume $u \in V(P_T)$ and $v \in V(C_T)$. Let $b$ be the only vertex in $V(P_T) \cap V(C_T)$. By Substructure Lemma \ref{lemma_subchain1}, there exists a $ub$-chain $P_{ub}$ in $P_T$, that does not contain $w$ since $w \not\in V(P_T)$. By Substructure Lemma \ref{lemma_subchain3}, there exists a $bv$-chain $P_{bv}$ in $C_T$ that does not contain $w$. Since $V(P_{ub}) \cap V(P_{bv})=\{b\}$, it is clear that $\ora{uP_{ub}b} \oplus \ora{bP_{bv}v}$ represents a $uv$-chain in $T$ that does not contain $w$.
\end{proof}

\begin{lemme}\label{lemma_subchain4bis}
	Let $T$ be a tadpole and let $u,v \in V(T)$. Then there exists a $uv$-chain in $T$, unless $C_T$ is of length 2 and $\out(C_T)=\{u,v\}$.
\end{lemme}

\begin{proof}
	If $C_T$ is of length 2 and $\out(C_T)=\{u,v\}$, then there are no $uv$-chains in $T$, because $|e_u \cap e_v|=2$ where $e_u$ (resp. $e_v$) denotes the only edge of $T$ containing $u$ (resp. $v$). Otherwise, there exists $w \in \out(C) \setminus \{u,v\}$: by Substructure Lemma \ref{lemma_subchain4}, there exists a $uv$-chain in $T$ that does not contain $w$, so in particular $T$ contains a $uv$-chain.
\end{proof}

\begin{lemme}\label{lemma_subtadpole}
	Let $T$ be a tadpole and let $u \in V(T) \setminus \out(C_T)$. Then $T$ contains a $u$-tadpole.
\end{lemme}

\begin{proof}
	Let $b$ be the only vertex in $V(P_T) \cap V(C_T)$. Since $u \not\in \out(C_T)$, we have $u \in \inn(C_T)$ or $u \in V(P_T)$. If $u \in \inn(C_T)$, then $C_T$ is a $u$-cycle (recall that a $u$-cycle is a particular case of a $u$-tadpole). If $u \in V(P_T)$, then there exists a $ub$-chain $P_{ub}$ in $P_T$ by Substructure Lemma \ref{lemma_subchain1}, so $\ora{uP_{ub}b} \oplus \ora{bC_T}$ represents a $u$-tadpole.
\end{proof}

\subsubsection{Projections}

\noindent One of the most common tools that we will use is, inside a chain or a tadpole, to follow a subchain starting from some vertex $u$ until reaching some set of vertices $Z$, as made possible by the previous results:

\begin{proposition}\label{prop_projection}
	Let $H$ be a hypergraph. Let $X$ be a chain or a tadpole in $H$, let $u \in V(X)$, and let $Z \subseteq V(H)$ be such that $Z \cap V(X) \neq \varnothing$. In the case where $X$ is a tadpole with $C_X$ of length 2 and $u \in \out(C_X)$, also suppose that $Z \cap V(X) \neq \out(C_X) \setminus \{u\}$. Then there exists a $u$-chain $\proj{Z}{u}{X}$ in $X$ such that:
	\begin{itemize}[noitemsep,nolistsep]
		\item If $u \in Z$, then $\proj{Z}{u}{X}$ is of length 0.
		\item If $u \not\in Z$, then $\,\proj{Z}{u}{X}$ is of positive length and its only edge intersecting $Z$ is $\End(\ora{u\proj{Z}{u}{X}})$, with $|\End(\ora{u\proj{Z}{u}{X}}) \cap Z| \in \{1,2\}$.
	\end{itemize}
\end{proposition}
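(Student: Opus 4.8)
The plan is to construct the $u$-path $\proj{W}{u}{X}$ explicitly by ``walking'' along $X$ from $u$ and stopping at the first moment a vertex of $W$ is reached, then verifying it satisfies the two bullet points. If $u\in W$, there is nothing to do: take $\proj{W}{u}{X}$ to be the trivial $u$-path of length $0$, and the first bullet holds vacuously. So assume $u\notin W$; then every $u$-path meeting $W$ has positive length, so it suffices to exhibit one whose \emph{only} edge intersecting $W$ is its last edge, with an intersection of size $1$ or $2$.

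First I would dispose of the case where $X$ is a path. By Proposition \ref{prop_subpath4bis} (or directly Proposition \ref{prop_subpath1}), pick any vertex $w\in W\cap V(X)$ and let $P_{uw}$ be the (unique) $uw$-path in $X$ given by Proposition \ref{prop_subpath1}, writing $\fd{uP_{uw}w}=(u,e_1,\dots,e_L,w)$. Among all choices of $w\in W\cap V(X)$, choose one minimizing $L$. I claim that then $\proj{W}{u}{X}\defeq P_{uw}$ works: its last edge $e_L=\End(\fd{uP_{uw}w})$ contains $w\in W$, so $|e_L\cap W|\ge 1$; and no earlier edge $e_i$ ($i<L$) can meet $W$, for otherwise that vertex of $e_i\cap W$ would be reachable from $u$ by a strictly shorter subpath (namely the sub-sequence $(u,e_1,\dots,e_i)$, suitably truncated at a vertex of $e_i\cap W$, which is a linear connected sequence with no repeated vertex, hence represents a $u$-path of length $\le i<L$), contradicting minimality. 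The bound $|e_L\cap W|\le 2$ is automatic since $|e_L|=3$ and $u\notin e_L$ would force... actually more simply $|e_L\cap W|\le|e_L|=3$, and if all three vertices of $e_L$ lay in $W$ we could still have chosen the vertex of $e_L\cap W$ that is first along the path, shortening $L$; so in fact $|e_L\cap W|\in\{1,2\}$. (This last point needs a small argument: the vertex of $e_L$ shared with $e_{L-1}$ is not in $W$ by minimality, so at most two vertices of $e_L$ lie in $W$.)

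For the tadpole case, write $T=X$ as the union of $P_T$ (a $cb$-path, where $c$ is the far end) and $C_T$ (a $b$-cycle), meeting only at $b$. There are two sub-cases according to whether $u\in V(P_T)$ or $u\in\inn(C_T)$, together with whether $W$ first meets $X$ inside $P_T$ or only inside $C_T$. If $u\in V(P_T)$: apply Proposition \ref{prop_subpath1} to get the $ub$-path $P_{ub}\subseteq P_T$; if $W$ meets $V(P_{ub})\setminus\{b\}$ we reduce to the path case already handled (using $P_{ub}$ in place of $X$), and if not, we follow $P_{ub}$ all the way to $b$ and then continue into $C_T$ by Proposition \ref{prop_subpath3bis} to reach the nearest vertex of $W\cap V(C_T)$, concatenating the two with $\oplus$ as in the proof of Proposition \ref{prop_subpath4}; the hypothesis $W\cap V(X)\neq\out(C_T)\setminus\{u\}$ in the degenerate length-$2$ case (which here only concerns $u\in\out(C_T)$, so $u\notin V(P_T)$ unless $c=b$) guarantees such a continuation into $C_T$ exists by Proposition \ref{prop_subpath3bis}. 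If $u\in\inn(C_T)$, then $C_T$ is a $u$-cycle, and I would walk around $C_T$ from $u$ in whichever of the two directions reaches $W\cap V(C_T)$ first; if $W$ only meets $X$ inside $V(P_T)\setminus\{b\}$, walk around $C_T$ to $b$ (possible since $u\in\inn(C_T)$, by Proposition \ref{prop_subpath3bis}) and then continue along $P_T$ via Proposition \ref{prop_subpath1}. In every sub-case the resulting sequence is linear connected with no repeated vertex (this is exactly what Propositions \ref{prop_subpath1}, \ref{prop_subpath3}, \ref{prop_subpath3bis}, \ref{prop_subpath4}, \ref{prop_subpath4bis} certify about the pieces, and the gluing at $b$ preserves it because $V(P_T)\cap V(C_T)=\{b\}$), hence represents a $u$-path; and a minimality choice analogous to the path case ensures only the final edge meets $W$, with intersection of size $1$ or $2$.

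The main obstacle I anticipate is the bookkeeping around the degenerate cycle of length $2$: there $C_T$ has only two outer vertices joined by a double edge, and a $uv$-path between the two outer vertices does not exist. This is precisely why the extra hypothesis ``$W\cap V(X)\neq\out(C_X)\setminus\{u\}$'' is imposed, and the plan's care is to invoke Proposition \ref{prop_subpath3bis} (respectively \ref{prop_subpath4bis}) only when its exceptional configuration is excluded — which one must check holds in each branch of the walk, in particular that after reaching $b$ and needing to continue, one never ends up needing the forbidden outer-to-outer path. The rest is the routine ``first hitting vertex'' argument, replicated across the handful of sub-cases.
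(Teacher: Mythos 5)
Your path case reproduces the paper's argument (existence of a $uw$-path via Proposition \ref{prop_subpath1}, then pick the shortest one over all $w \in W \cap V(X)$, then argue by minimality that no earlier edge of it meets $W$). In the tadpole case, however, there is a genuine gap: you case-split on ``$u\in V(P_T)$ or $u\in\inn(C_T)$,'' which omits the third possibility $u\in\out(C_T)$, since $V(T) = V(P_T) \cup \inn(C_T) \cup \out(C_T)$ with $\out(C_T)$ disjoint from the other two. This omitted case is precisely the one where the degenerate hypothesis ``$W\cap V(X)\neq\out(C_X)\setminus\{u\}$'' is needed to guarantee that a first hit exists at all (Proposition \ref{prop_subpath4bis} has its exceptional configuration only when $u$ is an outer vertex of a length-$2$ $C_T$). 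Your closing paragraph notes that the length-$2$ pitfall ``only concerns $u\in\out(C_T)$,'' but that configuration never enters the two sub-cases you actually treat; you would need to add the case $u\in\out(C_T)$ and there appeal to Proposition \ref{prop_subpath4bis} after choosing some $w\in W\cap V(X)$ outside $\out(C_X)\setminus\{u\}$, which the extra hypothesis permits.

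It is also worth noting that the paper sidesteps your sub-case bookkeeping entirely. It invokes Proposition \ref{prop_subpath1} (path) or \ref{prop_subpath4bis} (tadpole) once to secure the existence of \emph{some} $uw$-path with $w\in W\cap V(X)$ (choosing $w$ appropriately in the degenerate case), takes $P$ to be a shortest such $uw$-path over all eligible $w$, and observes that the truncated sequence $\fd{uP}\vert_W$ already induces a $u$-path ending at a vertex of $W$; by minimality of $w$ that truncation cannot be proper, so $\fd{uP}\vert_W=\fd{uP}$, and the last-edge claims follow immediately. No gluing at $b$ and no geography of $P_T$ versus $C_T$ is required, which is what makes the paper's proof robust across the cases you enumerate by hand.
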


\begin{proof}
	Let us start by showing the existence of $z \in Z \cap V(X)$ such that there exists a $uz$-chain in $X$. If $X$ is a chain, then any $z \in Z \cap V(X)$ is suitable by Substructure Lemma \ref{lemma_subchain1}. If $X$ is a tadpole, then any $z \in Z \cap V(X)$ is suitable by Substructure Lemma \ref{lemma_subchain4bis}, unless $C_X$ is of length 2 and $u \in \out(C_X)$ in which case we choose $z \in Z \cap V(X) \setminus (\out(C_X) \setminus \{u\})$ as allowed by the assumption.
	\\ Let $z \in Z \cap V(X)$ minimizing the length of a shortest $uz$-chain in $X$, and let $P$ be a shortest $uz$-chain in $X$. We define $\proj{Z}{u}{X} = P$, which we now prove has the desired properties. Clearly, $P$ is of positive length if and only if $u \not\in Z$. Assume $u \not\in Z$. Consider the walk $\ora{uP}\vert_Z$: recall that, as per Notation \ref{notation_walk}, this is the walk $\ora{uP}$ ``cut'' at the first edge that intersects $Z$. In particular, the only edge of $\ora{uP}\vert_Z$ that intersects $Z$ is $\End(\ora{uP}\vert_Z)$, so $|\End(\ora{uP}\vert_Z) \cap Z| \in \{1,2\}$. Therefore, it suffices to show that $\ora{uP}\vert_Z=\ora{uP}$ to finish the proof. Let $z' \in \End(\ora{uP}\vert_Z)$. The walk $\ora{uP}\vert_Z$ induces a $uz'$-chain, which cannot be shorter than $P$ by minimality of $z$, hence why $\ora{uP}\vert_Z=\ora{uP}$.
\end{proof}

\begin{remarque}
	There is not necessarily uniqueness, even if $X$ is a chain: indeed, it is possible that there are vertices of $Z$ on both sides of $u$ in the chain.
\end{remarque}

\begin{definition}\label{def:projection}
	For $X,u,Z$ satisfying the required conditions, a $u$-chain $\proj{Z}{u}{X}$ from Proposition \ref{prop_projection} is called a \textit{projection of $u$ onto $Z$ in $X$}. Figure \ref{Projections} features some examples. As there is no uniqueness in general, we will consider that the notation $\proj{Z}{u}{X}$ always refers to the same chain for given $X,u,Z$. We normally use the notation once anyway, to give ourselves one arbitrary such projection and then work with that one. We also emphasize that, despite the argument used in the proof of Proposition \ref{prop_projection} to guarantee their existence, projections are not defined to be shortest possible, as this will not be needed. The example on the right of Figure \ref{Projections} illustrates this.
\end{definition}

\begin{figure}[h]
	\centering
	\includegraphics[scale=.58]{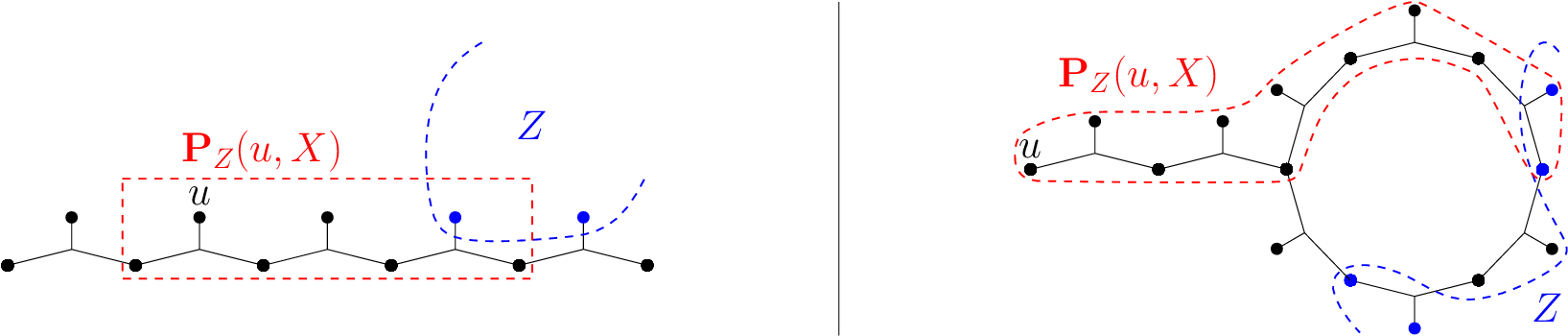}
	\caption{Examples of projections. Left: $X$ is a chain and $|\End(\ora{u\proj{Z}{u}{X}}) \cap Z|=1$. Right: $X$ is a tadpole and $|\End(\ora{u\proj{Z}{u}{X}}) \cap Z|=2$.}
	\label{Projections}
\end{figure}

\subsubsection{Union lemmas}

\noindent We now look at some structures that appear in unions of chains and tadpoles. The following three lemmas are immediately deduced from the concatenation of the walks representing the chains, cycles and tadpoles involved in their statements. We will use them often without necessarily referencing them.

\begin{lemme}
	If $P$ is an $ab$-chain and $P'$ is a $bc$-chain such that $V(P) \cap V(P')=\{b\}$, then $P \cup P'$ is an $ac$-chain.
\end{lemme}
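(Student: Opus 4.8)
The plan is to produce explicit representing sequences and then concatenate them, checking the three defining properties of a path: linearity, connectedness, and absence of a repeated vertex. Write $\fd{aPb}=(a,e_1,\ldots,e_L,b)$ for the unique sequence representing $P$ (with $b=a$ and $L=0$ if $P$ is trivial) and $\fd{bP'c}=(b,f_1,\ldots,f_{L'},c)$ for the one representing $P'$. Form the concatenation $\fd{U}=\fd{aPb}\oplus\fd{bP'c}=(a,e_1,\ldots,e_L,b,f_1,\ldots,f_{L'},c)$, and observe it is equivalent to the sequence obtained by deleting the interior singleton $b$, namely $\fd{U'}=(a,e_1,\ldots,e_L,f_1,\ldots,f_{L'},c)$. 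I claim $\fd{U'}$ represents an $ac$-path.

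First I would dispose of the degenerate cases: if $L=0$ then $a=b$ and $P\cup P'=P'$ is a $bc$-path $=ac$-path, and symmetrically if $L'=0$; so assume $L,L'\geq 1$. Then $a$ and $c$ are the only singletons of $\fd{U'}$, and $L+L'\geq 1$ so the "$L=0\iff a=b$" clause is vacuous (indeed $a\neq c$, since $a\in e_1\subseteq V(P)$ and $c\in f_{L'}\subseteq V(P')$ and $V(P)\cap V(P')=\{b\}$ while $a,c\neq b$ — one should note $a=b$ would force $L=0$). For linearity and connectedness: within the block $e_1,\ldots,e_L$ these hold because $\fd{aPb}$ is linear connected, likewise within $f_1,\ldots,f_{L'}$; at the junction, $e_L\cap f_1$ — we have $b\in e_L$ (as $\fd{aPb}$ is connected, $b\in\End(\fd{aPb})=e_L$) and $b\in f_1$ similarly, so $e_L\cap f_1\ni b$ giving connectedness, while $e_L\cap f_1\subseteq V(P)\cap V(P')=\{b\}$ gives $|e_L\cap f_1|=1\leq 1$, so linearity holds at the junction too.

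The one point needing genuine care — the main obstacle, though still routine — is the absence of a repeated vertex in $\fd{U'}$. Suppose $x\in U_i\cap U_j$ with $|i-j|\geq 2$ in the indexing of $\fd{U'}$. If both indices lie in the $P$-block this contradicts $P$ having no repeated vertex, and similarly for the $P'$-block; the singleton endpoints $a,c$ appear only once each (if $a$ reappeared among the $e_i$ or $f_j$ it would be a repeated vertex of $P$, resp. would force $a\in V(P')$ hence $a=b$). So the only new possibility is $x\in e_i\cap f_j$ with $e_i$ in the $P$-block and $f_j$ in the $P'$-block; but then $x\in V(P)\cap V(P')=\{b\}$, so $x=b$. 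Now $b$ lies in exactly one edge of $P$ (namely $e_L$: if $b\in e_i$ for some $i<L$ then $b$ would be a repeated vertex of $\fd{aPb}$) and in exactly one edge of $P'$ (namely $f_1$, by the same argument), and $e_L,f_1$ are consecutive in $\fd{U'}$, so no pair of indices at distance $\geq 2$ both contain $b$. Hence $\fd{U'}$ has no repeated vertex, so it represents an $ac$-path $P\cup P'$, and since $\fd{U}$ is equivalent to $\fd{U'}$ it represents the same path, completing the argument.
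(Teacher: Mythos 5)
Your proof is correct and follows exactly the approach the paper has in mind: the paper states that these union lemmas ``are immediately deduced from the concatenation of the sequences representing the paths, cycles and tadpoles,'' and you have simply written out the routine verification (degenerate cases $L=0$ or $L'=0$, linearity and connectedness at the junction edge, and absence of a repeated vertex via $V(P)\cap V(P')=\{b\}$) in full detail.
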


\begin{lemme}
	If $P$ and $P'$ are $ab$-chains such that $V(P) \cap V(P')=\{a,b\}$, then $P \cup P'$ is an $a$-cycle and a $b$-cycle.
\end{lemme}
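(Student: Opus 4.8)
The plan is to produce an explicit sequence representing $P \cup P'$ and to check that it meets Definition \ref{def_cycle}. I may assume $a \neq b$ (if $a=b$ then $P=P'$ is the single vertex $a$ and there is nothing to prove), so $L \defeq |E(P)| \geq 1$ and $L' \defeq |E(P')| \geq 1$; write $\fd{aPb}=(a,e_1,\ldots,e_L,b)$ and $\fd{aP'b}=(a,e'_1,\ldots,e'_{L'},b)$. A common edge of $P$ and $P'$ would be a size-$3$ subset of $V(P)\cap V(P')=\{a,b\}$, which is impossible, so $P$ and $P'$ share no edge and $P \cup P'$ is exactly the hypergraph induced by $\fd{aPb} \oplus \fg{aP'b} = (a,e_1,\ldots,e_L,b,b,e'_{L'},\ldots,e'_1,a)$. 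Deleting the two interior singletons $b$, this sequence is equivalent to $\fd{W} \defeq (a,e_1,\ldots,e_L,e'_{L'},\ldots,e'_1,a)$, so it suffices to check that $\fd{W}$ is a valid $a$-cycle sequence of length $M \defeq L+L' \geq 2$.

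I would then go through the bullet points of Definition \ref{def_cycle}. That $a$ is the only singleton of $\fd{W}$ and that $M \geq 2$ are immediate. For the edge-to-edge junctions: consecutive pairs of edges coming entirely from $\fd{aPb}$ (resp. from $\fg{aP'b}$) inherit linearity and connectedness from those sequences, so the only genuinely new junction is the pair $e_L, e'_{L'}$. There $b \in e_L \cap e'_{L'}$ yields connectedness, while $e_L \cap e'_{L'} \subseteq V(P)\cap V(P') = \{a,b\}$ together with the fact that the endpoint $a$ lies in no edge of $\fd{aPb}$ other than $e_1$ (apply the no-repeated-vertex condition to $a$ at position $U_0$) forces $|e_L \cap e'_{L'}| = 1$ as soon as $L \geq 2$ or $L' \geq 2$, i.e. as soon as $M \geq 3$, and forces $|e_L \cap e'_{L'}| = 2$ when $L=L'=1$, i.e. $M=2$; this is precisely the length-$2$ clause of the definition. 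It remains to handle repeated vertices: $a$ lies in $e_1$ and in $e'_1$ but in no other edge, so among the $M$ edges of $\fd{W}$ in order it occurs only in the first and the last, and it is a repeated vertex via the two boundary singletons; conversely any other repeated vertex $x$ would lie either in two edges both from $P$ (or both from $P'$), contradicting the no-repeated-vertex condition of that path, or in one edge of each, forcing $x \in V(P)\cap V(P')=\{a,b\}$ — but $b$ sits only in the consecutive edges $e_L, e'_{L'}$, hence is not repeated. This establishes that $P \cup P'$ is an $a$-cycle.

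For the $b$-cycle claim, the quickest route is to observe that $b$ has degree $2$ in $C \defeq P \cup P'$ (it lies in $e_L$ and $e'_{L'}$ and in no other edge), hence $b \in \inn(C)$, and then to invoke the remark after Definition \ref{def_cycle} that an $a$-cycle is also a $c$-cycle for every inner vertex $c$. Alternatively, one reruns the argument above with $\fd{bPa} \oplus \fg{bP'a}$ in place of $\fd{aPb} \oplus \fg{aP'b}$; the verification is word-for-word symmetric.

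I expect the one real subtlety to be the index bookkeeping in the second paragraph: arranging the boundary edges $e_1, e_L, e'_1, e'_{L'}$ and the degenerate length-$1$ cases so that the special $M=2$ clause of the cycle definition is matched rather than violated, and so that the assertion "the endpoint $a$ lies in no edge of the path but the first" is invoked only for the genuine endpoint vertex, where it follows from its position $U_0$ in $\fd{aPb}$. Modulo that, the proof is just a concatenation of sequences.
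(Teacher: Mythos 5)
Your proof is correct and uses the same concatenation-of-sequences approach that the paper alludes to: the paper states that these three union lemmas are ``immediately deduced from the concatenation of the sequences representing the paths, cycles and tadpoles involved'' and omits the verification entirely. Your careful treatment of the middle junction $e_L,e'_{L'}$ (using $V(P)\cap V(P')=\{a,b\}$ plus the no-repeated-vertex condition of each path to pin down the intersection size and to match the length-$2$ clause of the cycle definition) together with the degree-$2$ observation for $b$ is exactly the content that the paper leaves implicit.
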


\begin{lemme}
	If $P$ is an $ab$-chain and $T$ is a $b$-tadpole such that $V(P) \cap V(T)=\{b\}$, then $P \cup T$ is an $a$-tadpole.
\end{lemme}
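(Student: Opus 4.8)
The plan is to exhibit an explicit sequence representing $P \cup T$ by concatenating the sequence $\fd{aPb}$ representing $P$ with a sequence $\fd{bT}$ representing $T$, and then to verify that the resulting sequence satisfies all four conditions of Definition~\ref{def_tadpole} for an $a$-tadpole. Concretely, write $\fd{aPb}=(a,e_1,\ldots,e_L,b)$ as guaranteed by the notation for paths, and write $\fd{bT}=(b,f_1,\ldots,f_s,c,f_{s+1},\ldots,f_t,c)$ as in Definition~\ref{def_tadpole}, so that $(b,f_1,\ldots,f_s,c)$ represents the path $P_T$ and $(b,f_{s+1},\ldots,f_t,c)$ represents the cycle $C_T$ with $V(P_T)\cap V(C_T)=\{c\}$. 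Form the concatenation
\[
\fd{U} \defeq (a,e_1,\ldots,e_L,f_1,\ldots,f_s,c,f_{s+1},\ldots,f_t,c),
\]
which up to the equivalence that deletes the singleton $b$ (it appears between $e_L$ and $f_1$, but $e_L\cap f_1\ni b$ since both $\fd{aPb}$ and $\fd{bT}$ are connected at $b$) is exactly $\fd{aPb}\oplus\fd{bT}$ with the interior $b$'s removed.

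First I would check that $\fd{U}$ is linear and connected: linearity of the two blocks $(a,e_1,\ldots,e_L,b)$ and $(b,f_1,\ldots,f_t,b)$ gives $|e_i\cap e_{i+1}|\le 1$ and $|f_j\cap f_{j+1}|\le1$ internally, and the only new adjacency is $e_L$ with $f_1$; since $V(P)\cap V(T)=\{b\}$ we have $e_L\cap f_1\subseteq\{b\}$, and connectedness of both pieces at $b$ forces $e_L\cap f_1=\{b\}$, so $|e_L\cap f_1|=1$. Next I would check the absence of a \emph{repeated vertex} other than $c$: any repeated vertex must be hit by two edges at distance $\ge2$ in $\fd{U}$; within the $P$-block there is none (definition of a path), within the $T$-block the only one is $c$ (definition of a tadpole), and a vertex hit both by some $e_i$ and some $f_j$ would lie in $V(P)\cap V(T)=\{b\}$, but $b\notin V(C_T)$ so $b$ is not repeated in the $T$-block and appears in the $P$-block only as the final singleton $\Start/\End$ context — so it is an inner vertex of degree $2$ inside $e_L\cup f_1$, not a repeated vertex in the Definition~\ref{def_tadpole} sense. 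Then $\{i : c\in(\text{edge }i)\}$ is exactly the set of indices corresponding to $f_{s+1}$ and $f_t$, i.e. the first and last edge of the cycle block, as required.

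Concretely, the identification is: $P_{P\cup T}$ is the $ac$-path induced by $(a,e_1,\ldots,e_L,f_1,\ldots,f_s,c)$ — this is a genuine $ac$-path by the first Union Lemma applied to the $ab$-path $P$ and the $bc$-path $P_T$, using $V(P)\cap V(P_T)\subseteq V(P)\cap V(T)=\{b\}$ — and $C_{P\cup T}=C_T$, the same $c$-cycle, with $V(P_{P\cup T})\cap V(C_{P\cup T})=\{c\}$ because $V(C_T)\cap V(P)\subseteq V(T)\cap V(P)=\{b\}$ and $b\notin V(C_T)$, while $V(C_T)\cap V(P_T)=\{c\}$. This verifies Definition~\ref{def_tadpole} for $(a,e_1,\ldots,e_L,f_1,\ldots,f_s,c,f_{s+1},\ldots,f_t,c)$, so $P\cup T$ is an $a$-tadpole.

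I do not expect a serious obstacle here; the statement is, as the paper says, ``immediately deduced from the concatenation of the sequences.'' The only point demanding a little care is the bookkeeping at the junction vertex $b$: one must use $V(P)\cap V(T)=\{b\}$ twice — once to get $e_L\cap f_1=\{b\}$ (preserving linearity/connectedness and keeping $b$ from becoming a spurious repeated vertex), and once to confirm that the path part and cycle part of the new tadpole meet only in $c$. Everything else is a direct transcription of Definitions~\ref{def_path} and~\ref{def_tadpole}.
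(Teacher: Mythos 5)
Your proof is correct and follows exactly the approach the paper intends: concatenation of the representing sequences and a direct check of Definition~\ref{def_tadpole}. (The paper never writes this out; it only asserts the three union lemmas are ``immediately deduced from the concatenation,'' so you have done the work the authors left to the reader.) One small caveat: your claim ``$b\notin V(C_T)$'' is false in the degenerate case where $T$ is itself a $b$-cycle, since then $s=0$, $P_T$ is a single vertex, and $b=c\in V(C_T)$ --- a case the paper explicitly permits (``an $a$-cycle is a particular case of an $a$-tadpole''). Your argument still goes through there, but for a different reason: $b$ \emph{is} a repeated vertex of $\fd{bT}$, but only as the cycle base, which is precisely the role $c$ is required to play in the target $a$-tadpole sequence; and $V(P)\cap V(C_T)=V(P)\cap V(T)=\{b\}=\{c\}$ directly gives the required intersection. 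So the lemma holds, but that line of your write-up should either restrict to $s\geq 1$ and treat $s=0$ as the trivial union with a cycle (the second union lemma's setting, essentially), or be reworded to absorb the $b=c$ case.
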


\noindent However, when the intersection of the two objects is more complex, it is less clear what their union contains. For example, Figure \ref{NonTransitive} illustrates the fact that the union of an $ab$-chain and a $bc$-chain does not necessarily contain an $ac$-chain.

\begin{figure}[h]
	\centering
	\includegraphics[scale=.58]{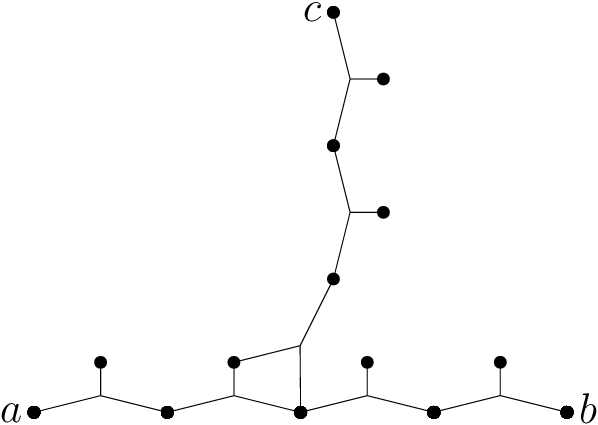}
	\caption{There are no $ac$-chains, because of the edge intersection of size 2.}
	\label{NonTransitive}
\end{figure}

\noindent Let us first consider the union of an $ab$-chain $P$ of positive length and an edge $e^*$ such that $e^* \cap V(P) \neq \varnothing$ and there exists $u \in e^* \setminus V(P)$. When is it possible to prolong a subchain of $P$ with the edge $e^*$ to get an $au$-chain and/or a $bu$-chain?
\\If $|e^* \cap V(P)|=1$, then we get both an $au$-chain and a $bu$-chain, represented by the walks $\ora{aPb}\vert_{e^*} \oplus (e^*,u)$ and $\ola{aPb}\vert_{e^*} \oplus (e^*,u)$ respectively, as illustrated in Table \ref{table1}.

\begin{table}[h] \centering \begin{tabular}{|ll|}
\hline
\multirow{8}{*}{\includegraphics[scale=.58]{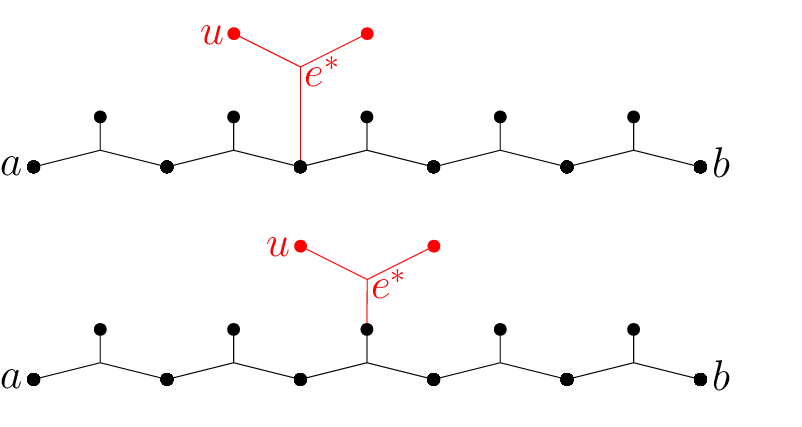}} & \\ & \\ & \\
& $\triangleright$ $au$-chain $\HH{\ora{aPb}\vert_{e^*} \oplus (e^*,u)}$ \\
& $\triangleright$ $bu$-chain $\HH{\ola{aPb}\vert_{e^*} \oplus (e^*,u)}$ \\
& \\ & \\ & \\
\hline
\end{tabular} \caption{An edge $e^*$ intersecting an $ab$-chain $P$ on one vertex, depending on whether that vertex is an inner vertex of $P$ (top) or not (bottom).}\label{table1}
\end{table}

\noindent If $|e^* \cap V(P)|=2$ though, then the walk $\ora{aPb}\vert_{e^*} \oplus (e^*,u)$ does not necessarily represent an $au$-chain (same for $b$). If $a \in e^*$, i.e., $\ora{aPb}\vert_{e^*}=(a)$, then it obviously does. But if $a \not\in e^*$, i.e., $\ora{aPb}\vert_{e^*}$ represents a chain of positive length, then it does if and only if $|e^* \cap \End(\ora{aPb}\vert_{e^*})|=1$. We see a key notion appearing here:

\begin{notation}\label{not:perp}
	Let $P$ be an $ab$-chain of positive length and let $e^*$ be an edge. Write $\ora{aPb}=(a,e_1,\ldots,e_L,b)$. The notation $e^* \perp \ora{aPb}$ (or $e^* \perp \ora{aP}$ equivalently) means that either $e_1 \setminus \{a\} \subseteq e^*$ or $e_i \setminus e_{i-1} \subseteq e^*$ for some $2 \leq i \leq L$. See Figure \ref{Perpendicular}.
\end{notation}

\begin{remarque}
	Note that it is technically possible to have both $e^* \perp \ora{aPb}$ and $e^* \perp \ola{aPb}$. This is the case if, for some $j$, we have $e^*=e_j$ or $e^*=\{o_j,o_{j+1}\} \cup (e_j \cap e_{j+1})$ where $o_i$ denotes the only vertex in $e_i \setminus (\{a,b\} \cup \inn(P))$. However, this will never happen for us, as in practice we will always have either $e^* \not\subseteq V(P)$ or $a \in e^*$.
\end{remarque}

\begin{figure}[h]
	\centering
	\includegraphics[scale=.58]{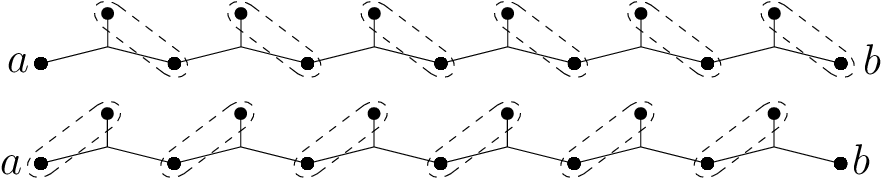}
	\caption{We have $e^* \perp \ora{aPb}$ (resp. $e^* \perp \ola{aPb}$) if and only if $e^*$ contains one of the pairs of vertices highlighted at the top (resp. at the bottom).}
	\label{Perpendicular}
\end{figure}

\noindent In the case at hand $|e^* \cap V(P)|=2$, we can see that $e^* \perp \ora{aPb}$ if and only if $a \not\in e^*$ and $|e^* \cap \End(\ora{aPb}\vert_{e^*})|=2$. Therefore, the walk $\ora{aPb}\vert_{e^*} \oplus (e^*,u)$ represents an $au$-chain if and only if $e^* \not\perp \ora{aPb}$, and similarly the walk $\ola{aPb}\vert_{e^*} \oplus (e^*,u)$ represents a $bu$-chain if and only if $e^* \not\perp \ola{aPb}$. All of this is illustrated in Table \ref{table2}: note that, if no $au$-chain (resp. no $bu$-chain) appears, then we get a $b$-tadpole (resp. an $a$-tadpole).

\begin{table}[h] \begin{tabularx}{\textwidth}{|c|X|X|}
\cline{2-3} \multicolumn{1}{c|}{} & \begin{center} $e^* \not\perp \ola{aPb}$ \end{center} & \begin{center}$e^* \perp \ola{aPb}$\end{center}\\\hline
\multirow{3}{*}{$e^* \not\perp \ora{aPb}$} & \includegraphics[scale=.58]{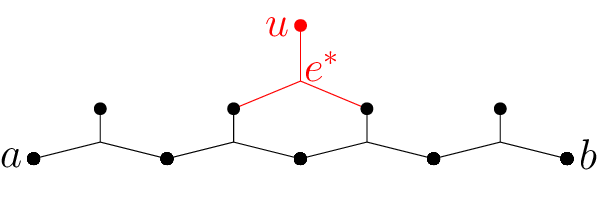} & \includegraphics[scale=.58]{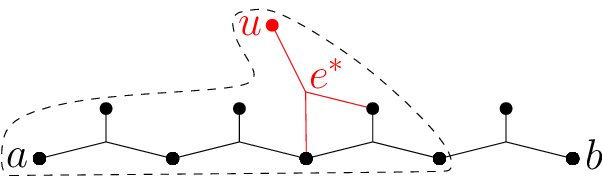} \\
& $\triangleright$ $au$-chain $\HH{\ora{aPb}\vert_{e^*} \oplus (e^*,u)}$ & $\triangleright$ $au$-chain $\HH{\ora{aPb}\vert_{e^*} \oplus (e^*,u)}$ \\
& $\triangleright$ $bu$-chain $\HH{\ola{aPb}\vert_{e^*} \oplus (e^*,u)}$\vphantom{$\displaystyle\int$} & $\triangleright$ $a$-tadpole $\HH{\ora{aPb}\vert_{e^*} \oplus (e^*,\End(\ola{aPb}\vert_{e^*}))}$\vphantom{$\displaystyle\int$} \\\hline
\multirow{3}{*}{$e^* \perp \ora{aPb}$} & \includegraphics[scale=.58]{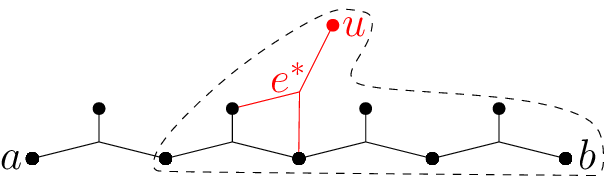} & \hfil\multirow{3}{*}{\hfil impossible} \\
& $\triangleright$ $bu$-chain $\HH{\ola{aPb}\vert_{e^*} \oplus (e^*,u)}$ & \\
& $\triangleright$ $b$-tadpole $\HH{\ola{aPb}\vert_{e^*} \oplus (e^*,\End(\ora{aPb}\vert_{e^*}))}$\vphantom{$\displaystyle\int$} & \\\hline
\end{tabularx} \caption{An edge $e^*$ intersecting an $ab$-chain $P$ on two vertices: all cases. The $a$-tadpole or $b$-tadpole, when one appears, is highlighted.}\label{table2}
\end{table}

\noindent Let us now consider the union of an $ab$-chain $P$ of positive length and some edge $e^* \neq \Start(\ora{aPb})$ that intersects $P$ on at least two vertices including $a$: do we get an $a$-cycle? If $|e^* \cap V(P)|=2$, then the answer is yes, as illustrated in Table \ref{table3}. If $|e^* \cap V(P)|=3$, then Table \ref{table4} shows that it is possible that no $a$-cycle appears, in which case we get a $b$-tadpole.

\begin{table}[h] \centering \begin{tabular}{|ll|}
\hline
\multirow{5}{*}{\includegraphics[scale=.58]{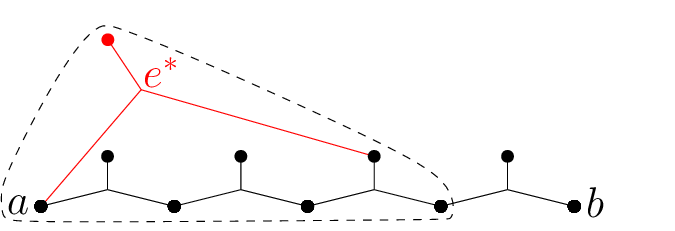}} & \\
& \\
& $\triangleright$ $a$-cycle $\HH{\ora{aPb}\vert_{e^*\setminus \{a\}} \oplus (e^*,a)}$ \\
& \\
& \\
\hline
\end{tabular} \caption{An edge $e^*$ intersecting an $ab$-chain $P$ on two vertices including $a$. The $a$-cycle is highlighted.}\label{table3}
\end{table}
\begin{table}[h] \begin{tabularx}{\textwidth}{|c|X|X|}
\cline{2-3} \multicolumn{1}{c|}{} & \begin{center}$e^* \not\perp \ola{aPb}$\end{center} & \begin{center}$e^* \perp \ola{aPb}$\end{center}\\\hline
\multirow{2}{*}{$e^* \not\perp \ora{aPb}$} & \includegraphics[scale=.58]{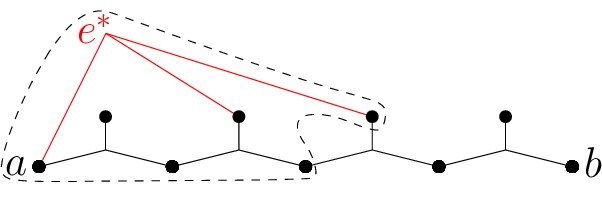} & \includegraphics[scale=.58]{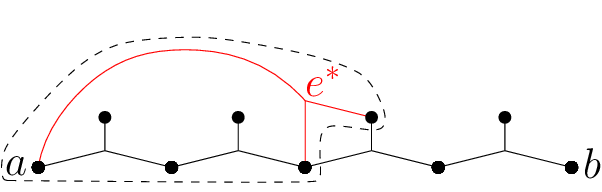} \\
& $\triangleright$ $a$-cycle $\HH{\ora{aPb}\vert_{e^*\setminus \{a\}} \oplus (e^*,a)}$\vphantom{$\displaystyle\int$} & $\triangleright$ $a$-cycle $\HH{\ora{aPb}\vert_{e^*\setminus \{a\}} \oplus (e^*,a)}$\vphantom{$\displaystyle\int$} \\\hline
\multirow{2}{*}{$e^* \perp \ora{aPb}$} & \includegraphics[scale=.58]{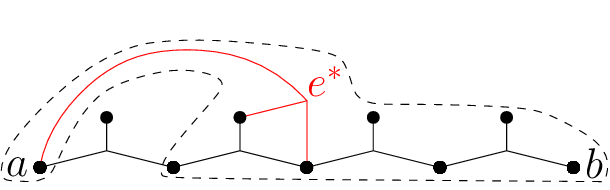} & \hfil\multirow{2}{*}{\hfil impossible} \\
& $\triangleright$ $b$-tadpole $\HH{\ola{aPb}\vert_{e^*} \oplus (e^*,\End(\ora{aPb}\vert_{e^* \setminus \{a\}}))}$\vphantom{$\displaystyle\int$} & \\\hline
\end{tabularx} \caption{An edge $e^*$ intersecting an $ab$-chain $P$ on three vertices including $a$ (and $e^* \neq \Start(\ora{aPb})$): all cases. The $a$-cycle or $b$-tadpole is highlighted.}\label{table4}
\end{table}

\noindent Using these tables, we get the following four union lemmas, which are fundamental in our structural study of 3-uniform hypergraphs. They give us some basic information about the union of two chains or the union of a chain and a tadpole.

\begin{lemme}\label{Lemma1}
	Let $a,b,c$ be distinct vertices. Let $P_{ab}$ be an $ab$-chain, and let $P_c$ be a $c$-chain such that $c \not\in V(P_{ab})$ and $V(P_c) \cap V(P_{ab}) \neq \varnothing$. In particular, $e^* = \End(\ora{c\proj{V(P_{ab})}{c}{P_c}})$ is well defined. Suppose there are no $ca$-chains in $P_{ab} \cup P_c$. Then $|e^* \cap V(P_{ab})|=2$ and $e^* \perp \ora{aPb}$, moreover there is a $cb$-chain in $P_{ab} \cup P_c$ and a $b$-tadpole in $P_{ab} \cup e^* \subseteq P_{ab} \cup P_c$. See Figure \ref{Lemma_PathPath}.
\end{lemme}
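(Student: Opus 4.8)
The plan is to work directly with the sequence $\fd{cP_c}$ and its projection onto $V(P_{ab})$. Write $P \defeq \proj{V(P_{ab})}{c}{P_c}$, so by Proposition \ref{prop_projection} this is a $c$-path in $P_c$ of positive length (positive because $c \notin V(P_{ab})$), whose only edge meeting $V(P_{ab})$ is $e^* = \End(\fd{cP})$, with $|e^* \cap V(P_{ab})| \in \{1,2\}$. First I would dispose of the case $|e^* \cap V(P_{ab})|=1$. In that situation, let $w$ be the unique vertex of $e^* \cap V(P_{ab})$. By Table \ref{table1} (applied to the path $P_{ab}$ and the edge $e^*$, noting $c \in e^* \setminus V(P_{ab})$ plays the role of $u$), the concatenated sequences $\fd{aP_{ab}b}\vert_{e^*} \oplus (e^*, c)$ and $\fg{aP_{ab}b}\vert_{e^*} \oplus (e^*,c)$ represent an $ac$-path and a $bc$-path inside $P_{ab} \cup e^*$. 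But then prolonging by $\fd{cP}$ restricted appropriately — more precisely, since $\fd{cP}$ only meets $V(P_{ab})$ at its final edge $e^*$, the sequence $\fd{aP_{ab}b}\vert_{e^*} \oplus (e^*) \oplus \fg{cP}$ (read from $a$ to $c$ along $P$) is linear, connected and repetition-free — this yields a $ca$-path in $P_{ab} \cup P_c$, contradicting the hypothesis. Hence $|e^* \cap V(P_{ab})|=2$.

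**The perpendicularity dichotomy.** Now $|e^* \cap V(P_{ab})|=2$, so Table \ref{table2} applies to $P_{ab}$ and $e^*$ (with $c$ as the external vertex $u$). The four cases of that table are governed by whether $e^* \perp \fd{aP_{ab}b}$ and whether $e^* \perp \fg{aP_{ab}b}$. If $e^* \not\perp \fd{aP_{ab}b}$, then Table \ref{table2} gives an $au$-path $\HH{\fd{aP_{ab}b}\vert_{e^*} \oplus (e^*,u)}$ with $u = c$, i.e.\ an $ac$-path inside $P_{ab} \cup e^*$; prolonging through the rest of $P$ as in the previous paragraph produces a $ca$-path in $P_{ab} \cup P_c$, again contradicting the hypothesis. (Here I should check the "prolonging" step once carefully: the edges of $P$ other than $e^*$ do not meet $V(P_{ab})$, so appending the reverse of $\fd{cP}$ — minus its final vertex $c$, which is already present — to the $ac$-path sequence keeps it linear, connected, and repetition-free; this is exactly the first union lemma applied to the $ac$-path and the $c$-path $P$ whose intersection is $\{c\}$ once we discard $e^*$ from $P$, but it is cleanest to just verify the sequence directly.) Therefore $e^* \perp \fd{aP_{ab}b}$, which is the second claimed conclusion.

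**Extracting the $cb$-path and the $b$-tadpole.** With $e^* \perp \fd{aP_{ab}b}$ established and $|e^* \cap V(P_{ab})|=2$, we are in the bottom-left or (impossible) bottom-right cell of Table \ref{table2}. The bottom-right cell ($e^* \perp \fd{aP_{ab}b}$ and $e^* \perp \fg{aP_{ab}b}$ simultaneously with $|e^*\cap V(P_{ab})|=2$) is marked impossible there, so we must have $e^* \not\perp \fg{aP_{ab}b}$, and the bottom-left cell gives two structures inside $P_{ab} \cup e^*$: a $bu$-path $\HH{\fg{aP_{ab}b}\vert_{e^*} \oplus (e^*,u)}$ and a $b$-tadpole $\HH{\fg{aP_{ab}b}\vert_{e^*} \oplus (e^*,\End(\fd{aP_{ab}b}\vert_{e^*}))}$. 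The $b$-tadpole lives in $P_{ab} \cup e^* \subseteq P_{ab} \cup P_c$, which is the final claim. For the $cb$-path: take the $bu$-path with $u=c$, namely $\HH{\fg{aP_{ab}b}\vert_{e^*} \oplus (e^*,c)}$, and prolong it by the reverse of $\fd{cP}$ through the interior edges of $P$ (which avoid $V(P_{ab})$), exactly as in the arguments above; this gives a $bc$-path in $P_{ab} \cup P_c$. The main obstacle throughout is not conceptual but bookkeeping: each time one "prolongs" a path obtained from the tables by the tail of $\fd{cP}$, one must confirm that no repeated vertex is introduced and that linearity and connectedness survive — this holds because, by the projection property, the only edge of $P$ touching $V(P_{ab})$ is $e^*$ itself, so the two pieces genuinely overlap only along $e^*$ and its vertices. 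I would state this prolongation fact once as a small observation and invoke it three times rather than re-deriving it.
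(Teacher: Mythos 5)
Your overall strategy is the same as the paper's: use the projection $P$ of $c$ onto $V(P_{ab})$ inside $P_c$, consult Tables~\ref{table1} and \ref{table2}, rule out the cases that would give an $ac$-path in $P_{ab}\cup e^*$ after prolonging back to $c$, and read off the remaining (bottom-left) cell. You organize this as two separate eliminations ($|e^*\cap V(P_{ab})|=1$, then $e^*\not\perp\fd{aP_{ab}b}$) where the paper does both at once by observing that $P_{ab}\cup e^*$ cannot contain an $au$-path for any $u\in e^*\setminus V(P_{ab})$. That is a cosmetic difference.

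There is, however, a recurring error in the way you instantiate the tables: you repeatedly assert that $c\in e^*\setminus V(P_{ab})$ "plays the role of $u$", and you write concatenated sequences such as $\fd{aP_{ab}b}\vert_{e^*}\oplus(e^*,c)$ and $\fg{aP_{ab}b}\vert_{e^*}\oplus(e^*,c)$. This is only correct when the projection $P$ has length $1$. In general $c$ is the far endpoint of $P$ and $e^*=\End(\fd{cP})$ is its last edge, so $c\notin e^*$; the vertex that "plays the role of $u$" should be the inner vertex of $P$ shared by $e^*$ and the preceding edge of $P$ (or an arbitrary vertex of $e^*\setminus V(P_{ab})$ when $P$ has length $1$). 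The subsequent "prolongation" sentence partially repairs this, but the sequence you actually write, $\fd{aP_{ab}b}\vert_{e^*}\oplus(e^*)\oplus\fg{cP}$, lists $e^*$ twice in a row (since $\fg{cP}$ also begins with $e^*$), so it is not a linear sequence in the sense of Definition~\ref{def_path}, and the parenthetical remark that $c$ "is already present" is false unless $P$ has length $1$. To repair this you should pick $u$ as above, obtain the $au$- or $bu$-path from the table, and then prepend the $cu$-path given by the remaining edges of $P$ (equivalently, use $\fd{cP_c}\vert_{\{u\}}\oplus\fd{uP_{au}a}$ as the paper does). The structure of your argument is right; as written, the sequence-level verifications do not go through.
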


\begin{proof}
	By definition of a projection, we have $|e^* \cap V(P_{ab})|\in \{1,2\}$. Let $u \in e^* \setminus V(P_{ab})$. All ways that $e^*$ might intersect $P_{ab}$ are summarized in Tables \ref{table1} and \ref{table2}. There are no $au$-chains $P_{au}$ in $P_{ab} \cup e^*$, otherwise the walk $\ora{cP_c}\vert_{\{u\}} \oplus \ora{uP_{au}a}$ would represent a $ca$-chain in $P_{ab} \cup P_c$, contradicting the assumption of the lemma. Therefore, we are necessarily in the bottom-left case of Table \ref{table2}, which means that: $|e^* \cap V(P_{ab})|=2$, $e^* \perp \ora{aPb}$, there is a $b$-tadpole in $P_{ab} \cup e^*$, and there is a $bu$-chain $P_{bu}$ in $P_{ab} \cup e^*$. The walk $\ora{cP_c}\vert_{\{u\}} \oplus \ora{uP_{bu}b}$ represents a $cb$-chain in $P_{ab} \cup P_c$.
\end{proof}

\begin{figure}[h]
	\centering
	\includegraphics[scale=.58]{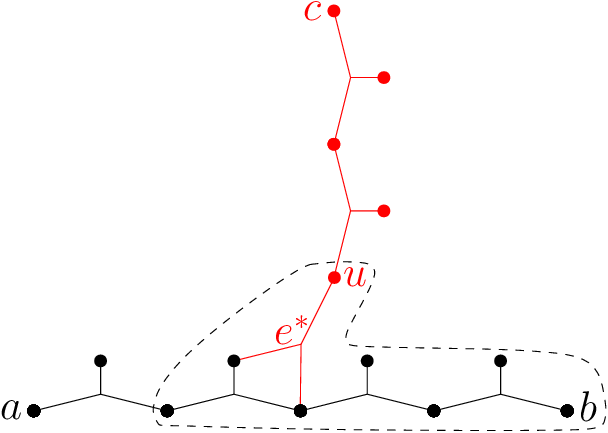}
	\caption{Illustration of Union Lemma \ref{Lemma1}. The represented chains are $P_{ab}$ and $\proj{V(P_{ab})}{c}{P_c}$. The $b$-tadpole is highlighted.}
	\label{Lemma_PathPath}
\end{figure}

\begin{lemme}\label{Lemma4}
Let $a,b,c$ be distinct vertices, where $b$ is marked. Let $S_{ab}$ be an $ab$-snake, and let $P_c$ be a $c$-chain such that $c \not\in V(S_{ab})$ and $V(P_c) \cap V(S_{ab}) \neq \varnothing$.
	\begin{itemize}[noitemsep,nolistsep]
		\item Suppose there are no $c$-snakes in $S_{ab} \cup P_c$. Then there is both a $ca$-chain and an $a$-tadpole in $S_{ab} \cup P_c$.
		\item Suppose there are no $ca$-chains in $S_{ab} \cup P_c$. Then there is both a $cb$-snake and a $b$-tadpole in $S_{ab} \cup P_c$.
	\end{itemize}
	See Figure \ref{Lemma_PathSnake}.
\end{lemme}

\begin{proof}
	The second item is exactly Union Lemma \ref{Lemma1}. The first item is Union Lemma \ref{Lemma1} where the roles of $a$ and $b$ are reversed.
\end{proof}

\begin{figure}[h]
	\centering
	\includegraphics[scale=.58]{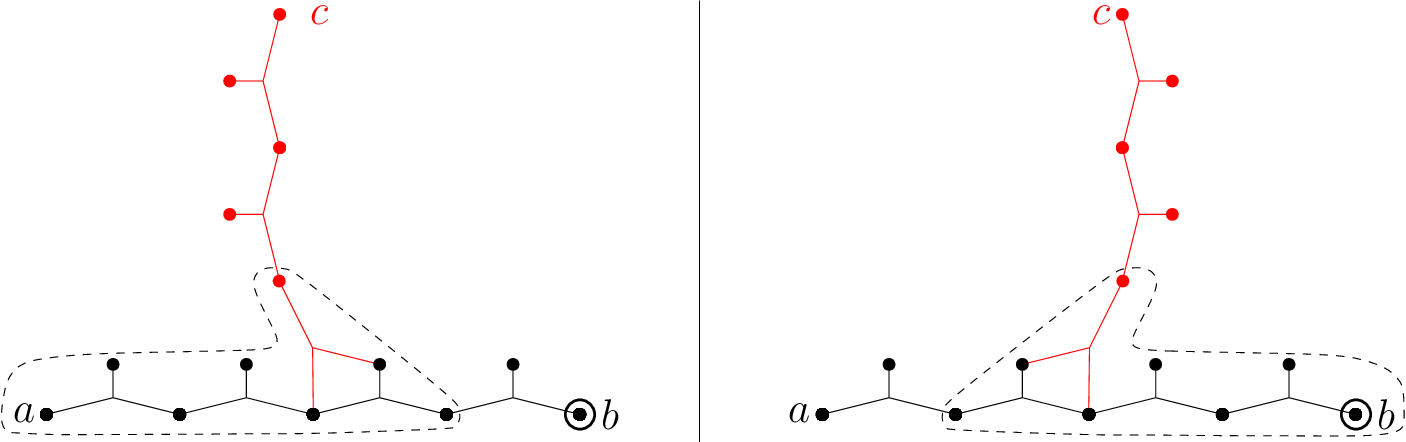}
	\caption{Illustration of Union Lemma \ref{Lemma4} (first item on the left, second item on the right). The represented chains are $S_{ab}$ and $\proj{V(S_{ab})}{c}{P_c}$.}
	\label{Lemma_PathSnake}
\end{figure}

\begin{lemme}\label{Lemma2}
	Let $a,b$ be distinct vertices. Let $P_{ab}$ be an $ab$-chain, and let $P_a$ be an $a$-chain such that $\Start(\ora{aP_a})\neq\Start(\ora{aP_{ab}b})$ and $V(P_a) \cap (V(P_{ab})\setminus\{a\}) \neq \varnothing$. In particular, $e^* = \End(\ora{a\proj{V(P_{ab}) \setminus \{a\}}{a}{P_a}})$ is well defined. Suppose there are no $a$-cycles in $P_{ab} \cup P_a$. Then $e^* \perp \ora{aPb}$ and there is a $b$-tadpole in $P_{ab} \cup e^* \subseteq P_{ab} \cup P_a$. See Figure \ref{Lemma_PathCycle}.
\end{lemme}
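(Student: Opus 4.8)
The plan is to mirror the proof of Lemma \ref{Lemma1}, with a slightly longer case analysis because, unlike there, the vertex $a$ may lie on the edge $e^*$. Write $W := V(P_{ab}) \setminus \{a\}$ and $Q := \proj{W}{a}{P_a}$, so that $e^* = \End(\fd{aQ})$. Since $P_a$ is a path, $a \in V(P_a)$, and $W \cap V(P_a) \neq \varnothing$, Proposition \ref{prop_projection} applies and gives that $Q$ is an $a$-path of positive length in $P_a$ whose only edge meeting $W$ is $e^*$, with $|e^* \cap W| \in \{1,2\}$. As $Q$ is a sub-$a$-path of the path $P_a$ issued from the endpoint $a$, Proposition \ref{prop_subpath1} forces $Q$ to be an initial segment of $P_a$, so $\Start(\fd{aQ}) = \Start(\fd{aP_a})$, which by hypothesis differs from $\Start(\fd{aP_{ab}b})$. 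I would record at once two consequences: (i) every edge of $Q$ other than $e^*$ meets $V(P_{ab})$ at most in $a$, hence $V(Q) \cap V(P_{ab}) \subseteq \{a\} \cup (e^* \cap V(P_{ab}))$; and (ii) $e^* \neq \Start(\fd{aP_{ab}b})$, since if $Q$ has length $1$ then $e^* = \Start(\fd{aQ}) \neq \Start(\fd{aP_{ab}b})$, whereas if $Q$ has length at least $2$ then $a \notin \End(\fd{aQ}) = e^*$ while $a \in \Start(\fd{aP_{ab}b})$.

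Since $|e^* \cap W| \in \{1,2\}$, exactly one of four situations occurs, and they are precisely those treated by Tables \ref{table1}, \ref{table2}, \ref{table3}, \ref{table4} respectively: $|e^* \cap V(P_{ab})| = 1$ (so $a \notin e^*$); $|e^* \cap V(P_{ab})| = 2$ with $a \notin e^*$; $|e^* \cap V(P_{ab})| = 2$ with $a \in e^*$; and $|e^* \cap V(P_{ab})| = 3$ (so $a \in e^*$, and Table \ref{table4} applies by (ii)). The crucial remark is that $P_{ab} \cup e^* \subseteq P_{ab} \cup P_a$, because $e^* \in E(P_a)$; consequently the hypothesis forbids any $a$-cycle in $P_{ab} \cup e^*$, and, exactly as in Lemma \ref{Lemma1}, it also forbids an $au_0$-path in $P_{ab} \cup e^*$, where $u_0$ is the unique vertex in which the second-to-last edge of $Q$ meets $e^*$. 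Such a vertex exists whenever $a \notin e^*$, because then $Q$ has length at least $2$, and it satisfies $u_0 \in e^* \setminus V(P_{ab})$.

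To justify the ``no $au_0$-path'' claim I would argue as in Lemma \ref{Lemma1}. The prefix $Q_{u_0}$ of $Q$ consisting of all edges of $\fd{aQ}$ before $e^*$ is an $au_0$-path satisfying $\Start(\fd{aQ_{u_0}}) = \Start(\fd{aP_a}) \neq \Start(\fd{aP_{ab}b})$, and by (i) it satisfies $V(Q_{u_0}) \cap V(P_{ab}) = \{a\}$. Hence, since $P_{au_0} \subseteq P_{ab} \cup e^*$ would force $V(P_{au_0}) \subseteq V(P_{ab}) \cup (e^* \setminus V(P_{ab}))$ and the no-repeated-vertex property of $P_a$ gives $V(Q_{u_0}) \cap e^* = \{u_0\}$, any $au_0$-path $P_{au_0}$ in $P_{ab} \cup e^*$ would meet $Q_{u_0}$ exactly in $\{a,u_0\}$ and be distinct from it, so the union lemma would exhibit an $a$-cycle $Q_{u_0} \cup P_{au_0}$ inside $P_{ab} \cup P_a$, a contradiction. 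It then remains to read the conclusion off the tables. Table \ref{table1} always produces such an $au_0$-path, and Table \ref{table3} always produces an $a$-cycle inside $P_{ab} \cup e^*$, so those two situations are impossible. In the situation of Table \ref{table2} the only admissible cell producing no $au_0$-path is the bottom-left one, which yields $e^* \perp \fd{aPb}$ together with a $b$-tadpole in $P_{ab} \cup e^*$; in the situation of Table \ref{table4} the only admissible cell producing no $a$-cycle is again the bottom-left one, which likewise yields $e^* \perp \fd{aPb}$ and a $b$-tadpole in $P_{ab} \cup e^*$. Since $P_{ab} \cup e^* \subseteq P_{ab} \cup P_a$, this is exactly the assertion of the lemma.

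The step I expect to be the main obstacle, as it already is in Lemma \ref{Lemma1}, is the bookkeeping certifying $V(Q_{u_0}) \cap V(P_{au_0}) = \{a,u_0\}$ --- using the projection property of $Q$ (no edge of $Q$ before $e^*$ meets $V(P_{ab}) \setminus \{a\}$) and the absence of repeated vertices in $P_a$ --- so that the union really closes into an $a$-cycle. The genuinely new case compared with Lemma \ref{Lemma1}, namely $a \in e^*$, is by contrast disposed of directly by Tables \ref{table3} and \ref{table4}, without any use of $Q$.
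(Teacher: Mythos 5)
Your proof is correct and follows the same high-level plan as the paper: classify the ways $e^*$ can meet $P_{ab}$ according to Tables \ref{table1}--\ref{table4} and rule out the cells incompatible with the no-$a$-cycle hypothesis. The one divergence is in how you dispatch the case $a \notin e^*$: you inline the key mechanism of Lemma \ref{Lemma1}'s proof, defining the bridging vertex $u_0$ between the penultimate edge of $Q$ and $e^*$ and showing that any would-be $au_0$-path in $P_{ab} \cup e^*$ would glue with the prefix $Q_{u_0}$ into a forbidden $a$-cycle, then read off the bottom-left cells of Tables \ref{table1} and \ref{table2}. The paper instead strips the first edge $\{a,c,c'\}$ off $P_a$ to obtain a $c$-path $P_c$ with $c \notin V(P_{ab})$, observes that a $ca$-path in $P_{ab} \cup P_c$ would extend by that edge to an $a$-cycle, and then applies Lemma \ref{Lemma1} to $(P_{ab}, P_c)$ as a black box. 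Both routes rest on the same contradiction; yours is more self-contained but re-proves what Lemma \ref{Lemma1} already delivers and so carries a bit more bookkeeping (certifying $V(Q_{u_0}) \cap V(P_{au_0}) = \{a,u_0\}$ using the projection property and the no-repeated-vertex property), whereas the paper's reduction is shorter. The case $a \in e^*$ is handled identically in both, via Tables \ref{table3} and \ref{table4}.
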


\begin{proof}
We distinguish between two cases:
	\begin{itemize}[wide,noitemsep,nolistsep]
		\item First suppose $a \in e^*$. Since $e^* \neq \Start(\ora{aP_{ab}b})$, all ways that $e^*$ might intersect $P_{ab}$ are summarized in Tables \ref{table3} and \ref{table4}. Since there are no $a$-cycles in $P_{ab} \cup P_a \supseteq P_{ab} \cup e^*$ by assumption, we are necessarily in the bottom-left case of Table \ref{table4}, so $e^* \perp \ora{aPb}$ and there is a $b$-tadpole in $P_{ab} \cup e^*$.
		\item Now suppose $a \not\in e^*$, meaning the projection $\proj{V(P_{ab}) \setminus \{a\}}{a}{P_a}$ is of length at least 2. Write $\Start(\ora{aP_a})=\{a,c,c'\}$ where $c \in \inn(P_a)$, as in Figure \ref{Lemma_PathCycle}. Define the $c$-chain $P_c = P_a^{-a-c'}$: we have $c \not\in V(P_{ab})$ and $e^* = \End(\ora{c\proj{V(P_{ab})}{c}{P_c}})$, so the idea is to apply Union Lemma \ref{Lemma1} to $P_{ab}$ and $P_c$. If there was a $ca$-chain $P_{ca}$ in $P_{ab} \cup P_c$, then $(a,\Start(\ora{aP_a}),c) \oplus \ora{cP_{ca}a}$ would represent an $a$-cycle in $P_{ab} \cup P_a$, contradicting the assumption of the lemma. Therefore, there are no $ca$-chains in $P_{ab} \cup P_c$, so Union Lemma \ref{Lemma1} ensures that $e^* \perp \ora{aPb}$ and that there is a $b$-tadpole in $P_{ab} \cup e^*$. \qedhere
	\end{itemize}
\end{proof}

\begin{figure}[h]
	\centering
	\includegraphics[scale=.58]{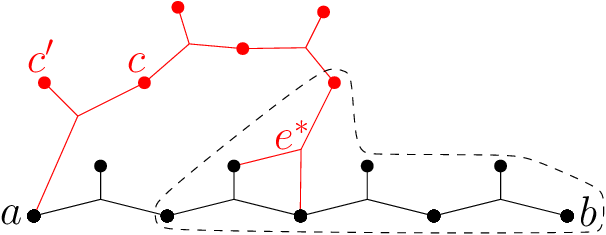}
	\caption{Illustration of Union Lemma \ref{Lemma2}. The represented chains are $P_{ab}$ and $\proj{V(P_{ab})\setminus\{a\}}{a}{P_a}$. The $b$-tadpole is highlighted.}
	\label{Lemma_PathCycle}
\end{figure}

\begin{lemme}\label{Lemma3}
	Let $a,c$ be distinct vertices. Let $T$ be an $a$-tadpole, and let $P_c$ be a $c$-chain such that $c \not\in V(T)$ and $V(P_c) \cap V(T) \neq \varnothing$. In particular, $e^* = \End(\ora{c\proj{V(T)}{c}{P_c}})$ is well defined. Suppose there are no $ca$-chains in $T \cup P_c$. Then $T$ is not a cycle, $|e^* \cap V(T)|=2$ and $e^* \perp \ora{aP_T}$, moreover there is a $c$-tadpole in $T \cup P_c$. See Figure \ref{Lemma_PathTadpole}.
\end{lemme}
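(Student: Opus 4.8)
The plan is to mimic the proof of Lemma \ref{Lemma1}, applied to the path part $P_T$ of the tadpole, once I have shown that $e^*$ cannot meet the interior of the cycle part $C_T$. Write $T=P_T\cup C_T$ with $P_T$ an $ab$-path, $C_T$ a $b$-cycle, and $V(P_T)\cap V(C_T)=\{b\}$. Let $P^-:=\proj{V(T)}{c}{P_c}$, a positive-length $c$-path in $P_c$ whose only edge meeting $V(T)$ is $e^*=\End(\fd{cP^-})$, and let $P_{cp}$ be $P^-$ with its last edge deleted: it is a $cp$-path with $p\in e^*\setminus V(T)$ and $V(P_{cp})\cap(V(T)\cup e^*)=\{p\}$.

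The main tool is the observation $(\star)$: $a\notin e^*$, and for every $w\in e^*\cap V(T)$, every $aw$-path in $T$ meets $(e^*\cap V(T))\setminus\{w\}$. Indeed, if some $aw$-path $P_{aw}$ in $T$ avoided $(e^*\cap V(T))\setminus\{w\}$ (vacuously so when $w=a$), then $\fd{cP_{cp}p}\oplus(p,e^*,w)\oplus\fd{wP_{aw}a}$ would be a $ca$-path in $T\cup P_c$ — the pieces $P_{cp}$ and $P_{aw}$ are vertex-disjoint (as $V(P_{cp})\cap V(T)=\varnothing$), $e^*\notin E(T)$, and the spare vertices of $e^*$ appear nowhere else — contradicting the hypothesis. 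I also record that $a$ is never an outer vertex of $C_T$ (it lies in $\inn(C_T)$ if $T$ is a cycle, outside $V(C_T)$ otherwise), so by Proposition \ref{prop_subpath4bis} an $aw$-path in $T$ exists for every $w\in V(T)$. From $(\star)$: if $|e^*\cap V(T)|=1$, its unique vertex $w$ admits an $aw$-path in $T$ that vacuously avoids the empty set, which is impossible; hence $|e^*\cap V(T)|=2$, say $e^*\cap V(T)=\{w_1,w_2\}$. And if $T$ were a cycle, then $a\in\inn(C_T)$, and Proposition \ref{prop_subpath3} would produce an $aw_1$-path in $C_T$ avoiding $w_2$ or an $aw_2$-path in $C_T$ avoiding $w_1$ (the two obstruction conditions there cannot both hold, as they would force some $w_i$ to be simultaneously inner and outer), contradicting $(\star)$; so $T$ is not a cycle.

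The key remaining step, and the one I expect to be the real obstacle, is to show $\{w_1,w_2\}\subseteq V(P_T)$, i.e. $e^*$ does not reach into $V(C_T)\setminus\{b\}$. Suppose $w_1\in V(C_T)\setminus\{b\}$; since $T$ is not a cycle, every $aw_1$-path in $T$ is $P_T$ followed by a $bw_1$-path in $C_T$. If $w_2\in V(P_T)$, then the $aw_2$-subpath of $P_T$ (Proposition \ref{prop_subpath1}) is an $aw_2$-path in $T$ avoiding $w_1$, contradicting $(\star)$. If $w_2\notin V(P_T)$, then $(\star)$ forces every $bw_1$-path in $C_T$ to meet $w_2$, which by Proposition \ref{prop_subpath3} (with $b$ inner) gives $w_2\in\inn(C_T)$ and $w_1\in\out(C_T)$; the symmetric argument applied to $w_2$ gives $w_1\in\inn(C_T)$, a contradiction. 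Hence $e^*\cap V(P_T)=\{w_1,w_2\}$ and $e^*\setminus V(P_T)=\{p\}$. I then apply the edge-path analysis of Table \ref{table2} to the $ab$-path $P_T$ and the edge $e^*$: since $\fd{cP_{cp}p}\oplus\fd{pP_{ap}a}$ would be a $ca$-path in $T\cup P_c$ for any $ap$-path $P_{ap}$ in $P_T\cup e^*$, there is no such $ap$-path, which (together with $a\notin e^*$) places us in the bottom-left cell of Table \ref{table2}; this yields $e^*\perp\fd{aP_T}$ and a $bp$-path $P_{bp}$ in $P_T\cup e^*$.

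Finally, I assemble the $c$-tadpole: $P_{cb}:=P_{cp}\cup P_{bp}$ is a $cb$-path (the two pieces meet exactly in $p$), and $V(P_{cb})\cap V(C_T)=\{b\}$, because $V(P_{cp})\cap V(T)=\varnothing$ while $V(P_{bp})\subseteq V(P_T)\cup e^*$ meets $V(C_T)$ only in $b$ (using $e^*\cap V(C_T)\subseteq\{b\}$, which follows from $\{w_1,w_2\}\subseteq V(P_T)$). Since the $b$-cycle $C_T$ is in particular a $b$-tadpole, the union lemma for a path and a tadpole gives that $P_{cb}\cup C_T$ is a $c$-tadpole, and it is contained in $P_c\cup P_T\cup e^*\cup C_T=T\cup P_c$ as $e^*\in E(P_c)$. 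Apart from the step ruling out $e^*$ poking into the interior of the cycle, the only recurring nuisance is the bookkeeping of the spare vertices lying on $e^*$, exactly as in the proofs of Lemmas \ref{Lemma1} and \ref{Lemma2}.
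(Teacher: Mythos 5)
Your proof is correct and follows essentially the same strategy as the paper's: show that $e^*$ is confined to the tail $P_T$ (away from the interior of $C_T$), then use the path--path union analysis to get $|e^* \cap V(P_T)|=2$, $e^* \perp \fd{aP_T}$, and a $cb$-path, and finally graft $C_T$ on to form the $c$-tadpole. The only organizational difference is that the paper proves just $e^*\cap(V(P_T)\setminus\{a\})\neq\varnothing$ (Claim \ref{claim_Lemma3}) and then invokes Lemma \ref{Lemma1} as a black box, whereas you establish the stronger containment $e^*\cap V(T)\subseteq V(P_T)$ directly via your observation $(\star)$ and then redo the Table \ref{table2} case analysis by hand; both routes are sound and use the same underlying tools (Propositions \ref{prop_subpath3}/\ref{prop_subpath3bis} to navigate the cycle, concatenation to build the contradictory $ca$-path).
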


\begin{proof}
	Up to replacing $P_c$ by the projection $\proj{V(T)}{c}{P_c}$, assume that $e^*$ is the only edge of $P_c$ intersecting $T$. Let $b$ be the only vertex in $V(P_T) \cap V(C_T)$.
	\begin{claim}\label{claim_Lemma3}
		$e^* \cap (V(P_T)\setminus \{a\}) \neq \varnothing$.
	\end{claim}
	\begin{proofclaim}[Proof of Claim \ref{claim_Lemma3}]
		We already know that $a \not\in e^*$, otherwise $P_c$ would be a $ca$-chain, contradicting the assumption of the lemma. Therefore we must show that $e^* \cap V(P_T) \neq \varnothing$. Suppose for a contradiction that $e^* \cap V(P_T) = \varnothing$. There are two possibilities:
		\begin{itemize}[wide,noitemsep,nolistsep]
			\item Suppose $|e^* \cap V(C_T)|=1$, and write $e^* \cap V(C_T)=\{v\}$. Note that $P_c$ is a $cv$-chain. By Substructure Lemma \ref{lemma_subchain3bis} (with $u=b$), there exists a $bv$-chain $P_{bv}$ in $C_T$. The walk $\ora{cP_cv} \oplus \ora{vP_{bv}b} \oplus \ora{bP_Ta}$ represents a $ca$-chain in $T \cup P_c$, contradicting the assumption of the lemma.
			\item Suppose $|e^* \cap V(C_T)|=2$, and write $e^* \cap V(C_T)=\{v,w\}$. Note that $P_c$ is both a $cv$-chain and a $cw$-chain. Up to swapping the roles of $v$ and $w$, we can assume that $w \in \out(C_T)$ or $v \in \inn(C_T)$. Since $b \in \inn(C_T)$ and $b \neq w$ (indeed $b \in V(P_T)$ whereas $e^* \cap V(P_T)=\varnothing$), Substructure Lemma \ref{lemma_subchain3} (with $u=b$) thus ensures that there exists a $bv$-chain $P_{bv}$ in $C_T$ that does not contain $w$. The fact that $w \not\in V(P_{bv})$ implies that $V(P_c) \cap V(P_{bv})=\{v\}$. The walk $\ora{cP_cv} \oplus \ora{vP_{bv}b} \oplus \ora{bP_Ta}$ thus represents a $ca$-chain in $T \cup P_c$, contradicting the assumption of the lemma. \qedhere
		\end{itemize}
	\end{proofclaim}
	\noindent Claim \ref{claim_Lemma3} implies that $V(P_T)\setminus \{a\} \neq \varnothing$, i.e., $P_T$ is of positive length, i.e., $T$ is not a cycle. It also implies that $V(P_c) \cap V(P_T) \neq \varnothing$, so we can apply Union Lemma \ref{Lemma1} with $P_c$ and the $ab$-chain $P_{ab}=P_T$. Since there are no $ca$-chains in $T \cup P_c \supseteq P_T \cup P_c$ by assumption, Union Lemma \ref{Lemma1} tells us that: $|e^* \cap V(P_T)|=2$, $e^* \perp \ora{aP_Tb}$, and there is a $cb$-chain $P_{cb}$ in $P_T \cup P_c$. Since $|e^* \cap V(P_T)|=2$, we have $e^* \cap (V(C_T) \setminus \{b\})=\varnothing$, hence $V(P_{cb}) \cap V(C_T)=\{b\}$. Therefore $P_{cb} \cup C_T$ is a $c$-tadpole in $T \cup P_c$, which concludes.
\end{proof}

\begin{figure}[h]
	\centering
	\includegraphics[scale=.58]{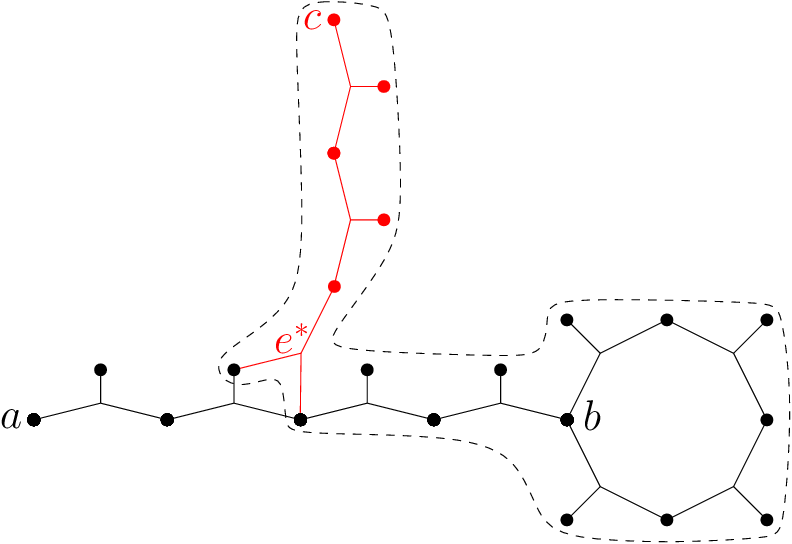}
	\caption{Illustration of Union Lemma \ref{Lemma3}. The represented objects are $T$ and $\proj{V(T)}{c}{P_c}$. The $c$-tadpole is highlighted.}
	\label{Lemma_PathTadpole}
\end{figure}

\subsubsection{Two useful results about $\D_1$-dangers}

\noindent The family $\D_1$ only features dangers that have the minimum number of marked vertices (one for snakes, none for the tadpoles). Indeed, even though snakes and tadpoles with extra marked vertices are also dangers by Proposition \ref{prop_extra} (Marking Monotonicity), the following result shows that it would be redundant to include them, hence why we chose not to. This will slightly simplify matters later in the paper.

\begin{proposition}\label{prop_snakeinside}
	Let $H$ be a marked hypergraph that is not a trivial Maker win, and let $u \in V(H) \setminus M(H)$. Then any $u$-snake or $u$-tadpole in $H$ contains a $\D_1$-danger at $u$. In particular, if $u' \in \I{H^{+u}}{u\D_1(H)}$, then any $u$-snake or $u$-tadpole in $H$ contains $u'$. 
\end{proposition}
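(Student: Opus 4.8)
The plan is to handle the two cases — a $u$-snake and a $u$-tadpole in $H$ — by a single projection argument, after first disposing of the one genuinely trivial subcase. If $X$ is a $u$-tadpole with $V(X) \cap M(H) = \varnothing$, then $M(X) = \varnothing$, so $(X,u) \in \T \subseteq \D_1$ and $X$ is itself the desired $\D_1$-danger at $u$. In every remaining case $X$ has at least one marked vertex — which is automatic when $X$ is a $u$-snake, since the endpoint $m$ of an $xm$-snake is marked — so that $W \defeq M(H) \cap V(X) \neq \varnothing$.

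Next I would walk along $X$ starting from $u$ up to the first marked vertex, i.e. consider the projection $P \defeq \proj{M(H)}{u}{X}$ of Proposition \ref{prop_projection}. This is legitimate: the exceptional hypothesis of that proposition (a tadpole whose cycle has length $2$ with $u$ an outer vertex of it) never occurs here, because in a $u$-tadpole the base vertex $u$ is never an outer vertex of $C_X$ — if $X = C_X$ is a $u$-cycle then $u \in \inn(C_X)$, and otherwise $u \notin V(C_X)$ altogether. Since $u \notin M(H)$, Proposition \ref{prop_projection} tells us that $P$ is a $u$-path of positive length whose only edge meeting $M(H)$ is $e^* \defeq \End(\fd{uP})$, with $|e^* \cap M(H)| \in \{1,2\}$. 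This is where the hypothesis "$H$ is not a trivial Maker win" enters: $e^*$ is a $3$-element edge of $H$, so $|e^* \setminus M(H)| \geq 2$, which forces $|e^* \cap M(H)| = 1$; write $e^* \cap M(H) = \{q\}$.

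Then $q$ is the unique marked vertex of $P$ (the edges of $P$ other than $e^*$ avoid $M(H)$, and $V(P)$ is the union of its edges since $P$ has positive length), so by Proposition \ref{prop_subpath1} the unique $uq$-subpath $P_q$ of $P$ satisfies $M(P_q) \subseteq M(P) = \{q\}$, while $q \in V(P_q) \cap M(H)$, hence $M(P_q) = \{q\}$; moreover $P_q$ has positive length because $u \neq q$. Thus $P_q$ is a $uq$-snake with exactly one marked vertex, i.e. $(P_q,u) \in \S \subseteq \D_1$, and $P_q \subseteq P \subseteq X$, which proves that $X$ contains a $\D_1$-danger at $u$. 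The "in particular" statement then follows at once: if $u' \in \I{H^{+u}}{u\D_1(H)}$ and $X$ is a $u$-tadpole or $u$-snake in $H$, then by the above $X$ contains some $Y \in u\D_1(H)$, so $u' \in V(Y) \subseteq V(X)$ by definition of the intersection.

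I do not anticipate a genuine obstacle; the only points that need care are bookkeeping ones — checking that the exceptional case of Proposition \ref{prop_projection} is vacuous for $u$-tadpoles, and remembering that membership in $\S$ requires \emph{exactly} one marked vertex, which is precisely why one must pass to the subpath $P_q$ rather than invoke $P$ (or $X$) directly.
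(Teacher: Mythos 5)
Your proof is correct and takes essentially the same route as the paper: dispose of the unmarked-tadpole subcase, then project $u$ onto the marked vertices of $X$ and use the non-trivial-Maker-win hypothesis to conclude the terminal edge of the projection contains exactly one marked vertex, giving a snake. The only cosmetic difference is that you extract a $uq$-subpath $P_q$ from $P$ at the end, which is redundant --- by the construction in Proposition \ref{prop_projection}, $P$ is already a $uw$-path with $w \in M(H) \cap V(X)$, and since $q$ is the unique marked vertex of $P$, necessarily $w = q$, so $P_q = P$.
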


\begin{proof}
	Let $X$ be a $u$-snake or $u$-tadpole in $H$. If $M(X)=\varnothing$, then $X$ is necessarily a $u$-tadpole, and $X \in u\D_1(H)$. Therefore, assume that $M(X) \neq \varnothing$, so that the $u$-snake $S = \proj{M(X)}{u}{X} \subseteq X$ is well defined. Note that we may have $S \neq X$ even in the case where $X$ is a snake, because the definition of a snake allows for extra marked vertices. By definition of a projection, the only edge of $S$ that intersects $M(X)$ is $\End(\ora{uS})$. Moreover, since $H$ is not a trivial Maker win, that edge contains exactly one marked vertex, hence $|M(S)|=1$, i.e., $S \in u\D_1(H)$.
	\\ The final assertion of this proposition ensues immediately: all $u$-snakes and $u$-tadpoles in $H$ contain some $D \in u\D_1(H)$, and $u' \in V(D)$ since $u' \in \I{H^{+u}}{u\D_1(H)}$.
\end{proof}

\noindent Finally, the following result will be used numerous times.

\begin{proposition}\label{prop_markedvertex}
	Let $H$ be a marked hypergraph that is not a trivial Maker win, with $|V(H) \setminus M(H)| \geq 2$. If $J(\D_1,H)$ holds, then, for any $m \in M(H)$, there are no $m$-snakes and no $m$-tadpoles in $H$.
\end{proposition}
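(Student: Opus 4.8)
The plan is to argue by contradiction, showing that an $m$-tadpole or $m$-snake with $m$ marked would force a nunchaku or a necklace into $H$, or would directly contradict $J_1(\D_0,H)$ at some non-marked vertex. First I would record what $J_1(\D_0,H)$ gives: since $\S\subseteq\D_0$, it implies $J_1(\S,H)$, and then, exactly as in the opening paragraph of the proof of Proposition \ref{prop_interpretation} (using Propositions \ref{prop_necklace} and \ref{prop_nunchaku}, the restriction of $J_1(\S,\cdot)$ to subhypergraphs with at least two non-marked vertices, and the hypothesis that $H$ is not a trivial Maker win), it follows that $H$ contains no necklace and no nunchaku.

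The crucial reusable step is the following claim: if $H$ is not a trivial Maker win and contains a path $P$ of positive length whose endpoints $m_1\ne m_2$ are both marked, then $H$ contains a nunchaku, contradicting the above. To prove it I would set $S:=\proj{M(P)\setminus\{m_1\}}{m_1}{P}$, which is well defined since $m_2\in M(P)\setminus\{m_1\}$; by the construction behind Proposition \ref{prop_projection}, $S$ is an $m_1w$-path for some marked $w\ne m_1$ whose only edge meeting $M(P)\setminus\{m_1\}$ is $\End(\fd{m_1S})$. Hence every marked vertex of $S$ other than $m_1$ lies in that last edge; since $H$ is not a trivial Maker win, the edge has at most one marked vertex, so it has exactly one, namely $w$, and it cannot also contain $m_1$ (two marked vertices in one edge would make $H$ a trivial Maker win). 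Therefore $M(S)=\{m_1,w\}$ and $S$ has length at least $2$, i.e. $S$ is a nunchaku.

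Finally I would run the case analysis on $X$. If $X$ is an $m$-snake, it is itself a positive-length path between $m$ and another marked vertex, so the claim applies. Otherwise $X=P_X\cup C_X$ is an $m$-tadpole with $P_X$ an $mb$-path, $C_X$ a $b$-cycle, and $V(P_X)\cap V(C_X)=\{b\}$. If $M(C_X)=\varnothing$, then $X$ is not a cycle (as $m$ is marked) and $b\in\inn(C_X)$ is non-marked, so $C_X\in b\C(H)\subseteq b\D_0(H)$; moreover the projection argument of the previous paragraph, applied to $P_X$ from $b$ onto $M(P_X)$ (nonempty since $m\in M(P_X)$), produces a $b$-snake $S'\subseteq P_X$ with $|M(S')|=1$, so $S'\in b\S(H)\subseteq b\D_0(H)$; since $V(S')\cap V(C_X)=\{b\}$ we get $\I{H^{+b}}{b\D_0(H)}=\varnothing$, contradicting $J_1(\D_0,H)$. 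If $C_X$ contains a marked vertex $m''$, then either $X=C_X$ with $M(C_X)=\{m\}$, so $X$ is an $m$-necklace (contradiction), or $m''$ may be chosen distinct from $m$ (automatic when $X$ is a proper tadpole, since then $m\notin V(C_X)$); in the latter case, concatenating a path from $m$ to $b$ inside $P_X$ with a $bm''$-path inside $C_X$ — which exists by Proposition \ref{prop_subpath3bis}, the exceptional case being excluded because $b\in\inn(C_X)$ — gives a positive-length path in $H$ between the distinct marked vertices $m$ and $m''$, and the claim applies. In all cases we reach a contradiction.

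The main obstacle is the projection bookkeeping in the claim and its reuse: one must ensure the projection really ends at a marked vertex and that, because $H$ is not a trivial Maker win, the extracted sub-snake carries exactly one marked vertex (hence is an $\S$-danger) or exactly two (hence is a nunchaku); the split according to whether the cycle part of the tadpole carries a marked vertex is what organizes the argument.
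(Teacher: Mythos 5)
Your proof is correct, and it takes a genuinely different route from the paper's. The paper's proof temporarily un-marks $m$ (forming $H_0$ and $X_0$), applies Proposition \ref{prop_snakeinside} to extract a single $\D_1$-danger $D$ at $m$ inside $X_0$, and then concludes with a clean three-way case split on the type of $(D,m)$: snake (handled by Proposition \ref{prop_nunchaku}), cycle (Proposition \ref{prop_necklace}), or proper tadpole (Propositions \ref{prop_tadpole} and \ref{prop_obs_dangers}). You instead work directly on $X$: you first record, as in the opening of the proof of Proposition \ref{prop_interpretation}, that $J_1(\D_0,H)$ forbids any nunchaku or necklace in $H$; you then re-derive the essential content of Proposition \ref{prop_snakeinside} as a small claim (a positive-length path between two distinct marked vertices yields a nunchaku via projection onto the marked set); and finally you case-split on whether $X$ is a snake or a tadpole, and for tadpoles on whether $C_X$ carries a marked vertex. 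In the subcase $M(C_X)=\varnothing$, rather than routing through the $\obsS{\D_0}$ machinery, you build the failure of $J_1(\D_0,H)$ by hand at the junction vertex $b$, producing the $b$-snake $\proj{M(P_X)}{b}{P_X}$ and the cycle $C_X$ as two $\D_0$-dangers at $b$ that meet only at $b$. Both arguments are sound; the paper's is shorter because Propositions \ref{prop_snakeinside}, \ref{prop_tadpole} and \ref{prop_obs_dangers} already encapsulate the projection and obstruction bookkeeping that you reproduce inline, while yours is more self-contained and makes the danger intersection at $b$ fully explicit rather than abstract.
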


\begin{proof}
    We prove the contrapositive. Suppose that there exists a subhypergraph $X$ of $H$ that is an $m$-snake or an $m$-tadpole for some $m \in M(H)$.
    \begin{itemize}[leftmargin=*,noitemsep,nolistsep]
        \item Case 1: $M(X)=\{m\}$. In particular, $X$ is necessarily a tadpole.
            \begin{itemize}[leftmargin=*]
                \item[--] First suppose that $X=C$ is a cycle. Let $x \in \inn(C) \setminus \{m\}$: it is clear that $C$ is the union of two $xm$-snakes $S$ and $S'$ such that $M(S)=M(S')=\{m\}$ and $V(S) \cap V(S')=\{x,m\}$ (the two "halves" of the cycle). Therefore, $\{S,S'\}$ is a $\D_0$-fork at $x$ in $C \subseteq H$, so $J(\D_0,H)$ does not hold and neither does $J(\D_1,H)$.
                \item[--] Now, suppose that $X=T$ is a tadpole which is not a cycle. Let $x$ be the only vertex in $V(P_T) \cap V(C_T)$. Since $P_T$ is an $xm$-snake with $M(P_T)=\{m\}$, and $C_T$ is an $x$-cycle with $M(C_T)=\varnothing$, we see that $\{P_T,C_T\}$ is a $\D_1$-fork at $x$ in $T \subseteq H$, so $J(\D_1,H)$ does not hold.
            \end{itemize}
        \item Case 2: $M(X) \setminus \{m\} \neq \varnothing$.  The $m$-snake $N = \proj{M(X) \setminus \{m\}}{m}{X} \subseteq X$ is well defined. By definition of a projection, the only edge of $N$ that intersects $M(X) \setminus \{m\}$ is $\End(\ora{mN})$. Moreover, since $H$ is not a trivial Maker win, that edge contains exactly one marked vertex (call it $m'$), so $N$ is an $mm'$-nunchaku of length at least 2. Let $x \in \inn(N)$: it is clear that $N$ is the union of an $xm$-snake $S$ and an $xm'$-snake $S'$ such that $M(S)=\{m\}$, $M(S')=\{m'\}$ and $V(S) \cap V(S')=\{x\}$ (the two "halves" of the nunchaku). Therefore, $\{S,S'\}$ is a $\D_0$-fork at $x$ in $N \subseteq H$, so $J(\D_0,H)$ does not hold and neither does $J(\D_1,H)$. \qedhere
    \end{itemize}
\end{proof}

\subsection{Structural properties of the $\widehat{\D_1}$-dangers}

\noindent Now that we have addressed chains and tadpoles, we move on to $\widehat{\D_1}$-dangers, which are special unions of such objects.

\subsubsection{First properties}

\noindent Recall that a $\widehat{\D_1}$-danger $D$ at $x$ is basically the union of a "potential $\D_1$-fork" $\F_D=\S_z \cup \T_z \cup \P_{zx}$ at some vertex $z$, which will become an actual $\D_1$-fork at $z$ if Maker picks $x$. The elements of $\S_z \cup \T_z$ are $\D_1$-dangers at $z$ already, while the elements of $\P_{zx}$ would be one if $x$ was marked ($zx$-chains that become $zx$-snakes). Figure \ref{Example_Dangers} features some examples.

\begin{figure}[h]
    \centering
    \includegraphics[scale=.55]{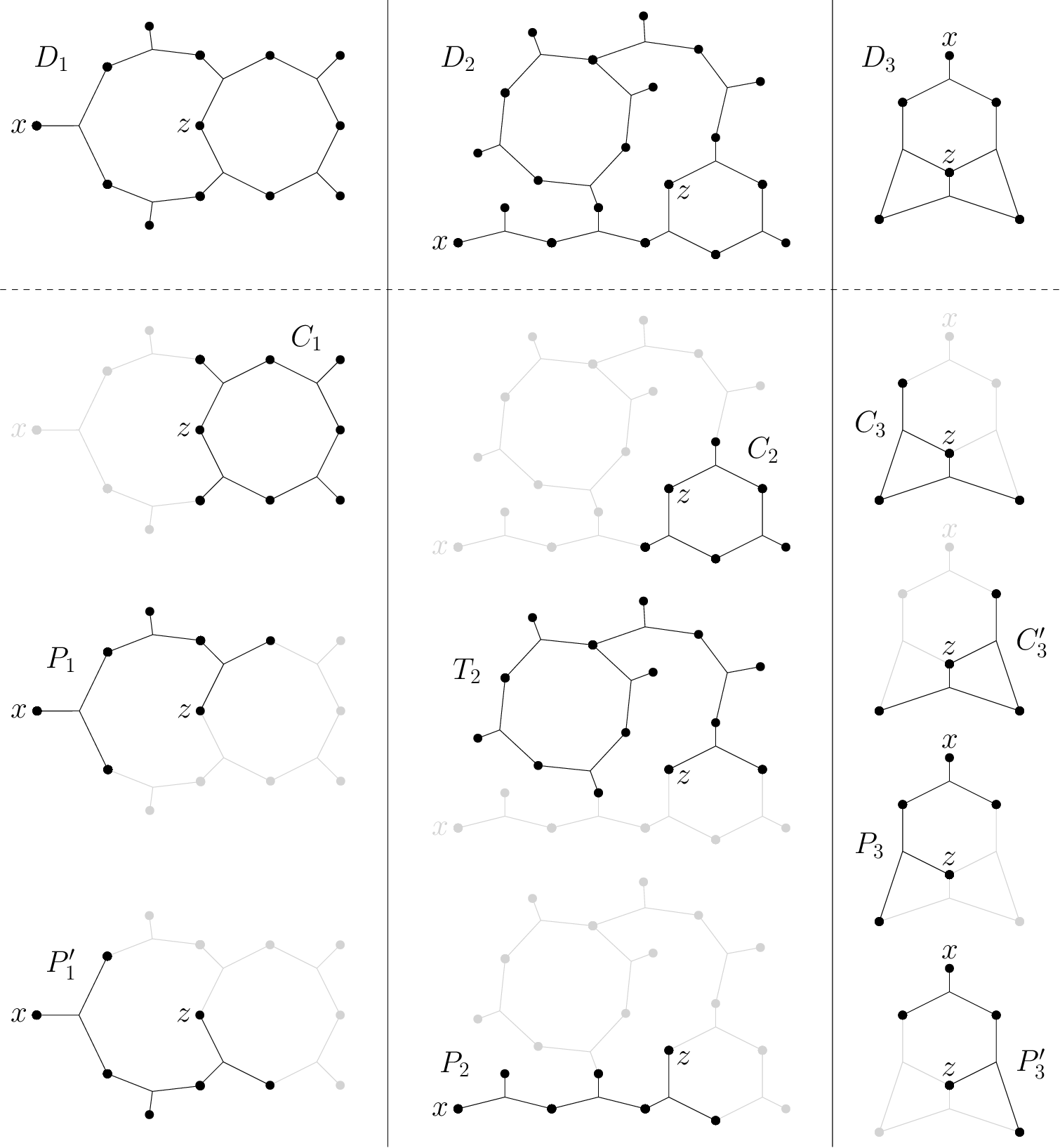}
    \caption{Three examples of a $\widehat{\D_1}$-danger at $x$, with a possible decomposition $(z,\S_z \cup \T_z \cup \P_{zx})$ below. For $D_1$: $\S_z=\varnothing$, $\T_z=\{C_1\}$, $P_{zx}=\{P_1,P'_1\}$. For $D_2$ (seen before in Figure \ref{MakerWin5}): $\S_z=\varnothing$, $\T_z=\{C_2,T_2\}$, $P_{zx}=\{P_2\}$. For $D_3$: $\S_z=\varnothing$, $\T_z=\{C_3,C'_3\}$, $P_{zx}=\{P_3,P'_3\}$.}\label{Example_Dangers}
\end{figure}

\noindent For fixed $(D,x) \in \widehat{\D_1}$ and fixed $z$, there are two natural options for a preferred decomposition. One would be to choose a minimal $\F_D$, so that every element is essential to the fact that $\F_D^{+x}$ is a fork at $z$ (this is the choice we have made in Figure \ref{Example_Dangers}, as it makes it easier to understand the family $\widehat{\D_1}$). The other would be to choose a maximal $\F_D$, so that we have all relevant objects that exist inside $D$ at our disposal, even though some might be redundant for the intersection (for example, $D_2$ from Figure \ref{Example_Dangers} contains two $zx$-chains going through the tadpole $T_2$, additionally to the $zx$-chain $P_2$). The latter is best suited for our structural studies, so we will always make the choice of maximality hence the following definition.

\begin{definition}\label{def:maximal_decomposition}
        Let $(D,x) \in \widehat{\D_1}$. A decomposition $(z , \F_D = \S_z \cup \T_z \cup \P_{zx})$ of $(D,x)$ is said to be \textit{maximal} if $\S_z$, $\T_z$ and $\P_{zx}$ are maximal collections with respect to set inclusion. In other words, this means that:
	\begin{itemize}[nolistsep,noitemsep]
		\item $\S_z$ is the collection of all $z$-snakes in $D$ such that $|M(S)|=1$ and $x \not\in V(S)$;
		\item $\T_z$ is the collection of all $z$-tadpoles in $D$ such that $M(T)=\varnothing$ and $x \not\in V(T)$;
		\item $\P_{zx}$ is the collection of all $zx$-chains in $D$ such that $M(P)=\varnothing$.
	\end{itemize}
\end{definition}

\begin{remarque}
	Because of the choice of $z$, there might not be a unique maximal decomposition, but we do not mind. Given some $(D,x) \in \widehat{D_1}$, we will always just fix some maximal decomposition, with arbitrary $z$.
\end{remarque}

\noindent Let us start by gathering a few properties of the $\widehat{\D_1}$-dangers which we will use often.

\begin{proposition}\label{prop_structure}
	Let $(D,x) \in \widehat{\D_1}$, with a maximal decomposition $(z,\F_D=\S_z \cup \T_z \cup \P_{zx})$. We have the following properties:
	\begin{enumerate}[noitemsep,nolistsep,label={\textup{(\alph*)}}]
		\item For all $u \in V(D) \setminus (M(D) \cup \{x,z\})$, there exists $X \in \F_D$ such that $u \not\in V(X)$. \label{item2}
            \item $D$ is not a trivial Maker win. \label{item5}
		\item There are no $x$-snakes and no $x$-tadpoles in $D$. \label{item6}
		\item $\P_{zx} \neq \varnothing$. \label{item4}
		\item $\S_z \cup \T_z \neq \varnothing$, i.e., there exists a $\D_1$-danger at $z$ in $D$ that does not contain $x$. \label{item3}
		\item There exists a $z$-cycle in $D$. \label{item7}
	\end{enumerate}
\end{proposition}

\begin{proof}

	Let us prove properties (a) through (f) in that order.
	
	\begin{enumerate}[label={(\alph*)},leftmargin=22pt]
	
		\item By definition of a decomposition, $\F_D^{+x}$ is a $\D_1$-fork at $z$ in $D^{+x}$, i.e., $\I{D^{+x}}{\F_D^{+x}}=\{z\}$. This implies that $\I{D}{\F_D} \subseteq \{z,x\}$: in other words, apart from $z$ and maybe $x$, no non-marked vertex of $D$ is contained in all elements of $\F_D$.

        \item By definition, each element of $\F_D$ has at most one marked vertex, so $D$ is not a trivial Maker win.

		\item Thanks to property \ref{item5}, we can apply Proposition \ref{prop_snakeinside} with $H=D$ and $u=x$. It ensures that $D$ contains no $x$-snakes and no $x$-tadpoles, as it would otherwise contain a $\D_1$-danger at $x$, contradicting the definition of $\widehat{\D_1}$.

		\item It is impossible that $\P_{zx} = \varnothing$, because we would get $D=\union{\S_z \cup \T_z}$, contradicting the fact that $D$ contains $x$ while the subhypergraphs in the collection $\S_z \cup \T_z$ do not.
	
	\end{enumerate}
		
	 \noindent Using property \ref{item4}, let $P_{zx}$ be a shortest chain in $\P_{zx}$, and define $v = o(x,\ora{xP_{zx}z})$ and $w = o(z,\ola{xP_{zx}z})$. This chain will help us prove properties \ref{item3} and \ref{item7}. Note that $M(P_{zx})=\varnothing$ by definition of the collection $\P_{zx}$.
	
	\begin{enumerate}[label={(\alph*)},leftmargin=22pt]
	
	\setcounter{enumi}{4}
		
		\item Suppose for a contradiction that $\S_z=\T_z=\varnothing$. By maximality of the decomposition, this exactly means that all $\D_1$-dangers at $z$ in $D$ contain $x$. Also note that $D=\union{\P_{zx}}$, so $M(D)=\varnothing$. We are going to use the chain $P_{zx}$. By property \ref{item2}, there exists $P^{\overline{v}} \in \F_D=\P_{zx}$ such that $v \not\in V(P^{\overline{v}})$. We have $\Start(\ora{xP^{\overline{v}}z}) \neq \Start(\ora{xP_{zx}z}) \ni v$ and $V(P^{\overline{v}}) \cap (V(P_{zx}) \setminus \{x\}) \supseteq \{z\} \neq \varnothing$, so we can apply Union Lemma \ref{Lemma2}: since $P_{zx} \cup P^{\overline{v}} \subseteq D$ contains no $x$-cycles by property \ref{item6}, it contains a $z$-tadpole $T$. If $x \not\in V(T)$ as on the left of Figure \ref{Item3}, then $T$ contradicts the fact that all $\D_1$-dangers at $z$ in $D$ contain $x$. If $x \in V(T)$, then the only possibility is that the projection $\proj{V(P_{zx}) \setminus \{x\}}{x}{P^{\overline{v}}}$ consists of a single edge $e$ as illustrated on the right of Figure \ref{Item3}: since $v \not\in e$, we get a $zx$-chain $P'_{zx}$ that is strictly shorter than $P_{zx}$, also a contradiction.
		
		\begin{figure}[h]
			\centering
			\includegraphics[scale=.55]{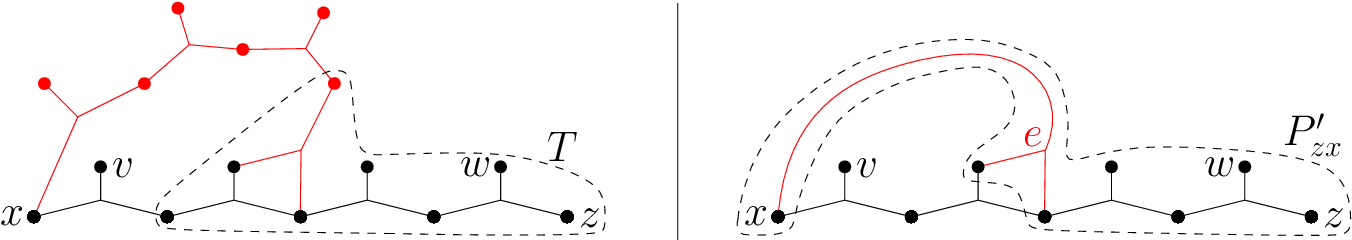}
			\caption{The contradiction that yields property \ref{item3}, if $x \not\in V(T)$ (left) or if $x \in V(T)$ (right). The represented chains are $P_{zx}$ (bottom) and $\proj{V(P_{zx}) \setminus \{x\}}{x}{P^{\overline{v}}}$.}\label{Item3}
		\end{figure}
		
		\item We are going to use the chain $P_{zx}$ again. By property \ref{item2}, there exists $X^{\overline{w}} \in \F_D$ such that $w \not\in V(X^{\overline{w}})$. We claim that $V(X^{\overline{w}}) \cap (V(P_{zx}) \setminus \{z\}) \neq \varnothing$ in all cases, indeed:
			\begin{itemize}[noitemsep,nolistsep]
				\item If $X^{\overline{w}} \in \P_{zx}$, then this is obvious because $x \in V(X^{\overline{w}})$.
				\item If $X^{\overline{w}} \in \T_z$, then this is true because otherwise $P_{zx} \cup X^{\overline{w}}$ would be an $x$-tadpole in $D$, contradicting property \ref{item6}.
				\item If $X^{\overline{w}} \in \S_z$, then this is true because otherwise $P_{zx} \cup X^{\overline{w}}$ would be an $x$-snake in $D$, contradicting property \ref{item6}.
			\end{itemize}
			Therefore, the projection $P = \proj{V(P_{zx}) \setminus \{z\}}{z}{X^{\overline{w}}}$ is well defined. Since $w \not\in V(P)$ and $w \in \Start(\ola{xP_{zx}z})$, we have $\Start(\ora{zP}) \neq \Start(\ola{xP_{zx}z})$ so we can apply Union Lemma \ref{Lemma2}: since $P_{zx} \cup P \subseteq D$ contains no $x$-tadpoles by property \ref{item6}, it contains a $z$-cycle. \qedhere
		
	\end{enumerate}
	
\end{proof}

\noindent The proofs of properties \ref{item3} and \ref{item7} are typical of the methods that we will use extensively. We see that property \ref{item2} is a key existence tool, providing us with subhypergraphs of $D$ which we can use to partially reconstruct $D$ and establish structural properties.

\

\noindent Beyond these basic characteristics, it is difficult to say much about the structure of $\widehat{\D_1}$-dangers in general. However, in practice, we will always consider $\widehat{\D_1}$-dangers inside marked hypergraphs $H$ such that $J(\D_1,H)$ holds. Given some $\widehat{\D_1}$-danger $D$ at $x$ in $H$, with a maximal decomposition $(z,\F_D=\S_z \cup \T_z \cup \P_{zx})$, this implies that $\I{H^{+z}}{z\D_1(H)} \neq \varnothing$. In other words, even though the intersection in $H^{+z}$ of $z\D_1(H) \cup \P_{zx}$ is empty, the intersection in $H^{+z}$ of $z\D_1(H)$ alone is not: it contains some vertex $s$. This vertex $s$ will often be useful.

\begin{proposition}\label{prop_s}
	Let $H$ be a marked hypergraph that is not a trivial Maker win. Let $D$ be a $\widehat{\D_1}$-danger at some $x$ in $H$, with a maximal decomposition $(z,\F_D=\S_z \cup \T_z \cup \P_{zx})$. Suppose $\I{H^{+z}}{z\D_1(H)} \neq \varnothing$. Then, for all $s \in \I{H^{+z}}{z\D_1(H)}$:
	\begin{itemize}[noitemsep,nolistsep]
		\item Any $z$-tadpole or $z$-snake in $H$ contains $s$.
		\item $s \in V(D) \setminus (M(D)\cup\{x,z\})$.
		\item There exists $\Ps \in \P_{zx}$ such that $s \not\in V(\Ps)$. Moreover, the edges $\Start(\ora{x\Ps z})$ and $\End(\ora{x\Ps z})$ are the same for any choice of $\Ps$. 
	\end{itemize}
\end{proposition}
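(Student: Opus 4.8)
The plan is to exploit Proposition \ref{prop_structure}, especially item \ref{item2}, which tells us that $\I{D^{+z}}{\O_{x,z}(D)}=\varnothing$, together with the hypothesis that $s$ lies in the intersection of the $\D_1$-dangers at $z$ in $H$. First I would dispatch the first bullet: any $z$-tadpole or $z$-snake $X$ in $H$ contains, by Proposition \ref{prop_snakeinside} applied with $u=z$ (valid since $H$ is not a trivial Maker win), a $\D_1$-danger $D'$ at $z$; then $s\in V(D')\subseteq V(X)$ because $s\in\I{H^{+z}}{z\D_1(H)}$. This is essentially the final assertion of Proposition \ref{prop_snakeinside} spelled out.

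For the second bullet, I would argue as follows. By Proposition \ref{prop_structure}\ref{item3} the collection $z\D_1(D^{-x})$ is nonempty; pick any $D_0$ in it. Then $D_0\subseteq H$ is a $\D_1$-danger at $z$ not containing $x$, so $s\in V(D_0)\subseteq V(D)$, giving $s\in V(D)$. Also $s\neq x$ since $x\notin V(D_0)$ while $s\in V(D_0)$. That $s\neq z$ and $s\notin M(D)$ is immediate from the definition of the intersection operator $\I{H^{+z}}{\cdot}$, which by construction only contains non-marked vertices of $H^{+z}$, i.e. vertices in $V(H)\setminus(M(H)\cup\{z\})$, and $M(D)=V(D)\cap M(H)$. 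Hence $s\in V(D)\setminus(M(D)\cup\{x,z\})$.

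For the third bullet, the key input is Proposition \ref{prop_structure}\ref{item2}: $\I{D^{+z}}{\O_{x,z}(D)}=\varnothing$, where $\O_{x,z}(D)=z\D_1(D^{-x})\cup\P_{zx}(D)$. Since $s$ is a non-marked vertex of $D^{+z}$ by the previous bullet, it is not in this empty intersection, so some member of $\O_{x,z}(D)$ avoids $s$. But every member of $z\D_1(D^{-x})$ is a $\D_1$-danger at $z$ in $H$, hence contains $s$; therefore the member avoiding $s$ must lie in $\P_{zx}(D)$, giving a $\Ps\in\P_{zx}(D)$ with $s\notin V(\Ps)$. For the "moreover" part, I would suppose $\Ps$ and $P'$ are two such paths with, say, $\Start(\fd{x\Ps z})\neq\Start(\fd{xP'z})$. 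Then since $z\in V(\Ps)\cap(V(P')\setminus\{x\})$, Lemma \ref{Lemma2} (with the $ab$-path taken to be $P'$ regarded as an $x$-path via its $x$-end, and the $a$-path $\Ps$) applies; as $P'\cup\Ps\subseteq D$ contains no $x$-cycle by Proposition \ref{prop_structure}\ref{item6}, it contains a $z$-tadpole $T$. By the first bullet, $s\in V(T)\subseteq V(P'\cup\Ps)=V(P')\cup V(\Ps)$, contradicting $s\notin V(\Ps)$ and $s\notin V(P')$. An analogous argument at the $z$-end, using the reversed sequences and Lemma \ref{Lemma2} again (or invoking the $z$-cycle of Proposition \ref{prop_structure}\ref{item7}-type reasoning), rules out $\End(\fd{x\Ps z})\neq\End(\fd{xP'z})$. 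The main obstacle I anticipate is getting the bookkeeping right in this last step — making sure the hypotheses of Lemma \ref{Lemma2} are genuinely met (in particular the $\Start$-inequality and the non-trivial intersection), and handling the symmetric case at the $\End$ side cleanly, possibly by reversing roles or by a short separate argument using that any two such paths together with a $z$-tadpole/$z$-cycle would force $s$ into $V(\Ps)$.
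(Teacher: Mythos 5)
Your proof is correct and follows essentially the same route as the paper: Proposition \ref{prop_snakeinside} for the first bullet, Proposition \ref{prop_structure}\ref{item3} plus the definition of $\I{H^{+z}}{\cdot}$ for the second, and Proposition \ref{prop_structure}\ref{item2} plus Lemma \ref{Lemma2} for the third, with the $z$-tadpole (resp.\ $z$-cycle) ruled out because it would avoid $s$. The one imprecision is the parenthetical suggestion to handle the $\End$ case via ``Proposition \ref{prop_structure}\ref{item7}-type reasoning''; the clean way, which the paper uses and which you also mention as your primary option, is simply to reapply Lemma \ref{Lemma2} with $a=z$, $b=x$ (i.e.\ $\Start(\fg{x\Ps z})\neq\Start(\fg{xP'z})$), yielding an $x$-tadpole (contradicting Proposition \ref{prop_structure}\ref{item6}) or a $z$-cycle avoiding $s$ (contradicting the first bullet).
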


\begin{proof}
	We prove all three assertions separately:
	\begin{itemize}[wide,noitemsep,nolistsep]
		\item Since $s \in \I{H^{+z}}{z\D_1(H)}$ and $H$ is not a trivial Maker win by assumption, Proposition \ref{prop_snakeinside} applies with $u=z$ and $u'=s$, hence the first assertion.
		\item By definition of $\I{H^{+z}}{\cdot}$, we have $s \not\in M(H^{+z})=M(H) \cup \{z\}$. Moreover, we know $x \not\in \I{H^{+z}}{z\D_1(H)}$ by Proposition \ref{prop_structure}\ref{item3}, so $s \neq x$ hence the second assertion.
		\item Since $s \in V(D) \setminus (M(D)\cup\{x,z\})$, there exists some $X^{\overline{s}} \in \F_D$ such that $s \not\in V(X^{\overline{s}})$ by Proposition \ref{prop_structure}\ref{item2}. Since $\S_z \cup \T_z \subseteq z\D_1(H)$, all elements of $\S_z \cup \T_z$ must contain $s$, so necessarily $X^{\overline{s}} \in \P_{zx}$. Finally, let $\Ps_1,\Ps_2 \in \P_{zx}$ be such that $s \not\in V(\Ps_1)$ and $s \not\in V(\Ps_2)$. Suppose for a contradiction that $\Start(\ora{x\Ps_1z}) \neq \Start(\ora{x\Ps_2z})$: by Union Lemma \ref{Lemma2}, $\Ps_1 \cup \Ps_2 \subseteq D$ contains an $x$-cycle (contradicting Proposition \ref{prop_structure}\ref{item6}) or a $z$-tadpole (which does not contain $s$, also a contradiction). Similarly, suppose for a contradiction that $\End(\ora{x\Ps_1 z}) \neq \End(\ora{x\Ps_2 z})$, i.e., $\Start(\ola{x\Ps_1 z}) \neq \Start(\ola{x\Ps_2 z})$: by Union Lemma \ref{Lemma2}, $\Ps_1 \cup \Ps_2 \subseteq D$ contains an $x$-tadpole (contradicting Proposition \ref{prop_structure}\ref{item6}) or a $z$-cycle (which does not contain $s$, also a contradiction). \qedhere
	\end{itemize}
\end{proof}

\noindent We now establish some important properties of the $\widehat{\D_1}$-dangers in an ambient hypergraph $H$ where $\I{H^{+z}}{z\D_1(H)} \neq \varnothing$, or sometimes under the stronger assumption that $J(\D_1,H)$ holds.

\subsubsection{Union lemmas}

The next two lemmas are the analog for $\widehat{\D_1}$-dangers of the union lemmas that we have established for chains and tadpoles. We look at what happens inside the union of a $\widehat{\D_1}$-danger and a chain.

\begin{lemme}\label{Lemma5}

	Let $H$ be a marked hypergraph that is not a trivial Maker win, and let $x \in V(H) \setminus M(H)$. Let $D$ be a  $\widehat{\D_1}$-danger at $x$ in $H$, with a maximal decomposition $(z , \F_D = \S_z \cup \T_z \cup \P_{zx})$. Let $c \in V(H) \setminus V(D)$, and let $P_c$ be a $c$-chain such that $V(P_c) \cap V(D) \neq \varnothing$.
	\begin{enumerate}[noitemsep,nolistsep,label=\textup{(\roman*)}]
		\item If $\I{H^{+z}}{z\D_1(H)} \neq \varnothing$, then there is a $c$-tadpole, a $c$-snake or a $cx$-chain in $D \cup P_c$. \label{item1_Lemma5}
		\item If $J(\D_1,H)$ holds, then there is a $c$-tadpole or a $cx$-chain in $D \cup P_c$. \label{item2_Lemma5}
	\end{enumerate}
\end{lemme}

\begin{proof}
	First of all, we can assume that $P_c$ consists of a single edge $e$. Indeed, suppose that is not the case, and define $e = \End(\ora{c\proj{V(D)}{c}{P_c}})$ and $c'$ as the unique vertex in $(e \cap \inn(P_c)) \setminus V(D)$:
	\begin{itemize}[noitemsep,nolistsep]
		\item If there is a $c'x$-chain $P$ in $D \cup e$, then $\ora{cP_c}\vert_{\{c'\}} \oplus \ora{c'Px}$ represents a $cx$-chain in $D \cup P_c$.
		\item If there is a $c'$-snake $S$ in $D \cup e$, then $\ora{cP_c}\vert_{\{c'\}} \oplus \ora{c'S}$ represents a $c$-snake in $D \cup P_c$.
		\item If there is a $c'$-tadpole $T$ in $D \cup e$, then $\ora{cP_c}\vert_{\{c'\}} \oplus \ora{c'T}$ represents a $c$-tadpole in $D \cup P_c$.
	\end{itemize}
	Therefore, we are working in $D \cup P_c=D \cup e$. Now suppose for a contradiction that:
	\begin{equation}\label{cont1}
		\text{$D \cup e$ contains no $c$-tadpoles, no $cx$-chains, and also no $c$-snakes in the case of item \ref{item1_Lemma5}.}\tag{$\ast$}
	\end{equation}
	Since $\I{H^{+z}}{z\D_1(H)} \neq \varnothing$ by assumption, let $s \in \I{H^{+z}}{z\D_1(H)}$, and let $\Ps \in \P_{zx}$ be such that $s \not\in V(\Ps)$ as per Proposition \ref{prop_s}. Define $w = o(z,\ola{x\Ps z})$. These notations are summed up in Figure \ref{Lemma5-4}.
	
	\begin{figure}[h]
		\centering
		\includegraphics[scale=.55]{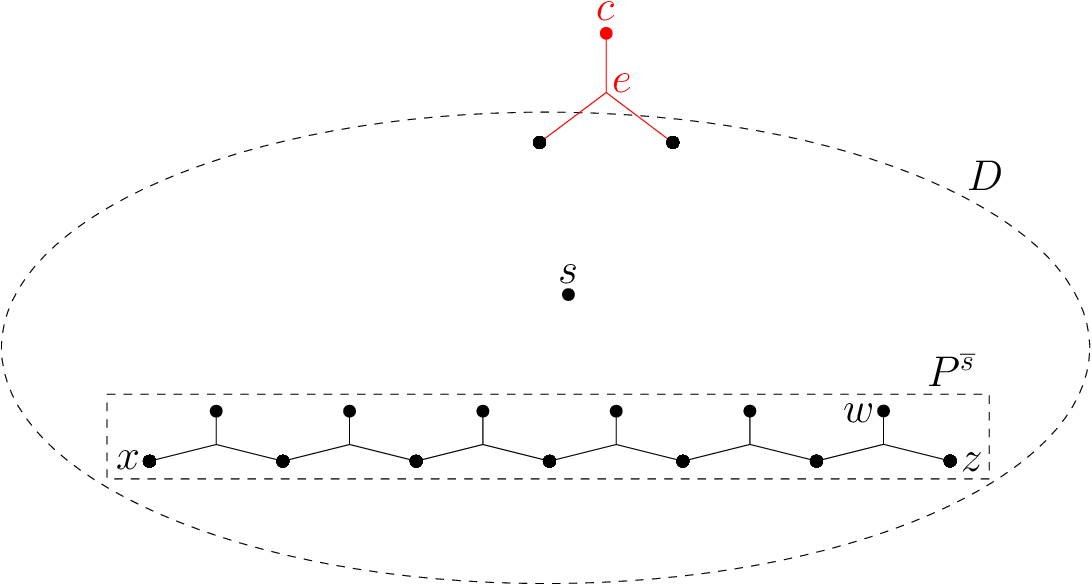}
		\caption{$D$ is only partially represented. In this picture we have $|e \cap V(D)|=2$, but it is also possible that $|e \cap V(D)|=1$.}\label{Lemma5-4}
	\end{figure}
	
	\noindent The key to the proof is the fact that every $z$-tadpole contains $s$ (by definition of $s$), whereas $\Ps$ does not. This will eventually lead to a contradiction. Before going into details, let us explain the idea of the proof. We want to show that there exists a $cz$-chain $P_{cz}^{\overline{w}}$ in $D \cup e$ that does not contain $w$. Indeed, suppose we manage to exhibit one. On the one hand, following $P_{cz}^{\overline{w}}$ starting from $c$ until touching $\Ps$, we know by (\ref{cont1}) that we cannot get a $cx$-chain. Therefore, we necessarily get a $z$-tadpole $T$, containing $s$ at a uniquely determined location, as pictured on the left of Figure \ref{Lemma5-5}. On the other hand, following $P_{cz}^{\overline{w}}$ starting from $z$ until touching $\Ps$ again, we know by Proposition \ref{prop_structure}\ref{item6} that we cannot get an $x$-tadpole. Therefore, we necessarily get a $z$-cycle, which must also contain $s$, as pictured on the right of Figure \ref{Lemma5-5}. All in all, we get two disjoint parts of $P_{cz}^{\overline{w}}$ which both contain $s$, a contradiction. We now proceed with the rigorous proof, in three steps. We prove items \ref{item1_Lemma5} and \ref{item2_Lemma5} jointly: there are only two times during the proof where we will have to differentiate the two very briefly to make separate arguments.
	
	\begin{figure}[h]
		\centering
		\includegraphics[scale=.55]{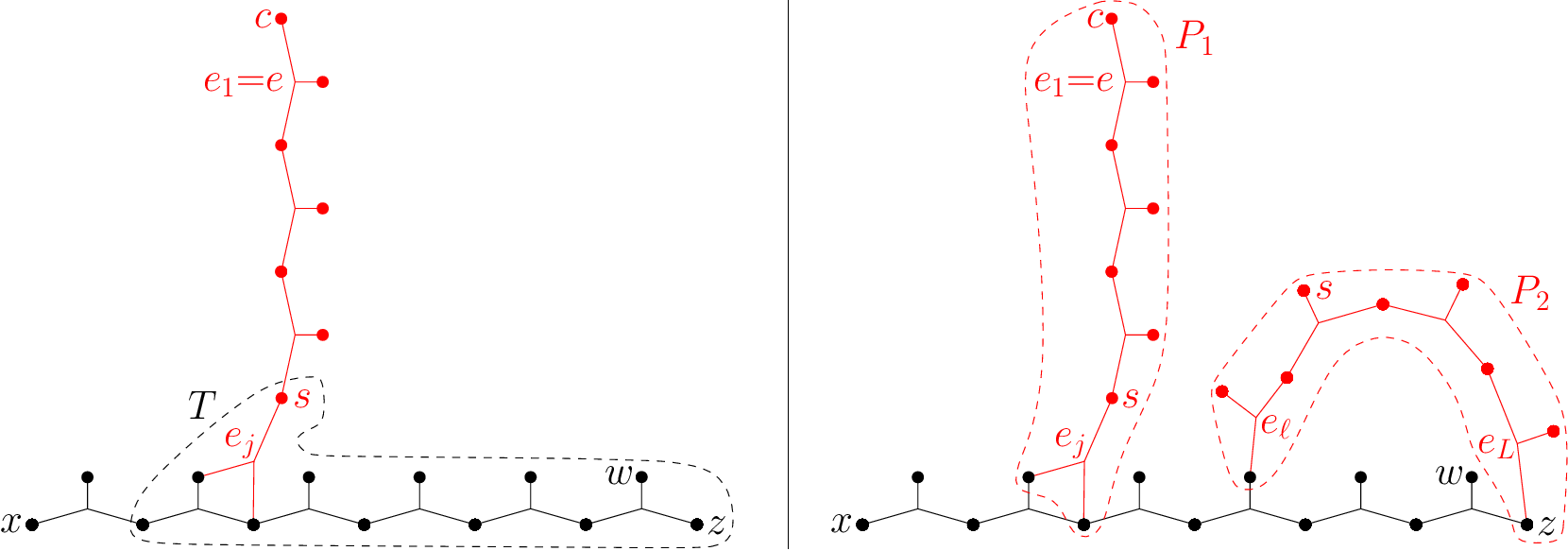}
		\caption{Left: illustration of Claim \ref{claim_Lemma5}. Right: the desired contradiction, with $s$ having two different locations at once.}\label{Lemma5-5}
	\end{figure}
	
	\begin{enumerate}[wide,label=\textbf{\arabic*)}]
	
		\item Firstly: we show that there exists a $cz$-chain $P_{cz}$ in $D \cup e$.
			\\ Since $e \cap V(D) \neq \varnothing$, there exists $X \in \F_D = \S_z \cup \T_z \cup \P_{zx}$ such that $e \cap V(X) \neq \varnothing$. For each of the three cases, we use an adequate union lemma from Section \ref{Section2}:
			\begin{itemize}[noitemsep,nolistsep]
				\item Suppose $X \in \T_z$, and write $X=T$. Since there are no $c$-tadpoles in $D \cup e \supseteq T \cup e$ by (\ref{cont1}), Union Lemma \ref{Lemma3} ensures that there is a $cz$-chain in $T \cup e$.
				\item Suppose $X \in \S_z$, and write $X=S$. We address items \ref{item1_Lemma5} and \ref{item2_Lemma5} separately. For \ref{item1_Lemma5}, there are no $c$-snakes in $D \cup e \supseteq S \cup e$ by (\ref{cont1}). For \ref{item2_Lemma5}, let $m$ be the marked vertex such that $S$ is a $zm$-snake: since $J(\D_1,H)$ holds, Proposition \ref{prop_markedvertex} tells us there are no $m$-tadpoles in $D \cup e \supseteq S \cup e$. In both cases, Union Lemma \ref{Lemma4} ensures that there is a $cz$-chain in $S \cup e$.
				\item Suppose $X \in \P_{zx}$, and write $X=P$. Since there are no $cx$-chains in $D \cup e \supseteq P \cup e$ by (\ref{cont1}), Union Lemma \ref{Lemma1} ensures that there is a $cz$-chain in $P \cup e$.
			\end{itemize}
			In all cases, we get a $cz$-chain $P_{cz}$ in $D \cup e$.

		\item Secondly: we show that there exists a $cz$-chain $P_{cz}^{\overline{w}}$ in $D \cup e$ that does not contain $w$.
			\\ We start with a useful observation about $cz$-chains:
		
			\begin{claim}\label{claim_Lemma5}
				Let $P_0$ be a $cz$-chain in $D \cup e$, and write $\ora{c\proj{V(\Ps)}{c}{P_0}}=(c,e_1,\ldots,e_j)$. Then: $j>1$, $e_j \perp \ora{x\Ps z}$, and $e_{j-1} \cap e_j = \{s\}$. In particular, the $cs$-chain in $P_0$ is disjoint from $\Ps$.
			\end{claim}
	
			\begin{proofclaim}[Proof of Claim \ref{claim_Lemma5}]
				Since there are no $cx$-chains in $D \cup e$ by (\ref{cont1}), we apply Union Lemma \ref{Lemma1} with $a=x$, $b=z$, $P_{ab}=\Ps$. Note that $e_j$ is precisely the edge $e^* = \End(\ora{c\proj{V(\Ps)}{c}{P_0}})$ from the statement of Union Lemma \ref{Lemma1}. We get that: $|e_j \cap V(\Ps)|=2$, $e_j \perp \ora{x\Ps z}$, and there is a $z$-tadpole $T$ in $\Ps \cup e_j$.
				\\ Since $|e_j \cap V(\Ps)|=2$, there is exactly one vertex of $T$ that is not in $\Ps$. That vertex is necessarily $s$, as pictured on the left of Figure \ref{Lemma5-5}: indeed, we know $s \in V(T)$ by definition of $s$, and $s \not\in V(\Ps)$ by definition of $\Ps$. In particular, since $c \neq s$ ($s \in V(D)$ whereas $c \not\in V(D)$), we get $j>1$. We know $(e_1 \cup \ldots \cup e_{j-1}) \cap V(\Ps) = \varnothing$ by definition of a projection, therefore $e_{j-1} \cap e_j = \{s\}$ and $(e_1,\ldots,e_{j-1})$ represents the unique $cs$-chain in $P_0$, which is disjoint from $\Ps$.
			\end{proofclaim}
			
			\noindent Applying Claim \ref{claim_Lemma5} with $P_0=P_{cz}$, we obtain a $cs$-chain $P_{cs}$ in $P_{cz}$ such that $V(P_{cs}) \cap V(\Ps)=\varnothing$: in particular, $w \not\in V(P_{cs})$. Moreover, since $w$ is non-marked (otherwise $\Ps$ would contain an $x$-snake, contradicting Proposition \ref{prop_structure}\ref{item6}), Proposition \ref{prop_structure}\ref{item2} ensures that there exists $X^{\overline{w}} \in \F_D$ such that $w \not\in V(X^{\overline{w}})$. We thus find $P_{cz}^{\overline{w}}$ inside $P_{cs} \cup X^{\overline{w}}$:
			\begin{itemize}[noitemsep,nolistsep]
				\item Suppose $X^{\overline{w}} \in \T_z$, and write $X^{\overline{w}} = T$. In particular $s \in V(T)$, so $V(P_{cs}) \cap V(T) \neq \varnothing$. Since $D \cup e \supseteq P_{cs} \cup T$ does not contain a $c$-tadpole by (\ref{cont1}), Union Lemma \ref{Lemma3} ensures that $P_{cs} \cup T$ contains a $cz$-chain.
				\item Suppose $X^{\overline{w}} \in \S_z$, and write $X^{\overline{w}} = S$. In particular $s \in V(S)$, so $V(P_{cs}) \cap V(S) \neq \varnothing$. For the second and last time in this proof, we address items \ref{item1_Lemma5} and \ref{item2_Lemma5} separately. For \ref{item1_Lemma5}, there are no $c$-snakes in $D \cup e \supseteq P_{cs} \cup S$ by (\ref{cont1}). For \ref{item2_Lemma5}, let $m$ be the marked vertex such that $S$ is a $zm$-snake: Proposition \ref{prop_markedvertex} tells us there are no $m$-tadpoles in $D \cup e \supseteq P_{cs} \cup S$. In both cases, Union Lemma \ref{Lemma4} ensures that there is a $cz$-chain in $P_{cs} \cup S$.
				\item Suppose $X^{\overline{w}} \in \P_{zx}$, and write $X^{\overline{w}} = P$. Since $w \not\in V(P)$, we have $w \not\in \Start(\ola{xPz})$ hence $\Start(\ola{xPz}) \neq \Start(\ola{x\Ps z})$. By the final assertion of Proposition \ref{prop_s}, this implies $s \in V(P)$, so $V(P_{cs}) \cap V(P) \neq \varnothing$. Since $D \cup e \supseteq P_{cs} \cup P$ does not contain a $cx$-chain by (\ref{cont1}), Union Lemma \ref{Lemma1} ensures that $P_{cs} \cup P$ contains a $cz$-chain.
			\end{itemize}
			\noindent In all cases, we get a $cz$-chain $P_{cz}^{\overline{w}}$ in $P_{cs} \cup X^{\overline{w}} \subseteq D \cup e$, that does not contain $w$ since neither $P_{cs}$ nor $X^{\overline{w}}$ does.
			
		\item Finally: we conclude by getting the desired contradiction illustrated on the right of Figure \ref{Lemma5-5}. We now work exclusively inside $P_{cz}^{\overline{w}} \cup \Ps$.
		\\ We start by defining the chains $P_1$ and $P_2$ pictured on the right of Figure \ref{Lemma5-5}. First, define the projection $P_1 = \proj{V(\Ps)}{c}{P_{cz}^{\overline{w}}}$. By (\ref{cont1}), it is impossible that $V(P_{cz}^{\overline{w}}) \cap V(\Ps)=\{z\}$, because $P_{cz}^{\overline{w}} \cup \Ps$ would then be a $cx$-chain. Therefore, the projection $P_2 = \proj{V(\Ps) \setminus\{z\}}{z}{P_{cz}^{\overline{w}}}$ is also well defined. Write $\ora{cP_{cz}^{\overline{w}}z}=(c,e_1,\ldots,e_L,z)$, $\ora{cP_1}=(c,e_1,\ldots,e_j)$, and $\ora{zP_2}=(z,e_L,e_{L-1},\ldots,e_{\ell})$, i.e., $j=\min\{1 \leq i \leq L \mid e_i \cap V(\Ps) \neq \varnothing\}$ and $\ell=\max\{1 \leq i \leq L \mid e_i \cap (V(\Ps)\setminus\{z\}) \neq \varnothing\}$. Note that necessarily $e_1=e$, since $e$ is the only edge incident to $c$.

			\begin{itemize}[noitemsep,nolistsep]
			
				\item First of all, we show that $1<j<\ell$ and that $s \in e_{j-1}$. By Claim \ref{claim_Lemma5} applied with $P_0=P_{cz}^{\overline{w}}$, we have: $j>1$, $e_j \perp \ora{x\Ps z}$, and $e_{j-1} \cap e_j = \{s\}$. Moreover, since $w \not \in V(P_{cz}^{\overline{w}})$, we have $w \not\in e_j$: since $e_j \perp \ora{x\Ps z}$, this implies $z \not\in e_j$. Therefore $j<L$, so we can consider the edge $e_{j+1}$. Since $s \in e_{j-1} \cap e_j$, we have $s \not\in e_{j+1}$, so $e_j \cap e_{j+1} \subseteq e_j \setminus \{s\} \subseteq V(\Ps) \setminus \{z\}$: in particular $j<\ell$ by maximality of $\ell$.
				
				\item Finally, we show that $s \in e_i$ for some $\ell \leq i \leq L$, i.e., $s \in V(P_2)$. Note that $P_2 \subseteq D$: indeed, we have $P_2 \subseteq P_{cz}^{\overline{w}} \subseteq D \cup e$, and $e=e_1$ is not an edge of $P_2$ because $\ell \geq 2$. Since $P_2 \subseteq P_{cz}^{\overline{w}}$ does not contain $w$, we have $\Start(\ora{zP_2})\neq \Start(\ora{z\Ps})$, so we can apply Union Lemma \ref{Lemma2}. There are no $x$-tadpoles in $P_2 \cup \Ps \subseteq D$ by Proposition \ref{prop_structure}\ref{item6}, so we get a $z$-cycle $C$ in $P_2 \cup \Ps$. Since $C$ must contain $s$, we have $s \in V(P_2) \cup V(\Ps)$ hence $s \in V(P_2)$.
				
			\end{itemize}
			
			\noindent Since $j<\ell$, $e_{j-1}$ is disjoint from $e_{\ell},\ldots,e_L$ by definition of a chain. However, we have just shown that $s \in e_{j-1}$ and $s \in e_i$ for some $l \leq i \leq L$. This is a contradiction. \qedhere

	\end{enumerate}
	
\end{proof}

\begin{lemme}\label{Lemma6}
	Let $H$ be a marked hypergraph that is not a trivial Maker win, and let $x \in V(H) \setminus M(H)$. Let $D$ be a $\widehat{\D_1}$-danger at $x$ in $H$, with a maximal decomposition $(z , \F_D = \S_z \cup \T_z \cup \P_{zx})$. Suppose that $\I{H^{+z}}{z\D_1(H)} \neq \varnothing$. Then there is a unique edge $e_x$ in $D$ that is incident to $x$. Moreover, let $P_x$ be an $x$-chain in $H$ such that $V(P_x) \cap (V(D) \setminus \{x\}) \neq \varnothing$ and $\Start(\ora{xP_x}) \neq e_x$: then $D \cup P_x$ contains an $x$-snake or an $x$-tadpole.
\end{lemme}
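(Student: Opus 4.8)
The plan is to handle the two assertions in turn, the second being the substantive one, and to lean throughout on Proposition \ref{prop_s} and on the structural facts collected in Proposition \ref{prop_structure}.

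\emph{Uniqueness of $e_x$.} First I would invoke Proposition \ref{prop_s}: fixing $s \in \I{H^{+z}}{z\D_1(H)}$ (nonempty by hypothesis, and $H$ not a trivial Maker win), there is $\Ps \in \P_{zx}(D)$ with $s \notin V(\Ps)$, and the edge $\Start(\fd{x\Ps z})$ does not depend on the choice of such $\Ps$; I would let $e_x$ be that edge, which is an edge of $D$ at $x$. To get uniqueness, I would assume a second edge $e' \neq e_x$ of $D$ at $x$; since $D = \union{\O_{x,z}(D)}$ (Proposition \ref{prop_structure}\ref{item1}) and the $z$-tadpoles and $z$-snakes in that collection avoid $x$, the edge $e'$ must lie in some $P' \in \P_{zx}(D)$ with $e' = \Start(\fd{xP'z})$, and then $\Start(\fd{xP'z}) \neq \Start(\fd{x\Ps z})$ forces $s \in V(P')$ by Proposition \ref{prop_s}. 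Next I would apply Lemma \ref{Lemma2} to the $xz$-path $\Ps$ and the $x$-path $P'$: since $\Ps \cup P' \subseteq D$ has no $x$-cycle (Proposition \ref{prop_structure}\ref{item6}), it contains a $z$-tadpole inside $\Ps \cup e^*$ with $e^* \perp \fd{x\Ps z}$, where $e^* = \End(\fd{x\proj{V(\Ps)\setminus\{x\}}{x}{P'}})$. Every $z$-tadpole of $H$ contains $s$ (Proposition \ref{prop_s}) while $s \notin V(\Ps)$, so $s \in e^* \setminus V(\Ps)$, which immediately rules out $e^* \subseteq V(\Ps)$; moreover the perpendicularity configurations with $x \in e^*$ all force $e^* = e_x$ or $e^* \subseteq V(\Ps)$ (hence $s \in V(\Ps)$), so $x \notin e^*$ and $|e^* \cap V(\Ps)| = 2$, i.e.\ $e^* = \{s\} \cup (f_i \setminus f_{i-1})$ for a suitable $i$ writing $\fd{x\Ps z} = (x, f_1, \dots, f_L, z)$ and $f_0 \defeq \{x\}$. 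In this last configuration I would splice the initial segment of $\Ps$ reaching the pair $f_i \setminus f_{i-1}$ with the tail of $P'$ leaving $e^*$ towards $z$, producing yet another $zx$-path in $D$ that avoids $s$ but has edge $e' \neq e_x$ at $x$, contradicting the uniqueness clause of Proposition \ref{prop_s}.

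\emph{The ``moreover'' part.} I would replace $P_x$ by its projection $\proj{V(D)\setminus\{x\}}{x}{P_x}$, so that the only edge of $P_x$ meeting $V(D)\setminus\{x\}$ is $e^\dagger \defeq \End(\fd{xP_x})$, with still $\Start(\fd{xP_x}) \neq e_x$ (hence $\Start(\fd{xP_x}) \notin E(D)$, by the first part). Let $u \in e^\dagger \cap (V(D)\setminus\{x\})$; if some such $u$ is marked, $P_x$ read as an $xu$-path is already an $x$-snake, so I may assume all such vertices are non-marked. If $P_x$ has length $1$ I would directly join $e^\dagger$ to a $ux$-path inside $D$ (whose edge at $x$ is $e_x \neq e^\dagger$) to form an $x$-cycle, using Proposition \ref{prop_structure}\ref{item2} inside $D$ to dodge the third vertex of $e^\dagger$ if it happens to lie in $V(D)$. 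If $P_x$ has length $\geq 2$, write $\fd{xP_x} = (x, g_1, \dots, g_k)$ and set $c \defeq g_1 \cap g_2$; since $g_1$ avoids $V(D)\setminus\{x\}$, we have $c \notin V(D)$, so Lemma \ref{Lemma5}(i), applied to the $c$-path $P_c \defeq \HH{(c,g_2,\dots,g_k)}$, yields a $c$-tadpole, a $c$-snake, or a $cx$-path inside $D \cup P_c$. Then I would prepend $g_1$, i.e.\ the $xc$-path $\HH{(x,g_1,c)}$, to that object: the two facts that make this clean both come from the first part, namely that $x$ has degree at most $1$ in $D \cup P_c$ (as $P_c$ contributes no edge at $x$ and $e_x$ is the unique edge of $D$ at $x$) and that the third vertex $c''$ of $g_1$ lies outside $V(D \cup P_c)$. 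Hence a returned $c$-snake cannot contain $x$ and becomes an $x$-snake; a returned $cx$-path has edge $e_x \neq g_1$ at $x$ and avoids $c''$, so with $g_1$ it closes into an $x$-cycle, hence an $x$-tadpole; a returned $c$-tadpole $T_c$ either avoids $x$, giving the $x$-tadpole $\HH{(x,g_1,c)} \cup T_c$, or contains $x$, in which case I would join $g_1$ to a $cx$-path inside $T_c$ (available by Proposition \ref{prop_subpath4bis}, the exceptional case being excluded since $c$ cannot be an outer vertex of $C_{T_c}$ in a $c$-tadpole) and again obtain an $x$-cycle. All outputs live in $D \cup P_x$, completing the argument.

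\emph{Main obstacle.} The technical core is the same in both halves: in the uniqueness proof, the surviving ``$|e^* \cap V(\Ps)| = 2$'' configuration, where one really has to reconstruct an $s$-avoiding $zx$-path with the wrong first edge; and in the second half, the disjointness bookkeeping when prepending $g_1$, especially the subcase where $x$ reappears inside the substructure returned by Lemma \ref{Lemma5}. In both places the uniqueness of $e_x$ is precisely the lever that keeps the situation tractable, since it pins down how anything inside $D$ can meet $x$.
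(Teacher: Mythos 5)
Your plan inverts the paper's order: it proves the uniqueness of $e_x$ first (directly, via Lemma~\ref{Lemma2}) and then uses it in the ``moreover'' part, whereas the paper proves the ``moreover'' assertion first \emph{without} invoking uniqueness and then obtains uniqueness for free by applying it to $P_x = e'_x$, contradicting Proposition~\ref{prop_structure}\ref{item6}. Your direct uniqueness argument has a gap at the final splice. After correctly forcing $e^* = \{s\}\cup(f_i\setminus f_{i-1})$, you splice the initial segment $(x,f_1,\ldots,f_i)$ of $\Ps$ with the tail of $P'$ past $e^*$ and claim the result is a $zx$-path avoiding $s$ whose edge at $x$ is $e'\neq e_x$. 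But that initial segment of $\Ps$ starts with $e_x=f_1$, not $e'$, so even granting the splice is a path it does not have the edge you claim at $x$. Worse, writing $\fd{xP'z}=(x,g_1,\ldots,g_K,z)$ with $e^*=g_j$ (and $j\geq 2$ since $x\notin e^*$), one has $g_{j-1}\cap g_j=\{s\}$: the vertex $s$ sits exactly between the part of $P'$ carrying $e'$ and the part reaching $V(\Ps)$, so any $zx$-path built from $P'\cup\Ps$ that starts with $e'$ is forced through $s$. You also never verify that the two spliced pieces are internally disjoint. This step does not go through, and since your ``moreover'' argument leans on the uniqueness of $e_x$, the dependency collapses.

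The larger gap is the length-$1$ case of the ``moreover'' part. You propose to ``directly join $e^\dagger$ to a $ux$-path inside $D$ \ldots using Proposition~\ref{prop_structure}\ref{item2} inside $D$ to dodge the third vertex of $e^\dagger$.'' But Proposition~\ref{prop_structure}\ref{item2} only hands you an element of $\O_{x,z}(D)$ (a $z$-tadpole or a $zx$-path) avoiding a prescribed vertex; it gives no $ux$-path, and turning the former into the latter while still avoiding the right vertex is precisely where the difficulty lies. The paper devotes most of the proof to this case: it introduces a gadget edge $\overline e=\{a,b,c\}$ on a fresh vertex $c$ replacing $e=\{x,a,b\}$, checks that $D\cup\overline e$ is not a trivial Maker win, checks — using the uniqueness clause of Proposition~\ref{prop_s} together with $e\neq e_x$, in the subtle sub-case of a $z$-tadpole with $C_X$ of length $2$ and $\out(C_X)=\{c,x\}$ — that $s$ still lies in every $\D_1$-danger at $z$ in $D\cup\overline e$, applies Lemma~\ref{Lemma5} there, and translates back by swapping $\overline e$ for $e$. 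None of this appears in your sketch. (A smaller point: in the length-$\geq 2$ case your claim that a returned $c$-snake cannot contain $x$ is false — $x$ can be an outer vertex of the snake with incident edge $e_x$ — but it is harmless, since the snake then contains a $cx$-path and one reduces to that case exactly as the paper does.)
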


\begin{proof}

	Let $s \in \I{H^{+z}}{z\D_1(H)}$, and let $\Ps \in \P_{zx}$ be such that $s \not\in V(\Ps)$ as per Proposition \ref{prop_s}. We define $e_x = \Start(\ora{x\Ps z})$. We will show at the end of the proof that $e_x$ is the unique edge of $D$ containing $x$.
	\\ For now, let $P_x$ be an $x$-chain in $H$ such that $V(P_x) \cap (V(D) \setminus \{x\}) \neq \varnothing$ and $\Start(\ora{xP_x}) \neq e_x$. Up to replacing $P_x$ by the projection $\proj{V(D) \setminus \{x\}}{x}{P_x}$, assume that $\End(\ora{xP_x})$ is the only edge of $P_x$ that intersects $V(D) \setminus \{x\}$. Suppose for a contradiction that:
	\begin{equation}\label{cont2}
		\text{There are no $x$-snakes and no $x$-tadpoles in $D \cup P_x$.}\tag{$\ast$}
	\end{equation}
	\noindent Let $e = \Start(\ora{xP_x})$. We distinguish between two cases, pictured in Figure \ref{Lemma6-1}.
	
	\begin{figure}[h]
		\centering
		\includegraphics[scale=.55]{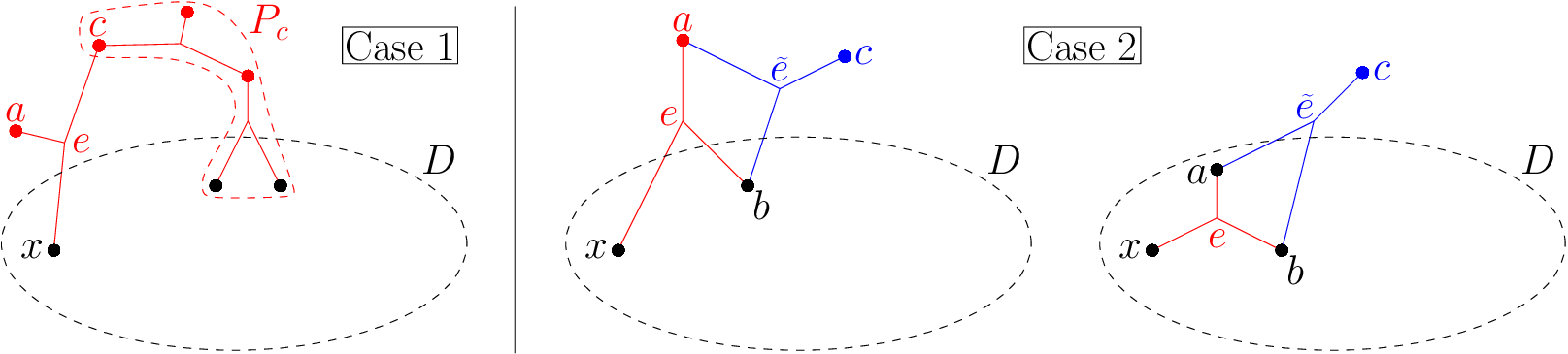}
		\caption{Case 1: $e \cap V(D)=\{x\}$. Case 2: $|e \cap V(D)|=2$ or $|e \cap V(D)|=3$.}
		\label{Lemma6-1}
	\end{figure}
	
	\begin{enumerate}[leftmargin=*,label=\textbf{\arabic*)}]

		\item Case 1: $P_x$ is of length at least 2, i.e., $e \cap V(D)=\{x\}$ (since we have just assumed that $\End(\ora{xP_x})$ is the only edge of $P_x$ that intersects $V(D) \setminus \{x\}$).
			\\ Write $e=\{x,a,c\}$ where $c$ is the only vertex in $\inn(P_x) \cap e$, and let $P_c$ be the $c$-chain defined as $P_c= P_x^{-x-a}$ (see Figure \ref{Lemma6-1}). Since $H$ is not a trivial Maker win and $\I{H^{+z}}{z\D_1(H)} \neq \varnothing$, we can apply item \ref{item1_Lemma5} of Union Lemma \ref{Lemma5} in $H$ to $D$ and $P_c$, which tells us that $D \cup P_c$ contains one of the following:
			\begin{itemize}[noitemsep,nolistsep]
				\item a $cx$-chain $P$. Then the walk $(x,e,c)\oplus \ora{cPx}$ represents an $x$-cycle in $D \cup P_x$.
				\item a $c$-tadpole $T$. If $x \in V(T)$, then $T$ contains a $cx$-chain by Substructure Lemma \ref{lemma_subchain4bis}, so we simply go back to that case. If $x \not\in V(T)$, then the walk $(x,e,c)\oplus \ora{cT}$ represents an $x$-tadpole in $D \cup P_x$.
				\item a $c$-snake $S$. If $x \in V(S)$, then $S$ contains a $cx$-chain by Substructure Lemma \ref{lemma_subchain1} so we simply go back to that case. If $x \not\in V(S)$, then the walk  $(x,e,c)\oplus \ora{cS}$ represents an $x$-snake in $D \cup P_x$.
			\end{itemize}
			All three possibilities contradict (\ref{cont2}). This ends Case 1.

		\item Case 2: $P_x=e$ is of length 1, i.e., $|e \cap V(D)| \geq 2$.
			\\ Write $e=\{x,a,b\}$. As a gadget, we create a new non-marked vertex $c$ and an edge $\widetilde{e}=\{a,b,c\}$ (see Figure \ref{Lemma6-1}).
			\begin{claim}\label{claim_Lemma6}
				Let $X$ be a subhypergraph of $D \cup \widetilde{e}$ such that $\widetilde{e} \in E(X)$ and $x \not\in V(X)$, and define the subhypergraph $\varphi(X)$ of $D \cup e$ obtained from $X$ by replacing $c$ by $x$ and $\widetilde{e}$ by $e$. Then we have the isomorphisms of pointed marked hypergraphs: $(X,c) \sim (\varphi(X),x)$ and $(X,v) \sim (\varphi(X),v)$ for all $v \in V(X) \setminus (M(X) \cup \{c\})$.
			\end{claim}
			\begin{proofclaim}[Proof of Claim \ref{claim_Lemma6}]
				This is straightforward.
			\end{proofclaim}
			\noindent The idea is to apply Union Lemma \ref{Lemma5} in $D \cup \widetilde{e}$ to $D$ and $P_c =\widetilde{e}$, and then contradict (\ref{cont2}) through replacing $\widetilde{e}$ by $e$ in the obtained subhypergraph as per Claim \ref{claim_Lemma6}. To do so, we need to check that $D \cup \widetilde{e}$ is not a trivial Maker win and that $\I{(D \cup \widetilde{e})^{+z}}{z\D_1(D \cup \widetilde{e})} \neq \varnothing$.
			\\ The former is clear: we know $D \subseteq H$ is not a trivial Maker win, moreover there are no $x$-snakes in $P_x$ by (\ref{cont2}) so $M(e)=\varnothing$ hence $M(\widetilde{e})=\varnothing$, so $D \cup \widetilde{e}$ is not a trivial Maker win either.
			\\ The latter is more difficult, because the addition of $\widetilde{e}$ may create new $\D_1$-dangers at $z$. However, we now show that they all contain $s$, i.e., $s \in \I{(D \cup \widetilde{e})^{+z}}{z\D_1(D \cup \widetilde{e})}$. Indeed, let $X$ be a $\D_1$-danger at $z$ in $D \cup \widetilde{e}$: we want to show that $s \in V(X)$.
			\begin{itemize}[noitemsep,nolistsep]
				\item Suppose $\widetilde{e} \not\in E(X)$. Then $X \in z\D_1(H)$, hence $s \in V(X)$ by definition of $s$.
				\item Suppose $\widetilde{e} \in E(X)$ and $x \not\in V(X)$. By Claim \ref{claim_Lemma6}, we have $(X,z) \sim (\varphi(X),z)$, therefore $\varphi(X)$ is a $\D_1$-danger at $z$ in $D \cup e$ hence $s \in V(\varphi(X))$. Since $s \neq x$ by Proposition \ref{prop_s}, this yields $s \in V(X)$.
				\item Finally, suppose $\widetilde{e} \in E(X)$ and $x \in V(X)$. In particular, we have $c,x \in V(X)$.
					\begin{itemize}[noitemsep,nolistsep]
						\item If there exists a $cx$-chain $P$ in $X$, then necessarily $\Start(\ora{cPx})=\widetilde{e}$ since $\widetilde{e}$ is the only edge incident to $c$ in $D \cup \widetilde{e}$. Either $a$ or $b$, say $b$, is an inner vertex of $P$, so that $P^{-c-a}$ is a $bx$-chain in $D$ that does not contain $a$. This means that $P^{-c-a} \cup e$ is an $x$-cycle in $D \cup e$, contradicting (\ref{cont2}).
						\item If there are no $cx$-chains in $X$, then the only possibility according to Substructure Lemmas \ref{lemma_subchain1} and \ref{lemma_subchain4bis} is that $X$ is a $z$-tadpole (we write $X=T$) such that $C_T$ is of length 2 and $\out(C_T)=\{c,x\}$ as in Figure \ref{Lemma6-2}. Therefore, the edges incident respectively to $c$ and $x$ in $T$ intersect on two vertices. Since the edge incident to $c$ in $T$ is necessarily $\widetilde{e}=\{a,b,c\}$, the edge incident to $x$ in $T$ is precisely $\{a,b,x\}=e$. Define the $zx$-chain $P = T^{-c}$, as in Figure \ref{Lemma6-2}. Since $T \subseteq D \cup \widetilde{e}$, we have $P \subseteq D$, so $P \in \P_{zx}$ by maximality of the decomposition. Moreover $\Start(\ora{xPz})=e\neq e_x = \Start(\ora{x\Ps z})$ by assumption, so the final assertion of Proposition \ref{prop_s} ensures that $s \in V(P) \subseteq V(X)$.
						\begin{figure}[h]
							\centering
							\includegraphics[scale=.55]{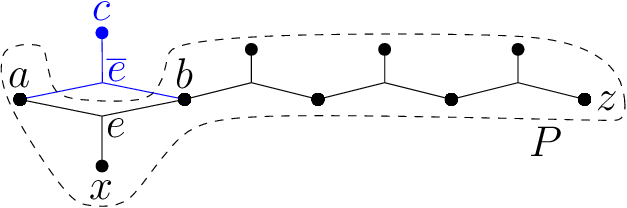}
							\caption{Illustration of $X=T$ if there are no $cx$-chains in $X$.}
							\label{Lemma6-2}
						\end{figure}
					\end{itemize}
			\end{itemize}
			Now that we have shown that $\I{(D \cup \widetilde{e})^{+z}}{z\D_1(D \cup \widetilde{e})} \neq \varnothing$, we can apply Union Lemma \ref{Lemma5} in $D \cup \widetilde{e}$ to $D$ and $P_c = \widetilde{e}$, which tells us that $D \cup \widetilde{e}$ contains one of the following:
			\begin{itemize}[noitemsep,nolistsep]
				\item a $cx$-chain $P$. In this case, taking $P$ and replacing $\widetilde{e}$ by $e$ yields an $x$-cycle. Indeed, write $\ora{cPx}=(c,e_1,\ldots,e_L,x)$: since $\widetilde{e}$ is the only edge incident to $c$, we have $e_1=\widetilde{e}$, so the walk $(x,e,e_2,\ldots,e_L,x)$ represents an $x$-cycle in $D \cup e$.
				\item a $c$-tadpole $T$. Since $\widetilde{e}$ is the only edge incident to $c$, we have $\widetilde{e} \in E(T)$. If $x \in V(T)$, then $T$ contains a $cx$-chain so we simply go back to that case. If $x \not\in V(T)$, then by Claim \ref{claim_Lemma6} we have $(T,c) \sim (\varphi(T),x)$, so $\varphi(T)$ is an $x$-tadpole in $D \cup e$.
				\item a $c$-snake $S$. Since $\widetilde{e}$ is the only edge incident to $c$, we have $\widetilde{e} \in E(S)$. If $x \in V(S)$, then $S$ contains a $cx$-chain so we simply go back to that case. If $x \not\in V(S)$, then by Claim \ref{claim_Lemma6} we have $(S,c) \sim (\varphi(S),x)$, so $\varphi(S)$ is an $x$-snake in $D \cup e$.
			\end{itemize}
			All three possibilities thus contradict (\ref{cont2}). This ends Case 2 and concludes the proof of the second assertion of the lemma.
			
	\end{enumerate}
	
	\noindent Finally, we prove the first assertion of the lemma, namely, that $e_x$ is the only edge of $D$ that is incident to $x$. Suppose for a contradiction that there exists $e'_x \in E(D)$ such that $x \in e'_x$ and $e'_x \neq e_x$. Define $P_x = e'_x$: we have $V(P_x) \cap (V(D) \setminus \{x\}) = e'_x \setminus \{x\} \neq \varnothing$ and $\Start(\ora{xP_x}) = e'_x \neq e_x$. Therefore, we can apply the second assertion of the lemma to the chain $P_x$: we get an $x$-snake or an $x$-tadpole in $D \cup P_x = D$, contradicting Proposition \ref{prop_structure}\ref{item6}.
\end{proof}

\subsubsection{Inside structure}

\noindent The two previous lemmas address the union of a $\widehat{\D_1}$-danger and a chain. We now look at a $\widehat{\D_1}$-danger alone. In Figure \ref{Example_Dangers}, all featured examples were unions of $z$-tadpoles and $zx$-chains only, no $z$-snakes. Also, $x$ was of degree 1 in all of them. We can now show these properties hold in all interesting cases:

\begin{proposition}\label{prop_nomarkedvertex}
	Let $(D,x) \in \widehat{\D_1}$, with a maximal decomposition $(z , \F_D = \S_z \cup \T_z \cup \P_{zx})$. If $J(\D_1,D)$ holds, then $M(D)=\varnothing$. In particular, we have $\F_D=\T_z \cup \P_{zx}$.
\end{proposition}

\begin{proof}
	Suppose for a contradiction that there exists some $m \in M(D)$. As a gadget, we add two new non-marked vertices $a$ and $c$ as well as a new edge $\widetilde{e}=\{a,c,m\}$. This does not create any new $\D_1$-danger at $z$: indeed, it is obvious that a $z$-snake or a $z$-tadpole cannot contain an edge with two non-marked vertices of degree 1 other than $z$. For that reason, the fact that $J(\D_1,D)$ holds implies that $J(\D_1,D \cup \widetilde{e})$ holds as well. Moreover, since $D$ is not a trivial Maker win by Proposition \ref{prop_structure}\ref{item5}, $D \cup \widetilde{e}$ is not either. Therefore, item \ref{item2_Lemma5} of Union Lemma \ref{Lemma5} applied to $D$ and $P_c = \widetilde{e}$ ensures that $D \cup \widetilde{e}$ contains a $cx$-chain or a $c$-tadpole. Since $\widetilde{e}$ is the only edge containing $c$, it is easy to see by removing $\widetilde{e}$ that $D$ contains an $mx$-chain or an $m$-tadpole respectively (see Figure \ref{NoMarkedVertex}). The former is impossible because an $mx$-chain in $D$ is an $x$-snake in $D$, which cannot exist by Proposition \ref{prop_structure}\ref{item6}. The latter is impossible by Proposition \ref{prop_markedvertex}. We can conclude that $M(D)=\varnothing$, which implies $\S_z=\varnothing$ hence the final assertion.
\end{proof}

\begin{figure}[h]
	\centering
	\includegraphics[scale=.55]{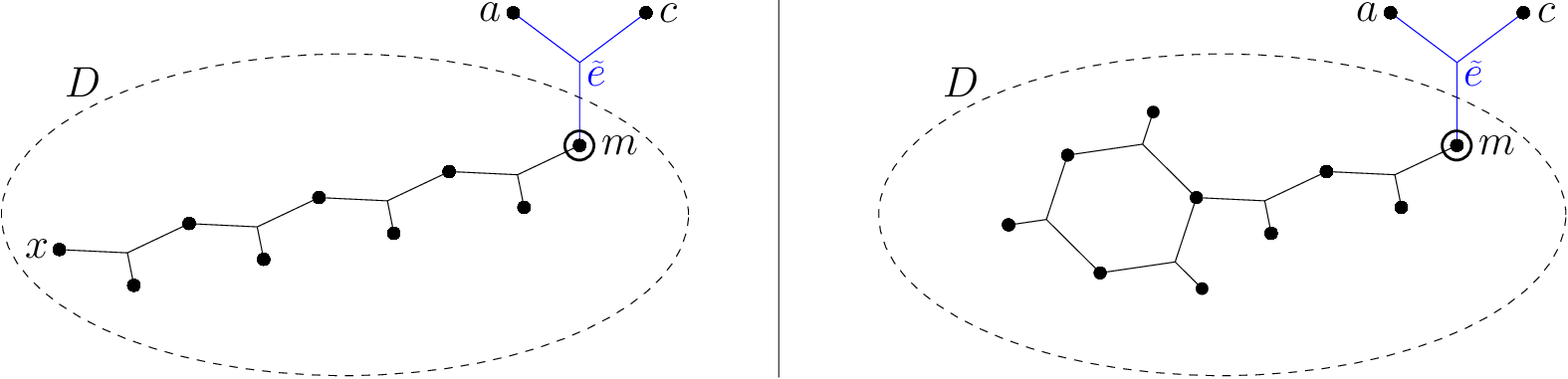}
	\caption{Left: a $cx$-chain yields an $xm$-snake. Right: a $c$-tadpole yields an $m$-tadpole.}
	\label{NoMarkedVertex}
\end{figure}

\begin{proposition}\label{prop_degree1}
	Let $(D,x) \in \widehat{\D_1}$, with a maximal decomposition $(z , \F_D = \S_z \cup \T_z \cup \P_{zx})$. If $\I{D^{+z}}{z\D_1(D)} \neq \varnothing$, then $x$ is of degree 1 in $D$.
\end{proposition}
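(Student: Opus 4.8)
The plan is to apply Lemma \ref{Lemma6} to the danger $(D,x)$ itself. By hypothesis $\I{D^{+z}}{z\D_1(D)} \neq \varnothing$, and $D$ is not a trivial Maker win by Proposition \ref{prop_structure}\ref{item5}, so all the assumptions of Lemma \ref{Lemma6} hold with $H \defeq D$. Lemma \ref{Lemma6} then immediately gives us that there is a \emph{unique} edge $e_x$ in $D$ incident to $x$. Since $x$ is non-marked and belongs to the danger, $x$ does belong to at least one edge of $D$ — indeed $\P_{zx}(D) \neq \varnothing$ by Proposition \ref{prop_structure}\ref{item4}, and a $zx$-path of positive length has an edge containing $x$ (the starting edge $\Start(\fd{x\Ps z})$). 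Hence $x$ is of degree exactly 1 in $D$.

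So the proof is essentially a one-line deduction from Lemma \ref{Lemma6}: that lemma was stated precisely so that its first conclusion (uniqueness of the edge incident to $x$) would cover this case, and applying it with $H=D$ is legitimate because $D$ is itself a marked hypergraph that is not a trivial Maker win and in which $(D,x)$ is a $\obsr{\D_1}$-danger with $\D_1$-dangerous vertex $z$ satisfying $\I{D^{+z}}{z\D_1(D)} \neq \varnothing$. The only subtlety to double-check is that there genuinely is at least one such edge, so that ``degree 1'' rather than ``degree $0$'' is the right conclusion; this follows from $\P_{zx}(D) \neq \varnothing$ as above. I do not anticipate any real obstacle here — the work was all done in setting up Lemma \ref{Lemma6}.

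\begin{proof}
	Apply Lemma \ref{Lemma6} with $H \defeq D$: this is legitimate since $D$ is a marked hypergraph that is not a trivial Maker win by Proposition \ref{prop_structure}\ref{item5}, $x \in V(D) \setminus M(D)$, and $(D,x)$ is a $\obsr{\D_1}$-danger at $x$ in $D$ with $\D_1$-dangerous vertex $z$ satisfying $\I{D^{+z}}{z\D_1(D)} \neq \varnothing$ by hypothesis. Lemma \ref{Lemma6} then ensures that there is a unique edge $e_x$ in $D$ incident to $x$, so $x$ is of degree at most 1 in $D$. Moreover, $\P_{zx}(D) \neq \varnothing$ by Proposition \ref{prop_structure}\ref{item4}: picking $\Ps \in \P_{zx}(D)$, the path $\Ps$ is a $zx$-path of positive length (since $z \neq x$), so its edge $\Start(\fd{x\Ps z})$ contains $x$. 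Hence $x$ is incident to at least one edge of $D$, and therefore $x$ is of degree exactly 1 in $D$.
\end{proof}
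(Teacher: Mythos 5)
Your proof is correct and is essentially the same as the paper's, which simply says "This is the first assertion of Lemma \ref{Lemma6} applied in $H=D$." Your extra verification that $x$ is incident to at least one edge (via $\P_{zx}(D) \neq \varnothing$) is a harmless sanity check, though arguably redundant since the lemma already asserts the existence of the unique edge $e_x$.
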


\begin{proof}
	This is the first assertion of Union Lemma \ref{Lemma6} applied in $H=D$.
\end{proof}

\noindent The next result delves into the inside structure of the $\widehat{\D_1}$-dangers with much more precision.

\begin{proposition}\label{prop_insidestructure}
	Let $(D,x) \in \widehat{\D_1}$, with a maximal decomposition $(z , \F_D = \S_z \cup \T_z \cup \P_{zx})$. Suppose that $J(\D_1,D)$ holds. Then $D$ is of at least one of the two following types (see Figure \ref{TwoTypes}):
	\begin{enumerate}[noitemsep,nolistsep,label=\textup{(\arabic*)}]
		\item $D$ contains:
			\begin{itemize}[noitemsep,nolistsep]
				\item a $z$-cycle $C$ such that $x \not\in V(C)$;
				\item an $xw$-chain $P_{xw}$ for some $w \in \out(C)$ such that $V(P_{xw}) \cap V(C)=\{w\}$;
				\item some $X \in \F_D$ such that $V(X) \cap V(P_{xw}) \neq \varnothing$ and $e \setminus \{w\} \not\subseteq V(X)$ where $e$ denotes the unique edge of $C$ containing $w$.
			\end{itemize}
		\item $D$ contains:
			\begin{itemize}[noitemsep,nolistsep]
				\item a $z$-cycle $C$ such that $x \not\in V(C)$;
				\item an $xw$-chain $P_{xw}$ for some $w \in V(C)$ such that $V(P_{xw}) \cap V(C)=\{w,w'\}$ where we have defined $w' = o(w,\ola{xP_{xw}w})$.
			\end{itemize}
	\end{enumerate}
\end{proposition}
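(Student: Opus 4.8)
The plan is to exploit the main existence tool, item \ref{item2} of Proposition \ref{prop_structure}: since $\I{D^{+z}}{\O_{x,z}(D)}=\varnothing$, for every non-marked $u\neq z$ there is an element of $\O_{x,z}(D)$ avoiding $u$. By Proposition \ref{prop_nomarkedvertex} (using that $J_1(\D_1,D)$ holds), $M(D)=\varnothing$ and $\O_{x,z}(D)=z\T(D^{-x})\cup\P_{zx}(D)$; moreover by Proposition \ref{prop_structure}\ref{item7} there is a $z$-cycle in $D$, and by Proposition \ref{prop_degree1} $x$ has degree $1$ in $D$. First I would fix a shortest $P_{zx}\in\P_{zx}(D)$ (nonempty by \ref{item4}) and let $e_x\defeq\Start(\fd{xP_{zx}z})$ be the unique edge of $D$ at $x$ (Lemma \ref{Lemma6} applied in $H=D$, whose hypothesis $\I{D^{+z}}{z\D_1(D)}\neq\varnothing$ follows from $J_1(\D_1,D)$). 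The strategy is then to locate a $z$-cycle $C$ with $x\notin V(C)$ and analyze how $P_{zx}$ (or rather the projection of $x$ onto $V(C)$ along a $zx$-path) attaches to $C$: either it meets $C$ at a single outer vertex $w$ — pushing us toward type (1) — or it meets $C$ in two vertices $\{w,w'\}$ with $w'=o(w,\fg{xP_{xw}w})$, giving type (2).

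The key steps, in order: \textbf{(a)} Produce a $z$-cycle $C_0$ in $D$ (Proposition \ref{prop_structure}\ref{item7}); if $x\notin V(C_0)$ we already have the cycle, otherwise use that $x$ has degree $1$ to reroute — $x$ lies on $C_0$ only as an outer vertex of a length-$2$ cycle is impossible since $x$ has degree $1$, so in fact $x\in\inn(C_0)$ forces degree $2$, a contradiction; hence \emph{any} $z$-cycle in $D$ automatically avoids $x$. \textbf{(b)} Take the $zx$-path $P_{zx}$ and follow it from $x$ until it first hits $V(C)$: formally consider $\proj{V(C)}{x}{P_{zx}}$, which is a $u$-path for some $u\in V(C)$ — call its last edge $f$ and set $P_{xw}$ to be the $xw$-path obtained by prolonging into $C$ appropriately. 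Here the two structural tables from Section \ref{Section2} (Tables \ref{table1}, \ref{table2}, and for the cycle side Proposition \ref{prop_subpath2}) govern the cases. If $|f\cap V(C)|=1$, the single attachment vertex is some $w$; if $w\in\out(C)$ we aim at type (1), and if $w\in\inn(C)$ we get a $z$-tadpole through $x$, contradicting Proposition \ref{prop_structure}\ref{item6}. If $|f\cap V(C)|=2$, a $zx$-path through two vertices of $C$ with the "perpendicular" configuration yields the pair $\{w,w'\}$ of type (2) — unless it yields an $x$-cycle or an $x$-tadpole in $D$, again contradicting \ref{item6}, or an $xw$-path attaching at one outer vertex, which is type (1). \textbf{(c)} In the type-(1) scenario, produce the third ingredient: apply item \ref{item2} to the non-marked vertex lying in $e\setminus\{w\}$ (where $e$ is the unique edge of $C$ at $w$) — there is $X\in\O_{x,z}(D)$ with $e\setminus\{w\}\not\subseteq V(X)$ — and check $V(X)\cap V(P_{xw})\neq\varnothing$, which holds because otherwise $X$ together with $P_{xw}$ and $C$ would not cover enough of $D$, or more directly because $D=\union{\O_{x,z}(D)}$ so every vertex of $P_{xw}$, in particular $x$, must lie in some element of the collection, and the elements of $z\T(D^{-x})$ don't contain $x$ while an element of $\P_{zx}(D)$ must then meet $P_{xw}$.

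The main obstacle I anticipate is step \textbf{(b)}: cleanly organizing the case split on how the chosen $zx$-path attaches to the $z$-cycle, and in particular ruling out the "bad" attachments (inner-vertex attachment, or creation of an $x$-cycle/$x$-tadpole) by invoking Proposition \ref{prop_structure}\ref{item6} together with Lemmas \ref{Lemma1}--\ref{Lemma3}. One must be careful that the $z$-cycle is not forced to be length $2$ in a way that breaks the projection lemma (Proposition \ref{prop_projection} has a length-$2$ caveat for tadpoles, though here $C$ is a genuine cycle so Proposition \ref{prop_subpath3bis} suffices), and that the vertex $w'=o(w,\fg{xP_{xw}w})$ in type (2) is genuinely the outer vertex of the first edge of $P_{xw}$ as seen from $w$ — this needs the perpendicularity bookkeeping of Tables \ref{table3}--\ref{table4}. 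A secondary subtlety is that "$D$ is of at least one of the two types" is an inclusive disjunction, so when an attachment is ambiguous (e.g. $f$ meets $C$ in two vertices one of which is outer) I may freely declare it type (1) by discarding the extra intersection vertex; making sure every branch lands in (1) or (2) without gap is the bulk of the work, but each individual branch is a short application of the Section \ref{Section2} machinery.
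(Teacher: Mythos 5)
Your skeleton matches the paper's: pick a $z$-cycle, project $x$ onto it, split on whether the last edge of the projection meets $C$ in one or two vertices, and handle the degenerate attachments via \ref{item6}. But there are two genuine gaps, one fatal.

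\textbf{Step (a) is incorrect.} You claim that degree-1 alone forces $x\notin V(C)$ for every $z$-cycle $C$. This is false: an outer vertex of a cycle has degree $1$, so Proposition \ref{prop_degree1} only rules out $x\in\inn(C)$, not $x\in\out(C)$. Your sentence about ``an outer vertex of a length-$2$ cycle'' does not parse into a valid argument. The paper handles $x\in V(C)$ by allowing $P_{xw}$ to have length $0$, and only at the \emph{very end} rules out $x=w$: once the witness $X$ with $e\setminus\{w\}\not\subseteq V(X)$ and $V(X)\cap V(P_{xw})\neq\varnothing$ is produced, $x=w$ would put two distinct edges $e$ and $e'\in E(X)$ at $x$, contradicting degree $1$. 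You need the third bullet of type (1) already in hand to conclude this; it is not a free preliminary observation.

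\textbf{Step (c) is the real hole.} You assert that the $X$ obtained by applying Proposition \ref{prop_structure}\ref{item2} to a vertex of $e\setminus\{w\}$ automatically satisfies $V(X)\cap V(P_{xw})\neq\varnothing$, because ``$D=\union{\O_{x,z}(D)}$ so every vertex of $P_{xw}$ lies in some element of the collection.'' That shows \emph{some} element meets $P_{xw}$, not the \emph{particular} $X$ you fixed. This intersection property is precisely the hard content of the proposition, and it is where the paper expends most of its effort: it first fixes a pair $(C,P_{xw})$ with $P_{xw}$ of maximal length — a choice your proposal never makes — which yields Claim \ref{claim_TwoTypes} (every $z$-cycle in $D$ meets $P_{xw}$). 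Then, writing $e=\{w,w_1,w_2\}$, it takes \emph{two} avoidance witnesses $X^{\overline{w_1}},X^{\overline{w_2}}$, supposes both miss $P_{xw}$, and runs a multi-branch contradiction argument: via the attachment of these tadpoles to the two arcs $P_1,P_2$ of $C$, Lemma \ref{Lemma2}, Proposition \ref{prop_subtadpole}, Proposition \ref{prop_subpath4}, the vertex $s\in\I{D^{+z}}{z\D_1(D)}$, and Claim \ref{claim_TwoTypes}, every branch produces an $x$-tadpole or a $z$-cycle disjoint from $P_{xw}$, impossible. (There is also a separate, easier branch when $z\in e$.) Without the maximality choice, Claim \ref{claim_TwoTypes}, and this case analysis, the existence of the required $X$ does not follow.

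Minor: in step (b), when $|f\cap V(C)|=2$ the proposition's type (2) holds \emph{outright}, so your extra sub-cases (``unless it yields an $x$-cycle \ldots or an $xw$-path attaching at one outer vertex'') are unnecessary; the paper simply concludes type (2) there and works under the standing assumption that type (2) fails.
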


\begin{figure}[h]
	\centering
	\includegraphics[scale=.55]{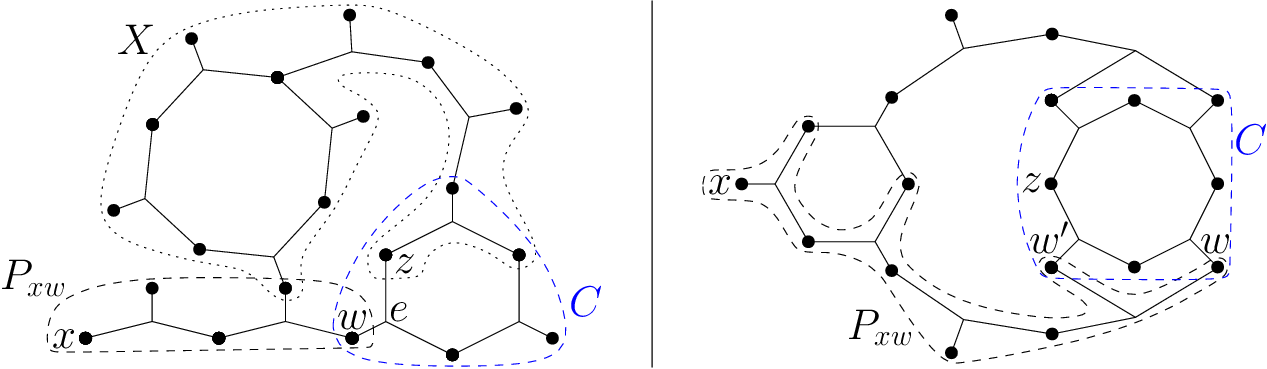}
	\caption{Two $\widehat{\D_1}$-dangers. The left one is of type (1) only (same for the other two from Figure \ref{Example_Dangers}). The right one is of type (2) only.}
	\label{TwoTypes}
\end{figure}

\begin{proof}

	Assume that $D$ is not of type (2): we show that $D$ is of type (1).
	\begin{claim}\label{claim_TwoTypes0}
		There exists a pair $(C,P_{xw})$ where $C$ is a $z$-cycle and $P_{xw}$ is an $xw$-chain for some $w \in \out(C)$ such that $V(P_{xw}) \cap V(C)=\{w\}$.
	\end{claim}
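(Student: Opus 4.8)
The plan is to build the pair $(C, P_{xw})$ by starting from a $z$-cycle inside $D$ (which exists by Proposition \ref{prop_structure}\ref{item7}) and a $zx$-path (which exists since $\P_{zx}(D) \neq \varnothing$ by Proposition \ref{prop_structure}\ref{item4}), then analyzing how they interact. By Proposition \ref{prop_nomarkedvertex}, $M(D) = \varnothing$, so $\O_{x,z}(D) = z\T(D^{-x}) \cup \P_{zx}(D)$, and by Proposition \ref{prop_degree1}, $x$ is of degree $1$ in $D$; I will use both freely. The key tool throughout is Proposition \ref{prop_structure}\ref{item2}, which guarantees that for every non-marked vertex $u \neq z$ there is an element of $\O_{x,z}(D)$ avoiding $u$: combined with the union lemmas (Lemmas \ref{Lemma1}, \ref{Lemma2}, \ref{Lemma3}) and with item \ref{item6} (no $x$-tadpole, no $x$-snake in $D$), this lets me reconstruct enough of $D$ to locate a cycle not containing $x$ that is attached to $x$ by a path meeting it in exactly one outer vertex.

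**Main steps.** First I would fix a shortest $\Ps \in \P_{zx}(D)$ (as in Proposition \ref{prop_s} one does not even need the parameter $s$ here, but a shortest $zx$-path is convenient) and a $z$-cycle $C_0 \subseteq D$ with $x \notin V(C_0)$. If $C_0$ and $\Ps$ already have the right configuration — i.e. $\Ps$ reaches $C_0$, and following $\Ps$ from $x$ until it first hits $V(C_0)$ produces an edge meeting $C_0$ in a single outer vertex — I am essentially done modulo choosing the attaching path to be $\proj{V(C_0)}{x}{\Ps}$. The substantive case is when every such encounter is "bad": either $\Ps$ does not meet $C_0$ at all, or the last edge of $\proj{V(C_0)}{x}{\Ps}$ meets $C_0$ in two vertices, or it meets $C_0$ at an inner vertex. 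In each bad case I would invoke a union lemma (Lemma \ref{Lemma3} applied to the $z$-cycle $C_0$, viewed as a degenerate $z$-tadpole, together with the $x$-path obtained from $\Ps$) to extract either an $x$-tadpole in $D$ — contradicting item \ref{item6} — or a \emph{new} $z$-cycle $C_1$ together with a cleaner attaching path; then I would argue that this process terminates, e.g. by a minimality choice (shortest attaching path over all valid $z$-cycles in $D$), so that the minimal configuration cannot be bad. This is exactly the kind of "reconstruct $D$ from pieces handed to us by item \ref{item2}" argument the paper flags right after Proposition \ref{prop_structure}.

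**The dichotomy type (1) vs. type (2).** Once I have a $z$-cycle $C$ with $x \notin V(C)$ and an $xw$-path $P_{xw}$ with $w \in \out(C)$ and $V(P_{xw}) \cap V(C) = \{w\}$ — the content of Claim \ref{claim_TwoTypes0} — the remaining task (the rest of Proposition \ref{prop_insidestructure}, which the excerpt cuts off, but Claim \ref{claim_TwoTypes0} itself is the stated target) is just this claim. So to prove the displayed statement it suffices to produce the pair $(C, P_{xw})$. The one genuine subtlety is ensuring $w \in \out(C)$ rather than $w \in \inn(C)$: if the natural attaching vertex is an inner vertex of $C$, then $C$ splits at $w$ into two "lobes", and one shows that following $P_{xw}$ one step further along one lobe, or rerouting through the cycle, either yields the type (2) configuration (which we assumed does not hold) or lets us replace $C$ by a shorter/relocated $z$-cycle with an outer attaching vertex, again a minimality contradiction. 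I would organize this as: pick $(C, P_{xw})$ minimizing $|V(P_{xw})|$ over all $z$-cycles $C \subseteq D$ with $x \notin V(C)$ and all $x$-paths in $D$ from $x$ to $V(C)$ meeting $C$ in a single vertex; then rule out $w \in \inn(C)$ using Lemma \ref{Lemma2} (to get an $x$-tadpole, contradicting item \ref{item6}) or Proposition \ref{prop_subpath2} / \ref{prop_subpath3} to reroute inside $C$.

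**Main obstacle.** The hard part is the case analysis for the "bad encounter" situations and guaranteeing termination: each time a union lemma hands back a new $z$-cycle instead of an immediate contradiction, I must be sure the new configuration is strictly simpler in a well-founded measure, and I must check that the edge/vertex bookkeeping (which vertex of the last edge lies inside $V(C)$, whether it is inner or outer, whether $x$ sneaks into the tadpole produced by Lemma \ref{Lemma3}) is handled in every subcase. The degenerate possibility that $C$ has length $2$ (so $C$ is non-linear and $\out(C)$ might be small) also needs a separate quick check, since several of the path-existence propositions have an exception precisely in that case. Modulo this careful but routine casework, the structure of the argument is the standard one for this section: use item \ref{item2} to summon an avoiding subhypergraph, glue it on with a union lemma, and read off either a contradiction with item \ref{item6} or the desired configuration.
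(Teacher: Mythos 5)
Your skeleton matches the paper's: start from a $z$-cycle (Proposition \ref{prop_structure}\ref{item7}) and a $zx$-path (Proposition \ref{prop_structure}\ref{item4}), project $x$ onto the cycle via $\proj{V(C)}{x}{\cdot}$, and argue that neither ``bad'' outcome of the projection is possible. But you have left a gap and built machinery that isn't needed.

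The gap: you fix ``a $z$-cycle $C_0 \subseteq D$ with $x \notin V(C_0)$,'' but item \ref{item7} only gives \emph{some} $z$-cycle, which may contain $x$ (as an outer vertex --- inner is excluded by item \ref{item6}). The paper splits on this: if $x \in V(C)$, then $x \in \out(C)$ and the claim holds trivially with $w \defeq x$ and $P_{xw}$ of length $0$; the projection argument is only invoked when $x \notin V(C)$. Your setup silently restricts to the second case without justifying that a $z$-cycle avoiding $x$ always exists.

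The over-complication: you anticipate that a union lemma may return ``a new $z$-cycle $C_1$'' and that termination of an iteration must be ensured by a minimality measure. No iteration occurs. If $|\End(\fd{xP_x}) \cap V(C)| = 2$, the second vertex of that edge is automatically $o(w, \fg{xP_{xw}w})$ by the definition of a projection, so $D$ is of type (2) --- contradicting the standing assumption, which you correctly record but do not use as the terminal step it is. If $|\End(\fd{xP_x}) \cap V(C)| = 1$ with the single common vertex $w \in \inn(C)$, then $P_x \cup C$ is \emph{itself} an $x$-tadpole in $D$ (an $xw$-path glued to a $w$-cycle at $w$ alone), contradicting item \ref{item6} on the spot. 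You do not need Lemma \ref{Lemma3} for this --- it would not produce a contradiction here anyway, since its conclusion for a pure cycle $T$ is a $ca$-path, not a $c$-tadpole --- nor any rerouting or replacement of $C$. Also, ``$\Ps$ does not meet $C_0$'' cannot happen: $z$ lies in both. The maximality-of-$P_{xw}$ choice you are anticipating is real, but it belongs to Claim \ref{claim_TwoTypes}, not to the existence statement in front of you.
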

	\begin{proofclaim}[Proof of Claim \ref{claim_TwoTypes0}]
		The existence of $C$ is given by Proposition \ref{prop_structure}\ref{item7}. The existence of $P_{xw}$ is also straightforward:
 		\begin{itemize}[wide,noitemsep,nolistsep]
			\item Suppose $x \in V(C)$. Necessarily $x \in \out(C)$, otherwise $C$ would be an $x$-cycle, contradicting Proposition \ref{prop_structure}\ref{item6}. Therefore, simply take $w = x$ and $P_{xw}$ of length 0.
			\item Suppose $x \not\in V(C)$. Let $P \in \P_{zx}$ and define $P_x = \proj{V(C)}{x}{P}$. By definition of a projection: $|\End(\ora{xP_x}) \cap V(C)|\in\{1,2\}$. We cannot have $|\End(\ora{xP_x}) \cap V(C)|=2$ because $D$ would be of type (2), therefore $|\End(\ora{xP_x}) \cap V(C)|=1$. Let $w$ be the only vertex in $\End(\ora{xP_x}) \cap V(C)$. Necessarily $w \in \out(C)$, otherwise $P_x \cup C$ would be an $x$-tadpole, contradicting Proposition \ref{prop_structure}\ref{item6}. Take $P_{xw}= P_x$. \qedhere
		\end{itemize}
	\end{proofclaim}
	\noindent Of all pairs $(C,P_{xw})$ as in Claim \ref{claim_TwoTypes0}, we choose one where $P_{xw}$ is longest. This choice ensures that:
	\begin{claim}\label{claim_TwoTypes}
		For any $z$-cycle $C'$ in $D$, we have $V(C') \cap V(P_{xw}) \neq \varnothing$.
	\end{claim}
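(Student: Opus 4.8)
The plan is to argue by contradiction: assume there is a $z$-cycle $C'$ in $D$ with $V(C')\cap V(P_{xw})=\varnothing$, and derive either an $x$-tadpole in $D$, a type~(2) configuration, or a pair of the form in Claim~\ref{claim_TwoTypes0} whose path is strictly longer than $P_{xw}$ — each of which contradicts something already established. First I would record two cheap facts. One: $x\notin V(C')$. Indeed, if $x\in\inn(C')$ then $C'$ is an $x$-cycle, contradicting Proposition~\ref{prop_structure}\ref{item6}; and if $x\in\out(C')$ then $x\in V(C')\cap V(P_{xw})$ (since $x\in V(P_{xw})$), contradicting the assumption. Two: $z\in\inn(C)$ (as $C$ is a $z$-cycle) while $w\in\out(C)$, so $z\neq w$.

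Next I would build an $xz$-path in $D$ that properly extends $P_{xw}$. Since $w\in\out(C)$, $z\in\inn(C)$ and $w\neq z$, Proposition~\ref{prop_subpath3bis} applied in $C$ (the exceptional case $\out(C)=\{w,z\}$ cannot occur because $z\notin\out(C)$) gives a $wz$-path $P_{wz}\subseteq C$, necessarily of positive length. As $P_{wz}\subseteq C$ and $V(P_{xw})\cap V(C)=\{w\}$, we get $V(P_{xw})\cap V(P_{wz})=\{w\}$, so by the first union lemma of Section~\ref{Section2} the hypergraph $P_{xz}\defeq P_{xw}\cup P_{wz}$ is an $xz$-path in $D$, and $|P_{xz}|=|P_{xw}|+|P_{wz}|>|P_{xw}|$.

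Then I would project $P_{xz}$ onto $V(C')$ — legitimate since $z\in V(P_{xz})\cap V(C')$ and $P_{xz}$ is a path — and repeat, essentially verbatim, the case analysis from the proof of Claim~\ref{claim_TwoTypes0}, now applied to the $z$-cycle $C'$ (recall $x\notin V(C')$). Setting $P'\defeq\proj{V(C')}{x}{P_{xz}}$: if $|\End(\fd{xP'})\cap V(C')|=2$, then $D$ is of type~(2), contradicting our standing assumption; if $|\End(\fd{xP'})\cap V(C')|=1$ at an inner vertex of $C'$, then $P'\cup C'$ is an $x$-tadpole in $D$, contradicting Proposition~\ref{prop_structure}\ref{item6}; if $|\End(\fd{xP'})\cap V(C')|=1$ at an outer vertex $w''$ of $C'$, then (as the only edge of $P'$ meeting $V(C')$ is $\End(\fd{xP'})$) we have $V(P')\cap V(C')=\{w''\}$, so $(C',P')$ is a pair as in Claim~\ref{claim_TwoTypes0}. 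In this last case the crucial observation is that $V(P_{xw})\cap V(C')=\varnothing$ and $w\notin V(C')$, so when we follow $\fd{xP_{xz}}$ from $x$ the first edge meeting $V(C')$ lies strictly beyond the $P_{xw}$-part; hence $P'$ contains every edge of $P_{xw}$ plus at least one more, giving $|P'|\geq|P_{xw}|+1$ and contradicting the maximality in the choice of $(C,P_{xw})$. Since these cases are exhaustive, no such $C'$ exists.

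The one slightly delicate point is this last length bookkeeping (verifying that the projection genuinely overshoots $P_{xw}$); the remaining steps are transcriptions of the projection argument already used for Claim~\ref{claim_TwoTypes0} together with Proposition~\ref{prop_structure}\ref{item6}. (One could additionally invoke $M(D)=\varnothing$ from Proposition~\ref{prop_nomarkedvertex} to place $P_{xz}$ in $\P_{zx}(D)$, but this is unnecessary, as the pairs considered in Claim~\ref{claim_TwoTypes} impose no condition on marked vertices.)
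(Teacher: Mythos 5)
Your proof is correct and follows essentially the same strategy as the paper's: build an $x$-path that reaches $V(C')$, run the Claim~\ref{claim_TwoTypes0} projection case analysis against $C'$ (using $x\notin V(C')$ and Proposition~\ref{prop_structure}\ref{item6}), and contradict the maximality of the length of $P_{xw}$. The only difference is cosmetic --- the paper takes $P\defeq\proj{V(C')}{w}{C}$ and sets $P'_x\defeq\HH{\fd{xP_{xw}w}\oplus\fd{wP}}$, so the length comparison $|P'_x|=|P_{xw}|+|P|>|P_{xw}|$ is immediate, whereas you first assemble $P_{xz}=P_{xw}\cup P_{wz}$ and then project from $x$, requiring the extra overshoot verification which you carry out correctly.
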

	\begin{proofclaim}[Proof of Claim \ref{claim_TwoTypes}]
		Suppose for a contradiction that there exists a $z$-cycle $C'$ such that $V(C') \cap V(P_{xw}) = \varnothing$. Since $z \in V(C') \cap (V(C) \setminus \out(C))$, the projection $P = \proj{V(C')}{w}{C}$ is well defined, and it is of positive length because $w \not\in V(C')$. Therefore, defining $P'_x = \HH{\ora{xP_{xw}w} \oplus \ora{wP}}$, the chain $P'_x$ is strictly longer than $P_{xw}$. For the same reason as $P_x$ in the proof of Claim \ref{claim_TwoTypes0} above, $P'_x$ satisfies $\End(\ora{xP'_x}) \cap V(C')=\{w'\}$ for some $w' \in \out(C')$, and $P'_x$ is an $xw'$-chain. The pair $(C',P'_x)$ thus contradicts the maximality of the length of $P_{xw}$.
	\end{proofclaim}
	\noindent We will show that $x \not\in V(C)$, i.e., $x \neq w$ at the end of the proof. For now, let $e$ be the only edge of $C$ containing $w$, and let us show the existence of $X \in \F_D$ such that $V(X) \cap V(P_{xw}) \neq \varnothing$ and $e \setminus \{w\} \not\subseteq V(X)$. 
	\\ Let us first address the case $z \in e$. Since $z \in \inn(C)$ and $w \in \out(C)$, we have $z \neq w$. Let $v$ be the third vertex of $e$, so that $e=\{w,z,v\}$. By Proposition \ref{prop_structure}\ref{item2}, there exists $X^{\overline{v}} \in \F_D$ such that $v \not\in V(X^{\overline{v}})$, which implies $e \setminus \{w\} \not\subseteq V(X^{\overline{v}})$. Suppose for a contradiction that $V(X^{\overline{v}}) \cap V(P_{xw}) = \varnothing$. In particular $X^{\overline{v}}$ is not a $zx$-chain. We also know $X^{\overline{v}}$ is not a $z$-snake by Proposition \ref{prop_nomarkedvertex}, so $X^{\overline{v}}$ is a $z$-tadpole. We write $X^{\overline{v}}=T$. Since $V(T) \cap (V(P_{xw}) \cup e) = \{z\}$, the walk $\ora{xP_{xw}w} \oplus (w,e,z) \oplus \ora{zT}$ represents an $x$-tadpole in $D$, contradicting Proposition \ref{prop_structure}\ref{item6}. In conclusion, we have $V(X^{\overline{v}}) \cap V(P_{xw}) \neq \varnothing$.
	\\ We can now assume that $z \not\in e$. Write $\ora{zC}=(z,e_1,\ldots,e_L,z)$: we have $e=e_i$ for some $1 \leq i \leq L$. Actually, since $z \not\in e$, we have $L \geq 3$ and $2 \leq i \leq L-1$. We can thus define $w_1$ (resp. $w_2$) as the only vertex in $e_{i-1} \cap e_i$ (resp. in $e_i \cap e_{i+1}$), and we have $e=\{w,w_1,w_2\}$. Therefore, defining $P_1 =\HH{(z,e_1,\ldots,e_{i-1},w_1)}$ and $P_2 = \HH{(z,e_L,e_{L-1},\ldots,e_{i+1},w_2)}$, $P_1$ is a $zw_1$-chain and $P_2$ is a $zw_2$-chain. These notations are summed up in Figure \ref{NotationsType1}.
	
	\begin{figure}[h]
		\centering
		\includegraphics[scale=.55]{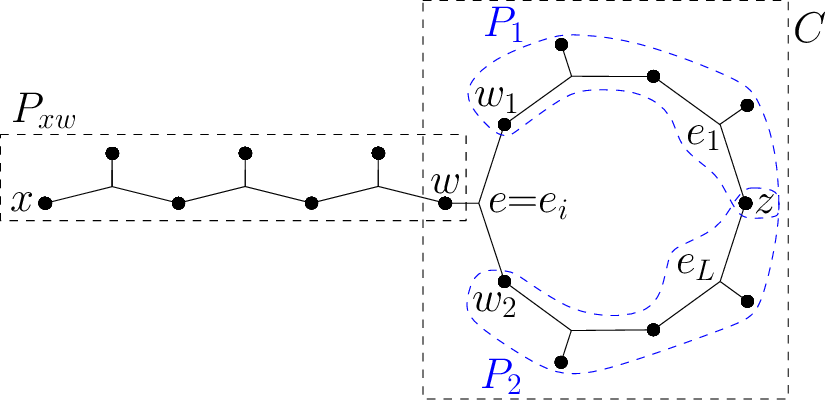}
		\caption{Summary of the notations used in the proof of Proposition \ref{prop_insidestructure}.}
		\label{NotationsType1}
	\end{figure}
	
	\noindent Since $z \not\in e$, we have $w_1, w_2 \neq z$. By Proposition \ref{prop_structure}\ref{item2}, for all $j \in \{1,2\}$, there exists $X^{\overline{w_j}} \in \F_D$ such that $w_j \not\in V(X^{\overline{w_j}})$, which implies $e \setminus \{w\} \not\subseteq V(X^{\overline{w_j}})$. We choose $X^{\overline{w_1}}=X^{\overline{w_2}}$ if possible, i.e., if there exists an element of $\F_D$ containing neither $w_1$ nor $w_2$. Suppose for a contradiction that $V(X^{\overline{w_1}}) \cap V(P_{xw}) = \varnothing$ and $V(X^{\overline{w_2}}) \cap V(P_{xw}) = \varnothing$.
	\\ In particular, $X^{\overline{w_1}}$ and $X^{\overline{w_2}}$ are not $zx$-chains, moreover they are not $z$-snakes by Proposition \ref{prop_nomarkedvertex} and they are not $z$-cycles by Claim \ref{claim_TwoTypes}. Therefore, $X^{\overline{w_1}}$ and $X^{\overline{w_2}}$ are $z$-tadpoles that are not cycles. We write $X^{\overline{w_1}} = T^{\,\overline{w_1}}$ and $X^{\overline{w_2}} = T^{\,\overline{w_2}}$. We distinguish between two cases, obtaining a contradiction for both.
	
	\begin{itemize}[leftmargin=*]
	
		\item First case: $e_L \not\in E(T^{\,\overline{w_1}})$ or $e_1 \not\in E(T^{\,\overline{w_2}})$. By symmetry, assume that $e_L \not\in E(T^{\,\overline{w_1}})$.
		\\ It is impossible that $V(T^{\,\overline{w_1}}) \cap V(P_2)=\{z\}$, otherwise we would have $V(T^{\,\overline{w_1}}) \cap (V(P_{xw}) \cup e \cup V(P_2))=\{z\}$ so the walk $\ora{xP_{xw}w} \oplus (w,e,w_2) \oplus \ora{w_2P_2z} \oplus \ora{zT^{\,\overline{w_1}}}$ would represent an $x$-tadpole in $D$, contradicting Proposition \ref{prop_structure}\ref{item6}. Therefore, the projection $P^{\,\overline{w_1}} = \proj{V(P_2) \setminus \{z\}}{z}{T^{\,\overline{w_1}}}$ is well defined. Since $P^{\,\overline{w_1}} \subseteq T^{\,\overline{w_1}}$, we have $V(P^{\,\overline{w_1}}) \cap (V(P_{xw}) \cup \{w_1\})=\varnothing$ and $e_L \not\in E(P^{\,\overline{w_1}})$. In particular $\Start(\ora{zP^{\,\overline{w_1}}}) \neq e_L=\Start(\ora{zP_2w_2})$, so we can apply Union Lemma \ref{Lemma2}. Since $P_2 \cup P^{\,\overline{w_1}}$ cannot contain a $z$-cycle by Claim \ref{claim_TwoTypes}, it contains a $w_2$-tadpole $T$. We have $V(T) \subseteq V(P_2) \cup V(P^{\,\overline{w_1}})$ hence $V(T) \cap (V(P_{xw}) \cup e) = \{w_2\}$, so the walk $\ora{xP_{xw}w} \oplus (w,e,w_2) \oplus \ora{w_2T}$ represents an $x$-tadpole in $D$, contradicting Proposition \ref{prop_structure}\ref{item6}.
		
		\item Second case: $e_L \in E(T^{\,\overline{w_1}})$ and $e_1 \in E(T^{\,\overline{w_2}})$.
		\\ Since $T^{\,\overline{w_1}}$ and $T^{\,\overline{w_2}}$ are not cycles, $z$ is of degree 1 in both of them, hence $e_1 \not\in E(T^{\,\overline{w_1}})$ and $e_L \not\in E(T^{\,\overline{w_2}})$. Since $e_1 \in E(T^{\,\overline{w_2}})$ and $e_1 \not\in E(T^{\,\overline{w_1}})$, we have $T^{\,\overline{w_1}} \neq T^{\,\overline{w_2}}$, so our initial choice of $T^{\,\overline{w_1}}$ and $T^{\,\overline{w_2}}$ ensures that $w_2 \in V(T^{\,\overline{w_1}})$ and $w_1 \in V(T^{\,\overline{w_2}})$.
		
			\begin{itemize}[nolistsep,noitemsep]
			
				\item Firstly, suppose that $w_2 \not\in \out(C_{T^{\,\overline{w_1}}})$ or $w_1 \not\in \out(C_{T^{\,\overline{w_2}}})$. By symmetry, assume that $w_2 \not\in \out(C_{T^{\,\overline{w_1}}})$. By Substructure Lemma \ref{lemma_subtadpole}, $T^{\,\overline{w_1}}$ contains a $w_2$-tadpole $T$. Since $V(T) \cap (V(P_{xw}) \cup e)=\{w_2\}$, the walk $\ora{xP_{xw}w} \oplus (w,e,w_2) \oplus \ora{w_2T}$ represents an $x$-tadpole in $D$, which contradicts Proposition \ref{prop_structure}\ref{item6}.
				
				\item Finally, suppose that $w_2 \in \out(C_{T^{\,\overline{w_1}}})$ and $w_1 \in \out(C_{T^{\,\overline{w_2}}})$. Since $J(\D_1,D)$ holds, there exists $s \in \I{D^{+z}}{z\D_1(D)}$. In particular, $s \in V(T^{\,\overline{w_1}}) \cap V(C)$. Since $w \not\in V(T^{\,\overline{w_1}})$, we have $s \neq w$, hence $s \in V(P_1)$ or $s \in V(P_2)$. By symmetry, assume $s \in V(P_1)$. In particular, $s \neq w_2$: by Substructure Lemma \ref{lemma_subchain4}, the fact that $w_2 \in \out(C_{T^{\,\overline{w_1}}})$ thus ensures the existence of a $zs$-chain $P_{zs}^{\,\overline{w_2}}$ in $T^{\,\overline{w_1}}$ that does not contain $w_2$. Since $e_1 \not\in E(T^{\,\overline{w_1}})$, we have $e_1 \not\in E(P_{zs}^{\,\overline{w_2}})$. Therefore $\Start(\ora{zP_{zs}^{\,\overline{w_2}}}) \neq e_1=\Start(\ora{zP_1w_1})$, moreover $V(P_{zs}^{\,\overline{w_2}}) \cap (V(P_1) \setminus \{z\}) \supseteq \{s\} \neq \varnothing$ so we can apply Union Lemma \ref{Lemma2}: since $P_1 \cup P_{zs}^{\,\overline{w_2}}$ cannot contain a $z$-cycle by Claim \ref{claim_TwoTypes}, it contains a $w_1$-tadpole $T$. We have $V(T) \subseteq V(P_1) \cup V(P_{zs}^{\,\overline{w_2}})$ hence $V(T) \cap (V(P_{xw}) \cup e) = \{w_1\}$, so the walk $\ora{xP_{xw}w} \oplus (w,e,w_1) \oplus \ora{w_1T}$ represents an $x$-tadpole in $D$, contradicting Proposition \ref{prop_structure}\ref{item6}.
			\end{itemize}
	\end{itemize}
	\noindent In conclusion, we have shown the existence of $X \in \F_D$ such that $V(X) \cap V(P_{xw}) \neq \varnothing$ and $e \setminus \{w\} \not\subseteq V(X)$. To prove that $D$ is of type (1), it only remains to show that $x \not\in V(C)$. Suppose for a contradiction that $x \in V(C)$, i.e., $x=w$, i.e., $V(P_{xw})=\{x\}$. Since $V(X) \cap V(P_{xw}) \neq \varnothing$ by definition of $X$, we get $x \in V(X)$, so there exists an edge $e'$ of $X$ that is incident to $x$. Moreover, $e$ is also incident to $w=x$. Since $e \not\in E(X)$ by definition of $X$, we have $e' \neq e$. Therefore, $e$ and $e'$ are two distinct edges of $D$ that are incident to $x$, contradicting Proposition \ref{prop_degree1}. This ends the proof.
\end{proof}

\section{Proofs of the main results}\label{Section5}

\subsection{Proof of Theorem \ref{theo_main_structure3}}

\noindent Proposition \ref{prop_forcing} shows that nunchakus are a Maker win thanks to a forcing strategy. Therefore, by Proposition \ref{prop_subwin} (Subhypergraph Monotonicity), any marked hypergraph that contains a nunchaku is a Maker win. We now show that the converse holds for the specific class of 3-uniform marked hyperforests. We can even give the exact value of $\tau_M(H)$ for any Maker win $H$ in that class (recall that $\tau_M(H)$ is defined as the minimum number of rounds in which Maker can ensure to get a fully marked edge when playing on $H$): the forcing strategy turns out to be suboptimal, as Maker can actually win on a nunchaku in a number of rounds that is logarithmic in the number of vertices rather than linear.

\begin{notation}
	Let $H$ be a marked hypergraph. We denote by $L(H)$ the length of a shortest nunchaku in $H$. If $H$ contains no nunchakus, then $L(H)=\infty$ by convention.
\end{notation}

\begin{lemme}\label{lemma_forest}
	Let $H$ be a 3-uniform marked hyperforest with no fully marked edges. Then $H$ is a Maker win if and only if $H$ contains a nunchaku. Moreover, if $H$ is a Maker win, then $\tau_M(H)=1+\lceil \log_2(L(H)) \rceil$.
\end{lemme}

\begin{proof}

	The case where $H$ contains a nunchaku of length 1 is obvious:
	
	\begin{claim}\label{claim_forest0}
		Let $H$ be a 3-uniform marked hyperforest with no fully marked edges, and suppose $L(H)=1$. Then $H$ is a trivial Maker win and $\tau_M(H)=1=1+\lceil \log_2(L(H)) \rceil$.
	\end{claim}
	\begin{proofclaim}[Proof of Claim \ref{claim_forest0}]
		This is obvious since a nunchaku of length 1 consists of a single edge, which contains exactly one non-marked vertex.
	\end{proofclaim}
	
	\noindent When there exists a nunchaku of length at least 2, Maker can use a "dichotomy strategy" to halve the length of a shortest nunchaku each round, until she gets one of length 1 (which is a trivial Maker win):
	
	\begin{claim}\label{claim_forest1}
		Let $H$ be a 3-uniform marked hypergraph such that $2 \leq L(H) < \infty$. Then there exists $x \in V(H)\setminus M(H)$ such that, for all $y \in V(H^{+x})\setminus M(H^{+x})$, we have $L(H^{+x-y})\leq \left\lceil\frac{L(H)}{2}\right\rceil$.
	\end{claim}
	\begin{proofclaim}[Proof of Claim \ref{claim_forest1}]
		Let $N$ be a shortest nunchaku in $H$. Let $x \in \inn(N)$ be in the exact middle of $N$ if $N$ is of even length, or as close to the middle as possible if $N$ is of odd length. By picking $x$, Maker creates two nunchakus of length at most $\left\lceil\frac{L(H)}{2}\right\rceil$ whose sole common vertex is $x$, so Breaker's answer $y$ cannot be contained in both of them at once. Therefore, at least one of these two nunchakus will be present in $H^{+x-y}$.
	\end{proofclaim}
	
	\noindent Meanwhile, Breaker has a strategy ensuring that, if there exists a nunchaku at the beginning of a round, then the length of a shortest nunchaku has not been more than halved after the round, and if there are no nunchakus before a round, then there are still none after the round:

	\begin{claim}\label{claim_forest2}
		Let $H$ be a 3-uniform marked hyperforest. Then, for all $x \in V(H)\setminus M(H)$, there exists $y \in V(H^{+x})\setminus M(H^{+x})$ such that $L(H^{+x-y})\geq \left\lceil\frac{L(H)}{2}\right\rceil$ ($=\infty$ if $L(H)=\infty$).
	\end{claim}
	\begin{proofclaim}[Proof of Claim \ref{claim_forest2}]
		Let $x \in V(H) \setminus M(H)$. Note that, for any $y$, the nunchakus in $H^{+x-y}$ are exactly the nunchakus in $H^{+x}$ that do not contain $y$. Therefore, let $\N$ be the collection of all nunchakus in $H^{+x}$ whose length is less than $\left\lceil\frac{L(H)}{2}\right\rceil$: proving the claim comes down to showing the existence of some $y \in V(H^{+x})\setminus M(H^{+x})$ such that all elements of $\N$ contain $y$. We can assume $\N \neq \varnothing$, otherwise there is nothing to show.
		\\ First of all, notice that all elements of $\N$ are $x$-nunchakus. Indeed, if some element of $\N$ was not an $x$-nunchaku, i.e., did not contain $x$, then it would be a nunchaku in $H$, which is impossible since it is of length less than $\left\lceil\frac{L(H)}{2}\right\rceil \leq L(H)$. Therefore, let $N_x \in \N$: we know $N_x$ is an $xm$-nunchaku for some $m \in M(H)$. Let $y =o(x,\ora{xN_xm})$, which is non-marked since $M(N_x)=\{x,m\}$. We now show that all elements of $\N$ contain $y$. Suppose for a contradiction that there exists $N'_x \in \N$ such that $y \not\in V(N'_x)$: we know $N'_x$ is an $xm'$-nunchaku for some $m' \in M(H)$.
		\begin{itemize}[wide,noitemsep,nolistsep]
			\item Suppose $V(N_x) \cap V(N'_x) \neq \{x\}$. Since $y \not\in V(N'_x)$, we have $\Start(\ora{xN_xm}) \neq \Start(\ora{xN'_xm'})$, therefore Union Lemma \ref{Lemma2} ensures that $N_x \cup N'_x$ contains an $x$-cycle or an $m$-tadpole. Both possibilities contradict the fact that $H$ is a hyperforest.
			\item Suppose $V(N_x) \cap V(N'_x) = \{x\}$. Then $N_x \cup N'_x$ is an $mm'$-chain in $H^{+x}$ and $M(N_x \cup N'_x)=\{m,m',x\}$. Let $N$ be the same as $N_x \cup N'_x$ except that $x$ is non-marked: since $N_x \cup N'_x$ is a subhypergraph of $H^{+x}$, $N$ is a subhypergraph of $H$. Therefore $N$ is an $mm'$-nunchaku in $H$, of length equal to the sum of the lengths of $N_x$ and $N'_x$. By definition of $\N$, $N_x$ and $N'_x$ are both of length less than $\left\lceil\frac{L(H)}{2}\right\rceil$, therefore $N$ is of length less than $L(H)$, contradicting the definition of $L(H)$. \qedhere
		\end{itemize}
	\end{proofclaim}
	\noindent We now have all the elements to prove the theorem by induction on $|V(H) \setminus M(H)|$.
    \begin{itemize}[leftmargin=*]
        \item For the base case, let $H$ be a 3-uniform marked hyperforest with no fully marked edges such that $|V(H) \setminus M(H)| \leq 1$. If $H$ contains a nunchaku, then it is necessarily of length 1, so Claim \ref{claim_forest0} concludes. If $H$ contains no nunchakus, then in particular $H$ contains no nunchakus of length 1: since $H$ has no fully marked edges, this means $H$ is not a trivial Maker win, so $H$ is a Breaker win.
        \item For the induction step, let $H$ be a 3-uniform marked hyperforest with no fully marked edges such that $|V(H) \setminus M(H)| \geq 2$, and assume that the theorem is true for all 3-uniform marked hyperforests with less than $|V(H) \setminus M(H)|$ non-marked vertices. Since Claim \ref{claim_forest0} concludes if $L(H)=1$, also assume $L(H)>1$: this ensures that $H$ is not a trivial Maker win and that there cannot be a fully marked edge after one round. Therefore, by the induction hypothesis:
	$$ \tau_M(H) = 1+\underset{x \in V(H) \setminus M(H)}{\min}\,\,\underset{y \in V(H^{+x}) \setminus M(H^{+x})}{\max}\,\,(1+\lceil \log_2(L(H^{+x-y})) \rceil),$$
	with the convention that $1+\lceil \log_2(L(H^{+x-y})) \rceil=\infty$ if $L(H^{+x-y})=\infty$.
	\begin{itemize}
		\item[--] Firstly, suppose $H$ contains a nunchaku, i.e., $2 \leq L(H)<\infty$. By Claims \ref{claim_forest1} and \ref{claim_forest2} respectively, the above equality yields $\tau_M(H) \leq 1+ \left( 1+\left\lceil \log_2\left(\left\lceil\frac{L(H)}{2}\right\rceil\right) \right\rceil \right)$ and $\tau_M(H) \geq 1+ \left( 1+\left\lceil \log_2\left(\left\lceil\frac{L(H)}{2}\right\rceil\right) \right\rceil \right)$. Therefore $\tau_M(H) = 1+ \left( 1+\left\lceil \log_2\left(\left\lceil\frac{L(H)}{2}\right\rceil\right) \right\rceil \right) = 1+\lceil \log_2(L(H)) \rceil < \infty$, so $H$ is a Maker win.
		\item[--] Finally, suppose $H$ contains no nunchakus, i.e., $L(H)=\infty$. By Claim \ref{claim_forest2}: for all $x \in V(H)\setminus M(H)$, there exists $y \in V(H^{+x})\setminus M(H^{+x})$ such that $L(H^{+x-y})=\infty$. Therefore $\tau_M(H)=\infty$, so $H$ is a Breaker win. This ends the proof. \qedhere
	\end{itemize}
    \end{itemize}
\end{proof}

\noindent We now recall the statement of Theorem \ref{theo_main_structure3} and prove it.

\begin{theoreme*}[\ref{theo_main_structure3}]
	Let $H$ be a 3-uniform marked hyperforest that is not a trivial Maker win, with $|V(H) \setminus M(H)| \geq 2$. Then $H$ is a Breaker win if and only if $J(\D_0,H)$ holds.
\end{theoreme*}

\begin{proof}
	Recall that the "only if" direction is automatic by Proposition \ref{prop_cn}. Suppose that $H$ is a Maker win. Since $H$ is not a trivial Maker win, we know $H$ has no fully marked edges, so Lemma \ref{lemma_forest} ensures that $H$ contains a nunchaku $N$. Again, since $H$ is not a trivial Maker win, $N$ is of length at least 2. Let $x \in \inn(N)$, so that $N$ is the union of two $x$-snakes $S$ and $S'$ such that: $|M(S)|=1$, $|M(S')|=1$, and $V(S) \cap V(S')=\{x\}$. Therefore, $\{S,S'\}$ is a $\D_0$-fork at $x$ in $N$, so $J(\D_0,N)$ does not hold and neither does $J(\D_0,H)$.
\end{proof}

\subsection{Proofs of Theorems \ref{theo_main_structure2}--\ref{theo_main_structure1}--\ref{theo_main_algo}--\ref{theo_main_duration} assuming Lemma \ref{lemma_main_structure}}

\noindent For now, we assume the following structural lemma, which will be proved later in this section. The remaining main results ensue quite easily.

\begin{lemme}\label{lemma_main_structure}
	Let $H$ be a 3-uniform marked hypergraph that is not a trivial Maker win, with $|V(H) \setminus M(H)| \geq 2$. Suppose that $J(\D_1,H)$ holds. Then, for any $x \in V(H) \setminus M(H)$ such that there exists an $x$-snake in $H$, we have $\I{H^{+x}}{x\D_2(H)} \neq \varnothing$.
\end{lemme}

\subsubsection{Structural results}

\noindent  We recall the statements of Theorems \ref{theo_main_structure2} and \ref{theo_main_structure1}, and prove both of them. We start with the latter, as it is used to prove the former.

\begin{theoreme*}[\ref{theo_main_structure1}]
	Let $H$ be a 3-uniform marked hypergraph that is not a trivial Maker win, with $|V(H) \setminus M(H)| \geq 2$. Then $H$ is a Breaker win if and only if $J(\D_2,H)$ holds. More precisely:
	\begin{enumerate}[noitemsep,nolistsep,label={\textup{(\roman*)}}]
		\item If $J(\D_2,H)$ does not hold, then $H$ is a Maker win and: any $x_1 \in V(H) \setminus M(H)$ such that $\I{H^{+x_1}}{x_1\D_2(H)}=\varnothing$ is a winning first pick for Maker. \label{item_Maker}
		\item If $J(\D_2,H)$ holds, then $H$ is a Breaker win and: for any first pick $x_1 \in V(H) \setminus M(H)$ of Maker, any $y_1 \in \I{H^{+x_1}}{x_1\D_2(H)}$ is a winning answer for Breaker. \label{item_Breaker}
	\end{enumerate}
	Moreover, $H$ is a Maker win if and only if Maker has a strategy ensuring that there is a nunchaku or a necklace in the updated marked hypergraph obtained after at most three full rounds of play on $H$.
\end{theoreme*}

\begin{proof}[Proof of Theorem \ref{theo_main_structure1} assuming Lemma \ref{lemma_main_structure}]
	
	Item \ref{item_Maker} is a direct consequence of Proposition \ref{prop_cn}. We now show item \ref{item_Breaker} by induction on $|V(H) \setminus M(H)|$.
	
	\begin{itemize}[leftmargin=*]
	
	\item Let us start with the base case $|V(H) \setminus M(H)| \in \{2,3\}$. Let $x_1 \in V(H) \setminus M(H)$ and $y_1 \in \I{H^{+x_1}}{x_1\D_2(H)}$, which exists since $J(\D_2,H)$ holds. The trivial danger of size 3 is in $\D_0 \subseteq \D_2$ (snake of length 1), so any edge of $H$ that contains both $x_1$ and some $m \in M(H)$ must have $y_1$ as its third vertex. Therefore, as $H$ is not a trivial Maker win, $H^{+x_1-y_1}$ is not a trivial Maker win either. Since $|V(H^{+x_1-y_1}) \setminus M(H^{+x_1-y_1})|\leq 1$, this means $H^{+x_1-y_1}$ is a Breaker win, so $H$ is a Breaker win.
	\item For the induction step, assume $|V(H) \setminus M(H)| \geq 4$ and item \ref{item_Breaker} to be true for marked hypergraphs with less non-marked vertices than $H$. Let $x_1 \in V(H) \setminus M(H)$ and $y_1 \in \I{H^{+x_1}}{x_1\D_2(H)}$, which exists since $J(\D_2,H)$ holds: we must show that $H^{+x_1-y_1}$ is a Breaker win. Let us first list a few important properties of $H^{+x_1-y_1}$:
	\begin{enumerate}[noitemsep,nolistsep,label={\textup{(\alph*)}}]
		\item $|V(H^{+x_1-y_1}) \setminus M(H^{+x_1-y_1})| = |V(H) \setminus M(H)| - 2 \geq 2$. \label{property_a}
		\item $H^{+x_1-y_1}$ is not a trivial Maker win. Just like in the base case, this comes from the fact that $H$ is not a trivial Maker win and the trivial danger of size 3 is in $\D_2$. \label{property_b}
		\item $J(\D_1,H^{+x_1-y_1})$ holds. Indeed, $J(\D_1,H)$ holds (because $J(\D_2,H)$ holds) and $y_1 \in \I{H^{+x_1}}{x_1\D_2(H)}$, so Proposition \ref{prop_equiv_dangers2} applies and ensures that $J(\D_1,H^{+x_1-y_1})$ holds. \label{property_c}
	\end{enumerate}
	Thanks to \ref{property_a} and \ref{property_b}, checking that property $J(\D_2,H^{+x_1-y_1})$ holds is sufficient to prove that $H^{+x_1-y_1}$ is a Breaker win, according to the induction hypothesis. Let $x \in V(H^{+x_1-y_1}) \setminus M(H^{+x_1-y_1})$: we want to show that $\I{(H^{+x_1-y_1})^{+x}}{x\D_2(H^{+x_1-y_1})} \neq \varnothing$. Assume that there exists some $D_0 \in x\D_2(H^{+x_1-y_1})$, otherwise $\I{(H^{+x_1-y_1})^{+x}}{x\D_2(H^{+x_1-y_1})}=\I{(H^{+x_1-y_1})^{+x}}{\varnothing}=V((H^{+x_1-y_1})^{+x}) \setminus M((H^{+x_1-y_1})^{+x}) \neq \varnothing$ trivially as $|V(H)\setminus M(H)|\geq 4$.
	
	\begin{enumerate}[label={\arabic*)}]
	
		\item First case: there are no $xx_1$-snakes in $H^{+x_1-y_1}$.
			\\ What happens here is that any vertex that hits all the $\D_2$-dangers at $x$ in $H$ still works in $H^{+x_1-y_1}$, because the marking of $x_1$ has not created any new $\D_2$-danger at $x$. Indeed, for all $D \in x\D_2(H^{+x_1-y_1})$:
			\begin{itemize}
				\item[--] If $(D,x) \in \D_1$ and $D$ is an $x$-snake, then $x_1 \not\in V(D)$ since we are assuming that there are no $xx_1$-snakes in $H^{+x_1-y_1}$.
				\item[--] If $(D,x) \in \D_1$ and $D$ is an $x$-tadpole, then $x_1 \not\in V(D)$ since $M(D)=\varnothing$.
				\item[--] If $(D,x) \in \widehat{\D_1}$, then $x_1 \not\in V(D)$ since $M(D)=\varnothing$ by Proposition \ref{prop_nomarkedvertex} (which \ref{property_c} allows us to use).
			\end{itemize}
			Therefore, $x\D_2(H^{+x_1-y_1}) \subseteq x\D_2(H^{-x_1-y_1}) \subseteq x\D_2(H)$. Let $y \in \I{H^{+x}}{x\D_2(H)}$, which exists since $J(\D_2,H)$ holds. To show that $y \in \I{(H^{+x_1-y_1})^{+x}}{x\D_2(H^{+x_1-y_1})}$, since $x\D_2(H^{+x_1-y_1}) \subseteq x\D_2(H)$, it suffices to check that $y \not\in\{x_1,y_1\}$. For this, we use $D_0$. On the one hand, we have $D_0 \in x\D_2(H^{+x_1-y_1}) \subseteq x\D_2(H)$ hence $y \in V(D_0)$. On the other hand, we have $D_0 \in x\D_2(H^{+x_1-y_1}) \subseteq x\D_2(H^{-x_1-y_1})$ hence $x_1,y_1 \not\in V(D_0)$. In conclusion, we do have $y \not\in\{x_1,y_1\}$, so $y \in \I{(H^{+x_1-y_1})^{+x}}{x\D_2(H^{+x_1-y_1})}$ hence $\I{(H^{+x_1-y_1})^{+x}}{x\D_2(H^{+x_1-y_1})} \neq \varnothing$.
			
		\item Second case: there is an $xx_1$-snake in $H^{+x_1-y_1}$.
			\\ Here, we have the $x$-snake that is necessary to apply Lemma \ref{lemma_main_structure} to $H^{+x_1-y_1}$. The other assumptions of this lemma are also verified thanks to \ref{property_a}, \ref{property_b} and \ref{property_c}. In conclusion, Lemma \ref{lemma_main_structure} applies and yields $\I{(H^{+x_1-y_1})^{+x}}{x\D_2(H^{+x_1-y_1})} \neq \varnothing$ as desired.
	\end{enumerate}
	
	\end{itemize}
	As for the final assertion of the theorem about the appearance of a nunchaku or a necklace, the "if" direction is clear since nunchakus and necklaces are Maker wins (Proposition \ref{prop_forcing}), while the "only if" direction follows from item \ref{item_Breaker} and Proposition \ref{prop_interpretation}.
\end{proof}

\begin{theoreme*}[\ref{theo_main_structure2}]
	Let $H$ be a 3-uniform marked hypergraph that is not a trivial Maker win, with $|V(H) \setminus M(H)| \geq 2$. Suppose that, for any $x \in V(H) \setminus M(H)$, there exists an $x$-snake in $H$. Then $H$ is a Breaker win if and only if $J(\D_1,H)$ holds. Moreover, $H$ is a Maker win if and only if Maker has a strategy ensuring that there is a nunchaku or a necklace in the updated marked hypergraph obtained after at most two full rounds of play on $H$.
\end{theoreme*}
	
\begin{proof}[Proof of Theorem \ref{theo_main_structure2} assuming Lemma \ref{lemma_main_structure}]
	If $H$ is a Breaker win, then $J(\D_1,H)$ holds by Proposition \ref{prop_cn}. Conversely, assume that $J(\D_1,H)$ holds. We claim that, actually, $J(\D_2,H)$ holds: indeed, for all $x \in V(H) \setminus M(H)$, there exists an $x$-snake in $H$ by assumption hence $\I{H^{+x}}{x\D_2(H)} \neq \varnothing$ by Lemma \ref{lemma_main_structure}. Therefore, $H$ is a Breaker win according to Theorem \ref{theo_main_structure1}. As for the final assertion of the theorem about the appearance of a nunchaku or a necklace, the "if" direction is clear since nunchakus and necklaces are Maker wins (Proposition \ref{prop_forcing}), while the "only if" direction follows from Proposition \ref{prop_interpretation}.
\end{proof}

\subsubsection{Algorithmic result}

\noindent The algorithm derives from the reduction, given by Theorem \ref{theo_main_structure1}, of the \makerbreaker problem on 3-uniform marked hypergraphs to the problem of the existence of a chain between two given vertices in a 3-uniform hypergraph. In a separate paper, we have shown that the latter is a tractable problem.

\begin{definition}\label{def:LCC}
	Let $H$ be a 3-uniform hypergraph and let $a \in V(H)$. The \textit{linear connected component of $a$ in $H$} is defined as the set $LCC_H(a)$ of all vertices $b$ of $H$ such that there exists an $ab$-chain in $H$.
\end{definition}

\begin{theoreme}[\textup{\cite[Theorem 4.2 with $k=3$]{GGS22}}]\label{theo_archipelagos}
	There exists an algorithm that, given a 3-uniform hypergraph $H$ and a vertex $x \in V(H)$, computes $LCC_H(x)$ in $O(m^2)$ time where $m=|E(H)|$.
\end{theoreme}

\noindent We now recall the statement of Theorem \ref{theo_main_algo} and prove it.

\begin{theoreme*}[\ref{theo_main_algo}]
	\makerbreaker is solved in polynomial time on 3-uniform marked hypergraphs, i.e., hypergraphs of rank 3. More precisely, there exists an algorithm which, given a hypergraph $H$ of rank 3 with $n$ vertices, $m$ edges and maximum degree $\Delta$, decides whether $H$ is a Maker win in time $O(\max(n^5m^2,n^6\Delta))$.
\end{theoreme*}

\begin{proof}[Proof of Theorem \ref{theo_main_algo} assuming Lemma \ref{lemma_main_structure}]
	First of all, let us transform the input (non-marked) hypergraph of rank 3 into a (still non-marked) 3-uniform hypergraph. We can assume that all edges are of size 2 or 3, otherwise we have a trivial Maker win. Using Proposition \ref{prop_duplication}, we then transform each edge of size 2 into two edges of size 3. At most, we have added two vertices and doubled the number of edges, so all relevant orders of magnitude are preserved. All in all, up to a preprocessing step in time $O(m)$, we can assume that $H$ is a non-marked 3-uniform hypergraph.
	\\ Since $\textsc{MakerBreaker}$ is trivially solved in time $O(1)$ on hypergraphs with less than 6 vertices, further assume $|V(H)| \geq 6$. By Theorem \ref{theo_main_structure1}, $H$ is a Maker win if and only if:
	\begin{center}
	\begin{tabular}{lll}
		$\exists x_1 \in V(H)$, & $\forall y_1 \in V(H) \setminus \{x_1\}$, & \hphantom{mmmmmm} \\
		$\exists x_2 \in V(H) \setminus \{x_1,y_1\}$, & $\forall y_2 \in V(H) \setminus \{x_1,y_1,x_2\}$, & \hphantom{mmmmmm} \\
		$\exists x_3 \in V(H) \setminus \{x_1,y_1,x_2,y_2\}$, & $\forall y_3 \in V(H) \setminus \{x_1,y_1,x_2,y_2,x_3\}$, & \hphantom{mmmmmm}
	\end{tabular} \\ $H^{+x_1-y_1+x_2-y_2+x_3-y_3}$ contains a fully marked edge, a nunchaku or a necklace.
	\end{center}
	Our algorithm checks whether the above assertion is true, as follows. Suppose that we are given $x_1,y_1,x_2,y_2,x_3,y_3$, and consider the marked hypergraph $H^{+x_1-y_1+x_2-y_2+x_3-y_3}$, in which $x_1,x_2,x_3$ are the only marked vertices:
	\begin{itemize}[noitemsep,nolistsep]
		\item Clearly, $H^{+x_1-y_1+x_2-y_2+x_3-y_3}$ contains a fully marked edge or a nunchaku if and only if it contains a chain between two marked vertices. By Theorem \ref{theo_archipelagos}, this can be tested in time $O(m^2)$: for all $i \in \{1,2\}$, compute $LCC_{H^{-y_1-y_2-y_3}}(x_i)$ and check whether it contains $x_j$ for some $j \in \{2,3\} \setminus \{i\}$. We note that membership to a given linear connected component $L$ is verifiable in constant time: this can be done by implementing $L$ as an array of length $n$, with indices corresponding to the vertices of $H$, where the element at index $i$ is a 1 if the corresponding vertex is in $L$ or a 0 otherwise.
		\item If $H^{+x_1-y_1+x_2-y_2+x_3-y_3}$ contains no fully marked edges and no nunchakus, then it contains a necklace if and only if it contains some edge $e=\{x,a,b\}$ with $x$ marked such that there exists an $xb$-chain $P$ that does not contain $a$ (the necklace is then $P \cup e$). This can be tested in time $O(\Delta m^2)$: for all $i \in \{1,2,3\}$ and for every edge $\{x_i,a,b\}$, compute $LCC_{H^{-y_1-y_2-y_3-a}}(x_i)$ and check whether it contains $b$, then repeat when exchanging the roles of $a$ and $b$.
	\end{itemize}
	This yields an algorithm in time $O(n^6(m^2+\Delta m^2))=O(n^6\Delta m^2)$. However, we can easily apply a slight improvement to it. Indeed, since the linear connected components that we need only depend on four out of the six vertices $x_1,y_1,x_2,y_2,x_3,y_3$, computing them inside of the six nested loops leads to some redundancies. Instead, it is better to compute all of them once and for all at the beginning of the algorithm, store them all (using nested arrays) and then access them when needed. Computing $LCC_{H^{-y_1-y_2-y_3}}(x)$ and $LCC_{H^{-y_1-y_2-y_3-a}}(x)$ for all possibilities of $y_1,y_2,y_3,a,x$ is done in time $O(n^5m^2)$, so the final running time is $O(n^5m^2+n^6\Delta)$.
\end{proof}

\subsubsection{"Fast-winning Maker strategy" result}

\begin{lemme}\label{lemma_duration}
    If $N$ is a nunchaku or a necklace of length $L$, then $\tau_M(N) = 1 + \lceil \log_2(L) \rceil$.
\end{lemme}

\begin{proof}
    The statement for nunchakus is a direct corollary of Lemma \ref{lemma_forest}. The statement for necklaces ensues since applying the reduction from Proposition \ref{prop_reduction}, which clearly preserves $\tau_M$, gives the same hypergraph for both (recall Figure \ref{Nunchaku0}).
\end{proof}

\noindent We recall the statement of Theorem \ref{theo_main_duration} and prove it.

\begin{theoreme*}[\ref{theo_main_duration}]
	Let $H$ be a 3-uniform marked hypergraph with $|V(H)\setminus M(H)|\geq 6$. If $H$ is a Maker win, then $\tau_M(H) \leq 3+\lceil \log_2(|V(H)\setminus M(H)|-5)\rceil$.
\end{theoreme*}

\begin{proof}[Proof of Theorem \ref{theo_main_duration} assuming Lemma \ref{lemma_main_structure}]
Suppose that $H$ is a Maker win. By Theorem \ref{theo_main_structure1}, Maker has a strategy ensuring that, after three rounds of play with successive picks $x_1,y_1,x_2,y_2,x_3,y_3$, there is a fully marked edge, a nunchaku or a necklace in $H^{+x_1-y_1+x_2-y_2+x_3-y_3}$. By Proposition \ref{prop_subwin} (Subhypergraph Monotonicity), to conclude the proof, it suffices to show that any nunchaku or necklace $N$ in $H^{+x_1-y_1+x_2-y_2+x_3-y_3}$ satisfies $\tau_M(N) \leq \lceil \log_2(|V(H) \setminus M(H)|-5)\rceil$.
	\begin{itemize}[wide,noitemsep,nolistsep]
		\item A nunchaku $N$ is of length $\frac{|V(N)|-1}{2}$, so it satisfies $\tau_M(N)=1+\left\lceil \log_2 \left(\frac{|V(N)|-1}{2}\right) \right\rceil = \lceil \log_2(|V(N)|-1) \rceil$ by Lemma \ref{lemma_duration}. Moreover, a nunchaku $N$ in $H^{+x_1-y_1+x_2-y_2+x_3-y_3}$ has two marked vertices and all its other vertices are in $V(H) \setminus (M(H) \cup \{x_1,y_1,x_2,y_2,x_3,y_3\})$, so it satisfies $|V(N)|\leq 2+(|V(H) \setminus M(H)|-6)$ hence $\tau_M(N) \leq \lceil \log_2(|V(H) \setminus M(H)|-5)\rceil$.
		\item A necklace $N$ is of length $\frac{|V(N)|}{2}$, so it satisfies $\tau_M(N)= \lceil \log_2(|V(N)|) \rceil$ by Lemma \ref{lemma_duration}. Moreover, a necklace $N$ in $H^{+x_1-y_1+x_2-y_2+x_3-y_3}$ has one marked vertex and all its other vertices are in $V(H) \setminus (M(H) \cup \{x_1,y_1,x_2,y_2,x_3,y_3\})$, so it satisfies $|V(N)|\leq 1+(|V(H) \setminus M(H)|-6)$ hence $\tau_M(N) \leq \lceil \log_2(|V(H) \setminus M(H)|-5)\rceil$. \qedhere
	\end{itemize}
\end{proof}

\subsection{Proof of Lemma \ref{lemma_main_structure}}

\noindent As we have just seen, all our main results will be proved once Lemma \ref{lemma_main_structure} is. This lemma basically tells us that, under the premise that $J(\D_1,H)$ holds, any $x$ at which there is a snake not only has no $\D_1$-forks (which is a given) but actually has no $\D_2$-forks either. Let us recall its full statement.

\begin{lemme*}[\ref{lemma_main_structure}]
	Let $H$ be a 3-uniform marked hypergraph that is not a trivial Maker win, with $|V(H) \setminus M(H)| \geq 2$. Suppose that $J(\D_1,H)$ holds. Then, for any $x \in V(H) \setminus M(H)$ such that there exists an $x$-snake in $H$, we have $\I{H^{+x}}{x\D_2(H)} \neq \varnothing$.
\end{lemme*}

\noindent We now prove this lemma. Let $H$ be a 3-uniform marked hypergraph that is not a trivial Maker win, with $|V(H) \setminus M(H)| \geq 2$, and suppose that $J(\D_1,H)$ holds. Let $x \in V(H) \setminus M(H)$ and $m \in M(H)$ such that there exists an $xm$-snake in $H$: we want to find some $y \in \I{H^{+x}}{x\D_2(H)}$. Since $J(\D_1,H)$ holds, we already know that $\I{H^{+x}}{x\D_1(H)} \neq \varnothing$, however picking an arbitrary $y \in \I{H^{+x}}{x\D_1(H)}$ does not work in general, as shown in Figure \ref{BadChoices}. In this example, we can see that $H$ satisfies the conditions of Lemma \ref{lemma_main_structure}, and that the only $\D_1$-dangers at $x$ in $H$ are two $xm$-snakes whose intersection $\I{H^{+x}}{x\D_1(H)}$ is represented by the vertices in square boxes. We can see that several of them are not in $\I{H^{+x}}{x\D_2(H)}$, because they miss $D$ which is a $\widehat{\D_1}$-danger at $x$: this is the case for the vertex $y'$ for instance.

\begin{figure}[h]
	\centering
	\includegraphics[scale=.58]{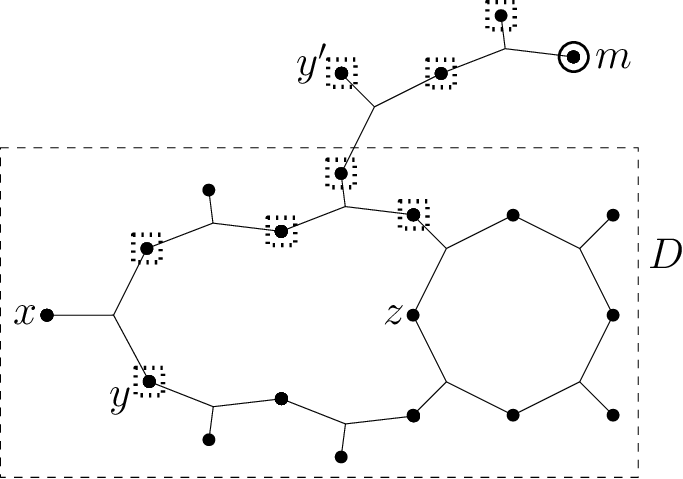}
	\caption{In this example, we have $y \in \I{H^{+x}}{x\D_2(H)}$, but $y' \in \I{H^{+x}}{x\D_1(H)} \setminus \I{H^{+x}}{x\D_2(H)}$ since $y' \not\in V(D)$. Square dotted boxes highlight the vertices in $\I{H^{+x}}{x\D_1(H)}$.}\label{BadChoices}
\end{figure}

\noindent This inspires us to choose $y \in \I{H^{+x}}{x\D_1(H)}$ furthest away from $m$, in terms of the following semimetric (like in Figure \ref{BadChoices} for example):

\begin{notation}\label{not:dist}
	For all $a,b \in V(H)$, we denote by $\dist_H(a,b)$ the length of a shortest $ab$-chain in $H$, where $\dist_H(a,b)=\infty$ by convention if there exist none.
\end{notation}

\noindent We now fix $y \in \I{H^{+x}}{x\D_1(H)}$ maximizing $\dist_H(y,m)$, and we suppose for a contradiction that $y \not\in \I{H^{+x}}{x\D_2(H)}$: there exists $D \in x\widehat{\D_1}(H)$, with a maximal decomposition $(z , \F_D = \S_z \cup \T_z \cup \P_{zx})$, such that $y \not\in V(D)$. The idea of the proof is to eventually exhibit a vertex $w \in V(D)$ such that $\I{H^{+w}}{w\D_1(H)}=\varnothing$, contradicting the fact that $J(\D_1,H)$ holds.

\subsubsection{Preliminary statements}

\noindent Since $H$ is not a trivial Maker win and $J(\D_1,H)$ holds, all results from Section \ref{Section4} apply to $D$. In particular:

\begin{proposition}
	$D$ has the following properties:
	 \begin{itemize}[noitemsep,nolistsep]
		\item $M(D)=\varnothing$. In particular, $m \not\in V(D)$.
		\item $\F_D=\T_z \cup \P_{zx}$.
		\item There is exactly one edge of $D$ that is incident to $x$: we call it $e_x$.
	\end{itemize}
\end{proposition}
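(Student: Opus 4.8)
The plan is to read off all three items from the structural results of Section~\ref{Section4}, the only genuine step being to check that those results apply to $D$. Concretely: $M(D)=\varnothing$ and the formula $\O_{x,z}(D)=z\T(D^{-x})\cup\P_{zx}(D)$ are exactly the two assertions of Proposition~\ref{prop_nomarkedvertex}, while the existence of a unique edge $e_x$ through $x$ is Proposition~\ref{prop_degree1} (equivalently, the first assertion of Lemma~\ref{Lemma6} applied with $H=D$). Both of these require a hypothesis on $D$ — namely $J_1(\D_1,D)$ for the former and $\I{D^{+z}}{z\D_1(D)}\neq\varnothing$ for the latter — so the crux is to produce these from the standing assumption $J_1(\D_1,H)$.

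First I would note that $D$, being a $\obsr{\D_1}$-danger at $x$ in $H$, is a subhypergraph of $H$ containing $x$; moreover $z$ is non-marked and distinct from $x$ by the very definition of a $\D_1$-dangerous vertex, so $\{x,z\}\subseteq V(D)\setminus M(D)$ and hence $|V(D)\setminus M(D)|\geq 2$. The subhypergraph-monotonicity of property $J_1$ established in Section~\ref{Section1} then upgrades $J_1(\D_1,H)$ to $J_1(\D_1,D)$. Instantiating the definition of $J_1(\D_1,D)$ at the vertex $z$ immediately yields $\I{D^{+z}}{z\D_1(D)}\neq\varnothing$ as well.

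With these two facts secured, the rest is assembly. Proposition~\ref{prop_nomarkedvertex} applied to $(D,x)\in\obsr{\D_1}$ with $\D_1$-dangerous vertex $z$ gives $M(D)=\varnothing$ together with $\O_{x,z}(D)=z\T(D^{-x})\cup\P_{zx}(D)$; since $M(D)=V(D)\cap M(H)=\varnothing$ and $m\in M(H)$, this forces $m\notin V(D)$. Proposition~\ref{prop_degree1}, using $\I{D^{+z}}{z\D_1(D)}\neq\varnothing$, gives that $x$ has degree $1$ in $D$, and we name $e_x$ the unique edge of $D$ incident to $x$. I do not expect any real obstacle — the substance lives entirely in Section~\ref{Section4} — the only point needing care is the step $J_1(\D_1,H)\Rightarrow J_1(\D_1,D)$, which is why I would first verify the non-marked vertex count before invoking monotonicity.
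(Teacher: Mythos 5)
Your proposal is correct and follows essentially the same route as the paper, which simply cites Propositions~\ref{prop_nomarkedvertex} and~\ref{prop_degree1} after the blanket remark that ``all results from Section~\ref{Section4} apply to $D$.'' The one thing you do more carefully than the paper is make the transfer explicit: you verify $|V(D)\setminus M(D)|\geq 2$ (via $x,z\in V(D)\setminus M(D)$ with $x\neq z$) and then invoke the subhypergraph-monotonicity of $J_1$ to pass from $J_1(\D_1,H)$ to $J_1(\D_1,D)$, which is exactly the hypothesis those two propositions need on $D$ rather than on $H$; the paper leaves this step implicit.
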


\begin{proof}
	This is given by Propositions \ref{prop_nomarkedvertex} and \ref{prop_degree1}.
\end{proof}

\noindent The next two properties can be summed up as follows:
\begin{itemize}[noitemsep,nolistsep]
	\item[--] When following a chain starting from $m$, we cannot enter $D$ strictly before encountering $y$.
	\item[--] When following a chain starting from $x$ by an edge other than $e_x$, we cannot re-enter $D$ strictly before encountering $y$.
\end{itemize}

\begin{proposition}\label{prop_chemin1}
	Any $m$-chain $P_m$ in $H$ such that $V(P_m) \cap V(D) \neq \varnothing$ contains $y$.
\end{proposition}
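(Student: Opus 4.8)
The plan is to argue by contradiction using the extremal choice of $y$ as a vertex of $\I{H^{+x}}{x\D_1(H)}$ maximizing $\dist_H(y,m)$. Suppose some $m$-path $P_m$ in $H$ has $V(P_m) \cap V(D) \neq \varnothing$ but $y \notin V(P_m)$. First I would reduce to the case where $P_m$ is a projection: replace $P_m$ by $\proj{V(D)}{m}{P_m}$, so that the only edge of $P_m$ meeting $V(D)$ is $e^* \defeq \End(\fd{m\proj{V(D)}{m}{P_m}})$, and pick some $c \in e^* \setminus V(D)$ if $e^* \not\subseteq V(D)$ (the case $e^* \subseteq V(D)$, where $P_m$ already ends inside $D$, will need a slightly separate but easier treatment, since then the edge $e^*$ itself plays the role of the connecting edge and one can cut $P_m$ just before it). The idea is that $P_m$ gives a bridge from the marked vertex $m$ into the $\obsr{\D_1}$-danger $D$, and combining it with the internal structure of $D$ (Proposition \ref{prop_structure}, items \ref{item3}, \ref{item4}, \ref{item7}, together with Proposition \ref{prop_s} and the union lemmas \ref{Lemma5} and \ref{Lemma6}) should let me manufacture a new $xm$-snake or a new $\D_1$-danger at $x$ that avoids $y$ but reaches a vertex of $V(D)$ farther from $m$ than $y$ is — contradicting either the maximality of $\dist_H(y,m)$ or the fact that $y \in \I{H^{+x}}{x\D_1(H)}$.

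More precisely, the key step is to use Lemma \ref{Lemma5} applied in $H$ to the $\obsr{\D_1}$-danger $D$ (at $x$, dangerous vertex $z$, with $\I{H^{+z}}{z\D_1(H)} \neq \varnothing$ because $J_1(\D_1,H)$ holds) and the $c$-path obtained from $P_m$ after stripping the last edge: item $(ii)$ of that lemma gives either a $c$-tadpole or a $cx$-path inside $D \cup (\text{rest of }P_m)$. Prolonging back along $P_m$ toward $m$, a $cx$-path together with the $m$–$c$ portion of $P_m$ yields an $xm$-path, i.e. an $xm$-snake $S$ in $H$; since $J_1(\D_1,H)$ holds this snake contains $y$ by the definition of $\I{H^{+x}}{x\D_1(H)}$, so $y \in V(S)$. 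Now $S$ enters $D$ through the connecting edge $e^*$; following $S$ from $m$, the vertex $y$ lies on the $D$-side of $e^*$ (it is not on $P_m$ before $e^*$ and $y \notin V(D)$ only in the strict sense — wait, $y \notin V(D)$ by assumption, so $y$ is strictly beyond $e^*$ on the $x$-portion), which means $\dist_H(y,m)$ is at least the length of the $m$-to-$e^*$ part of $P_m$ plus one; but then a vertex of $e^* \cap V(D)$ sits strictly between $m$ and $y$ along $S$ — and I claim such a vertex of $V(D)$ can be shown to also lie in $\I{H^{+x}}{x\D_1(H)}$ (by re-running the same snake-construction argument anchored at it, or by noting it lies on every relevant $xm$-snake), which would contradict the maximality of $\dist_H(y,m)$. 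In the $c$-tadpole case one argues similarly, getting an $x$-tadpole in $H$ through $D$, which by Proposition \ref{prop_snakeinside} contains a $\D_1$-danger at $x$ and hence contains $y$, again forcing $y$ to be too close to $m$.

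The main obstacle I expect is the bookkeeping around \emph{which} vertex of $V(D) \cap V(S)$ to promote into $\I{H^{+x}}{x\D_1(H)}$ and verifying rigorously that it is strictly closer to $m$ than $y$ is — the definition of $\dist_H$ via shortest paths means one has to be careful that the $xm$-snake produced is not wastefully long, which is why working with projections (so that $P_m$ and the internal paths of $D$ are genuinely shortest-reaching) is essential, and Proposition \ref{prop_s}'s control on $\Start$ and $\End$ of the paths in $\P_{zx}(D)$ will be needed to pin down exactly where $S$ can enter $D$. A secondary subtlety is the degenerate cases (when $P_m$ has length $0$ or $1$, or when $e^* \subseteq V(D)$ so there is no vertex $c$ outside $D$), which should be handled by the gadget trick of Lemma \ref{Lemma6} (adding a dummy vertex $c$ and edge $\overline{e}$) exactly as in the proof of that lemma, noting that the gadget creates no new $\D_1$-danger at $z$ or at $x$ avoiding $y$.
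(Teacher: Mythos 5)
Your core idea matches the paper's: apply Lemma \ref{Lemma5} to $D$ and a path from $m$, obtain an $xm$-snake, observe that every $xm$-snake contains $y$, and conclude. But you significantly overcomplicate the argument and then lose the thread at exactly the point where it should close.

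First, there is no need to strip $P_m$ down to a projection or to pick a vertex $c \in e^* \setminus V(D)$ near the boundary: $m \notin V(D)$ is already guaranteed (Proposition \ref{prop_nomarkedvertex} gives $M(D)=\varnothing$, and $m$ is marked), so Lemma \ref{Lemma5}(ii) applies directly with $c=m$ and $P_c=P_m$. This gives either an $m$-tadpole or an $mx$-path inside $D \cup P_m$. Taking $c=m$ also dissolves your ``$c$-tadpole case'': it becomes an $m$-tadpole, which is forbidden outright in $H$ by Proposition \ref{prop_markedvertex}. Your version, with $c$ chosen near $D$, yields a $c$-tadpole that you then have to prolong back to $m$ and argue about, which is strictly harder and (as written, ``getting an $x$-tadpole'') not correct.

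Second, once you have the $xm$-snake $S \subseteq D \cup P_m$ containing $y$, you are done: $y \in V(S) \subseteq V(D) \cup V(P_m)$, while $y \notin V(D)$ by the standing choice of $D$, so $y \in V(P_m)$. There is no contradiction to be reached, no maximality of $\dist_H(y,m)$ to invoke, and no vertex of $e^*$ to ``promote'' into $\I{H^{+x}}{x\D_1(H)}$. You actually notice the relevant fact parenthetically (``$y \notin V(D)$ by assumption, so $y$ is strictly beyond $e^*$ on the $x$-portion''), but then instead of recognizing that this pins $y$ inside $P_m$, you spin up a distance-comparison subargument that you yourself admit you don't know how to complete. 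That subargument is not only unfinished, it is unnecessary. Drop the extremal choice of $y$, drop the projections and the gadget, set $c=m$, and the proof is three lines.
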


\begin{proof}
	Since $J(\D_1,H)$ holds and $m \not\in V(D)$, Union Lemma \ref{Lemma5} with $c=m$ ensures that $D \cup P_m$ contains an $m$-tadpole or an $mx$-chain (i.e., an $xm$-snake). There cannot be an $m$-tadpole in $H$ according to Proposition \ref{prop_markedvertex}, therefore $D \cup P_m$ contains an $xm$-snake. Since $y \in \I{H^{+x}}{x\D_1(H)}$, that $xm$-snake must contain $y$, moreover $y \not\in V(D)$ by assumption so $y \in V(P_m)$.
\end{proof}

\begin{proposition}\label{prop_chemin2}
	Any $x$-chain $P_x$ in $H$ such that $\Start(\ora{xP_x})\neq e_x$ and $V(P_x) \cap (V(D)\setminus \{x\}) \neq \varnothing$ contains $y$.
\end{proposition}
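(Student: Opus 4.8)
The plan is to follow the exact template of the proof of Proposition \ref{prop_chemin1}, but with Lemma \ref{Lemma6} playing the role that Lemma \ref{Lemma5} played there. First I would verify that the hypotheses of Lemma \ref{Lemma6} are met: $H$ is not a trivial Maker win, $x \in V(H) \setminus M(H)$, $D$ is a $\obsr{\D_1}$-danger at $x$ with $\D_1$-dangerous vertex $z$, and — crucially — $\I{H^{+z}}{z\D_1(H)} \neq \varnothing$, which holds because $J_1(\D_1,H)$ holds and $z$ is non-marked. The given path $P_x$ satisfies $\Start(\fd{xP_x}) \neq e_x$ and $V(P_x) \cap (V(D) \setminus \{x\}) \neq \varnothing$ by assumption, and the edge $e_x$ appearing in the statement coincides with the one produced by Lemma \ref{Lemma6}, both being characterized as the unique edge of $D$ incident to $x$. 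Hence Lemma \ref{Lemma6} applies and produces an $x$-snake or an $x$-tadpole $X \subseteq D \cup P_x$.

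Next I would invoke Proposition \ref{prop_snakeinside} with $u = x$, which is legitimate since $H$ is not a trivial Maker win: the $x$-snake or $x$-tadpole $X$ then contains a $\D_1$-danger $D'$ at $x$ in $H$. Since $y \in \I{H^{+x}}{x\D_1(H)}$, we get $y \in V(D') \subseteq V(X) \subseteq V(D) \cup V(P_x)$. Finally, because $y \not\in V(D)$ by the standing assumption, this forces $y \in V(P_x)$, which is precisely the desired conclusion. Note that the two cases (snake, tadpole) need not be separated, as Proposition \ref{prop_snakeinside} treats them uniformly.

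I do not expect any real obstacle: this statement is the direct analogue of Proposition \ref{prop_chemin1}, and all the substantive work has already been carried out inside Lemma \ref{Lemma6}. The only points requiring a moment of care are the two bookkeeping observations above — that $J_1(\D_1,H)$ supplies the nonemptiness hypothesis $\I{H^{+z}}{z\D_1(H)} \neq \varnothing$ needed by Lemma \ref{Lemma6}, and that the $e_x$ of the statement is the $e_x$ produced by that lemma — both of which are immediate.
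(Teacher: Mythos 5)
Your proof is correct and follows essentially the same approach as the paper; the paper's version is just more compressed, asserting directly that the $x$-snake or $x$-tadpole "contains $y$" where you spell out the appeal to Proposition \ref{prop_snakeinside} with $u=x$ and $u'=y$. Both use Lemma \ref{Lemma6} in exactly the same way, and your bookkeeping observations about $\I{H^{+z}}{z\D_1(H)}\neq\varnothing$ and the identification of $e_x$ match what the paper implicitly relies on.
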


\begin{proof}
	Since $J(\D_1,H)$ holds, we have $\I{H^{+z}}{z\D_1(H)} \neq \varnothing$, so Union Lemma \ref{Lemma6} ensures that $D \cup P_x$ contains an $x$-tadpole or an $x$-snake. In both cases, it contains $y$, moreover $y \not\in V(D)$ by assumption, so $y \in V(P_x)$.
\end{proof}

\noindent We now state a useful preliminary lemma:

\begin{lemme}\label{lemma_dist}
	Any $v \in V(D) \setminus \{x\}$ satisfies $\dist_H(v,m)\geq\dist_H(y,m)$, moreover:
	\begin{itemize}[noitemsep,nolistsep]
		\item If $\dist_H(v,m)>\dist_H(y,m)$, then there exists an $xm$-snake $S_{xm}^{\overline{v}}$ in $H$ that does not contain $v$.
		\item If $\dist_H(v,m)=\dist_H(y,m)$, then any shortest $vm$-snake $S_{vm}$ in $H$ satisfies $V(S_{vm}) \cap V(D)=\{v\}$ and $o(v,\ora{vS_{vm}m})=y$, moreover there are no $v$-tadpoles in $D$.
	\end{itemize}
\end{lemme}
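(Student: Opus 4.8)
The plan is to treat the three assertions separately, with Propositions~\ref{prop_chemin1} and~\ref{prop_chemin2} — any $m$-path meeting $V(D)$, and any $x$-path re-entering $V(D)\setminus\{x\}$ through an edge other than $e_x$, must pass through $y$ — and the maximality of $\dist_H(y,m)$ as the only substantive inputs. One elementary observation will be reused throughout: if $(v,e_1,\dots,e_L,m)$ represents a shortest $vm$-path $Q$ and a vertex $y\neq v$ lies on it, then the (unique) $ym$-subpath of $Q$ has length $L$ exactly when $y=o(v,\fd{vQm})$, and length at most $L-1$ for every other position of $y$ among the $e_i$; I will call this the rigidity observation, and it follows from a routine inspection of Proposition~\ref{prop_subpath1}.

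For the inequality $\dist_H(v,m)\geq\dist_H(y,m)$, I may assume $\dist_H(v,m)<\infty$ and pick a shortest $vm$-path $P_{vm}$. As $v\in V(P_{vm})\cap V(D)$, Proposition~\ref{prop_chemin1} forces $y\in V(P_{vm})$, so the $ym$-subpath of $P_{vm}$ has length at most $\dist_H(v,m)$, whence $\dist_H(y,m)\leq\dist_H(v,m)$.

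Suppose $\dist_H(v,m)>\dist_H(y,m)$. If $\dist_H(v,m)=\infty$, then no $xm$-snake can contain $v$ (a subpath from $v$ to $m$ would otherwise exist), so any $xm$-snake in $H$ — one exists by hypothesis — is a valid $S_{xm}^{\overline{v}}$. Assume then $\dist_H(y,m)<\dist_H(v,m)<\infty$; by the rigidity observation $v$ lies on no shortest $ym$-path. I would start from an arbitrary $xm$-snake $S$ in $H$, which contains $y$ by the last assertion of Proposition~\ref{prop_snakeinside} (using $y\in\I{H^{+x}}{x\D_1(H)}$), split it as $S=S_{xy}\oplus S_{ym}$ at $y$, and replace $S_{ym}$ by a shortest $ym$-path re-glued to $S_{xy}$ through a projection onto $V(S_{xy})$, obtaining a genuine $xm$-snake whose part beyond $y$ avoids $v$. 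The remaining point — and the main obstacle of the whole argument — is to make the $x$-to-$y$ part avoid $v$ as well: here one uses that if $v$ lay on $S$ before $y$, then the $xv$-subpath of $S$ would have to begin with $e_x$ (otherwise Proposition~\ref{prop_chemin2}, with $v\in V(D)\setminus\{x\}$ as re-entry point, would put $y$ on that subpath, impossible since $y$ comes later on $S$), and then combines this with Lemma~\ref{Lemma6} and the existence, furnished by Proposition~\ref{prop_structure}\ref{item2}, of elements of $\O_{x,z}(D)$ missing any prescribed non-marked vertex other than $z$, in order to reroute the $x$-side around $v$. I expect this rerouting to be the heaviest technical step.

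Finally, suppose $\dist_H(v,m)=\dist_H(y,m)$, necessarily finite, and let $S_{vm}$ be any shortest $vm$-snake. Proposition~\ref{prop_chemin1} applied to the $m$-path $S_{vm}$, which meets $V(D)$ at $v$, gives $y\in V(S_{vm})$, and since $|E(S_{vm})|=\dist_H(v,m)=\dist_H(y,m)$, the rigidity observation forces $o(v,\fd{vS_{vm}m})=y$; in particular $y$ is an outer vertex lying in the first edge of $S_{vm}$. Then $V(S_{vm})\cap V(D)=\{v\}$: for any $w\in V(D)\setminus\{v\}$ on $S_{vm}$ one has $w\neq y$ (as $y\notin V(D)$), so the $wm$-subpath of $S_{vm}$ avoids the first edge of $S_{vm}$ and hence avoids $y$, yet it is an $m$-path meeting $V(D)$, contradicting Proposition~\ref{prop_chemin1}. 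Lastly, there is no $v$-tadpole in $D$: such a tadpole $T$ would meet $S_{vm}$ only at $v$ by the previous sentence, so $S_{vm}\cup T$ would be an $m$-tadpole in $H$ by the union lemma for a path and a tadpole meeting in a single vertex, contradicting Proposition~\ref{prop_markedvertex}, which applies because $J_1(\D_1,H)$ implies $J_1(\D_0,H)$, $H$ is not a trivial Maker win, and $|V(H)\setminus M(H)|\geq2$.
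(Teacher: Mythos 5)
Your proof of the inequality $\dist_H(v,m)\geq\dist_H(y,m)$ and of the bullet $\dist_H(v,m)=\dist_H(y,m)$ is correct and matches the paper's argument essentially line for line (the paper phrases the middle step as ``the $m$-path $S_{vm}^{-y-v}$ contains no vertex of $D$'' rather than via $wm$-subpaths, but it is the same reasoning, and the tadpole step is identical).

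The genuine gap is in the bullet $\dist_H(y,m)<\dist_H(v,m)<\infty$, which you yourself flag as ``the heaviest technical step.'' Your plan is to take an arbitrary $xm$-snake $S$ through $y$, patch the $ym$-part with a shortest $ym$-path, and then ``reroute'' the $xy$-part around $v$ using Lemma~\ref{Lemma6} and elements of $\O_{x,z}(D)$. This is not carried out, and it is unlikely to go through cleanly: the objects in $\O_{x,z}(D)$ are subhypergraphs of $D$, which does not contain $m$, so they give no direct purchase on an $xm$-snake. Moreover you announce ``the maximality of $\dist_H(y,m)$'' as a substantive input but never actually invoke it here, and that is exactly the missing idea. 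The paper's proof uses it decisively: since $y\in\I{H^{+x}}{x\D_1(H)}$ was chosen to maximize $\dist_H(\cdot,m)$, the strict inequality $\dist_H(v,m)>\dist_H(y,m)$ (together with $v$ non-marked, $v\neq x$) forces $v\not\in\I{H^{+x}}{x\D_1(H)}$, so there is some $X\in x\D_1(H)$ with $v\not\in V(X)$. Taking a shortest $ym$-snake $S_{ym}$ (which misses $v$ precisely because it is shortest and $\dist_H(v,m)>\dist_H(y,m)$), one has $y\in V(X)\cap V(S_{ym})$, and Lemma~\ref{Lemma4} (if $X$ is a snake) or Lemma~\ref{Lemma3} (if $X$ is a tadpole), combined with Proposition~\ref{prop_markedvertex} to rule out the unwanted alternative, produce an $mx$-path inside $X\cup S_{ym}$ — an $xm$-snake that avoids $v$ since both $X$ and $S_{ym}$ do. No rerouting of a pre-existing snake is ever needed; the snake is built directly from two $v$-avoiding pieces.
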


\begin{proof}
	The fact that $\dist_H(v,m)\geq\dist_H(y,m)$ is a direct consequence of Proposition \ref{prop_chemin1}: since $v \in V(D)$, any $vm$-snake in $H$ contains $y$.
	\begin{itemize}[leftmargin=*]
		\item Suppose $\dist_H(v,m)>\dist_H(y,m)$.
			\\ Let $S_{ym}$ be a shortest $ym$-snake in $H$: note that there does exist one, since there exists an $xm$-snake by assumption, which must contain $y$ and therefore contains a $ym$-snake by Substructure Lemma \ref{lemma_subchain1}. Since $S_{ym}$ is shortest and $\dist_H(v,m)>\dist_H(y,m)$, we have $v \not\in V(S_{ym})$.
			\\ We necessarily have $v \not\in \I{H^{+x}}{x\D_1(H)}$, otherwise the fact that $\dist_H(v,m)>\dist_H(y,m)$ would contradict our choice of $y$. Since $v$ is non-marked (recall that $M(D)=\varnothing$) and distinct from $x$, this means there exists some $X \in x\D_1(H)$ such that $v \not\in V(X)$. On the other hand, since $y \in \I{H^{+x}}{x\D_1(H)}$, we have $y \in V(X)$ hence $y \in V(X) \cap V(S_{ym})$. This allows us, thanks to an adequate union lemma in $X \cup S_{ym}$, to find the desired $xm$-snake $S_{xm}^{\overline{v}}$ (ensuring that $v \not\in V(S_{xm}^{\overline{v}})$ since $v \not\in V(X) \cup V(S_{ym})$):
			\begin{itemize}[noitemsep,nolistsep]
				\item Suppose $X$ is an $x$-snake, and write $X=S$. If the marked vertex of $S$ is $m$, then we define $S_{xm}^{\overline{v}} = S$ as the desired $xm$-snake. Otherwise $m \not\in V(S)$, so apply Union Lemma \ref{Lemma4} with $a=x$, $S_{ab}=S$, $c=m$ and $P_c=S_{ym}$. Since $S \cup S_{ym} \subseteq H$ cannot contain an $m$-snake by Proposition \ref{prop_markedvertex}, it contains an $mx$-chain, i.e., an $xm$-snake $S_{xm}^{\overline{v}}$.
				\item Suppose $X$ is an $x$-tadpole, and write $X=T$. We have $M(T)=\varnothing$ hence $m \not\in V(T)$, so apply Union Lemma \ref{Lemma3} with $a=x$, $c=m$ and $P_c=S_{ym}$. Since $T \cup S_{ym} \subseteq H$ cannot contain an $m$-tadpole by Proposition \ref{prop_markedvertex}, it contains an $mx$-chain, i.e., an $xm$-snake $S_{xm}^{\overline{v}}$.
			\end{itemize}
		\item Suppose $\dist_H(v,m)=\dist_H(y,m)$.
			\\ Let $S_{vm}$ be a shortest $vm$-snake in $H$. Since $v \in V(D)$, we have $y \in V(S_{vm})$ by Proposition \ref{prop_chemin1}. If $y \neq o(v,\ora{vS_{vm}m})$ (see Figure \ref{LemmaDist}, top), then $S_{vm}$ contains a $ym$-snake that is shorter than $S_{vm}$: this is impossible since $\dist_H(v,m)=\dist_H(y,m)$ and $S_{vm}$ has been chosen shortest. Therefore $y = o(v,\ora{vS_{vm}m})$ (see Figure \ref{LemmaDist}, bottom). The $m$-chain $S_{vm}^{-y-v}$ does not contain $y$ and thus contains no vertices in $D$ by Proposition \ref{prop_chemin1}, hence why $V(S_{vm}) \cap V(D)=\{v\}$. Finally, there cannot be a $v$-tadpole $T$ in $D$, because $S_{vm} \cup T$ would then be an $m$-tadpole in $H$ since $V(S_{vm}) \cap V(T)=\{v\}$, contradicting Proposition \ref{prop_markedvertex}.\qedhere
			
			\begin{figure}[h]
				\centering
				\includegraphics[scale=.58]{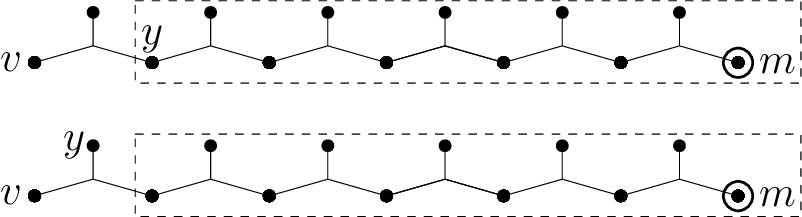}
				\caption{The snake $S_{vm}$ if $y \neq o(v,\ora{vS_{vm}m})$ (top, the contradictory $ym$-snake is highlighted) or if $y = o(v,\ora{vS_{vm}m})$ (bottom, the snake $S_{vm}^{-y-v}$ is highlighted).}\label{LemmaDist}
			\end{figure}

	\end{itemize}

\end{proof}

\noindent As we have often done in Section \ref{Section4}, we fix a vertex $s \in \I{H^{+z}}{z\D_1(H)}$, which exists since $J(\D_1,H)$ holds. Recall that $s \in V(D)\setminus \{x,z\}$ by Proposition \ref{prop_s}. Before we engage in the core of the proof, Table \ref{table_summary} summarizes the objects involved and some of their basic properties that will be used thereafter.

\begin{table}[h] \centering
{\renewcommand{\arraystretch}{1.25}
\begin{tabular}{|p{0.02\textwidth}|p{0.4\textwidth}|p{0.46\textwidth}|}
\hline
$H$   & $\cdot$ not a trivial Maker win & \multirow{13}{*}{\includegraphics[scale=.58]{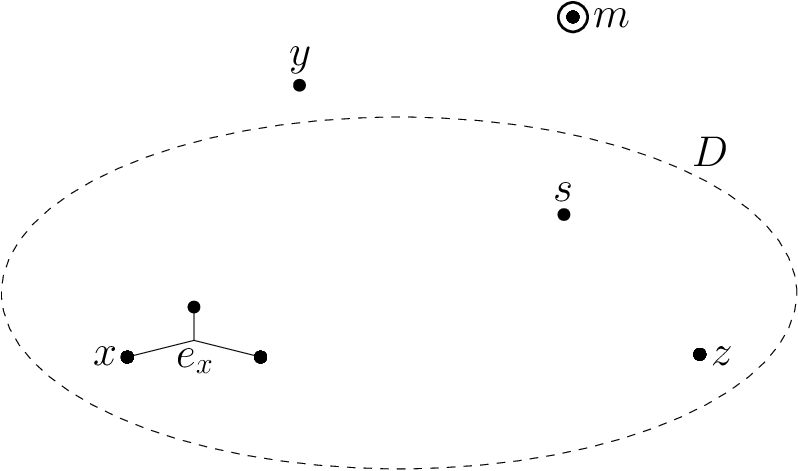}} \\
      & $\cdot$ $J(\D_1,H)$ holds & \\ \cline{1-2}
$x$   & $\cdot$ $x \in V(H) \setminus M(H)$ & \\ \cline{1-2}
$D$   & $\cdot$ $\widehat{\D_1}$-danger at $x$ in $H$ & \\
	  & $\cdot$ $M(D)=\varnothing$ & \\
      & $\cdot$ $\F_D=\T_z \cup \P_{zx}$ & \\ \cline{1-2}
$z$   & $\cdot$ from a decomposition $(z,\F_D)$ of $D$ & \\ \cline{1-2}
$m$   & $\cdot$ $m \in M(H)$ & \\
      & $\cdot$ $m \not\in V(D)$ & \\ \cline{1-2}
$y$   & $\cdot$ $y \in \I{H^{+x}}{x\D_1(H)}$ & \\
      & $\cdot$ $y \not\in V(D)$ & \\
      & $\cdot$ $y$ maximizes $\dist_H(\,\cdot\,,m)$ & \\ \cline{1-2}
$s$   & $\cdot$ $s \in \I{H^{+z}}{z\D_1(H)}$ & \\
      & $\cdot$ $s \in V(D)\setminus \{x,z\}$ & \\ \cline{1-2}
$e_x$ & $\cdot$ unique edge incident to $x$ in $D$ & \\ \hline
\end{tabular} \caption{The objects involved in the proof of Lemma \ref{lemma_main_structure}, and some of their properties.}\label{table_summary}
}
\end{table}

\subsubsection{Roadmap of the proof}

\noindent As previously stated, the idea of the proof is to eventually exhibit a vertex $w \in V(D)$ such that $\I{H^{+w}}{w\D_1(H)}=\varnothing$, contradicting the fact that $J(\D_1,H)$ holds. The roadmap to achieve this is given by the following result, which we will prove in this segment:

\begin{proposition}\label{prop_roadmap}
	Let $w \in V(D) \setminus \{x\}$. Suppose that $D$ contains the following three subhypergraphs:
	\begin{enumerate}[noitemsep,nolistsep,label={\textup{(\roman*)}}]
		\item a $z$-cycle $C$ containing $w$;
		\item an $xw$-chain $P_{xw}$ such that $V(P_{xw}) \cap \inn(C) \subseteq \{w\}$;
		\item a $w$-tadpole that does not contain $s$.
	\end{enumerate}
	Then $\I{H^{+w}}{w\D_1(H)}=\varnothing$.
\end{proposition}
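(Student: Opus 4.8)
The plan is to prove $\I{H^{+w}}{w\D_1(H)}=\varnothing$; since $J_1(\D_1,H)$ holds this is a contradiction, which is exactly what the surrounding proof of Theorem~\ref{theo_main_structure0} needs. I would argue by contradiction, assuming there is some $t\in\I{H^{+w}}{w\D_1(H)}$. Because $M(D)=\varnothing$, the $w$-tadpole $T_w$ furnished by hypothesis (iii) satisfies $(T_w,w)\in\T\subseteq\D_1$, so $T_w\in w\D_1(H)$ and hence $t\in V(T_w)\subseteq V(D)\setminus\{x\}$; in particular $t$ is non-marked, $t\neq s$ (as $s\notin V(T_w)$), and $t\neq y$ (as $y\notin V(D)$). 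Thus every $\D_1$-danger at $w$ I construct must be shown to pass through $t$, and the aim is to produce enough of them to be impossible.

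The first step is a distance reduction. By Lemma~\ref{lemma_dist}, $\dist_H(w,m)\geq\dist_H(y,m)$, and if equality held then Lemma~\ref{lemma_dist} would assert that $D$ contains no $w$-tadpole, contradicting (iii); hence $\dist_H(w,m)>\dist_H(y,m)$, and Lemma~\ref{lemma_dist} then provides an $xm$-snake $S_{xm}^{\overline{w}}$ in $H$ with $w\notin V(S_{xm}^{\overline{w}})$. The second step is to manufacture $\D_1$-dangers at $w$ from the available pieces. Projecting $w$ onto $V(S_{xm}^{\overline{w}})$ inside the $xw$-path $P_{xw}$ and prolonging the resulting $w$-path by $S_{xm}^{\overline{w}}$, then applying Lemma~\ref{Lemma4} together with the fact that $H$ has no $m$-tadpole (Proposition~\ref{prop_markedvertex}), I expect to obtain a $wm$-snake $S_{wm}$ in $H$; note $S_{wm}$, read from $m$, is an $m$-path meeting $V(D)$ at $w$, so $y\in V(S_{wm})$ by Proposition~\ref{prop_chemin1}. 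On the other side, using the $z$-cycle $C$ (which, being a $z$-cycle hence a $z$-tadpole, contains $s$ by Proposition~\ref{prop_s}), a path $\Ps\in\P_{zx}(D)$ avoiding $s$, and hypothesis (ii) — which prevents $P_{xw}$ from meeting $\inn(C)$ anywhere but $w$ — one builds, by the union lemmas and the tables of Section~\ref{Section2}, a $w$-cycle or $w$-tadpole $Y\subseteq D$ passing through $s$, and possibly further $w$-dangers routed through $t$.

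Finally I would run a case analysis on where $t$ can lie, exploiting $t\in V(T_w)\cap V(S_{wm})\cap V(Y)$ and the constraints $t\in V(D)\setminus\{x\}$, $t\neq s$, $t\neq y$. In each case, concatenating appropriate subpaths of $P_{xw}$, $C$, $\Ps$, $T_w$ and $S_{wm}$ through $t$ should yield one of the already-forbidden configurations: an $x$-cycle or an $x$-tadpole inside $D$ (impossible by Proposition~\ref{prop_structure}\ref{item6}, with Proposition~\ref{prop_degree1} pinning down the unique edge at $x$), an $m$-tadpole in $H$ (impossible by Proposition~\ref{prop_markedvertex}), an $x$-snake in $D$ (impossible by Proposition~\ref{prop_structure}\ref{item6}), or a $\D_1$-danger at $x$ avoiding $y$ / a $wm$-snake shorter than $\dist_H(w,m)$ (contradicting the maximality in the choice of $y$ and Lemma~\ref{lemma_dist}).

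The main obstacle I anticipate is exactly this last bookkeeping: making the choices of $S_{wm}$, $Y$, and the auxiliary projections so that the resulting $\D_1$-dangers at $w$ genuinely have empty common intersection, which requires carefully tracking how each piece attaches to $C$ and to $P_{xw}$ — in particular splitting into the sub-cases $w\in\inn(C)$ versus $w\in\out(C)$, and distinguishing the sizes of the relevant edge–path overlaps $|e^*\cap V(\cdot)|\in\{1,2\}$ via Lemmas~\ref{Lemma1}–\ref{Lemma3} and Lemma~\ref{Lemma6}. Everything else is a direct application of results already established in Sections~\ref{Section2} and \ref{Section4}.
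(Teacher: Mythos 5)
Your high-level strategy (produce enough $\D_1$-dangers at $w$ that their common intersection is forced empty) is in the right spirit, and the opening observation — that hypothesis (iii) combined with Lemma~\ref{lemma_dist} forces $\dist_H(w,m)>\dist_H(y,m)$, hence yields an $xm$-snake avoiding $w$ — is correct and aligns with facts the paper implicitly relies on. However, the sketch has genuine gaps that prevent it from going through.

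First, the claim $t\in V(T_w)\subseteq V(D)\setminus\{x\}$ is not justified. By Proposition~\ref{prop_degree1}, $x$ has degree~$1$ in $D$, so $x$ can perfectly well appear in a $w$-tadpole of $D$ as an outer vertex of its cycle or as a degree-$1$ vertex of its path part; nothing in hypothesis (iii) excludes this. (That $t\neq x$ is in fact true, but it must be derived from a different danger, e.g.\ a $wm$-snake built inside $C\cup S_{sm}$, which never passes through $x$.)

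Second, the proposed construction of a $wm$-snake $S_{wm}$ does not clearly succeed. You suggest applying Lemma~\ref{Lemma4} to $S_{xm}^{\overline{w}}$ and a $w$-path extracted from $P_{xw}$ with $c=w$, $a=x$, $b=m$. But $P_{xw}$ itself is a $wx$-path, so the union $S_{xm}^{\overline{w}}\cup P_{xw}$ always contains a $cx$-path, and the second item of Lemma~\ref{Lemma4} (the one that would produce a $cb$-snake, i.e.\ a $wm$-snake) has a hypothesis that is automatically violated. The first item only produces an $x$-tadpole in $H$, and since $y\in V(S_{xm}^{\overline{w}})$ this does not immediately contradict $y\in\I{H^{+x}}{x\D_1(H)}$, so no clean $wm$-snake drops out. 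The paper instead builds $wm$-snakes from a $ws$-path inside $C$ concatenated with a \emph{shortest} $sm$-snake $S_{sm}$: by Corollary~\ref{coro_s} and Lemma~\ref{lemma_dist}, that snake meets $V(D)$ only at $s$, which is exactly what makes the union control the intersection tightly.

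Third, and most importantly, the case $t\in\inn(C)\setminus\{w\}$ is never addressed. All three of your proposed dangers ($T_w$, $S_{wm}$, a $w$-cycle $Y$ through $s$) can in principle pass through the same inner vertex of $C$, and you provide no mechanism to rule this out. This is precisely the technical heart of the paper's proof (Step~2), which invokes Lemma~\ref{lemma_exclusion} with $d=w$ and $P_{dx}^{\overline{v}}=P_{xw}$ to manufacture, for each $v\in\inn(C)\setminus\{w\}$ individually, a $wm$-snake avoiding $v$. Your sketch never mentions Lemma~\ref{lemma_exclusion}, and the final paragraph's ``case analysis\dots should yield one of the already-forbidden configurations'' is exactly the place where the real work lives; as written it is an unfulfilled promise rather than a proof. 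The paper's route is also structurally different and worth noting: rather than fixing one arbitrary $t$ and chasing it, it shrinks the set $\I{H^{+w}}{w\D_1(H)}$ in three clean stages (first to $\inn(C)\cup\{s\}\cup(V(H)\setminus V(D))$ using the cycle~$C$, then removing $\inn(C)$ using $P_{xw}$ and Lemma~\ref{lemma_exclusion}, then removing the rest using~$T$), which makes the role of each hypothesis transparent and avoids the combinatorial explosion of a $t$-by-$t$ case analysis.
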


\noindent Showing that $\I{H^{+w}}{w\D_1(H)}=\varnothing$ in the proof of Proposition \ref{prop_roadmap} will require the ability, for every non-marked vertex $v \neq w$, to exhibit a $\D_1$-danger at $w$ that does not contain $v$. The following lemma applied to $d=w$ gives us that object under certain conditions.

\begin{lemme}\label{lemma_exclusion}
	Let $d,v \in V(D) \setminus \{x\}$. Suppose that $\dist_H(v,m)>\dist_H(y,m)$ and that there exists a $dx$-chain $P_{dx}^{\overline{v}}$ in $D$ that does not contain $v$. Then there exists a $dm$-snake in $H$ that does not contain $v$.
\end{lemme}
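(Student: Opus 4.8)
The plan is to combine two $\D_1$-dangers at $x$ (one avoiding $v$, one containing $y$) with the path $P_{dx}^{\overline{v}}$, using the union lemmas from Section \ref{Section2}, to produce a $dm$-snake that avoids $v$. First I would invoke Lemma \ref{lemma_dist}: since $\dist_H(v,m)>\dist_H(y,m)$, there exists an $xm$-snake $S_{xm}^{\overline{v}}$ in $H$ with $v\notin V(S_{xm}^{\overline{v}})$. Simultaneously, since $y\in\I{H^{+x}}{x\D_1(H)}$, every $\D_1$-danger at $x$ contains $y$; in particular $y\in V(S_{xm}^{\overline{v}})$, so (by Proposition \ref{prop_subpath1}) $S_{xm}^{\overline{v}}$ contains a $ym$-snake $S_{ym}^{\overline{v}}$ which still avoids $v$. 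The goal is then to connect $d$ to this $ym$-snake along $P_{dx}^{\overline{v}}$ without picking up $v$.

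The second main step is to bridge from $P_{dx}^{\overline{v}}$ to $S_{xm}^{\overline{v}}$. The natural obstacle is that $P_{dx}^{\overline{v}}$ ends at $x$ by some edge, and $S_{xm}^{\overline{v}}$ (as an $x$-snake) also starts at $x$ by some edge $\Start(\fd{xS_{xm}^{\overline{v}}})$; if these starting edges differ, I would apply Proposition \ref{prop_chemin2} or the edge/path tables (Tables \ref{table1}--\ref{table2}) to the union $P_{dx}^{\overline{v}}\cup S_{xm}^{\overline{v}}$, which either directly yields a $dm$-path (hence $dm$-snake, as $m$ is marked) or produces a tadpole/cycle at $x$; but an $x$-tadpole or $x$-snake re-entering $D$ would by Proposition \ref{prop_chemin2} force it to contain $y$, and I can trace through to extract the $dm$-snake. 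If instead the two edges at $x$ coincide, then $P_{dx}^{\overline{v}}$ and $S_{xm}^{\overline{v}}$ share the edge $e_x$ and beyond; here I would instead project: consider $\proj{V(S_{xm}^{\overline{v}})}{d}{P_{dx}^{\overline{v}}}$ or directly follow $P_{dx}^{\overline{v}}$ from $d$ until it first meets $V(S_{xm}^{\overline{v}})$. Concatenating the relevant initial segment of $\fd{dP_{dx}^{\overline{v}}x}$ with the appropriate tail of $S_{xm}^{\overline{v}}$ toward $m$ gives a $dm$-walk that I can prune (via Proposition \ref{prop_subpath1}) to a genuine $dm$-path, and since every vertex used lies in $V(P_{dx}^{\overline{v}})\cup V(S_{xm}^{\overline{v}})$, neither of which contains $v$, the resulting $dm$-snake avoids $v$.

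I expect the main obstacle to be the case analysis on how $P_{dx}^{\overline{v}}$ and the $x$-snake $S_{xm}^{\overline{v}}$ intersect near $x$: one must be careful that the concatenation genuinely yields a \emph{linear} path with no repeated vertex and that $m$ remains an endpoint. The clean way to handle this is to avoid ad hoc surgery and instead apply Lemma \ref{Lemma1} (or its snake reformulation Lemma \ref{Lemma4}) with the roles $a\gets x$, $b\gets m$, $P_{ab}\gets S_{xm}^{\overline{v}}$, and $P_c\gets P_{dx}^{\overline{v}}$ viewed as a $d$-path (legitimate since $d\in V(P_{dx}^{\overline{v}})$ and $d\neq x$ unless $d=x$, which is excluded): if there were no $dm$-snake in $S_{xm}^{\overline{v}}\cup P_{dx}^{\overline{v}}$, Lemma \ref{Lemma4} would force a $dx$-path together with an $x$-tadpole in that union; but then this $x$-tadpole, being a subhypergraph of $H$ that re-enters $D$ through $P_{dx}^{\overline{v}}$ along an edge we may assume differs from $e_x$ (or else we shortcut directly), contradicts Proposition \ref{prop_chemin2} since it cannot contain $y\notin V(S_{xm}^{\overline{v}})\cup V(P_{dx}^{\overline{v}})$-wait, more precisely it would have to contain $y$ by Proposition \ref{prop_chemin2} yet $y\notin V(P_{dx}^{\overline{v}})$ forces $y\in V(S_{xm}^{\overline{v}})$ which is fine, so instead I extract from the $dx$-path plus the $m$-end of $S_{xm}^{\overline{v}}$ the desired $dm$-snake directly. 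Either way, the conclusion follows, and the avoidance of $v$ is automatic from the vertex sets involved.
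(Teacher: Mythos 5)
Your plan hits the right opening move (invoke Lemma~\ref{lemma_dist} to produce an $xm$-snake $S_{xm}^{\overline{v}}$ avoiding $v$), but the core idea of your argument — that $P_{dx}^{\overline{v}} \cup S_{xm}^{\overline{v}}$ can be ``pruned'' or analyzed via Lemma~\ref{Lemma4} to yield a $dm$-snake — is exactly the thing that fails in 3-uniform hypergraphs. The paper proceeds by contradiction: assume every $dm$-snake in $H$ contains $v$. Under that assumption, applying Lemma~\ref{Lemma1} with $a=d$, $b=x$, $c=m$, $P_{ab}=P_{dx}^{\overline{v}}$, $P_c=S_{xm}^{\overline{v}}$ forces the edge $e^*$ of $S_{xm}^{\overline{v}}$ that first touches $P_{dx}^{\overline{v}}$ to satisfy $|e^*\cap V(P_{dx}^{\overline{v}})|=2$ and $e^*\perp\fg{xP_{dx}^{\overline{v}}d}$. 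This perpendicular incidence is precisely what \emph{prevents} a $dm$-path from existing inside $P_{dx}^{\overline{v}}\cup S_{xm}^{\overline{v}}$: the snake enters $P_{dx}^{\overline{v}}$ ``from the wrong direction,'' producing an $x$-tadpole rather than a through-connection to $d$. Your claim that ``concatenating the relevant initial segment of $\fd{dP_{dx}^{\overline{v}}x}$ with the appropriate tail of $S_{xm}^{\overline{v}}$ gives a $dm$-walk that I can prune to a genuine $dm$-path'' has no justification: in a 3-uniform hypergraph you cannot excise a loop from a walk and recover a linear path, and here the union demonstrably lacks a $dm$-path under the contradiction hypothesis.

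The second, and larger, omission is the entire machinery built around the vertex $t=o(x,\fd{xP_{dx}^{\overline{v}}d})$. The paper's proof of Claim~\ref{claim_exclusion1} extracts from $P_{dx}^{\overline{v}}\cup e^*$ an $x$-tadpole whose sole vertex outside $P_{dx}^{\overline{v}}$ must be $y$, and thereby an $xy$-path $P_{xy}^{\overline{v}}$ with $o(x,\fd{xP_{xy}^{\overline{v}}y})=t$. Claim~\ref{claim_exclusion2} then requires establishing $\dist_H(t,m)>\dist_H(y,m)$ (which itself needs the contradiction hypothesis plus Lemma~\ref{lemma_dist}), producing a \emph{second} snake $S_{xm}^{\overline{t}}$ that avoids $t$, with $y\in\inn(S_{xm}^{\overline{t}})$ and a tightly controlled intersection with $P_{dx}^{\overline{v}}$. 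Only by combining $P_{xy}^{\overline{v}}$ and $S_{xm}^{\overline{t}}$ does one obtain an $m$-tadpole in $H$, which contradicts Proposition~\ref{prop_markedvertex}. Your proposal never introduces $t$, never shows $\dist_H(t,m)>\dist_H(y,m)$, never produces a second snake, and never reaches the $m$-tadpole contradiction; the mid-sentence backtracking (``wait, more precisely\ldots'') is a sign that a single $xm$-snake simply does not close the argument. A minor further issue: applying Lemma~\ref{Lemma4} with $c=d$ and $P_c=P_{dx}^{\overline{v}}$ would require $d\notin V(S_{xm}^{\overline{v}})$, which you do not verify (the paper sidesteps this by putting $m$, not $d$, in the role of $c$).
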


\begin{proof}

	Suppose for a contradiction that:
	\begin{equation}\label{cont3}
		\text{All $dm$-snakes in $H$ contain $v$.}\tag{$\ast$}
	\end{equation}
	We are going to exhibit an $m$-tadpole in $H$, contradicting Proposition \ref{prop_markedvertex}. This $m$-tadpole will be obtained inside the union of an $xy$-chain and an $xm$-snake having specific properties, whose existence is given by the following two claims which we prove independently from each other. Define $t = o(x,\ora{xP_{dx}^{\overline{v}}d})$, and note that $\Start(\ora{xP_{dx}^{\overline{v}}d})=e_x$ since $e_x$ is the only edge incident to $x$ in $D$.
	
	\begin{claim}\label{claim_exclusion1}
		There exists an $xy$-chain $P_{xy}^{\overline{v}}$ in $H$ such that:
		\begin{itemize}[noitemsep,nolistsep]
			\item $V(P_{xy}^{\overline{v}}) \subseteq V(P_{dx}^{\overline{v}}) \cup \{y\}$.
			\item $o(x,\ora{xP_{xy}^{\overline{v}}y})=t$.
		\end{itemize}
	\end{claim}
	
	\begin{proofclaim}[Proof of Claim \ref{claim_exclusion1}]
		We have $\dist_H(v,m)>\dist_H(y,m)$ by assumption, so by Lemma \ref{lemma_dist} there exists an $xm$-snake $S_{xm}^{\overline{v}}$ in $H$ that does not contain $v$. Since $P_{dx}^{\overline{v}} \subseteq D$ and $m \not\in V(D)$, the edge $e^* = \End(\ora{m\proj{V(P_{dx}^{\overline{v}})}{m}{S_{xm}^{\overline{v}}}})$ is well defined. According to (\ref{cont3}), there are no $dm$-snakes in $P_{dx}^{\overline{v}} \cup S_{xm}^{\overline{v}}$. Therefore, by Union Lemma \ref{Lemma1} applied to $a=d$, $b=x$, $c=m$, $P_{ab}=P_{dx}^{\overline{v}}$ and $P_c=S_{xm}^{\overline{v}}$: $|e^* \cap V(P_{dx}^{\overline{v}})|=2$, $e^* \perp \ola{xP_{dx}^{\overline{v}}d}$, moreover there is an $x$-tadpole $T$ in $P_{dx}^{\overline{v}} \cup e^*$. Since $|e^* \cap V(P_{dx}^{\overline{v}})|=2$, there is exactly one vertex of $T$ that is not in $P_{dx}^{\overline{v}}$. That vertex is necessarily $y$, as illustrated in Figure \ref{ExclusionLemma8}: indeed, we know $y \in V(T)$ because $y \in \I{H^{+x}}{x\D_1(H)}$, and $y \not\in V(D) \supseteq V(P_{dx}^{\overline{v}})$. By Substructure Lemma \ref{lemma_subchain4bis}, $T$ contains an $xy$-chain $P_{xy}^{\overline{v}}$, and we have $V(P_{xy}^{\overline{v}}) \subseteq V(T) \subseteq V(P_{dx}^{\overline{v}}) \cup \{y\}$. See Figure \ref{ExclusionLemma8}.
		
		\begin{figure}[h]
			\centering
			\includegraphics[scale=.58]{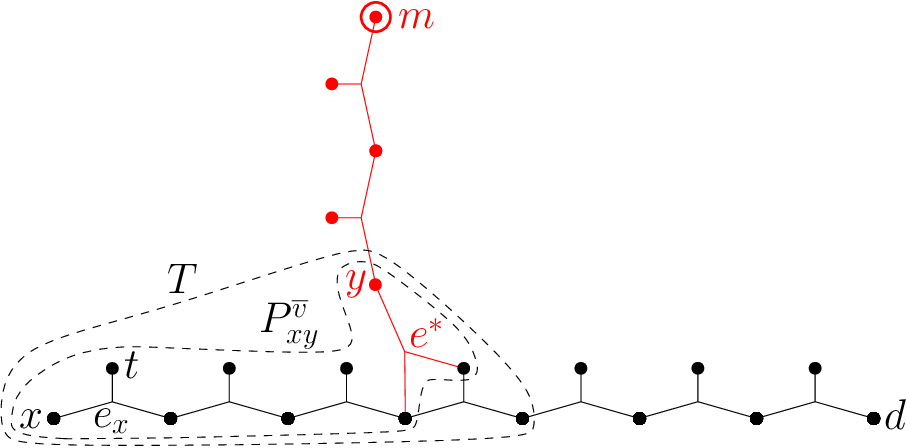}
			\caption{Definition of $P_{xy}^{\overline{v}}$. The represented chains are $P_{dx}^{\overline{v}}$ (horizontal) and $\proj{V(P_{dx}^{\overline{v}})}{m}{S_{xm}^{\overline{v}}}$ (vertical).}\label{ExclusionLemma8}
		\end{figure}	
		
		\noindent Finally, let us check that $o(x,\ora{xP_{xy}^{\overline{v}}y})=t$. Since $|e^* \cap V(P_{dx}^{\overline{v}})|=2$ and $e^* \perp \ola{xP_{dx}^{\overline{v}}d}$, there are two possibilities:
		\begin{itemize}[noitemsep,nolistsep]
			\item[--] First case: $\{x,t\} \subseteq e^*$. Then $P_{xy}^{\overline{v}}$ consists of the single edge $e^*=\{x,t,y\}$, so obviously $o(x,\ora{xP_{xy}^{\overline{v}}y})=t$.
			\item[--] Second case: $\{x,t\} \cap e^* = \varnothing$. Then $\Start(\ora{xP_{xy}^{\overline{v}}y}) \neq e^*$, so $\Start(\ora{xP_{xy}^{\overline{v}}y})=\Start(\ora{xP_{dx}^{\overline{v}}d})=e_x \ni t$. Moreover $t$ is of degree 1 in $P_{dx}^{\overline{v}} \cup e^* \supseteq P_{xy}^{\overline{v}}$, so necessarily $t \not\in \inn(P_{xy}^{\overline{v}})$ hence $t=o(x,\ora{xP_{xy}^{\overline{v}}y})$. \qedhere
		\end{itemize}
	\end{proofclaim}
	
	\begin{claim}\label{claim_exclusion2}
		There exists an $xm$-snake $S_{xm}^{\overline{t}}$ in $H$ such that:
		\begin{itemize}[noitemsep,nolistsep]
			\item $t \not\in V(S_{xm}^{\overline{t}})$.
			\item $y \in \inn(S_{xm}^{\overline{t}})$. (We define $P_{xy}^{\overline{t}}$, resp. $S_{ym}^{\overline{t}}$, as the unique $xy$-chain, resp. $ym$-snake, in $S_{xm}^{\overline{t}}$.)
			\item $(V(S_{xm}^{\overline{t}}) \cap V(P_{dx}^{\overline{v}})) \setminus \{x\} \subseteq \{u\}$ where we have defined $u =o(y,\ola{xP_{xy}^{\overline{t}}y})$.
		\end{itemize}
	\end{claim}
	
	\begin{proofclaim}[Proof of Claim \ref{claim_exclusion2}]
		It is impossible that $\dist_H(t,m)=\dist_H(y,m)$: indeed, a shortest $tm$-snake $S_{tm}$ in $H$ would then satisfy $V(S_{tm}) \cap V(D)=\{t\}$ by Lemma \ref{lemma_dist}, hence $v \not\in V(S_{tm})$, so the walk $\ora{dP_{dx}^{\overline{v}}t} \oplus \ora{tS_{tm}m}$ would represent a $dm$-snake in $H$ that does not contain $v$, contradicting (\ref{cont3}). Therefore, Lemma \ref{lemma_dist} ensures that $\dist_H(t,m)>\dist_H(y,m)$, and that there exists an $xm$-snake $S_{xm}^{\overline{t}}$ in $H$ such that $t \not\in V(S_{xm}^{\overline{t}})$. Since $y \in \I{H^{+x}}{x\D_1(H)}$, we obviously have $y \in V(S_{xm}^{\overline{t}})$.
		\\ Write $\ora{xS_{xm}^{\overline{t}}m}=(x,e_1,\ldots,e_L,m)$. Recalling Notation \ref{notation_walk}, write $\ora{xS_{xm}^{\overline{t}}m}\vert_{\{y\}}=(x,e_1,\ldots,e_i)$ and $\ola{xS_{xm}^{\overline{t}}m}\vert_{\{y\}}=(m,e_L,e_{L-1},\ldots,e_j)$. Note that $j=i+1$ if $y \in\inn(S_{xm}^{\overline{t}})$ and $j=i$ otherwise.
		\begin{itemize}[noitemsep,nolistsep]
			\item[--] By definition of the walk $\ola{xS_{xm}^{\overline{t}}m}\vert_{\{y\}}$, the $m$-chain $\HH{(m,e_L,e_{L-1},\ldots,e_{j+1})}$ does not contain $y$. Therefore, $(e_{j+1} \cup \ldots \cup e_L) \cap V(D)=\varnothing$ by Proposition \ref{prop_chemin1}.
			\item[--] By definition of the walk $\ora{xS_{xm}^{\overline{t}}m}\vert_{\{y\}}$, the $x$-chain $\HH{(x,e_1,\ldots,e_{i-1})}$ does not contain $y$. Moreover, $e_1 \neq e_x$ because $t \not\in V(S_{xm}^{\overline{t}}) \supseteq e_1$. Therefore, $(e_1 \cup \ldots \cup e_{i-1}) \cap (V(D) \setminus \{x\})=\varnothing$ by Proposition \ref{prop_chemin2}.
		\end{itemize}
		Suppose that $y \not\in\inn(S_{xm}^{\overline{t}})$: then, $i=j$ hence $V(S_{xm}^{\overline{t}})=e_1 \cup \ldots \cup e_{i-1} \cup \{y\} \cup e_{j+1} \cup \ldots \cup e_L$. By the above, this yields $V(S_{xm}^{\overline{t}}) \cap V(D)=\{x\}$ and in particular $v \not\in V(S_{xm}^{\overline{t}})$, therefore the walk $\ora{dP_{dx}^{\overline{v}}x} \oplus \ora{xS_{xm}^{\overline{t}}m}$ represents a $dm$-snake that does not contain $v$. This contradicts (\ref{cont3}).
		\\ Therefore, we have $y \in \inn(S_{xm}^{\overline{t}})$. Let $P_{xy}^{\overline{t}}$ (resp. $S_{ym}^{\overline{t}}$) be the unique $xy$-chain (resp. $ym$-snake) in $S_{xm}^{\overline{t}}$, and define $u = o(y,\ola{xP_{xy}^{\overline{t}}y})$ and $u'= o(y,\ora{yS_{ym}^{\overline{t}}m})$, as in Figure \ref{ExclusionLemma9}. Since $y \in \inn(S_{xm}^{\overline{t}})$, we have $j=i+1$ hence $V(S_{xm}^{\overline{t}})=e_1 \cup \ldots \cup e_{i-1} \cup \{u,y,u'\} \cup e_{j+1} \cup \ldots \cup e_L$. By the above, this yields $(V(S_{xm}^{\overline{t}}) \cap V(D)) \setminus \{x\} \subseteq \{u,u'\}$, hence $(V(S_{xm}^{\overline{t}}) \cap V(P_{dx}^{\overline{v}})) \setminus \{x\} \subseteq \{u,u'\}$.
		
		\begin{figure}[h]
			\centering
			\includegraphics[scale=.58]{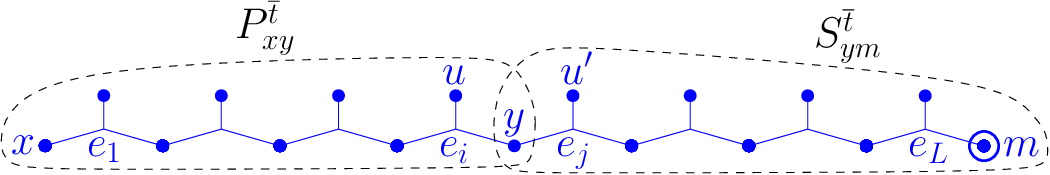}
			\caption{The $xm$-snake $S_{xm}^{\overline{t}}$.}\label{ExclusionLemma9}
		\end{figure}
		
		\noindent Finally, it is impossible that $u' \in V(P_{dx}^{\overline{v}})$: indeed, this would imply that $V(S_{ym}^{\overline{t}}) \cap V(P_{dx}^{\overline{v}}) =\{u'\}$ and that $u' \neq v$ hence $v \not\in V(S_{ym}^{\overline{t}})$, so the walk $\ora{dP_{dx}^{\overline{v}}x}\vert_{\{u'\}} \oplus \ora{u'S_{ym}^{\overline{t}}m}$ would represent a $dm$-snake not containing $v$, contradicting (\ref{cont3}). Therefore $(V(S_{xm}^{\overline{t}}) \cap V(P_{dx}^{\overline{v}})) \setminus \{x\} \subseteq \{u\}$, which concludes the proof of the claim.
	\end{proofclaim}
	
	\noindent We can now conclude by exhibiting an $m$-tadpole in $H$, which contradicts Proposition \ref{prop_markedvertex} since $H$ is not a trivial Maker win and $J(\D_1,H)$ holds.
	\\ Let $P_{xy}^{\overline{v}}$ be as in Claim \ref{claim_exclusion1}, and let $S_{xm}^{\overline{t}}$, $P_{xy}^{\overline{t}}$, $S_{ym}^{\overline{t}}$, $u$ be as in Claim \ref{claim_exclusion2}. We have $V(P_{xy}^{\overline{v}}) \subseteq V(P_{dx}^{\overline{v}}) \cup \{y\}$ by Claim \ref{claim_exclusion1}, and $(V(S_{xm}^{\overline{t}}) \cap V(P_{dx}^{\overline{v}})) \setminus \{x\} \subseteq \{u\}$ by Claim \ref{claim_exclusion2}: therefore, $V(S_{xm}^{\overline{t}}) \cap V(P_{xy}^{\overline{v}}) = \{x,y\}$ or $V(S_{xm}^{\overline{t}}) \cap V(P_{xy}^{\overline{v}}) = \{x,y,u\}$.
	
	\begin{itemize}[noitemsep,nolistsep]
	
		\item Case 1: $V(S_{xm}^{\overline{t}}) \cap V(P_{xy}^{\overline{v}}) = \{x,y\}$. The walk $\ola{yS_{ym}^{\overline{t}}m} \oplus \ola{xP_{xy}^{\overline{t}}y} \oplus \ora{xP_{xy}^{\overline{v}}y}$ clearly represents an $m$-tadpole (see Figure \ref{ExclusionLemma10}, top).
		
		\begin{figure}[h]
			\centering
			\includegraphics[scale=.58]{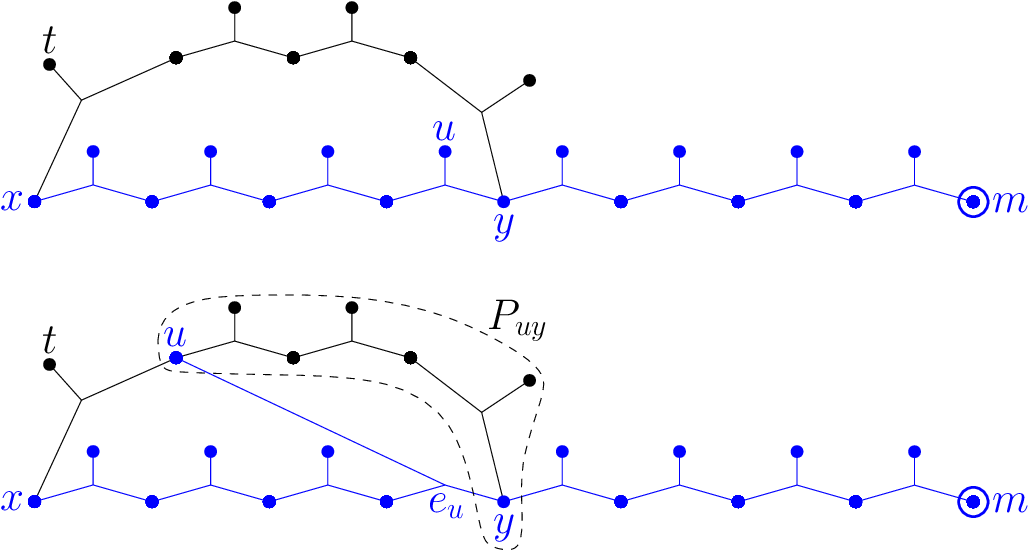}
			\caption{Conclusion of Lemma \ref{lemma_exclusion}. The represented chains are $S_{xm}^{\overline{t}}$ (horizontal) and $P_{xy}^{\overline{v}}$. Top: Case 1. Bottom: Case 2.}\label{ExclusionLemma10}
		\end{figure}
	
		\item Case 2: $V(S_{xm}^{\overline{t}}) \cap V(P_{xy}^{\overline{v}}) = \{x,y,u\}$. Let $e_u$ be the edge of $S_{xm}^{\overline{t}}$ containing $u$, and let $P_{uy}$ be the unique $uy$-chain in $P_{xy}^{\overline{v}}$. Since $u \neq t$, we have $x \not\in V(P_{uy})$, therefore the walk $\ola{yS_{ym}^{\overline{t}}m} \oplus (y,e_u,u) \oplus \ora{uP_{uy}y}$ represents an $m$-tadpole (see Figure \ref{ExclusionLemma10}, bottom). \qedhere
		
	\end{itemize}
	
\end{proof}

\begin{corollaire}\label{coro_s}
	We have $\dist_H(s,m)=\dist_H(y,m)$.
\end{corollaire}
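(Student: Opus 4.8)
The plan is to prove the two inequalities separately; the non-trivial one will be an immediate consequence of Lemma~\ref{lemma_exclusion}, which is why this statement is placed right after it.

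First, the inequality $\dist_H(s,m)\geq\dist_H(y,m)$ comes for free: by Proposition~\ref{prop_s} we have $s\in V(D)\setminus\{x\}$, so Lemma~\ref{lemma_dist} applied to $v=s$ gives exactly $\dist_H(s,m)\geq\dist_H(y,m)$.

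For the reverse inequality, I would argue by contradiction and assume $\dist_H(s,m)>\dist_H(y,m)$. The idea is to feed this assumption into Lemma~\ref{lemma_exclusion} with $d\defeq z$ and $v\defeq s$. The hypotheses are readily checked: $z,s\in V(D)\setminus\{x\}$ (for $z$ because it is a $\D_1$-dangerous vertex in $(D,x)$, hence distinct from $x$, and $z\in V(D)$; for $s$ this is Proposition~\ref{prop_s}); the distance hypothesis $\dist_H(s,m)>\dist_H(y,m)$ is exactly our contradiction assumption; and the required $dx$-path in $D$ avoiding $v$ is provided by a path $\Ps\in\P_{zx}(D)$ with $s\notin V(\Ps)$, whose existence is also furnished by Proposition~\ref{prop_s}. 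Lemma~\ref{lemma_exclusion} then yields a $zm$-snake in $H$ that does not contain $s$. But a $zm$-snake is in particular a $z$-snake, and by Proposition~\ref{prop_s} every $z$-snake in $H$ must contain $s$ (since $s\in\I{H^{+z}}{z\D_1(H)}$). This contradiction forces $\dist_H(s,m)\leq\dist_H(y,m)$, and combining the two inequalities gives $\dist_H(s,m)=\dist_H(y,m)$.

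I do not expect a genuine obstacle here: all the substance has already been packaged into Lemma~\ref{lemma_exclusion} and Proposition~\ref{prop_s}. The only thing to be careful about is matching the objects $z$, $s$, $\Ps$ to the hypotheses of Lemma~\ref{lemma_exclusion}, and remembering that Proposition~\ref{prop_s} governs $z$-snakes (not $x$-snakes) — which is precisely why taking $d=z$ rather than $d=x$ is the right choice for closing the argument.
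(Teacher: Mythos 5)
Your proof is correct and takes essentially the same route as the paper: both apply (the contrapositive of) Lemma~\ref{lemma_exclusion} with $d=z$, $v=s$, using the $zx$-path $\Ps$ from Proposition~\ref{prop_s} to rule out the "no $zx$-path avoiding $s$" alternative, and conclude via the fact that every $z$-snake contains $s$; the remaining inequality from Lemma~\ref{lemma_dist} then upgrades to equality. The paper phrases it as a one-shot contrapositive rather than an explicit proof by contradiction, but the logic is identical.
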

 
\begin{proof}
	Since all $z$-snakes in $H$ contain $s$, we can apply the contrapositive of Lemma \ref{lemma_exclusion} with $d=z$ and $v=s$ (recall that $s \not\in\{x,z\}$, so that Lemma \ref{lemma_exclusion} indeed applies). We get that $\dist_H(s,m) \leq \dist_H(y,m)$ or all $zx$-chains in $D$ contain $s$. We know the latter is false: a counterexample chain $\Ps$ is given by Proposition \ref{prop_s}. Therefore, we conclude that $\dist_H(s,m) \leq \dist_H(y,m)$, hence $\dist_H(s,m) = \dist_H(y,m)$ by Lemma \ref{lemma_dist}.
\end{proof}

\noindent The previous corollary has a simple consequence which we will use extensively:

\begin{proposition}\label{prop_s_bis}
	There are no $s$-tadpoles in $D$. In particular, any $z$-tadpole $T$ in $D$ satisfies $s \in \out(C_T)$.
\end{proposition}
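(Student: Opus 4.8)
The plan is to obtain both assertions as short consequences of results already in place. For the first assertion, I would start from Corollary \ref{coro_s}, which gives $\dist_H(s,m) = \dist_H(y,m)$, together with Proposition \ref{prop_s}, which ensures $s \in V(D) \setminus (M(D) \cup \{x,z\})$; in particular $s \in V(D) \setminus \{x\}$, so the hypotheses of Lemma \ref{lemma_dist} are met with $v = s$. Since $\dist_H(s,m) = \dist_H(y,m)$, the second bullet of Lemma \ref{lemma_dist} applies and its final clause states precisely that there is no $s$-tadpole in $D$.

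For the second assertion, I would take an arbitrary $z$-tadpole $T$ in $D$. By the first bullet of Proposition \ref{prop_s}, every $z$-tadpole in $H$ contains $s$; since $T \subseteq D \subseteq H$, this gives $s \in V(T)$. Suppose for contradiction that $s \notin \out(C_T)$. Then $s \in V(T) \setminus \out(C_T)$, so Proposition \ref{prop_subtadpole} (applied to the tadpole $T$ with $u = s$) shows that $T$ contains an $s$-tadpole. As $T \subseteq D$, this is an $s$-tadpole in $D$, contradicting the first assertion. Hence $s \in \out(C_T)$, as claimed.

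There is no genuine obstacle here: the statement is essentially a corollary of Corollary \ref{coro_s}, Lemma \ref{lemma_dist}, Proposition \ref{prop_s}, and Proposition \ref{prop_subtadpole}. The only points requiring care are the routine verifications of the hypotheses of the invoked results — namely that $s \neq x$ (from Proposition \ref{prop_s}), that $s \in V(T)$ (again from Proposition \ref{prop_s}), and that $\dist_H(s,m)$ indeed equals $\dist_H(y,m)$ (Corollary \ref{coro_s}) so that the relevant branch of Lemma \ref{lemma_dist} can be used.
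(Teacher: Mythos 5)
Your proof is correct and matches the paper's argument exactly: both assertions are obtained by invoking Corollary \ref{coro_s} and the second bullet of Lemma \ref{lemma_dist} (with $v=s$) for the first, and Proposition \ref{prop_s} together with Proposition \ref{prop_subtadpole} for the second. The extra verifications you spell out (that $s \in V(D)\setminus\{x\}$, that $s \in V(T)$, etc.) are the same ones the paper relies on implicitly.
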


\begin{proof}
	We have $\dist_H(s,m) = \dist_H(y,m)$ by Corollary \ref{coro_s}, so there are no $s$-tadpoles in $D$ according to Lemma \ref{lemma_dist}. Let $T$ be a $z$-tadpole in $D$: we know $s \in V(T)$ by definition of $s$. If we had $s \not\in \out(C_T)$, then there would be an $s$-tadpole in $T \subseteq D$ by Substructure Lemma \ref{lemma_subtadpole}, therefore $s \in \out(C_T)$.
\end{proof}

\noindent For example, one application of the previous proposition is the following:

\begin{proposition}\label{prop_nodiamond}
	There are no $z$-cycles of length 2 in $D$.
\end{proposition}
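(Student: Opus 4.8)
The proof is by contradiction, and the plan is to show that a length‑$2$ cycle through $z$ inside $D$ forces either a forbidden $x$-tadpole in $D$ (contradicting Proposition \ref{prop_structure}\ref{item6}) or a $z$-cycle avoiding the vertex $s$ (contradicting Proposition \ref{prop_s}). Suppose $C\subseteq D$ is a $z$-cycle of length $2$. Its two edges $e,e'$ satisfy $|e\cap e'|=2$ and $z\in\inn(C)=e\cap e'$, so we may write $\inn(C)=\{z,c\}$ with $c\neq z$ and $E(C)=\{\{z,c,p\},\{z,c,q\}\}$, hence $\out(C)=\{p,q\}$ with $p\neq q$; since $M(D)=\varnothing$, the vertices $c,p,q$ are non-marked. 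By Proposition \ref{prop_structure}\ref{item6} there is no $x$-cycle in $D$, so $x\notin\inn(C)$. Viewing $C$ as a $z$-tadpole with $C_T=C$, Proposition \ref{prop_s_bis} gives $s\in\out(C)$; relabelling $p,q$ if needed, assume $s=p$, and write $e_1:=\{z,c,s\}$, $e_2:=\{z,c,q\}$. Recall from the preliminary statements above that $D$ has a unique edge $e_x$ incident to $x$, and fix $\Ps\in\P_{zx}(D)$ with $s\notin V(\Ps)$, provided by Proposition \ref{prop_s}.

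First I would dispose of the case $x\in V(C)$. Then $x\in\out(C)$, and $x\neq s$ by Proposition \ref{prop_s}, so $x=q$ and $e_2=\{z,c,x\}$; as this edge lies in $D$ and is incident to $x$, it equals $e_x$. Applying Proposition \ref{prop_structure}\ref{item2} to the non-marked vertex $c\neq z$ yields $X\in\O_{x,z}(D)=z\T(D^{-x})\cup\P_{zx}(D)$ with $c\notin V(X)$; since every element of $\P_{zx}(D)$ has its edge at $x$ equal to $e_x=\{z,c,x\}\ni c$, necessarily $X\in z\T(D^{-x})$, so $x,c\notin V(X)$ while $z\in V(X)$, giving $V(X)\cap e_x=\{z\}$. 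Then $e_x\cup X$ is the union of the $xz$-path $e_x$ and the $z$-tadpole $X$ meeting only at $z$, hence an $x$-tadpole in $D$ by the union lemmas of Section \ref{Section2}, contradicting Proposition \ref{prop_structure}\ref{item6}.

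From now on $x\notin V(C)$, so $C\in z\T(D^{-x})$, and the plan is to analyse $V(\Ps)\cap V(C)$, which contains $z$ and (as $s\notin V(\Ps)$) is contained in $\{z,c,q\}$. If it equals $\{z\}$, then $\Ps\cup C$ is the union of the $xz$-path $\Ps$ and the $z$-cycle $C$ along $\{z\}$, hence an $x$-tadpole in $D$, again contradicting Proposition \ref{prop_structure}\ref{item6}. If exactly one of $c,q$ lies on $\Ps$, I take the corresponding subpath $\Ps'$ of $\Ps$ (a $zc$-path or a $zq$-path, by Proposition \ref{prop_subpath1}): it meets $e_2$, viewed as a $zc$-path resp.\ $zq$-path of length $1$, in exactly its two endpoints, so $\Ps'\cup e_2$ is a $z$-cycle $C'\subseteq D$ by the cycle union lemma of Section \ref{Section2}. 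But $C'$ is a $z$-tadpole in $H$, hence contains $s$ by Proposition \ref{prop_s}, whereas $V(C')\subseteq V(\Ps)\cup\{c,q\}$ avoids $s$ — a contradiction. The last case is $\{c,q\}\subseteq V(\Ps)$; here $z$, being an endpoint of $\Ps$, is not interior to the $cq$-subpath of $\Ps$, and writing $g_z$ for the edge of $\Ps$ at $z$ one checks $g_z\neq e_1$ (it would contain $s$), so that if $g_z=e_2$ then one of $c,q$ is the $o$-vertex of $g_z$ at $z$. When $c$ is that $o$-vertex, deleting $c$ from $\Ps$ detaches $e_2$ and leaves an $xc$-path through the rest of $\Ps$ meeting $C$ only at $c\in\inn(C)$, whose union with $C$ is an $x$-tadpole in $D$ — a contradiction once more.

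The real obstacle is the residual degenerate configuration — essentially, the case where every $zx$-path avoiding $s$ reaches $C$ only through its \emph{outer} vertex $q$. Because a $2$-cycle is not linear, attaching a path at an outer vertex of $C$ does not by itself create a tadpole, so no contradiction comes for free there. I expect to handle it by invoking the existence tool of Proposition \ref{prop_structure}\ref{item2} once more, at the outer vertex $q$ (and at the $o$-vertex of $\Ps$ at $z$), combining the $z$-tadpole or $zx$-path it produces with the edges $e_1,e_2$ and with a shortest $sm$-snake: Corollary \ref{coro_s} gives $\dist_H(s,m)=\dist_H(y,m)$, so Lemma \ref{lemma_dist} pins down exactly how that snake meets $V(D)$, and the resulting structure is then ruled out by Proposition \ref{prop_markedvertex} (no $m$-tadpole in $H$), by Proposition \ref{prop_structure}\ref{item6} (no $x$-tadpole in $D$), or via the roadmap Proposition \ref{prop_roadmap}. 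The case analysis is routine in spirit; the care lies entirely in the bookkeeping of which vertices of $C$ each auxiliary subhypergraph is forced to contain or to avoid.
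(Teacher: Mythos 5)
Your approach is genuinely different from the paper's: you fix a specific $zx$-path $\Ps$ avoiding $s$ and case-split on how $\Ps$ meets $C$, whereas the paper applies the existence tool (Proposition~\ref{prop_structure}\ref{item2}) at the \emph{inner} vertex $c$ (written $u$ in the paper) to produce $X^{\overline{u}}\in\O_{x,z}(D)$ with $c\notin V(X^{\overline{u}})$, then splits on whether $s\in V(X^{\overline{u}})$. Your opening moves are correct: the case $x\in V(C)$, the case $V(\Ps)\cap V(C)=\{z\}$, and the case where exactly one of $c,q$ lies on $\Ps$ are handled soundly via the same $x$-tadpole and $s$-avoiding $z$-cycle contradictions the paper uses.

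The final case $\{c,q\}\subseteq V(\Ps)$ contains both a labelling error and a genuine gap. When $c$ is the $o$-vertex of $g_z=e_2$, deleting $c$ also isolates $z$ (both are only in $e_2$), and what remains is an $xq$-path hitting $C$ only at the \emph{outer} vertex $q$, where the tadpole union lemma cannot be applied --- this is exactly the degenerate configuration you flag as the remaining obstacle, not a case you have closed. (The easy subcase is when $q$ is the $o$-vertex, yielding an $xc$-path with $c\in\inn(C)$ and hence an $x$-tadpole.) You also leave the subcase $g_z\notin\{e_1,e_2\}$ unaddressed. The heavy machinery you sketch (Lemma~\ref{lemma_dist}, Corollary~\ref{coro_s}, Proposition~\ref{prop_roadmap}) is unnecessary: the paper's trick is to \emph{not} commit to a particular $\Ps$ in advance. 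Applying Proposition~\ref{prop_structure}\ref{item2} at $c$ gives some $X\in\O_{x,z}(D)$ avoiding $c$; if $s\in V(X)$, the $zs$-subpath of $X$ together with $e_1=\{z,c,s\}$ is an $s$-cycle, contradicting Proposition~\ref{prop_s_bis}; if $s\notin V(X)$, then $X$ must be a $zx$-path, and either $q\notin V(X)$ (so $X\cup C$ is an $x$-tadpole, contradicting Proposition~\ref{prop_structure}\ref{item6}) or $q\in V(X)$ (so the $zq$-subpath of $X$ plus $e_2$ is a $z$-cycle avoiding $s$). That deliberate choice of auxiliary object, made \emph{before} splitting into cases, guarantees the resulting $zx$-path avoids $c$, which is precisely what eliminates the configuration you are stuck on.
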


\begin{proof}
	Suppose for a contradiction that there exists a $z$-cycle $C$ of length 2 in $D$. We have $s \in \out(C)$ by Proposition \ref{prop_s_bis}: write $V(C)=\{z,s,a,b\}$ and $E(C)=\{e_1,e_2\}$ where $e_1=\{z,a,s\}$ and $e_2=\{z,a,b\}$. By Proposition \ref{prop_structure}\ref{item2}, we know there exists some $X^{\overline{a}} \in \F_D$ such that $a \not\in V(X^{\overline{a}})$.
	\begin{itemize}[wide,noitemsep,nolistsep]
		\item First suppose $s \in V(X^{\overline{a}})$. By Substructure Lemma \ref{lemma_subchain1} (if $X^{\overline{a}} \in \P_{zx}$) or Substructure Lemma \ref{lemma_subchain4bis} (if $X^{\overline{a}} \in \T_z$), there exists a $zs$-chain $P_{zs}$ in $X^{\overline{a}}$. Since $a \not\in V(P_{zs})$, we get an $s$-cycle $P_{zs} \cup e_1$ in $D$, contradicting Proposition \ref{prop_s_bis}.
		\item Now suppose $s \not\in V(X^{\overline{a}})$. Since $s \in \I{H^{+z}}{z\D_1(H)}$, this implies that $X^{\overline{a}} \in \P_{zx}$. Write $X^{\overline{a}} = P$. If $b \not\in V(P)$, then $V(C) \cap V(P)=\{z\}$, therefore $P \cup C$ is an $x$-tadpole in $D$, contradicting Proposition \ref{prop_structure}\ref{item6}. If $b \in V(P)$, then there exists a $zb$-chain $P_{zb}$ in $P$ by Substructure Lemma \ref{lemma_subchain1}, and we get a $z$-cycle $P_{zb} \cup e_2$ in $D$ that does not contain $s$, also a contradiction. \qedhere
	\end{itemize}
\end{proof}

\noindent We can now prove Proposition \ref{prop_roadmap}.

\begin{proof}[Proof of Proposition \ref{prop_roadmap}]
	Let $w \in V(D)\setminus \{x\}$ be such that $D$ contains the following three subhypergraphs:
	\begin{enumerate}[noitemsep,nolistsep,label={\textup{(\roman*)}}]
		\item a $z$-cycle $C$ containing $w$;
		\item an $xw$-chain $P_{xw}$ such that $V(P_{xw}) \cap \inn(C) \subseteq \{w\}$;
		\item a $w$-tadpole $T$ that does not contain $s$.
	\end{enumerate}
	We are going to consider $C$, $P_{xw}$ and $T$ successively. Each of these three objects will imply the existence of some $\D_1$-dangers at $w$, which will improve our upper bound on $\I{H^{+w}}{w\D_1(H)}$ (with respect to set inclusion) until we get the desired conclusion that $\I{H^{+w}}{w\D_1(H)}=\varnothing$. 
	
	\begin{enumerate}[wide,label=\textbf{\arabic*)}]
	
		\item Step 1: we show that $\I{H^{+w}}{w\D_1(H)} \subseteq \inn(C) \cup \{s\} \cup (V(H) \setminus V(D))$.
			\\ In this step, we use $C$. Recall that $s \in \out(C)$ by Proposition \ref{prop_s_bis} and that $C$ is of length at least 3 by Proposition \ref{prop_nodiamond}.
			\begin{claim}\label{claim_roadmap}
				For all $ws$-chains $P_{ws}$ in $C$: $\I{H^{+w}}{w\D_1(H)} \subseteq (V(P_{ws}) \setminus \{w\}) \cup (V(H) \setminus V(D))$.
			\end{claim}
			\begin{proofclaim}[Proof of Claim \ref{claim_roadmap}]
				We know $\dist_H(s,m)=\dist_H(y,m)$ by Corollary \ref{coro_s}. Let $S_{sm}$ be a shortest $sm$-snake in $H$: Lemma \ref{lemma_dist} thus ensures that $V(S_{sm}) \cap V(D)=\{s\}$ hence $V(S_{sm}) \cap V(C)=\{s\}$. Therefore, any $ws$-chain $P_{ws}$ in $C$ yields a $wm$-snake $S_{wm} = P_{ws} \cup S_{sm}$ in $H$ and:
				\begin{align*}
					\I{H^{+w}}{w\D_1(H)} & \subseteq V(S_{wm}) \setminus \{w\} \\
							   			 & \subseteq (V(P_{ws}) \setminus \{w\}) \cup (V(S_{sm}) \setminus \{s\}) \\
							   			 & \subseteq (V(P_{ws}) \setminus \{w\}) \cup (V(H) \setminus V(D)). \qedhere
				\end{align*}
			\end{proofclaim}
			We know there exists a $ws$-chain in $C$ by Substructure Lemma \ref{lemma_subchain3bis}, so Claim \ref{claim_roadmap} ensures that $\I{H^{+w}}{w\D_1(H)} \subseteq (V(C) \setminus \{w\}) \cup (V(H) \setminus V(D))$. Moreover, for all $v \in \out(C) \setminus \{w,s\}$, there exists a $ws$-chain in $C$ that does not contain $v$ by Substructure Lemma \ref{lemma_subchain3}, so $v \not\in \I{H^{+w}}{w\D_1(H)}$ by Claim \ref{claim_roadmap}. All in all, we get $\I{H^{+w}}{w\D_1(H)} \subseteq \inn(C) \cup \{s\} \cup (V(H) \setminus V(D))$, which concludes Step 1.
			
		\item Step 2: we show that $\I{H^{+w}}{w\D_1(H)} \subseteq \{s\} \cup (V(H) \setminus V(D))$.
			\\ In this step, we use $P_{xw}$. Comparing with Step 1, we need to show that $\I{H^{+w}}{w\D_1(H)}$ is disjoint from $\inn(C)$. Let $v \in \inn(C)$. If $v=w$, then obviously $v \not\in \I{H^{+w}}{w\D_1(H)}$, so assume $v \neq w$. By definition of $P_{xw}$, we have $V(P_{xw}) \cap \inn(C) \subseteq \{w\}$, so $v \not\in V(P_{xw})$. Lemma \ref{lemma_exclusion} thus applies with: $d=w$, our vertex $v$, and $P_{dx}=P_{xw}$. We get a $wm$-snake in $H$ that does not contain $v$, hence $v \not\in \I{H^{+w}}{w\D_1(H)}$, which concludes Step 2.
			
		\item Step 3: we show that $\I{H^{+w}}{w\D_1(H)} = \varnothing$.
			\\ In this step, we use $T$. We already know that $\I{H^{+w}}{w\D_1(H)} \subseteq \{s\} \cup (V(H) \setminus V(D))$. Moreover, $\I{H^{+w}}{w\D_1(H)} \subseteq V(T)$ because $T$ is a $w$-tadpole, where $V(T)$ is disjoint from $\{s\} \cup (V(H) \setminus V(D))$ by definition. In conclusion, $\I{H^{+w}}{w\D_1(H)} = \varnothing$. \qedhere
		
	\end{enumerate}

\end{proof}

\noindent Our goal is now to show that, for a suitable vertex $w$, $D$ contains all three subhypergraphs listed in Proposition \ref{prop_roadmap}. A lot of the work has already been done through Proposition \ref{prop_insidestructure}: we now separate the case where $D$ is of type (1) from the case where $D$ is of type (2).

\subsubsection{Finishing the proof when $D$ is of type (2)}

\noindent We first suppose that $D$ is of type (2). By definition (recall Proposition \ref{prop_insidestructure}), this means $D$ contains the following two subhypergraphs:
\begin{itemize}[noitemsep,nolistsep]
	\item a $z$-cycle $C$ such that $x \not\in V(C)$;
	\item an $xw$-chain $P_{xw}$, for some $w \in V(C)$, such that $V(P_{xw}) \cap V(C)=\{w,w'\}$ where we have defined $w' =o(w,\ola{xP_{xw}w})$.
\end{itemize}
Note that $C$ is of length at least 3 by Proposition \ref{prop_nodiamond}, and that $s \in \out(C)$ by Proposition \ref{prop_s_bis}. Define $e^* = \End(\ora{xP_{xw}w})$.
\\ Also note that $s \not\in \{w,w'\}$: indeed, let $P$ be a $ww'$-chain in $C$ (which exists by Substructure Lemma \ref{lemma_subchain3bis}), if we had $s \in \{w,w'\}$, then $P \cup e^*$ would be an $s$-cycle in $D$, contradicting Proposition \ref{prop_s_bis}. Therefore, since $s \in \out(C)$, Substructure Lemma \ref{lemma_subchain3} ensures that there exists a unique $ww'$-chain $P_{ww'}$ in $C$ that does not contain $s$.
\\ Define $C'= P_{ww'} \cup e^*$: $C'$ is both a $w$-cycle and a $w'$-cycle in $D$, and it does not contain $s$. Moreover, we have $z\not\in V(P_{ww'})$: indeed, we would otherwise have $z \in \inn(C) \cap V(P_{ww'})=\{w,w'\} \cup \inn(P_{ww'}) = \inn(C')$, so $C'$ would be a $z$-cycle not containing $s$, contradicting Proposition \ref{prop_s_bis}. See Figure \ref{Type2-2}.

\begin{figure}[h]
	\centering
	\includegraphics[scale=.58]{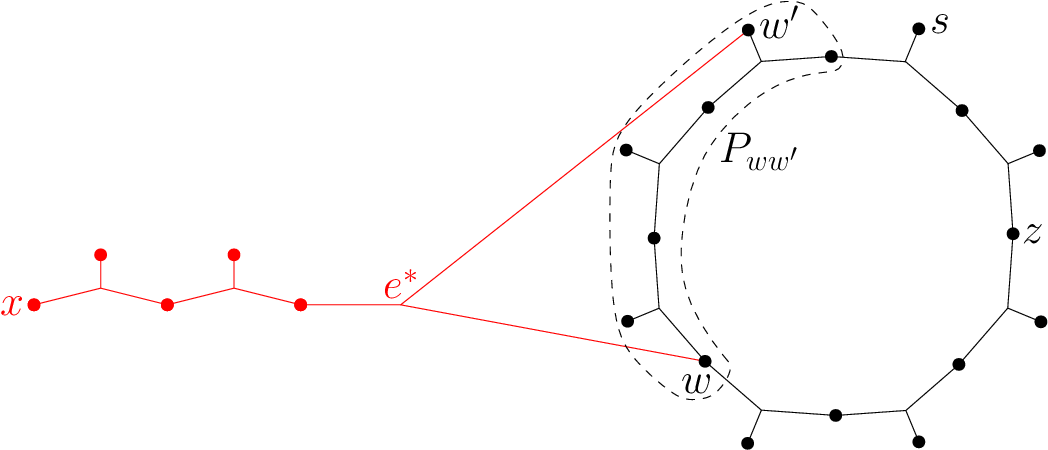}
	\caption{The cycle $C$ (on the far right) and the chain $P_{xw}$. In this drawing, we have $w \in \inn(C)$ and $w' \in \out(C)$.}\label{Type2-2}
\end{figure}

\begin{claim}\label{claim_Type2}
	$w \in \out(C) \text{ or } w' \in \out(C).$
\end{claim}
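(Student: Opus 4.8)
The plan is to argue by contradiction: assume both $w$ and $w'$ lie in $\inn(C)$, and derive a contradiction with one of the rigidity statements already established for $D$ — namely Proposition~\ref{prop_structure}\ref{item6} (no $x$-tadpole and no $x$-snake in $D$), Proposition~\ref{prop_s_bis} (no $s$-tadpole in $D$, and every $z$-tadpole in $D$ has $s$ in the outer set of its cycle), or Proposition~\ref{prop_nodiamond} (no $z$-cycle of length $2$ in $D$). First I would record the local picture around $e^*$: write $e^* = \{t^*,w,w'\}$ where $t^* \notin V(C)$ (with $t^* = x$ exactly when $P_{xw}$ has length $1$), so that $C' = P_{ww'} \cup e^*$ is a $w$-cycle and $w'$-cycle in $D$ with $s,z \notin V(C')$, while the complementary arc $Q$ of $C$ (the $ww'$-path inside $C$ that does contain $s$) satisfies $z \in \inn(Q)$ and $s \in \out(Q)$ — both because these vertices have degree $2$, resp. $1$, in $C$ and are not endpoints of $Q$ — so that $\tilde C \defeq e^* \cup Q$ is a $z$-cycle in $D$ containing $t^*$.

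The engine is Proposition~\ref{prop_structure}\ref{item2}: $\I{D^{+z}}{\O_{x,z}(D)} = \varnothing$ with $\O_{x,z}(D) = z\T(D^{-x}) \cup \P_{zx}(D)$, so for every non-marked $v \neq z$ there exists $X^{\overline v} \in \O_{x,z}(D)$ avoiding $v$. I would apply this to a carefully chosen $v$ (the natural candidates being $t^*$, a neighbour of $w$ or of $w'$ on $C$, or an inner vertex of one of the arcs $P_{ww'}$, $Q$); the object $X^{\overline v}$ is then either a $zx$-path — which, since $x$ has degree $1$ in $D$, must reach $V(C)$ through $e^*$ and hence enter near $w$ or near $w'$ — or a $z$-tadpole of $D^{-x}$. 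In each case I would glue $X^{\overline v}$ to the pieces already in hand ($C$, $P_{xw}$, $e^*$, $C'$, $\tilde C$) via the union lemmas (Lemmas~\ref{Lemma1},~\ref{Lemma2},~\ref{Lemma3},~\ref{Lemma4}) and the subpath/subtadpole existence results (Propositions~\ref{prop_subpath1},~\ref{prop_subpath3},~\ref{prop_subpath4bis},~\ref{prop_subtadpole}), keeping track of which of $w,w',z,s,t^*$ the resulting subhypergraph contains, until it is visibly an $x$-tadpole, an $x$-snake, an $s$-tadpole, a $z$-tadpole with $s$ inside its cycle, or a $z$-cycle avoiding $s$ (or of length $2$) — any of which yields the contradiction.

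The main obstacle is precisely that $w$ and $w'$ are both entry points of $C$ and that $x$ sits on the single edge $e^*$ joining them: the obvious strategy of running a path out of $x$ and wrapping it around $C$ only produces a cycle having $x$ (or $t^*$) as an outer vertex, never an $x$-tadpole, so the contradiction must instead be manufactured on the $s$/$z$ side — one has to show that $w,w' \in \inn(C)$ forces a tadpole of $D$ in which $s$ occupies a non-outer position (hence an $s$-tadpole by Proposition~\ref{prop_subtadpole}), equivalently a $z$-cycle or $z$-tadpole of $D$ missing $s$. Getting there needs the right choice of excluded vertex in Proposition~\ref{prop_structure}\ref{item2} and a patient book-keeping of vertex memberships through the union lemmas. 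Once the claim is proved, combining it with the symmetry $w \leftrightarrow w'$ (witnessed by the $xw'$-path obtained from $P_{xw}$ by replacing its endpoint $w$ with $w'$, which is readily checked to still verify the type-(2) conditions) lets us assume $w' \in \out(C)$, which is exactly hypothesis (ii) of Proposition~\ref{prop_roadmap} for the vertex $w$.
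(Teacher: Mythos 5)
Your preliminary picture is correct — $e^*=\{t^*,w,w'\}$, the cycle $C'=P_{ww'}\cup e^*$ avoiding $s$ and $z$, the complementary arc $Q$ with $z\in\inn(Q)$ and $s\in\out(Q)$ — but what follows is a plan rather than a proof, and it heads in an unnecessary direction. You propose to invoke Proposition~\ref{prop_structure}\ref{item2} with a ``carefully chosen'' $v$ and then glue the resulting $X^{\overline v}$ into the picture via the union lemmas, but this gluing is never carried out, and in fact the paper does not appeal to $\O_{x,z}(D)$ at all for this claim. You correctly identify that the target contradiction is ``a $z$-tadpole of $D$ missing $s$,'' but you do not notice that you already have one in hand from $C$ and $e^*$ alone.

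The missing step is one observation. Since $z\in\inn(Q)$ and $s\in\out(Q)\setminus\{w,w'\}$, the vertex $s$ lies strictly on exactly one of the two sub-arcs of $Q$ joining $z$ to $w$ and joining $z$ to $w'$. Let $A$ be the sub-arc that avoids $s$ (say the $zw$-arc, exchanging $w\leftrightarrow w'$ if needed). Then $V(A)\cap V(C')=\{w\}$, so $A\cup C'$ is a $z$-tadpole in $D$, and it avoids $s$ since neither $A$ nor $C'$ contains $s$. This contradicts Proposition~\ref{prop_s_bis}. The paper's proof is precisely this computation in coordinates: writing $\fd{zC}=(z,e_1,\dots,e_L,z)$ with $e_i\cap e_{i+1}=\{w\}$, $e_{i'}\cap e_{i'+1}=\{w'\}$ and $i<i'$, it identifies $P_{ww'}$ with the arc $(w,e_{i+1},\dots,e_{i'},w')$ (the unique $ww'$-path in $C$ avoiding $z$), notes that the index $j$ with $s\in e_j$ then satisfies $j\le i$ or $j\ge i'+1$, and (for $j\ge i'+1$ WLOG) exhibits the $z$-tadpole $(z,e_1,\dots,e_i,w)\oplus\fd{wP_{ww'}w'}\oplus(w',e^*,w)$, which omits $e_j$ and hence $s$. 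So the gap is twofold: the proposed route via Proposition~\ref{prop_structure}\ref{item2} is never executed, and searching for a good excluded vertex $v$ would have steered you away from the much shorter argument already sitting inside $C\cup e^*$.
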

\begin{proofclaim}[Proof of Claim \ref{claim_Type2}]
	Suppose for a contradiction that $w,w' \in \inn(C)$. Write $\ora{zC}=(z,e_1,\ldots,e_L,z)$. Since $L \geq 3$ and $w,w' \in \inn(C) \setminus \{z\}$, there exist $1 \leq i \neq i' \leq L-1$ such that $e_i \cap e_{i+1}=\{w\}$ and $e_{i'} \cap e_{i'+1}=\{w'\}$. Since $w$ and $w'$ have symmetrical roles, assume $i<i'$. Let $1 \leq j \leq L$ be the unique index such that $s \in e_j$.
\\ Since $e_1$ and $e_L$ are the only edges of $C$ containing $z$, the $ww'$-chain represented by the walk $(w, e_{i+1}, ..., e_{i'}, w')$ does not contain $z$, so it is necessarily $P_{ww'}$ according to the uniqueness statement of Substructure Lemma \ref{lemma_subchain3}. Since $s \not\in V(P_{ww'})$, this yields $1 \leq j \leq i$ or $i'+1 \leq j \leq L$: by symmetry, assume $i'+1 \leq j \leq L$. Then  $(z,e_1,\ldots,e_i,w) \oplus \ora{wP_{ww'}w'} \oplus (w',e^*,w)$ represents a $z$-tadpole not containing $e_j$, i.e., not containing $s$, a contradiction which concludes the proof of the claim.
\end{proofclaim}
\noindent Using Claim \ref{claim_Type2}, assume $w' \in \out(C)$ by symmetry. This ensures that $V(P_{xw}) \cap \inn(C) \subseteq \{w\}$. 
\\ In conclusion, we can apply Proposition \ref{prop_roadmap} to the vertex $w$, with: the $z$-cycle $C$ containing $w$, the $xw$-chain $P_{xw}$ which satisfies $V(P_{xw}) \cap \inn(C) \subseteq \{w\}$, and the $w$-cycle $C'$ which does not contain $s$. We get $\I{H^{+w}}{w\D_1(H)}=\varnothing$, contradicting property $J(\D_1,H)$. This ends the proof of Lemma \ref{lemma_main_structure} when $D$ is of type (2).

\subsubsection{Finishing the proof when $D$ is of type (1)}

\noindent We now suppose that $D$ is of type (1). By definition (see Proposition \ref{prop_insidestructure}), this means $D$ contains the following three subhypergraphs:
\begin{itemize}[noitemsep,nolistsep]
	\item a $z$-cycle $C$ such that $x \not\in V(C)$;
	\item an $xw$-chain $P_{xw}$ for some $w \in \out(C)$ such that $V(P_{xw}) \cap V(C)=\{w\}$;
	\item some $X \in \F_D$ such that $V(X) \cap V(P_{xw}) \neq \varnothing$ and $\{w_1,w_2\} \not\subseteq V(X)$ where $e=\{w,w_1,w_2\}$ denotes the unique edge of $C$ containing $w$.
\end{itemize}
Note that $C$ is of length at least 3 by Proposition \ref{prop_nodiamond} and that $s \in \out(C)$ by Proposition \ref{prop_s_bis}.
\\ Since $C$ contains $w$ and $V(P_{xw}) \cap \inn(C) = \varnothing \subseteq \{w\}$, the only subhypergraph in $D$ that we are missing to apply Proposition \ref{prop_roadmap} is a $w$-tadpole that does not contain $s$. The rest of the proof consists in finding a $w$-cycle in $D$ that does not contain $s$.

\begin{claim}\label{claim_Type1}
	There exists a $w$-chain $P_w$ in $D$ such that:
	\begin{enumerate}[noitemsep,nolistsep,label={\textup{(\alph*)}}]
		\item The only edge of $P_w$ that intersects $V(C) \setminus \{w\}$ is $e^* = \End(\ora{wP_w})$. In particular: $|V(P_w) \cap (V(C) \setminus \{w\})| = |e^* \cap (V(C) \setminus \{w\})| \in \{1,2\}$. \label{item_a}
		\item $\{w_1,w_2\} \not\subseteq V(P_w)$. \label{item_b}
		\item $s \not\in V(P_w)$. \label{item_c}
	\end{enumerate}
\end{claim}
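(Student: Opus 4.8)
I would prove Claim \ref{claim_Type1} by invoking Proposition \ref{prop_structure}\ref{item2} to produce, for carefully chosen vertices, elements of $\O_{x,z}(D) = z\T(D^{-x}) \cup \P_{zx}(D)$ missing those vertices, and then combining them with $P_{xw}$ (or with $C$) using the union lemmas of Section \ref{Section2} to manufacture the desired $w$-path. The starting point is the hypothesized element $X \in \O_{x,z}(D)$ with $V(X) \cap V(P_{xw}) \neq \varnothing$ and $\{w_1,w_2\} \not\subseteq V(X)$. Following a projection of $w$ onto $V(X)$ inside $D$, or rather tracking a subpath of $P_{xw}$ from $w$ until it first meets $X$, then continuing inside $X$ toward the cycle $C$, should yield a $w$-path reaching $V(C)\setminus\{w\}$. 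The condition $\{w_1,w_2\} \not\subseteq V(X)$ is exactly what guarantees item \ref{item_b}: since $X$ omits at least one of $w_1, w_2$ and $P_{xw}$ meets $C$ only at $w$, the combined path cannot contain both $w_1$ and $w_2$.

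The key structural inputs I would lean on are: Proposition \ref{prop_structure}\ref{item6} (no $x$-snake and no $x$-tadpole in $D$), which forces that whenever we try to route a path from $x$ back into $C$ or into a $z$-tadpole we get the "other" outcome in the union lemmas; Proposition \ref{prop_s_bis} (no $s$-tadpole in $D$, and any $z$-tadpole $T$ in $D$ has $s \in \out(C_T)$), which is the engine for item \ref{item_c}; and Proposition \ref{prop_s} together with its last assertion about $\Start(\fd{x\Ps z})$ and $\End(\fd{x\Ps z})$ being invariant. For item \ref{item_c}, the plan is: if the $w$-path $P_w$ we build happened to contain $s$, then since $s \in V(D)$ and $s \in \out(C)$, splicing $P_w$ with an appropriate $zs$-subpath of $C$ and with a $z$-tadpole or $zx$-path from $\O_{x,z}(D)$ would produce an $s$-tadpole in $D$ or an $x$-tadpole in $D$ — both forbidden. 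If $s$ is forced into $V(P_w)$ we instead re-select the witness from $\O_{x,z}(D)$, using Proposition \ref{prop_structure}\ref{item2} applied at $s$, to get an element $X^{\overline{s}}$ avoiding $s$; since $s \in \I{H^{+z}}{z\D_1(H)}$, no $z$-tadpole and no $z$-snake avoids $s$, so $X^{\overline{s}} \in \P_{zx}(D)$, which is then routed back to $C$.

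I expect the main obstacle to be item \ref{item_c} combined with the bookkeeping of where the various auxiliary paths re-enter $C$ or the chosen $X$: one has to ensure simultaneously that $e^*$ is the \emph{first} edge meeting $V(C)\setminus\{w\}$ (item \ref{item_a}), that $s$ stays outside, and that not both $w_1,w_2$ are swallowed — and these three demands interact, because forcing a path to avoid $s$ may push it through $w_1$ or $w_2$. The resolution I anticipate is a case split on $|e^* \cap (V(C)\setminus\{w\})|$, and within each case a further split according to which of $w_1, w_2$ the witness $X$ (or $X^{\overline{s}}$) omits, each time closing off the bad configurations by exhibiting a forbidden $x$-tadpole, $x$-snake, or $s$-tadpole via the tables of Section \ref{Section2} and Lemmas \ref{Lemma1}–\ref{Lemma4}. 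Once $P_w$ is in hand, a $ww_i$-subpath of $C$ avoiding $s$ (available since $C$ has length $\geq 3$, is not a $2$-cycle, and $s$ is a single outer vertex) closes up with $P_w$ to a $w$-cycle avoiding $s$, giving the third subhypergraph needed for Proposition \ref{prop_roadmap}, hence $\I{H^{+w}}{w\D_1(H)} = \varnothing$, contradicting $J_1(\D_1,H)$ and finishing the proof of Theorem \ref{theo_main_structure0}.
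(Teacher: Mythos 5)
Your construction of $P_w$ is in the same spirit as the paper's: you project a $wz$-path obtained from $P_{xw}\cup X$ (via Lemma~\ref{Lemma1}, using the absence of $x$-tadpoles) onto $V(C)\setminus\{w\}$, and items~\ref{item_a} and~\ref{item_b} then fall out automatically because $P_w\subseteq P_{xw}\cup X$ with $\{w_1,w_2\}\not\subseteq V(X)$ and $\{w_1,w_2\}\cap V(P_{xw})=\varnothing$. This part matches the paper closely.

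The issue is your framing of item~\ref{item_c}. You treat $s\notin V(P_w)$ as a constraint in tension with the other two, to be secured either by a somewhat vague three-object splicing argument or by re-selecting the witness to an $X^{\overline{s}}\in\P_{zx}(D)$ that avoids $s$. The re-selection branch does not work: Proposition~\ref{prop_structure}\ref{item2} only guarantees that $X^{\overline{s}}$ avoids $s$, and you would have no control over whether it swallows both $w_1$ and $w_2$, so you would lose item~\ref{item_b} in the process — a different $\O_{x,z}(D)$ element for each forbidden vertex is exactly what you cannot re-use. In fact there is no tension to resolve: the paper keeps the original $X$ and $P_w$ fixed and shows $s\notin V(P_w)$ by a standalone contradiction. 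If $s\in V(P_w)$, then since item~\ref{item_a} forces $s\in e^*=\End(\fd{wP_w})$ (when $s\neq w$) one can close up $P_w$ with a $ws$- or $st$-subpath of $C$ (Propositions~\ref{prop_subpath2},~\ref{prop_subpath3},~\ref{prop_subpath3bis}) to produce an $s$-cycle inside $D$, contradicting Proposition~\ref{prop_s_bis}; the case $s=w$ is handled similarly using $C^{-s}\cup P_w$ and the edge $e$. No element of $\O_{x,z}(D)$ needs to reappear. I'd recommend dropping the re-selection idea entirely and instead arguing that the $P_w$ you already built cannot contain $s$, using only $C$ and $P_w$ themselves.
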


\begin{proofclaim}[Proof of Claim \ref{claim_Type1}]

	Since $V(X) \cap V(P_{xw}) \neq \varnothing$, the projection $\proj{V(P_{xw})}{z}{X}$ is well defined. There are no $x$-tadpoles in $D \supseteq P_{xw} \cup \proj{V(P_{xw})}{z}{X}$ by Proposition \ref{prop_structure}\ref{item6}, so Union Lemma \ref{Lemma1} with $a=w$, $b=x$ and $c=z$ ensures that $P_{xw} \cup \proj{V(P_{xw})}{z}{X}$ contains a $wz$-chain $P_{wz}$. Now, since $V(P_{wz}) \cap (V(C) \setminus \{w\}) \supseteq \{z\} \neq \varnothing$, the projection $P_w = \proj{V(C) \setminus \{w\}}{w}{P_{wz}}$ is well defined.
	\\ Define $e^* =\End(\ora{wP_w})$: by definition of a projection, the edge $e^*$ is the only edge of $P_w$ that intersects $V(C) \setminus \{w\}$, and $|e^* \cap (V(C) \setminus \{w\})| \in \{1,2\}$, hence item \ref{item_a}.
	\\ Since $P_w \subseteq P_{wz} \subseteq X \cup P_{xw}$, where $\{w_1,w_2\} \not\subseteq V(X)$ by definition of $X$ and $\{w_1,w_2\}\cap V(P_{xw})=\varnothing$, we have $\{w_1,w_2\} \not\subseteq V(P_w)$, i.e., item \ref{item_b}.
	\\ Finally, suppose for a contradiction that $s \in V(P_w)$:
	\begin{itemize}[leftmargin=*]
		\item First suppose $s=w$. By Substructure Lemma \ref{lemma_subchain2}, $C^{-s}$ is a $w_1w_2$-chain. By Union Lemma \ref{Lemma1} with $a=w_1$, $b=w_2$ and $c=w=s$, $C^{-s} \cup P_w$ contains an $sw_1$-chain or an $sw_2$-chain. Take a shortest $sw_1$-chain or $sw_2$-chain in $C^{-s} \cup P_w$: by symmetry, assume it is an $sw_1$-chain $P_{sw_1}$. The minimality of the length ensures that either $w_2 \not\in V(P_{sw_1})$ or $w_2 = o(w_1,\ola{sP_{sw_1}w_1})$.
			\begin{itemize}[noitemsep,nolistsep]
				\item If $w_2 \not\in V(P_{sw_1})$, then $P_{sw_1} \cup e$ is an $s$-cycle in $D$, contradicting Proposition \ref{prop_s_bis}.
				\item If $w_2 = o(w_1,\ola{sP_{sw_1}w_1})$, then $\End(\ora{sP_{sw_1}w_1})$ is an edge of $C^{-s} \cup P_w$ containing both $w_1$ and $w_2$. However, there are no such edges in $C^{-s}$ because $C^{-s}$ is a $w_1w_2$-chain of length at least 2 (indeed, recall that $C$ is of length at least 3), and there are no such edges in $P_w$ because $\{w_1,w_2\} \not\subseteq V(P_w)$ by item \ref{item_b}. We have a contradiction.
			\end{itemize}
		\item Now suppose $s \neq w$. By item \ref{item_a}, $e^* = \End(\ora{wP_w})$ is the only edge of $P_w$ that intersects $V(C) \setminus \{w\}$, so $P_w$ is a $ws$-chain and either $V(P_w) \cap V(C)=\{w,s\}$ or $V(P_w) \cap V(C) = \{w,s,o(s,\ora{sP_ww})\}$.
			\begin{itemize}[noitemsep,nolistsep]
				\item If $V(P_w) \cap V(C)=\{w,s\}$, then let $P_{ws}$ be a $ws$-chain in $C$ (which exists by Substructure Lemma \ref{lemma_subchain3bis}): since $P_w$ and $P_{ws}$ are both $ws$-chains and $V(P_w) \cap V(P_{ws})=\{w,s\}$, we get an $s$-cycle $P_w \cup P_{ws}$, contradicting Proposition \ref{prop_s_bis}. 
				\item If $V(P_w) \cap V(C) = \{w,s,t\}$, where we have defined $t =o(s,\ora{sP_ww}) \in e^*$, then let $P_{st}$ be an $st$-chain in $C$ that does not contain $w$ (which exists by Substructure Lemma \ref{lemma_subchain3} because $w \in \out(C)$): since $V(P_{st}) \cap e^* = \{s,t\}$, we get an $s$-cycle $P_{st} \cup e^*$, contradicting Proposition \ref{prop_s_bis}.
			\end{itemize}
	\end{itemize}
	We have a contradiction in all cases, hence item \ref{item_c}.
\end{proofclaim}

\noindent From now on, the action takes place in $C \cup P_w$ exclusively: we are going to exhibit a $w$-cycle in $C \cup P_w$ that does not contain $s$. The idea is simply to get such a cycle by using $P_w$ to go from $w$ to $C$ and then rejoining $w$ by rotating along $C$ in the correct direction so as to avoid $s$ (for instance, see Figure \ref{Type1-1}, left and middle). This is always possible, unless this direction is blocked by a cycle of length 2, which cannot happen because there would then be an $s$-tadpole, contradicting Proposition \ref{prop_s_bis} (for instance, see Figure \ref{Type1-1}, right). We now carry out the rigorous proof of this, distinguishing between two cases.

\begin{figure}[h]
	\centering
	\includegraphics[scale=.58]{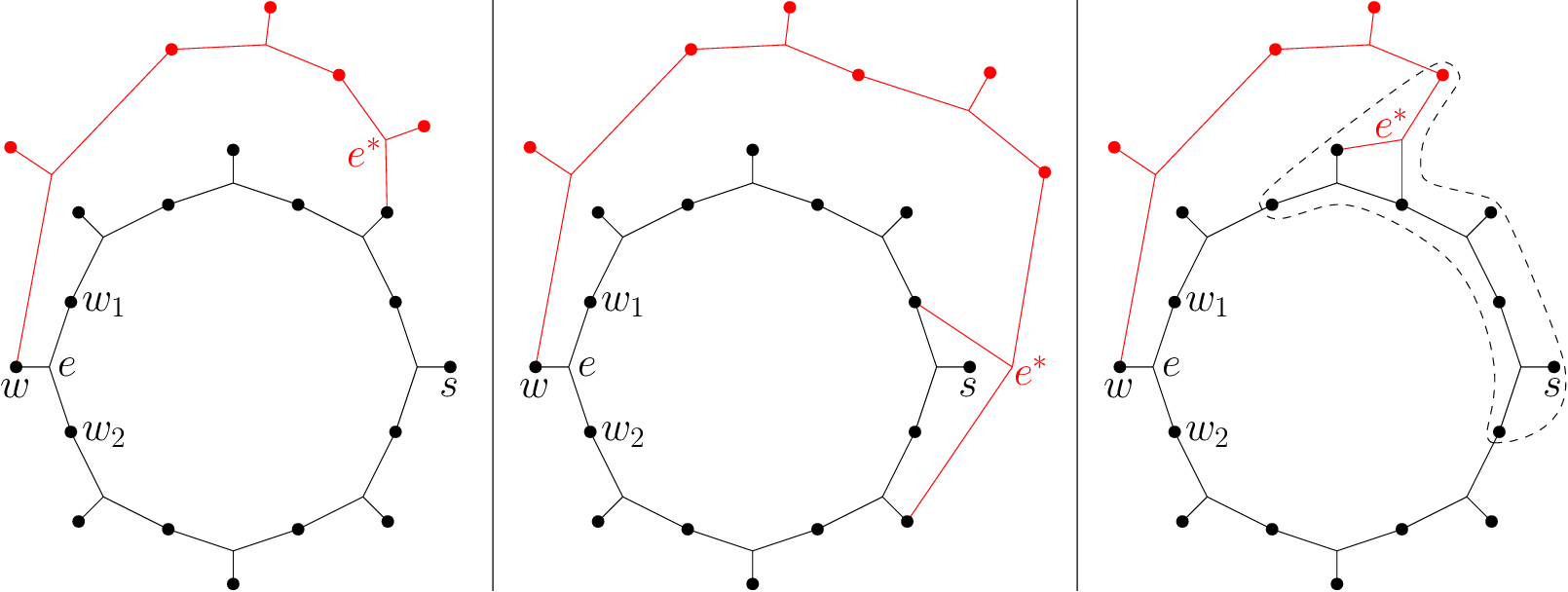}
	\caption{Represented here are $C$ and $P_w$. In the left and middle examples, there is a $w$-cycle not containing $s$. In the right example, there are none but there is an $s$-tadpole (highlighted).}\label{Type1-1}
\end{figure}

\begin{enumerate}[wide,label=\textbf{\arabic*)}]

	\item Case 1: $w_1 \in V(P_w)$ or $w_2 \in V(P_w)$.
		\\ By symmetry, assume $w_1 \in V(P_w)$. By Claim \ref{claim_Type1}\ref{item_b}, we have $\{w_1,w_2\} \not\subseteq V(P_w)$ hence $w_2 \not\in V(P_w)$. Therefore, $P_w$ is a $ww_1$-chain that does not contain $w_2$, so, defining $C' = P_w \cup e$, $C'$ is a $w$-cycle. Moreover, $s \not\in V(C')=V(P_w) \cup \{w_2\}$: indeed, we have $s \not\in V(P_w)$ by Claim \ref{claim_Type1}\ref{item_c}, and $s \in \out(C)$ whereas $w_2 \in \inn(C)$. Therefore, $C'$ is the desired cycle.
		
	\item Case 2: $w_1,w_2 \not\in V(P_w)$.
		\\ By Substructure Lemma \ref{lemma_subchain2}, $C^{-w}$ is a $w_1w_2$-chain. Write $\ora{w_1C^{-w}w_2}=(w_1,e_1,\ldots,e_L,w_2)$. We have $s \in \out(C)$, moreover $s \neq w$ by Claim \ref{claim_Type1}\ref{item_c}, so there exists a unique index $1 \leq i \leq L$ such that $s \in e_i$. If $i \neq 1$, define $s_1$ as the only vertex in $e_{i-1} \cap e_i$ and $P_1 = \HH{(w_1,e_1,\ldots,e_{i-1},s_1)}$, otherwise define $s_1=w_1$ and $P_1= \HH{(w_1)}$. Similarly, if $i \neq L$, define $s_2$ as the only vertex in $e_i \cap e_{i+1}$ and $P_2 = \HH{(w_2,e_L,e_{L-1},\ldots,e_{i+1},s_2)}$, otherwise define $s_2=w_2$ and $P_2 = \HH{(w_2)}$. For all $j \in \{1,2\}$, $P_j$ is a $w_js_j$-chain in $C$, and $V(P_1) \cap V(P_2)=\varnothing$. These notations are summed up in Figure \ref{Type1-2}.
		
		\begin{figure}[h]
			\centering
			\includegraphics[scale=.58]{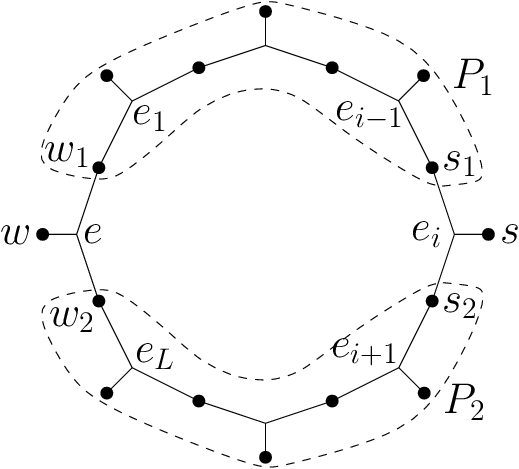}
			\caption{The cycle $C$.}\label{Type1-2}
		\end{figure}
		
		By Claim \ref{claim_Type1}\ref{item_a}, we have $|V(P_w) \cap (V(C) \setminus \{w\})| \in \{1,2\}$. Note that $V(C) \setminus \{w\} = V(P_1) \cup V(P_2) \cup \{s\}$. Since $s \not\in V(P_w)$ by Claim \ref{claim_Type1}\ref{item_c}, we obtain that $|V(P_w) \cap V(P_1)| \in \{1,2\}$ or $|V(P_w) \cap V(P_2)| \in \{1,2\}$. By symmetry, assume that $|V(P_w) \cap V(P_1)| \in \{1,2\}$.
		
		\begin{itemize}[noitemsep,nolistsep]
			\item First suppose $|V(P_w) \cap V(P_1)|=1$. Let $u$ be the only vertex in $V(P_w) \cap V(P_1)$: in particular, $P_w$ is a $wu$-chain. Recall that $w_2 \not\in V(P_w)$ by assumption, moreover $w_2 \not\in V(P_1)$ by definition of $P_1$. Therefore, the walk $(w,e,w_1) \oplus \ora{w_1P_1}\vert_{\{u\}} \oplus \ora{uP_ww}$ represents a $w$-cycle in $C \cup P_w$, which does not contain $s$ since $s \not\in e \cup V(P_1) \cup V(P_w)$. This is the desired cycle.
			\item Now suppose $|V(P_w) \cap V(P_1)|=2$. Since $V(P_1) \cap V(P_2)=\varnothing$, this yields $V(P_w) \cap V(P_2)=\varnothing$ by Claim \ref{claim_Type1}\ref{item_a}. In particular $s_2 \not\in V(P_w)$, so there cannot be an $s_1$-tadpole $T$ in $P_w \cup P_1$: indeed, since $s,s_2 \not\in V(P_w) \cup V(P_1)$, the walk $(s,e_i,s_1) \oplus \ora{s_1T}$ would otherwise represent an $s$-tadpole in $D$, contradicting Proposition \ref{prop_s_bis}. Therefore, Union Lemma \ref{Lemma1} with $a=w_1$, $b=s_1$ and $c=w$ ensures that $P_1 \cup P_w$ contains a $ww_1$-chain $P_{ww_1}$. Since $w_2 \not\in V(P_1)$, and $w_2 \not\in V(P_w)$ by assumption, the walk $\ora{wP_{ww_1}w_1} \oplus (w_1,e,w)$ represents a $w$-cycle, which does not contain $s$ since $s \not\in e \cup V(P_{ww_1})$. This is the desired cycle.
		\end{itemize}
	
\end{enumerate}

\noindent In conclusion, Proposition \ref{prop_roadmap} applies and yields $\I{H^{+w}}{w\D_1(H)}=\varnothing$, contradicting property $J(\D_1,H)$. This ends the proof of Lemma \ref{lemma_main_structure}, so that all results of this paper are now proved.

\section{Conclusion}\label{Section6}

\hspace{\parindent}In this paper, we have obtained a structural characterization of the outcome of the Maker-Breaker game played on hypergraphs of rank 3, and a description of both players' optimal strategies, all based on danger intersections. We have shown that Maker wins if and only if she can force the appearance of a nunchaku or a necklace within at most three rounds of play. As a consequence, the winner of the Maker-Breaker game on a hypergraph of rank 3 can be decided in polynomial time. This result, together with a recent preprint which shows \PSPACE-completeness for 4-uniform hypergraphs \cite{Gal25}, closes the complexity gap for the \makerbreaker problem depending on the size of the edges.
\\ \indent Moreover, some games played on graphs that have been studied in the literature can be seen as Maker-Breaker games on some underlying hypergraph of rank 3 and, as such, are now solved as well. One such example is the \textit{total domination game} played on the vertex set of a cubic graph $G$, where Maker's winning sets are the open neigborhoods of the vertices of $G$, for which only a partial solution was known before \cite{FM22}. Another example is the \textit{$H$-game} played on the vertex set \cite{KMN19} (resp. the edge set \cite{DGM25}) of a graph $G$, where Maker's winning sets are the copies in $G$ of some fixed graph $H$ of order 3 (resp. of size 3).
\\ \indent We have also obtained a logarithmic upper bound for $\tau_M$ (the duration of the game) on 3-uniform Maker wins, which is tight up to an additive three rounds as shown by the case of nunchakus. On the other hand, for any fixed $k \geq 4$, it is known that a $k$-uniform hypergraph $H_n$ on $n$ vertices such that $\tau_M(H_n) = \lceil\frac{n}{2}\rceil$ exists for all $n$ large enough \cite[Proposition 16]{BDG26}. The question of finding the best general bounds for $\tau_M$ depending on the size of the edges is thus resolved.
\bigskip
\\\indent In contrast, all these questions remain open for the more general version of the game (\UQBF) which is played on a 3-CNF formula instead of a hypergraph of rank 3. The concept of danger should translate to this version, except that there would be two types of dangers at $x$ depending on which truth value Falsifier assigns to $x$. For Satisfier to win, it would then be necessary that property $J()$ holds for both types of dangers. We do not know if our proofs could partly generalize to this version, thus helping towards proving Conjecture \ref{conjecture}. For now, we have only validated Conjecture \ref{conjecture} for positive 3-CNF formulas.
\bigskip
\\\indent The algorithmic complexity of positional games on hypergraphs of rank 3 is still open for some other conventions.
\\ \indent One such example is the Maker-Maker convention. A famous \textit{strategy-stealing} argument \cite{HJ63} ensures that optimal play can only lead to a first-player win or a draw. A first-player win implies a Maker win in the Maker-Breaker convention, but the converse is false, as evidenced by tic-tac-toe. The problem of deciding the outcome is trivially tractable for hypergraphs of rank 2, but it is \PSPACE-complete for 4-uniform hypergraphs \cite{Gal25, GS26b}. When it comes to hypergraphs of rank 3, it seems difficult to obtain an analogue of Theorem \ref{theo_main_structure1}, because there is no subhypergraph monotonicity. The fact that the second player can make threats of his own may cause untractability. Actually, in a recent preprint, \PSPACE-completeness for the Maker-Maker game on 3-uniform hypergraphs has been established for positions obtained after one round of (non-optimal) play \cite{GS26a}.
\\ \indent Complexity results have recently been obtained on variations of the Maker-Breaker convention called \textit{Waiter-Client} and \textit{Client-Waiter} \cite{GOT25}. Waiter, who has the role of Maker in Waiter-Client or of Breaker in Client-Waiter, offers a choice of two vertices in each round, from which Client picks one for himself and gives the other to Waiter. The Client-Waiter convention on hypergraphs of rank 3 presents interesting similarities with the Maker-Breaker convention: the key is the existence of a pair of vertices such that each one destroys all the $\D_1$-dangers (snakes and tadpoles) at the other. If such a pair exists, then it is optimal (though not necessarily winning) for Waiter to offer it, but if none exist, then Client wins. This shows membership in the complexity class {\sf P}${}^{\text{\sf NP}}$, and it would show membership in \textsf{P} if the tadpole existence problem could be proved to be solvable in polynomial time. As for the Waiter-Client convention, it is tractable for any fixed rank.
\\ \indent Another convention is called \textit{Avoider-Enforcer}: one player wants to avoid picking all the vertices of any edge, while the other tries to force them to. Same as Maker-Breaker, those are \textit{weak games}, where the players have complementary objectives and convenient properties such as subhypergraph monotonicity hold. As such, the Avoider-Enforcer convention is likely a better candidate for tractability on hypergraphs of rank 3 than the Maker-Maker convention for instance. Structural characterizations of the outcome and polynomial-time algorithms are already known for hypergraphs of rank 2 and for a subcase of linear hypergraphs of rank 3 \cite{GGP26}, with the latter underlining the importance of nunchakus in this convention as well. On the other hand, \PSPACE-completeness has been established for 6-uniform hypergraphs \cite{GO23}.
\bigskip
\\\indent Positional games also exist in a \textit{biased} version $(p:q)$, where the players get to pick $p$ and $q$ vertices respectively each round instead of only one vertex. Many instances of biased Maker-Breaker games have been studied in the literature, notably to study the \textit{threshold bias} which is the smallest $q$ such that Breaker wins with a $(1:q)$ bias \cite{HKS14}. The concept of danger and the basic results around it naturally generalize to the biased version of the game \cite[I.2.2]{Gal23}, and might be useful in some cases. For example, one could look into biased games on hypergraphs of rank 3 with a danger-based approach, defining elementary dangers depending on the bias, and trying to get a characterization of Breaker wins similar to the one we obtained without bias.

\section*{Acknowledgments}

\noindent We thank Md Lutfar Rahman and Thomas Watson for useful exchanges, as well as the anonymous reviewers for their insightful feedback. This work was partly supported by the ANR P-GASE project (grant ANR-21-CE48-0001).

\bibliographystyle{biblio_style}
\bibliography{biblio}

\newpage

\appendix

\section{Glossary of technical terms and symbols}\label{appendix}

\begin{tabular}{|c|c|c|c|}
	\hline
	Symbol & Description & Reference & Figure
	\\ \hline\hline
	\rowcolor{mygray1} \multicolumn{4}{|c|}{\bf Hypergraphs}
	\\ \hline
	& marked hypergraph & \Def\ref{def:markedhypergraph} & 
	\\ \hline
	$\subseteq$ & subhypergraph of a marked hypergraph & \Def\ref{def:subhypergraph} & 
	\\ \hline
	$\union{\X}$, $\cup$ & union of a collection $\X$ of subhypergraphs & \Def\ref{def:union} & 
	\\ \hline
	$H^{+x}$ & marked hypergraph obtained from $H$ by marking $x$ & \Not\ref{not:updated} &
	\\ \hline
	$H^{-y}$ & marked hypergraph obtained from $H$ by deleting $y$ & \Not\ref{not:updated} &
	\\ \hline
	& linear hypergraph & \Def\ref{def:linearhypergraph} &
	\\ \hline
	& hyperforest & \Def\ref{def:hyperforest} &
	\\ \hline\hline
	\rowcolor{mygray1} \multicolumn{4}{|c|}{\bf Maker-Breaker game}
	\\ \hline
	& trivial Maker win & \Def\ref{def:trivialmakerwin} & 
	\\ \hline
	& Maker win & \Def\ref{def:makerwin} &
	\\ \hline 
	& \makerbreaker~decision problem & \Not\ref{not:makerbreaker} &
	\\ \hline
	$\tau_M(H)$ & minimum number of rounds for Maker to win on $H$ & \Not\ref{not:tau} &
	\\ \hline\hline
	\rowcolor{mygray1} \multicolumn{4}{|c|}{\bf Walks}
	\\ \hline
	$\ora{W}$ & walk & \Def\ref{def:walk} &
	\\ \hline
	& equivalent walks & \Def\ref{def:equivalent} &
	\\ \hline
	$\ola{W}$ & reverse walk & \Not\ref{notation_walk} &
	\\ \hline
	$\oplus$ & concatenation of walks & \Not\ref{notation_walk} &
	\\ \hline
	$\ora{W}\vert_Z$ & walk ``cut at $Z$'' & \Not\ref{notation_walk} &
	\\ \hline
	$\HH{\ora{W}}$ & marked hypergraph induced by a walk & \Def\ref{def:induced} &
	\\ \hline
	& linear walk & \Def\ref{def:linearwalk} &
	\\ \hline
	& simple walk & \Def\ref{def:simplewalk} &
	\\ \hline\hline
	\rowcolor{mygray1} \multicolumn{4}{|c|}{\bf Chains}
	\\ \hline
	$P$ & chain, $ab$-chain, $a$-chain & \Def\ref{def_chain} & \Fig\ref{Example_Paths}
	\\ \hline
	& walk representing a chain & \Def\ref{def_chain} &
	\\ \hline
	$\ora{aPb}$, $\ora{aP}$ & usual walks for representing an $ab$-chain $P$ & \Not\ref{not:walk_chain} &
	\\ \hline
	$\inn(P)$ & inner vertices of a chain $P$ & \Def\ref{def:chain_inner} & \Fig\ref{Example_Paths}
	\\ \hline
	$o(a,\ora{aPb})$ & neighbor of $a$ of degree 1 in an $ab$-chain $P$ & \Not\ref{not:o_neighbor} & \Fig\ref{Example_Paths}
	\\ \hline
	& snake, $ab$-snake, $a$-snake & \Def\ref{def:snake} & \Fig\ref{Example_Marked}
	\\ \hline
	& nunchaku, $ab$-nunchaku, $a$-nunchaku & \Def\ref{def:nunchaku} & \Fig\ref{Example_Marked}
	\\ \hline
	$\proj{Z}{u}{P}$ & projection of $u$ onto $Z$ in a chain $P$ & \Def\ref{def:projection} & \Fig\ref{Projections}
	\\ \hline
	$\perp$ & non-linear intersection between an edge and a chain & \Not\ref{not:perp} & \Fig\ref{Perpendicular}
	\\ \hline
	$LCC_H(a)$ & linear connected component of $a$ in $H$ & \Def\ref{def:LCC} &
	\\ \hline
	$\dist_H(a,b)$ & length of a shortest $ab$-chain in $H$ & \Not\ref{not:dist} &
	\\ \hline\hline
	\rowcolor{mygray1} \multicolumn{4}{|c|}{\bf Cycles and tadpoles}
	\\ \hline
	$C$ & cycle, $a$-cycle & \Def\ref{def_cycle} & \Fig\ref{Example_Cycles}
	\\ \hline
	$\ora{aC}$, $\ora{(a-e)C}$ & usual walks for representing an $a$-cycle $C$ & \Not\ref{not:walk_cycle} &
	\\ \hline
	$\inn(C), \out(C)$ & inner/outer vertices of a cycle $C$ & \Def\ref{def:cycle_inner} & \Fig\ref{Example_Cycles}
	\\ \hline
	& necklace, $a$-necklace & \Def\ref{def:necklace} & \Fig\ref{Example_Marked}
	\\ \hline
	$T$ & tadpole, $a$-tadpole & \Def\ref{def_tadpole} & \Fig\ref{Example_Tadpoles}
	\\ \hline
	$P_T, C_T$ & chain part and cycle part of a tadpole $T$ & \Def\ref{def_tadpole} & \Fig\ref{Example_Tadpoles}
	\\ \hline
	$\ora{aT}$ & usual walk for representing an $a$-tadpole $T$ & \Not\ref{not:walk_tadpole} &
	\\ \hline
	$\proj{Z}{u}{T}$ & projection of $u$ onto $Z$ in a tadpole $T$ & \Def\ref{def:projection} & \Fig\ref{Projections}
	\\ \hline
\end{tabular}

\newpage

\noindent
\begin{tabular}{|c|c|c|c|}
	\hline
	Symbol & Description & Reference & Figure
	\\ \hline\hline
	\rowcolor{mygray1} \multicolumn{4}{|c|}{\bf General dangers}
	\\ \hline
	& danger at a given vertex in a marked hypergraph & \Def\ref{def:dangeratx} &
	\\ \hline
	& pointed marked hypergraph & \Def\ref{def:pointed} &
	\\ \hline
	& isomorphism of pointed marked hypergraphs & \Def\ref{def:isomorphic} &
	\\ \hline
	& danger & \Def\ref{def:danger} &
	\\ \hline
	& trivial danger & \Def\ref{def:trivialdanger} & \Fig\ref{Trivial_Danger}
	\\ \hline
	$\I{H}{\X}$ & intersection of a collection of marked hypergraphs $\X$ in $H$ & \Def\ref{def:intersection} &
	\\ \hline
	& fork at a given vertex in a marked hypergraph & \Def\ref{def:fork} &
	\\ \hline
	& $\D$-danger & \Def\ref{def:D-danger} &
	\\ \hline
	& $\D$-danger at a given vertex in a marked hypergraph & \Def\ref{def:D-danger} &
	\\ \hline
	$x\D(H)$ & collection of all $\D$-dangers at $x$ in $H$ & \Def\ref{def:D-danger} &
	\\ \hline
	& $\D$-fork at a given vertex in a marked hypergraph & \Def\ref{def:D-danger} &
	\\ \hline
	$J(\D,H)$ & intersecting property for the $\D$-dangers in $H$ & \Not\ref{not:J} &
	\\ \hline\hline
	\rowcolor{mygray1} \multicolumn{4}{|c|}{\bf Families of 3-uniform dangers}
	\\ \hline
	$\D_0$ & dangers that are snakes & \Not\ref{not:D0_D1} & \Fig\ref{D1-dangers}
	\\ \hline
	$\D_1$ & dangers that are snakes or tadpoles & \Not\ref{not:D0_D1} & \Fig\ref{D1-dangers}
	\\ \hline
	$\widehat{\D_1}$ & dangers that are potential $\D_1$-forks & \Not\ref{not:D1-hat} & \Fig\ref{Example_Dangers}
	\\ \hline
	$\D_2$ & union of the families $\D_1$ and $\widehat{\D_1}$ & \Not\ref{not:D1-hat} &
	\\ \hline
	& decomposition of a $\widehat{\D_1}$-danger & \Not\ref{not:D1-hat} & \Fig\ref{Example_Dangers}
	\\ \hline
	& maximal decomposition of a $\widehat{\D_1}$-danger & \Def\ref{def:maximal_decomposition} &
	\\ \hline
	
\end{tabular}

\end{document}